\newcommand{\BlackBox}{\rule{1.5ex}{1.5ex}}  
\newenvironment{proof}[1][]{\par\noindent{\bf Proof\ifthenelse{\equal{#1}{}}{}{ of #1}\ }}{\hfill\BlackBox\\[2mm]}
\newtheorem{theorem}{Theorem}
\newtheorem{lemma}[theorem]{Lemma} 
\newtheorem{proposition}[theorem]{Proposition} 
\newtheorem{corollary}[theorem]{Corollary}
{\theorembodyfont{\rmfamily} \newtheorem{definition}[theorem]{Definition}}
\newenvironment{subprop}{\begin{enumerate}[label=(\roman*), ref=(\roman*), noitemsep]}{\end{enumerate}}
\newcommand{\proofitem}[1]{\vspace{3mm}\par\noindent\emph{#1:}}
\newcommand{\proofnegspace}{\vspace{-8mm}}
\newcommand{\spforall}{\; \forall \;}
\newcommand{\spexists}{\; \exists \;}
\newcommand{\spst}{\;|\;}
\newcommand{\transp}{\mathrm{T}}
\newcommand{\mcI}{\ensuremath{\mathcal{I}}}
\newcommand{\smallbullet}{\tikz[baseline]\draw[fill=black] (0, 0.4ex) circle (0.15ex);}
\newcommand{\sR}{\mathbb{R}}
\newcommand{\supscr}[1]{\ensuremath{^{\mathrm{#1}}}}
\newcommand{\subscr}[1]{\ensuremath{_{\mathrm{#1}}}}
\newcommand{\LexBFS}{\ensuremath{\mathsc{LexBFS}}}
\DeclareMathOperator{\Cov}{Cov}
\DeclareMathOperator{\pa}{pa}
\DeclareMathOperator{\nb}{ne}
\DeclareMathOperator{\ch}{ch}
\DeclareMathOperator{\ad}{ad}
\DeclareMathOperator{\diag}{diag}
\DeclareMathOperator{\doop}{do}
\DeclareMathOperator{\SHD}{SHD}
\DeclareMathOperator*{\argmax}{arg\,max}
\DeclareMathAlphabet{\mathsc}{OT1}{cmr}{m}{sc}
\DeclareMathAlphabet{\mathcal}{OMS}{cmsy}{m}{n}
\newlength{\edgelength}
\DeclareRobustCommand{\grarright}{\mathbin{\tikz[baseline] \draw[->] (0pt, 0.7ex) -- (\edgelength, 0.7ex);}}
\DeclareRobustCommand{\grarleft}{\mathbin{\tikz[baseline] \draw[<-] (0pt, 0.7ex) -- (\edgelength, 0.7ex);}}
\DeclareRobustCommand{\grline}{\mathbin{\tikz[baseline] \draw[-] (0pt, 0.7ex) -- (\edgelength, 0.7ex);}}
\DeclareRobustCommand{\grdots}{\mathbin{\tikz[baseline] \draw[dotted, thick] (0pt, 0.7ex) -- (\edgelength, 0.7ex);}}
\newcommand{\threegraph}[6]{%
    \begin{tikzpicture}[baseline=(one.base)]
        \node[anchor=base east] (one) at (0, 0) {#1};
        \node[anchor=base west] (two) at (1.6, 0) {#3};
        \node[anchor=base] (three) at (0.8, -0.7) {#5};
        \ifthenelse{\equal{#2}{}}{}{\draw[#2] (one.mid east) -- (two.mid west);}
        \ifthenelse{\equal{#4}{}}{}{\draw[#4] (two) -- (three);}
        \ifthenelse{\equal{#6}{}}{}{\draw[#6] (three) -- (one);}
    \end{tikzpicture}
}
\newlength{\exgredge}
\newenvironment{exgraphpicture}{%
    \begin{tikzpicture}[baseline=(v1.base)]
        \node[anchor=mid] (v1) at (0, 0) {$1$};
        \node[anchor=mid] (v2) at (\exgredge,  0) {$2$};
        \node[anchor=mid] (v3) at (2\exgredge, 0) {$3$};
        \node[anchor=mid] (v4) at (3\exgredge, 0) {$4$};
        \node[anchor=mid] (v5) at (\exgredge,  -\exgredge) {$5$};
        \node[anchor=mid] (v6) at (2\exgredge, -\exgredge) {$6$};
        \node[anchor=mid] (v7) at (3\exgredge, -\exgredge) {$7$};
}{\end{tikzpicture}}
\newenvironment{exgraphpicturesmall}{%
    \begin{tikzpicture}[baseline=(v1.base)]
        \node[anchor=mid] (v1) at (0, 0) {$1$};
        \node[anchor=mid] (v2) at (\exgredge,  0) {$2$};
        \node[anchor=mid] (v3) at (2\exgredge, 0) {$3$};
        \node[anchor=mid] (v4) at (0,  -\exgredge) {$4$};
        \node[anchor=mid] (v5) at (\exgredge, -\exgredge) {$5$};
}{\end{tikzpicture}}
\title{Characterization and Greedy Learning of Interventional Markov Equivalence Classes of Directed Acyclic Graphs}
\author{Alain Hauser and Peter Bühlmann\\
{\small\texttt{\{hauser, buhlmann\}@stat.math.ethz.ch}} \\
       Seminar für Statistik\\
       ETH Zürich\\
       8092 Zürich, Switzerland}
\date{}
\begin{document}

\maketitle

\begin{abstract}
The investigation of directed acyclic graphs (DAGs) encoding the same Markov property, that is the same conditional independence relations of multivariate observational distributions, has a long tradition; many algorithms exist for model selection and structure learning in Markov equivalence classes.  In this paper, we extend the notion of Markov equivalence of DAGs to the case of interventional distributions arising from \emph{multiple} intervention experiments.  We show that under reasonable assumptions on the intervention experiments, interventional Markov equivalence defines a finer partitioning of DAGs than observational Markov equivalence and hence improves the identifiability of causal models.  We give a graph theoretic criterion for two DAGs being Markov equivalent under interventions and show that each interventional Markov equivalence class can, analogously to the observational case, be uniquely represented by a chain graph called \emph{interventional essential graph} (also known as \emph{CPDAG} in the observational case).  These are key insights for deriving a generalization of the Greedy Equivalence Search algorithm aimed at structure learning from interventional data.  This new algorithm is evaluated in a simulation study.
\end{abstract}

\section{Introduction}
\label{sec:introduction}

Directed acyclic graphs (or DAGs for short) are commonly used to model causal relationships between random variables; in such models, parents of some vertex in the graph are understood as ``causes'', and edges have the meaning of ``causal influences''.  The causal influences between random variables imply conditional independence relations among them.  However, those independence relations, or the corresponding Markov properties, do \emph{not} identify the corresponding DAG completely, but only up to Markov equivalence.  To put it simple, the skeleton of an underlying DAG is completely determined by its Markov property, whereas the \emph{direction} of the arrows (which is crucial for causal interpretation) is in general not encoded in the Markov property for the observational distribution.

Interventions can help to overcome those limitations in identifiability.  An \emph{intervention} is realized by forcing the value of one or several random variables of the system to chosen values, destroying their original causal dependencies.  The ensemble of both the observational and interventional distributions can greatly improve the identifiability of the causal structure of the system, the underlying DAG.

This paper has two main contributions.  The first one is an algorithmically tractable graphical representation of Markov equivalence classes under a given set of interventions (possibly affecting several variables) from which the identifiability of causal models can be read off.  This is of general interest for computation and algorithms dealing with structure (DAG) learning from an ensemble of observational and interventional data such as MCMC.  The second contribution is a generalization of the Greedy Equivalence Search (GES) algorithm of \citet{Chickering2002Optimal}, yielding an algorithm called Greedy Interventional Equivalence Search (GIES) which can be used for regularized maximum likelihood estimation in such an interventional setting.

In Section \ref{sec:model}, we establish a criterion for two DAGs being Markov equivalent under a given intervention setting.  We then generalize the concept of essential graphs, a graph theoretic representation of Markov equivalence classes, to the interventional case and characterize the properties of those graphs in Section \ref{sec:essential-graphs}.  In Section \ref{sec:greedy-search}, we elaborate a set of algorithmic operations to efficiently traverse the search space of interventional essential graphs and finally present the GIES algorithm.  An experimental evaluation thereof is given in Section \ref{sec:evaluation}.  We postpone all proofs to Appendix \ref{sec:proofs}, while Appendix \ref{sec:graphs} contains a review on graph theoretic concepts and definitions.  An implementation of the GIES algorithm will be available in the next release of the R package \texttt{pcalg} \citep{Kalisch2012Causal}; meanwhile, a prerelease version is available upon request from the first author.

\subsection{Related Work}
\label{sec:related-work}

The investigation of Markov equivalence classes of directed graphical models has a long tradition, perhaps starting with the criterion for two DAGs being Markov equivalent by \citet{Verma1990Equivalence} and culminating in the graph theoretic characterization of essential graphs (also called \emph{CPDAGs}, ``completed partially directed acyclic graphs'') representing Markov equivalence classes by \citet{Andersson1997Characterization}.  Several algorithms for estimating essential graphs from observational data exist, such as the PC algorithm \citep{Spirtes2000Causation} or the Greedy Equivalence Search (GES) algorithm \citep{Meek1997Graphical, Chickering2002Optimal}; a more complete overview is given in \citet{Brown2005Comparison} and \citet{Murphy2001Bayes}.

Different approaches to incorporate interventional data for learning causal models have been developed in the past.  The Bayesian procedures of \citet{Cooper1999Causal} or 
\citet{Eaton2007Exact} address the problem of calculating a posterior (and also a likelihood) of an ensemble of observational and interventional data but do not address questions of identifiability or Markov equivalence: allowing different posteriors for Markov equivalent models can be intended in Bayesian methods (and realized by giving the corresponding models different priors).  Since the number of DAGs with $p$ variables grows super-exponentially with $p$ \citep{Robinson1973Counting}, the computation of a \emph{full} posterior is intractable.  For this reason, the mentioned Bayesian approaches are limited to computing posterior probabilities for certain features of a DAG; such a feature could be an edge from a vertex $a$ to another vertex $b$, or a directed path from $a$ to $b$ visiting additional vertices.  Approaches based on active learning \citep{He2008Active, Tong2001Active, Eberhardt2008Almost} propose an iterative line of action, estimating the essential graph with observational data in a first step and using interventional data in a second step to orient beforehand unorientable edges.  \citet{He2008Active} present a greedy procedure in which interventional data is uniquely used for deciding about edge orientations; this is not favorable from a statistical point of view since interventional data can also help to improve the estimation of the skeleton (or, more generally, the observational essential graph).  \citet{Tong2001Active} avoid this problem by using a Bayesian framework, but do not address the issue of Markov equivalence therewith.  \citet{Eberhardt2005Number} and \citet{Eberhardt2008Almost} provide algorithms for choosing intervention targets that \emph{completely} identify \emph{all} causal models of $p$ variables uniformly, but neither address the question of partial identifiability under a limited number of interventions nor provide an algorithm for learning the causal structure from data.  \citet{Eberhardt2010Combining} present an algorithm for learning \emph{cyclic} linear causal models, but focus on complete identifiability; identifiability results for cyclic models only imply \emph{sufficient}, but not \emph{necessary}, conditions for the identifiability of acyclic models.

Probably the most advanced result concerning identifiability of causal models under single-variable interventions so far is given in the work of \citet{Tian2001CausalA}.  Although they do not provide a characterization of equivalence classes as a whole (as this paper does), they present a necessary and sufficient graph theoretic criterion for two models being indistinguishable under a set of single-variable interventions as well as a learning algorithm based on the detection of changes in marginal distributions.

\section{Model}
\label{sec:model}

We consider $p$ random variables $(X_1, \ldots, X_p) =: X$ which take values in some product measure space $(\mathcal{X}, \mathcal{A}, \mu) = (\prod_{i=1}^p \mathcal{X}_i, \bigotimes_{i=1}^p \mathcal{A}_i, \bigotimes_{i=1}^p \mu_i)$ with $\mathcal{X}_i \subset \sR \spforall i$.  Each $\sigma$-algebra $\mathcal{A}_i$ is assumed to contain at least two disjoint sets of positive measure to avoid pathologies, and $X$ is assumed to have a strictly positive joint density w.r.t.\ the measure $\mu$ on $\mathcal{X}$.  We denote the set of all positive densities on $\mathcal{X}$ by $\mathcal{M}$.  For any subset of component indices $A \subset [p] := \{1, \ldots, p\}$, we use the notation $\mathcal{X}_A := \prod_{a \in A} \mathcal{X}_a$, $X_A := (X_a)_{a \in A}$ and the convention $X_\emptyset \equiv 0$.  Lowercase symbols like $x_A$ represent a value in $\mathcal{X}_A$.

The model we are considering is built upon Markov properties with respect to DAGs.  By convention, all graphs appearing in the paper shall have the vertex set $[p]$, representing the $p$ random variables $X_1, \ldots, X_p$.  Our notation and definitions related to graphs are summarized in Appendix \ref{sec:graphs-notation}.

\subsection{Causal Calculus: A Short Review}
\label{sec:causal-calculus}

We start by summarizing important facts and fixing our notation concerning Markov properties and intervention calculus.

\begin{definition}[Markov property; \citealp{Lauritzen1996Graphical}]
    \label{def:directed-markov-property}
    Let $D$ be a DAG.  Then we say that a probability density $f \in \mathcal{M}$ \textbf{obeys the Markov property of $D$} if $f(x) = \prod_{i = 1}^p f(x_i | x_{\pa_D(i)})$.  The set of all positive densities obeying the Markov property of $D$ is denoted by $\mathcal{M}(D)$.
\end{definition}

Definition \ref{def:directed-markov-property} is the most straightforward translation of independence relations induced from structural equations, the historical origin of directed graphical models \citep{Wright1921Correlation}.  Related notions like local and global Markov properties exist and are equivalent to the factorization property of Definition \ref{def:directed-markov-property} for positive densities \citep{Lauritzen1996Graphical}.

\begin{definition}[Markov equivalence; \citealp{Andersson1997Characterization}]
    \label{def:markov-equivalence}
    Let $D_1$ and $D_2$ be two DAGs.  $D_1$ and $D_2$ are called \textbf{Markov equivalent} (notation: $D_1 \sim D_2$) if $\mathcal{M}(D_1) = \mathcal{M}(D_2)$.
\end{definition}

\begin{theorem}[\citealp{Verma1990Equivalence}]
    \label{thm:markov-equivalence}
    Two DAGs $D_1$ and $D_2$ are Markov-equivalent if and only if they have the same skeleton and the same v-structures.
\end{theorem}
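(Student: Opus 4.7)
My plan is to work through the equivalence between $\mathcal{M}(D_1) = \mathcal{M}(D_2)$ and a purely graphical property, namely that the d-separation relation in $D_1$ coincides with the one in $D_2$. Since for positive densities the factorization property of Definition \ref{def:directed-markov-property} is equivalent to the global Markov property, $f \in \mathcal{M}(D)$ is equivalent to asking that every d-separation $A \indep_D B \mid S$ entails $X_A \indep X_B \mid X_S$. The standard fact I will use is that any d-separation relation is ``tight'': for each pair $(D, A, B, S)$ with $A$ and $B$ not d-separated by $S$ in $D$, there exists $f \in \mathcal{M}(D)$ violating the corresponding conditional independence. Hence $\mathcal{M}(D_1) = \mathcal{M}(D_2)$ is equivalent to the statement that $D_1$ and $D_2$ induce the same d-separations.

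For the forward (``only if'') direction I argue by contrapositive. Suppose first that the skeletons differ, say $\{i,j\}$ is an edge of $D_1$ but not $D_2$. In $D_1$ no set can d-separate $i$ from $j$, while in $D_2$ a suitable set does: without loss of generality assume $i$ is not a descendant of $j$ in $D_2$; then $\pa_{D_2}(i) \cup \{k : k \text{ non-descendant of } j, k \neq i\}$, or more simply the set of non-descendants of $\max(i,j)$ in a topological order of $D_2$, d-separates $i$ from $j$. So the d-separation relations differ. Suppose next that the skeletons agree but some v-structure $i \to k \leftarrow j$ (with $i,j$ non-adjacent) is present in $D_1$ but absent in $D_2$. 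I will exhibit a separating set witnessing the discrepancy: the non-descendants of $k$ in whichever DAG places $k$ highest in a compatible topological order, adjusted to contain or exclude $k$, give conditional independence between $i$ and $j$ in one graph but not the other, because conditioning on $k$ opens the collider in $D_1$ while in $D_2$ the path through $k$ is a non-collider whose conditioning blocks it.

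For the backward (``if'') direction my preferred route is the covered-edge-reversal argument. Say an edge $i \to j$ in a DAG $D$ is \emph{covered} if $\pa_D(j) = \pa_D(i) \cup \{i\}$. A direct d-separation check shows that reversing a covered edge yields a DAG $D'$ with the same skeleton, the same v-structures, and satisfying $\mathcal{M}(D') = \mathcal{M}(D)$. It therefore suffices to show that any two DAGs sharing skeleton and v-structures can be connected by a finite sequence of covered edge reversals. This can be proved by induction on the number of edges oriented differently in $D_1$ and $D_2$: pick a topological order of $D_2$, find an edge in $D_1$ oriented oppositely to $D_2$ that is ``deepest'' in this order, and show that it must be covered in $D_1$ (otherwise, using that skeletons and v-structures coincide, one derives a directed cycle or a missing v-structure). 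Reversing it strictly decreases the number of discrepancies, and we iterate.

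The main obstacle I expect is precisely this last inductive step: verifying that the lowest discrepantly-oriented edge is covered requires a careful case analysis to ensure that the hypotheses ``same skeleton'' and ``same v-structures'' rule out both the creation of a new v-structure upon reversal and the existence of an extra parent on one side of the edge. The forward direction is essentially bookkeeping with d-separation, and the covered-edge-reversal step itself is a short direct calculation on d-connecting paths; the chaining lemma is the content-heavy combinatorial core of the argument.
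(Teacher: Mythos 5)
First, a point of comparison: the paper does not prove this statement. Theorem \ref{thm:markov-equivalence} is imported from \citet{Verma1990Equivalence} as known background, and no argument for it appears in Appendix \ref{sec:proofs}. So there is no in-paper proof to measure you against; I can only assess your sketch on its own terms. Your overall route --- reduce $\mathcal{M}(D_1)=\mathcal{M}(D_2)$ to equality of d-separation relations, handle ``only if'' by exhibiting discrepant separations, and handle ``if'' by covered-edge reversals --- is the standard modern proof and is sound in outline. Two remarks on the easier parts: the reduction to d-separation silently invokes the \emph{completeness} of d-separation (for every non-separation there is a density in $\mathcal{M}(D)$ violating the corresponding conditional independence); this is a nontrivial theorem of Geiger--Pearl/Meek and should be cited, not treated as free, and in the general measure-space setting of Section \ref{sec:model} one must additionally embed a discrete or Gaussian witness, which the assumption on the $\sigma$-algebras permits. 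In the skeleton case of the forward direction, the clean separator for non-adjacent $i,j$ with $i$ a non-descendant of $j$ is simply $\pa_{D_2}(j)$ (the local Markov property); the set you wrote down is garbled, though the existence claim it is meant to support is correct.

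The genuine gap is the one you identify yourself: the chaining lemma that any two DAGs with equal skeletons and v-structures are connected by covered-edge reversals. This is Chickering's transformational characterization, and it is the entire combinatorial content of the ``if'' direction; your inductive step is stated too loosely to be checked. ``Find a discrepantly-oriented edge of $D_1$ that is deepest in a topological order of $D_2$ and show it is covered'' is not quite the right selection rule --- the standard argument fixes a topological order, takes the \emph{minimal} vertex $y$ into which some discrepant edge points, then the \emph{maximal} discrepant parent $x$ of $y$, and only for that specific pair does the case analysis (using acyclicity, equal skeletons, and equal v-structures simultaneously) force $x \to y$ to be covered. With an arbitrary ``deepest'' edge the coveredness claim is false in general. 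So as submitted the proof has a hole at its core, albeit one that is fillable by carrying out (or citing) Chickering's lemma. Note also that the original Verma--Pearl argument avoids this machinery altogether by showing directly that the d-separation relation of a DAG is determined by its skeleton and v-structures; that is the more elementary route if you want a self-contained proof.
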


Directed graphical models allow for an obvious causal interpretation.  For a density $f$ that obeys the Markov properties of some DAG $D$, we can think of a random variable $X_a$ being the \emph{direct cause} of another variable $X_b$ if $a$ is a parent of $b$ in $D$.

\begin{definition}[Causal model]
    \label{def:causal-model}
    A \textbf{causal model} is a pair $(D, f)$, where $D$ is a DAG on the vertex set $[p]$ and $f \in \mathcal{M}(D)$ is a density obeying the Markov property of $D$:  $D$ is called the \textbf{causal structure} of the model, and $f$ the \textbf{observational density}.
\end{definition}

Causality is strongly linked to interventions.  We consider \textbf{stochastic interventions} \citep{Korb2004Varieties} modeling the effect of setting or forcing one or several random variables $X_I$, where $I \subset [p]$ is called the \textbf{intervention target}, to the value of \emph{independent} random variables $U_I$, called \textbf{intervention variables}.  The joint product density of $U_I$ on $\mathcal{X}_I$, called \textbf{level density}, is denoted by $\tilde{f}$.  Extending the $\doop()$ operator \citep{Pearl1995Causal} to stochastic interventions, we denote the density of $X$ under such an intervention by $f(x | \doop_D(X_I = U_I))$.  Using truncated factorization and the assumption of independent intervention variables, this \textbf{interventional density} can be written as
\begin{equation}
    f(x \spst \doop_D(X_I = U_I)) = \prod_{i \notin I} f(x_i | x_{\pa_D(i)}) \prod_{i \in I} \tilde{f}(x_i)\ . \label{eqn:interventional-density}
\end{equation}
By denoting with $I = \emptyset$ and using the convention $f(x | \doop(X_\emptyset = U_\emptyset)) = f(x)$, we also encompass the observational case as an intervention target.

\begin{definition}[Intervention graph]
    \label{def:intervention-graph}
    Let $D = ([p], E)$ be a DAG with vertex set $[p]$ and edge set $E$ (see Appendix \ref{sec:graphs-notation}), and $I \subset [p]$ an intervention target.  The \textbf{intervention graph} of $D$ is the DAG $D^{(I)} = ([p], E^{(I)})$, where $E^{(I)} := \{(a, b) \; | \; (a, b) \in E, b \notin I\}$.
\end{definition}
For a causal model $(D, f)$, an interventional density $f(\cdot | \doop_D(X_I = U_I))$ obeys the Markov property of $D^{(I)}$: the Markov property of the observational density is inherited.  Figure \ref{fig:ex-intervention-dags} shows an example of a DAG and two corresponding intervention graphs.

\begin{figure}[b]
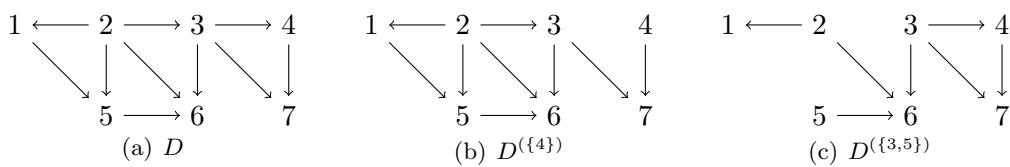

    \centering
    \subfigure[$D$ \label{fig:ex-observational-dag}]{%
        \begin{exgraphpicture}
            \draw[->] (v2) -- (v1);
            \draw[->] (v2) -- (v3);
            \draw[->] (v3) -- (v4);
            \draw[->] (v1) -- (v5);
            \draw[->] (v2) -- (v5);
            \draw[->] (v2) -- (v6);
            \draw[->] (v3) -- (v6);
            \draw[->] (v5) -- (v6);
            \draw[->] (v3) -- (v7);
            \draw[->] (v4) -- (v7);
        \end{exgraphpicture}
    } \quad
    \subfigure[$D^{(\{4\})}$ \label{fig:ex-intervention-4}]{%
        \begin{exgraphpicture}
            \draw[->] (v2) -- (v1);
            \draw[->] (v2) -- (v3);
            \draw[->] (v1) -- (v5);
            \draw[->] (v2) -- (v5);
            \draw[->] (v2) -- (v6);
            \draw[->] (v3) -- (v6);
            \draw[->] (v5) -- (v6);
            \draw[->] (v3) -- (v7);
            \draw[->] (v4) -- (v7);
        \end{exgraphpicture}
    } \quad
    \subfigure[$D^{(\{3, 5\})}$]{%
        \begin{exgraphpicture}
            \draw[->] (v2) -- (v1);
            \draw[->] (v3) -- (v4);
            \draw[->] (v2) -- (v6);
            \draw[->] (v3) -- (v6);
            \draw[->] (v5) -- (v6);
            \draw[->] (v3) -- (v7);
            \draw[->] (v4) -- (v7);
        \end{exgraphpicture}
    }
    \caption{A DAG $D$ and the corresponding intervention graphs $D^{(\{4\})}$ and $D^{(\{3, 5\})}$.}
    \label{fig:ex-intervention-dags}
\end{figure}

As foreshadowed in the introduction, we are interested in causal inference based on data sets originating from \emph{multiple} interventions, that means from a set of the form $\mathcal{S} = \{(I_j, \tilde{f}_j)\}_{j=1}^J$, where $I_j \subset [p]$ is an intervention target and $\tilde{f}_j$ a level density on $\mathcal{X}_{I_j}$ for $1 \leq j \leq J$.  We call such a set an \textbf{intervention setting}, and the corresponding (multi)set of intervention targets $\mathcal{I} = \{I_j\}_{j=1}^J$ a \textbf{family of targets}.  We often use the family of targets as an index set, for example to write a corresponding intervention setting as $\mathcal{S} = \{(I, \tilde{f}_I)\}_{I \in \mcI}$.

We consider \textbf{interventional data} of sample size $n$ produced by a causal model $(D, f)$ under an intervention setting $\mathcal{S} = \{(I, \tilde{f}_I)\}_{I \in \mcI}$.  We assume that the $n$ samples $X^{(1)}, \ldots, X^{(n)}$ are independent, and write them as usual as rows of a \textbf{data matrix} $\mathbf{X}$.  However, they are \emph{not} identically distributed as they arise from \emph{different} interventions.  The interventional data set is fully specified by the pair $(\mathcal{T}, \mathbf{X})$,
\begin{equation}
    \mathcal{T} = \left( \begin{array}{c} T^{(1)} \\ \vdots \\ T^{(n)} \end{array} \right) \in \mcI^n, \quad \mathbf{X} = \left( \begin{array}{c} \text{\textemdash } \, X^{(1)} \text{ \textemdash} \\ \vdots \\ \text{\textemdash } \, X^{(n)} \text{ \textemdash} \end{array} \right) \ , \label{eqn:dataset}
\end{equation}
where for each $i \in [n]$, $T^{(i)}$ denotes the intervention target under which the sample $X^{(i)}$ was produced.  This data set can potentially contain observational data as well, namely if $\emptyset \in \mcI$.  To summarize, we consider the statistical model
\begin{align}
    & X^{(1)}, X^{(2)}, \ldots, X^{(n)} \text{ independent,} \nonumber \\
    & X^{(i)} \sim f\big(\cdot \spst \doop_D(X^{(i)}_{T^{(i)}} = U_{T^{(i)}}) \big), \ U_{T^{(i)}} \sim \tilde{f}_{T^{(i)}}, \quad i = 1, \ldots, n \ , \label{eqn:sample-density}
\end{align}
and we assume that each target $I \in \mcI$ appears at least once in the sequence $\mathcal{T}$.

\subsection{Interventional Markov Equivalence: New Concepts and Results}
\label{sec:interventional-markov-properties}

An intervention at some target $a \in [p]$ destroys the original causal influence of other variables of the system on $X_a$.  Interventional data thereof can hence not be used to determine the causal parents of $X_a$ in the (undisturbed) system.  To be able to estimate at least the complete skeleton of a causal structure (as in the observational case), an intervention experiment has to be performed based on a \emph{conservative} family of targets:
\begin{definition}[Conservative family of targets]
    \label{def:conservative-target-family}
    A family of targets $\mathcal{I}$ is called \textbf{conservative} if for all $a \in [p]$, there is some $I \in \mathcal{I}$ such that $a \notin I$.
\end{definition}
In this paper, we restrict our considerations to \emph{conservative} families of targets; see Section \ref{sec:model-discussion} for a more detailed discussion.  Note that every experiment in which we also measure observational data corresponds to a conservative family of targets.

If a family of targets \mcI{} contains more than one target, interventional data as in Equation (\ref{eqn:sample-density}) are \emph{not} identically distributed.  Whereas the distribution of observational data is determined by a \emph{single} density, we need \emph{tuples} of densities as in the following definition to specify the distribution of interventional data.
\begin{definition}
    \label{def:intervention-densities}
    Let $D$ be a DAG on $[p]$, and let $\mathcal{I}$ be a family of targets.  Then we define
    \begin{align*}
        \mathcal{M_I}(D) := & \big\{(f^{(I)})_{I \in \mcI} \in \mathcal{M}^{|\mcI|} \,\big| \spforall I \in \mcI: f^{(I)} \in \mathcal{M}(D^{(I)}), \text{ and} \\
        & \spforall I, J \in \mcI, \spforall a \notin I \cup J: f^{(I)}(x_a | x_{\pa_D(a)}) = f^{(J)}(x_a | x_{\pa_D(a)}) \big\}\ .
    \end{align*}
\end{definition}
Although the $\doop()$ operator does not appear in Definition \ref{def:intervention-densities}, the elements in $\mathcal{M_I}(D)$ are exactly the tuples $(f(\cdot | \doop_D(X_I = U_I)))_{I \in \mcI}$ that can be realized as interventional densities of some causal model $(D, f)$.  The first condition in the definition reflects the fact that an intervention at a target $I$ generates a density obeying the Markov property of $D^{(I)}$; the second condition is a consequence of the truncated factorization in Equation (\ref{eqn:interventional-density}).  These considerations are formalized in the following lemma and motivate Definition \ref{def:interventional-markov-equivalence} of interventional Markov equivalence in analogy to the observational case.  Note that for $\mcI = \{\emptyset\}$, Definition \ref{def:intervention-densities} equals its observational counterpart: $\mathcal{M}_{\{\emptyset\}}(D) = \mathcal{M}(D)$ (see Definition \ref{def:directed-markov-property}).

\begin{lemma}
    \label{lem:intervention-densities-motivation}
    Let $D$ be a DAG on $[p]$, and $\mcI$ a conservative family of targets.
    \begin{subprop}
        \item \label{itm:interventions-meet-definition} Let $(D, f)$ be a causal model (that is, $f \in \mathcal{M}(D)$), $\mathcal{S} = \{(I, \tilde{f}_I)\}_{I \in \mcI}$ an intervention setting and $U_I \sim \tilde{f}_I$ intervention variables for $I \in \mcI$.  Then, we have
        $$
            \big(f(\cdot \spst \doop(X_I = U_I))\big)_{I \in \mcI} \in \mathcal{M_I}(D)\ .
        $$
        
        \item \label{itm:interventions-realize-definition} Let $(f^{(I)})_{I \in \mcI} \in \mathcal{M_I}(D)$.  Then there is some positive density $f \in \mathcal{M}(D)$ and an intervention setting $\mathcal{S} = \{(I, \tilde{f}_I)\}_{I \in \mcI}$ such that $f(\cdot|\doop(X_I = U_I)) = f^{(I)}(\cdot)$ for random variables $U_I$ with density $\tilde{f}_I$, for all $I \in \mcI$.
    \end{subprop}
\end{lemma}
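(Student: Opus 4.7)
The plan is to handle the two directions separately; both reduce to manipulating the factorizations in Definition \ref{def:directed-markov-property} and Equation (\ref{eqn:interventional-density}) together with the fact that $\pa_{D^{(I)}}(a) = \pa_D(a)$ whenever $a \notin I$ and $\pa_{D^{(I)}}(a) = \emptyset$ whenever $a \in I$.

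\emph{Part \ref{itm:interventions-meet-definition}.} I would simply check both conditions in Definition \ref{def:intervention-densities} for the tuple $f^{(I)}(x) := f(x \spst \doop_D(X_I = U_I))$. The truncated factorization (\ref{eqn:interventional-density}) writes $f^{(I)}$ as a product $\prod_{i \notin I} f(x_i | x_{\pa_D(i)}) \prod_{i \in I} \tilde f_I(x_i)$, and by the previous observation about parents in the intervention graph this is exactly the Markov factorization of $f^{(I)}$ with respect to $D^{(I)}$, so $f^{(I)} \in \mathcal M(D^{(I)})$. For the second condition, note that for any $I$ with $a \notin I$ the factor $f(x_a | x_{\pa_D(a)})$ in the truncated factorization is independent of $I$ (it is the corresponding conditional of the original $f$), so the compatibility requirement on shared non-intervened coordinates holds automatically.

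\emph{Part \ref{itm:interventions-realize-definition}.} Here the construction is the key point. Using conservativity of $\mcI$, choose for every $a \in [p]$ some $I_a \in \mcI$ with $a \notin I_a$, and define
\[
    f(x) := \prod_{a=1}^p f^{(I_a)}(x_a \spst x_{\pa_D(a)})\ .
\]
The compatibility clause in Definition \ref{def:intervention-densities} shows that this does not depend on the particular choice of $I_a$. I would then verify three things. First, each factor $f^{(I_a)}(x_a \spst x_{\pa_D(a)})$ is a bona fide conditional density of $f^{(I_a)}$, because $f^{(I_a)} \in \mathcal M(D^{(I_a)})$ and $\pa_{D^{(I_a)}}(a) = \pa_D(a)$. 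Standard DAG factorization therefore makes $f$ a positive density on $\mathcal X$ with $f \in \mathcal M(D)$. Second, define the level densities $\tilde f_I(x_i) := f^{(I)}(x_i)$ for $i \in I$ (the marginals of $f^{(I)}$ at the intervened coordinates), and let $U_I$ have this product density. Third, plug these into (\ref{eqn:interventional-density}): for $a \notin I$ we may take $I_a = I$, so $f(x_a | x_{\pa_D(a)}) = f^{(I)}(x_a | x_{\pa_D(a)})$, and therefore
\[
    f(x \spst \doop_D(X_I = U_I)) = \prod_{a \notin I} f^{(I)}(x_a \spst x_{\pa_D(a)}) \prod_{i \in I} f^{(I)}(x_i)\ ,
\]
which is precisely the Markov factorization of $f^{(I)}$ over $D^{(I)}$ and hence equals $f^{(I)}(x)$.

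The only step with any real content is the well-definedness of $f$ in Part~\ref{itm:interventions-realize-definition}; that is where the compatibility condition in Definition \ref{def:intervention-densities} and the conservativity of $\mcI$ are both essential, and where one might worry about a hidden inconsistency. Everything else is bookkeeping with the truncated factorization.
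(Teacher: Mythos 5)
Your proof is correct and follows essentially the same route as the paper's: part \ref{itm:interventions-meet-definition} is read off from the truncated factorization, and part \ref{itm:interventions-realize-definition} uses exactly the paper's construction, namely $f(x) := \prod_a f^{(I_a)}(x_a \mid x_{\pa_D(a)})$ with $I_a$ chosen by conservativity (well-defined by the compatibility clause) and level densities given by the products of marginals $\prod_{i \in I} f^{(I)}(x_i)$. Your final verification that plugging these into the truncated factorization recovers $f^{(I)}$ is slightly more explicit than the paper's, but the substance is identical.
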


\begin{definition}[Interventional Markov equivalence]
    \label{def:interventional-markov-equivalence}
    Let $D_1$ and $D_2$ be DAGs, and \mcI{} a family of targets.  $D_1$ and $D_2$ are called \textbf{\mcI-Markov equivalent} (notation: $D_1 \sim_\mcI D_2$) if $\mathcal{M_I}(D_1) = \mathcal{M_I}(D_2)$.  The \mcI-Markov equivalence class of a DAG $D$ is denoted by $[D]_\mcI$.
\end{definition}
Alternatively, we will also use the term ``interventionally Markov equivalent'' when it is clear which family of targets is meant.  For the simplest conservative family of targets, $\mcI = \{\emptyset\}$, we get back Definition \ref{def:markov-equivalence} for the observational case.  We now generalize Theorem \ref{thm:markov-equivalence} for the interventional case in order to get a purely graph theoretic criterion for interventional Markov equivalence of two given DAGs, the main result of this section.

\begin{theorem}
    \label{thm:interventional-markov-equivalence}
    Let $D_1$ and $D_2$ be two DAGs on $[p]$, and \mcI{} a conservative family of targets.  Then, the following statements are equivalent:
    \begin{subprop}
        \item \label{itm:interventional-markov-equivalence} $D_1 \sim_\mcI D_2$;
        \item \label{itm:interventional-densities-equal} for all $I \in \mcI$, $D_1^{(I)} \sim D_2^{(I)}$ (in the observational sense);
        \item \label{itm:interventional-dags-equivalent} for all $I \in \mcI$, $D_1^{(I)}$ and $D_2^{(I)}$ have the same skeleton and the same v-structures;
        \item \label{itm:skeleton-v-structures} $D_1$ and $D_2$ have the same skeleton and the same v-structures, and $D_1^{(I)}$ and $D_2^{(I)}$ have the same skeleton for all $I \in \mcI$.
    \end{subprop}
\end{theorem}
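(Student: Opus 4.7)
I plan to prove the four-way equivalence through the blocks (ii) $\Leftrightarrow$ (iii) $\Leftrightarrow$ (iv) (graph-theoretic) and (i) $\Leftrightarrow$ (ii) (density-theoretic). The equivalence (ii) $\Leftrightarrow$ (iii) is immediate from Verma--Pearl (Theorem~\ref{thm:markov-equivalence}) applied to each pair $(D_1^{(I)}, D_2^{(I)})$.

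For (iii) $\Rightarrow$ (iv), conservativeness lets me transfer edges and v-structures one at a time: given $a \to b$ in $D_1$, a target $I \in \mcI$ with $b \notin I$ keeps the edge in $D_1^{(I)}$, so by (iii) it lies in the skeleton of $D_2^{(I)}$ and hence in $D_2$; the same device with an $I$ avoiding the collider $c$ transports a v-structure $a \to c \leftarrow b$ from $D_1$ to $D_2$. For (iv) $\Rightarrow$ (iii) I must promote skeleton agreement of the intervention DAGs to v-structure agreement, via case analysis on a candidate v-structure $a \to c \leftarrow b$ in $D_1^{(I)}$: either $a, b$ are already non-adjacent in $D_1$ (so the v-structure lives in $D_1$ and hence in $D_2$ by (iv)), or their adjacency is destroyed only by the intervention, in which case the matching $D_2^{(I)}$-skeleton from (iv) forces the orientations in $D_2$ on the triangle $\{a, b, c\}$ to coincide with those in $D_1$. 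The forward direction (i) $\Rightarrow$ (ii) is an extension argument: given $g \in \mathcal{M}(D_1^{(I)})$, I read off its $D_1^{(I)}$-factorization to pin down the $I$-th component of a tuple in $\mathcal{M}_{\mcI}(D_1)$, complete the remaining vertex-wise kernels (for $a \in I$) with any positive conditionals, and fill in the other targets' components with any positive level densities; by (i) the tuple also lies in $\mathcal{M}_{\mcI}(D_2)$, so its $I$-th component $g$ belongs to $\mathcal{M}(D_2^{(I)})$.

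The main obstacle is (ii) $\Rightarrow$ (i), which I would handle by invoking Chickering's theorem on covered-edge reversals. Since (ii) implies (iv), we have $D_1 \sim D_2$ observationally, so there is a chain $D_1 = D^{(0)}, D^{(1)}, \ldots, D^{(r)} = D_2$ of DAGs in which each $D^{(k)}$ is obtained from $D^{(k-1)}$ by reversing an edge $a \to b$ that is covered in $D^{(k-1)}$ and whose orientation differs between $D_1$ and $D_2$. Crucially, (ii) forces every such edge $\{a, b\}$ to be either disjoint from or contained in every $I \in \mcI$---otherwise $D_1^{(I)}$ and $D_2^{(I)}$ would fail to share a skeleton at that edge. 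Under this compatibility, for any $f \in \mathcal{M}(D_1) = \mathcal{M}(D^{(k)})$ and any fixed level densities, the covered-edge identity $f(x_a \mid x_{\pa_{D^{(k-1)}}(a)}) \, f(x_b \mid x_{\pa_{D^{(k-1)}}(b)}) = f(x_a, x_b \mid x_{\pa_{D^{(k-1)}}(a)})$ and its $D^{(k)}$-analogue show that each reversal leaves the intervention density at every $I \in \mcI$ unchanged: when $\{a, b\} \cap I = \emptyset$ both factorizations collapse to the same bivariate joint conditional, and when $\{a, b\} \subseteq I$ both contribute only the product of level densities. Iterating along the chain, any tuple realized by Lemma~\ref{lem:intervention-densities-motivation}\ref{itm:interventions-realize-definition} as interventions on $(D_1, f)$ is also realized by interventions on $(D_2, f)$ with the same level densities, so $\mathcal{M}_{\mcI}(D_1) \subseteq \mathcal{M}_{\mcI}(D_2)$, with equality by symmetry. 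The subtle point is verifying that the compatibility condition certified by (ii) for the edges on which $D_1$ and $D_2$ disagree suffices at every intermediate step; this holds because Chickering's chain flips only such edges.
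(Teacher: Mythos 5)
Your proposal is correct, but its load-bearing implication is proved by a genuinely different route than the paper's. The paper closes the single cycle \ref{itm:interventional-markov-equivalence} $\Rightarrow$ \ref{itm:interventional-densities-equal} $\Rightarrow$ \ref{itm:interventional-dags-equivalent} $\Rightarrow$ \ref{itm:skeleton-v-structures} $\Rightarrow$ \ref{itm:interventional-markov-equivalence}, and its hard step is \ref{itm:skeleton-v-structures} $\Rightarrow$ \ref{itm:interventional-markov-equivalence}: the truncated factorization is rewritten as $f(x)\,\tilde f_I(x_I)/f(x_I \mid x_{\pa_D(I)})$ (Lemma \ref{lem:conditional-factorization}), and $\pa_{D_1}(I)=\pa_{D_2}(I)$ (Lemma \ref{lem:conserved-arrow}) then makes the two interventional densities literally identical in one global step. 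You instead isolate (ii) $\Rightarrow$ (i) and prove it by transporting a single observational density along Chickering's chain of covered-edge reversals, checking that every flipped edge is never split by a target (which (ii) guarantees for disagreeing edges, and Chickering's theorem guarantees that only disagreeing edges are flipped, each exactly once) and that a target-compatible covered-edge reversal leaves every interventional factorization unchanged. Both arguments are sound; your (i) $\Rightarrow$ (ii) is essentially the paper's Lemma \ref{lem:intervention-densities-projection}, and your (iii) $\Rightarrow$ (iv) matches the paper's. The paper's route is shorter and self-contained, whereas yours imports a nontrivial external theorem but yields a sharper local by-product---an exact characterization of which single covered-edge reversals preserve $\mcI$-equivalence---which is of independent interest (it is the interventional analogue of the ``covered arc'' notion discussed in Section \ref{sec:discussion}).

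One place where your sketch claims more than it shows is (iv) $\Rightarrow$ (iii), in the case where the adjacency of $a$ and $b$ is destroyed only by the intervention. The matching skeleton of $D_2^{(I)}$ alone does \emph{not} force the triangle orientations: if, say, $a \notin I$ and $c \grarright a \in D_2$, the $a$--$c$ edge survives in both intervention graphs and no skeleton conflict arises at that edge. You must additionally argue that $a \grarright b \in D_2$ (forced by the skeleton of $D_2^{(I)}$ at the $a$--$b$ edge, which puts $b \in I$) and then invoke acyclicity of $D_2$ to force $c \grarright b \in D_2$, whose deletion in $D_2^{(I)}$ finally contradicts the presence of $b \grarright c$ in $D_1^{(I)}$. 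This is a two-line repair rather than a gap in the strategy, but as written the step does not follow from the skeleton condition alone.
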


\subsection{Discussion}
\label{sec:model-discussion}

Throughout this paper, we always assume the observational density $f$ of a causal model to be strictly positive.  This assumption makes sure that the conditional densities in Equation (\ref{eqn:interventional-density}) are well-defined.  The requirement of a strictly positive density can, however, be a restriction for example for discrete models (where the density is with respect to the counting measure).  In the observational case, the notion of Markov equivalence remains the same when we also allow densities that are not strictly positive \citep{Lauritzen1996Graphical}.  We conjecture that the notion of interventional Markov equivalence (Definition \ref{def:interventional-markov-equivalence} and Theorem \ref{thm:interventional-markov-equivalence}) also remains valid for such densities; corresponding proofs would, however, require more caution to avoid the aforementioned problems with (truncated) factorization.

To illustrate the importance of a conservative family of targets for structure identification, let us consider the simplest non-trivial example of a causal model with 2 variables $X_1$ and $X_2$.  Under observational data, we can distinguish two Markov equivalence classes: one in which the variables are independent (represented by the empty DAG $D_0$), and one in which they are not independent (represented by the DAGs $D_1 := 1 \grarright 2$ and $D_2 := 1 \grarleft 2$).  $D_1$ and $D_2$ can be distinguished if we can measure data from an intervention at one of the vertices in addition to observational data; this experimental setting corresponds to the (conservative) family of targets $\mcI = \{\emptyset, \{1\}\}$.  However, an intervention at, say, $X_1$ \emph{alone} (that is, in the \emph{absence} of observational data), corresponding to the non-conservative family $\mcI = \{\{1\}\}$, only allows a distinction between the models $D_2$ and $D_0$ on the one hand (which do not show dependence between $X_1$ and $X_2$ under the intervention) and $D_1$ on the other hand (which does show dependence between $X_1$ and $X_2$ under the intervention).  Note that the two indistinguishable models $D_0$ and $D_2$ do not even have the same skeleton, and that it is impossible to determine the influence of $X_2$ on $X_1$ in the undisturbed system.  In this setting, it would be more natural to consider the intervened variable $X_1$ as an external parameter rather than a random variable of the system, and to perform regression to detect or determine the influence of $X_1$ on $X_2$.  Note, however, that full identifiability of the models does \emph{not} require observational data; interventions at $X_1$ and $X_2$ (corresponding to the conservative family $\mcI = \{\{1\}, \{2\}\}$ in our notation) are also sufficient.

Theorem \ref{thm:interventional-markov-equivalence} is of great importance for the description of Markov equivalence classes under interventions.  It shows that two DAGs which are interventionally Markov equivalent under some conservative family of targets are also observationally Markov equivalent:
\begin{equation}
    \label{eqn:equivalence-implication}
    D_1 \sim_\mathcal{I} D_2 \Rightarrow D_1 \sim D_2.
\end{equation}
This implication is \emph{not} true anymore for non-conservative families of targets.  This is an explanation for the term ``conservative'': a conservative family of targets yields a finer partitioning of DAGs into equivalence classes compared to observational Markov equivalence, but it preserves the ``borders'' of observational Markov equivalence classes.  Figure \ref{fig:ex-equivalent-dags} shows three DAGs that are observationally Markov equivalent, but which fall into two different interventional Markov equivalence classes under the family of targets $\mcI = \{\emptyset, \{4\}\}$.

Theorem \ref{thm:interventional-markov-equivalence} agrees with Theorem 3 of \citet{Tian2001CausalA} for single-variable interventions.  While we also make a statement about interventions at \emph{several} variables, they prove their theorem for perturbations of the system at single variables only, but for a wider class of perturbations called \textbf{mechanism changes} that go beyond our notion of interventions.  While an \emph{intervention} destroys the causal dependence of a variable from its parents (and hence replaces a conditional density by a marginal one in the Markov factorization, see Equation (\ref{eqn:interventional-density})), a \emph{mechanism change} (also known as ``imperfect'' or ``soft'' interventions; see \citealp{Eaton2007Exact}) alters the functional form of this dependence (and hence replaces a Markov factor by a different one which is still a conditional distribution).  The fact that Theorem \ref{thm:interventional-markov-equivalence} is true for mechanism changes on single variables motivates the conjecture that it also holds for mechanism changes on \emph{several} variables.

\begin{figure}[t]
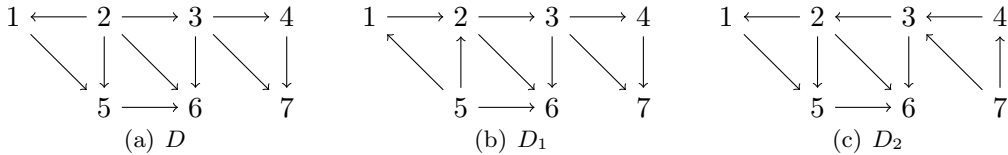

    \centering
    \subfigure[$D$]{%
        \begin{exgraphpicture}
            \draw[->] (v2) -- (v1);
            \draw[->] (v2) -- (v3);
            \draw[->] (v3) -- (v4);
            \draw[->] (v1) -- (v5);
            \draw[->] (v2) -- (v5);
            \draw[->] (v2) -- (v6);
            \draw[->] (v3) -- (v6);
            \draw[->] (v5) -- (v6);
            \draw[->] (v3) -- (v7);
            \draw[->] (v4) -- (v7);
        \end{exgraphpicture}
    } \quad
    \subfigure[$D_1$]{%
        \begin{exgraphpicture}
            \draw[->] (v5) -- (v1);
            \draw[->] (v1) -- (v2);
            \draw[->] (v5) -- (v2);
            \draw[->] (v2) -- (v3);
            \draw[->] (v3) -- (v4);
            \draw[->] (v2) -- (v6);
            \draw[->] (v3) -- (v6);
            \draw[->] (v5) -- (v6);
            \draw[->] (v3) -- (v7);
            \draw[->] (v4) -- (v7);
        \end{exgraphpicture}
    } \quad
    \subfigure[$D_2$]{%
        \begin{exgraphpicture}
            \draw[->] (v2) -- (v1);
            \draw[->] (v3) -- (v2);
            \draw[->] (v4) -- (v3);
            \draw[->] (v7) -- (v3);
            \draw[->] (v7) -- (v4);
            \draw[->] (v1) -- (v5);
            \draw[->] (v2) -- (v5);
            \draw[->] (v2) -- (v6);
            \draw[->] (v3) -- (v6);
            \draw[->] (v5) -- (v6);
        \end{exgraphpicture}
    }
    \caption{Three DAGs having equal skeletons and a single v-structure, $3 \grarright 6 \grarleft 5$, hence being observationally Markov equivalent.  For $\mcI = \{\emptyset, \{4\}\}$, we have $D \sim_\mcI D_1$, but $D \not\sim_\mcI D_2$ since the skeletons of $D^{(\{4\})}$ (Figure \ref{fig:ex-intervention-4}) and $D_2^{(\{4\})}$ do not coincide.}
    \label{fig:ex-equivalent-dags}
\end{figure}

\section{Essential Graphs}
\label{sec:essential-graphs}

Theorem \ref{thm:interventional-markov-equivalence} represents a computationally fast criterion for deciding whether two DAGs are interventionally Markov equivalent or not.  However, given some DAG $D$, it does not provide a possibility for quickly finding \emph{all} equivalent ones, and hence does not specify the equivalence class as a whole.  In this section, we give a characterization of graphs that uniquely represent an interventional Markov equivalence class (Theorem \ref{thm:essential-graph-characterization}).  Our characterization of these \emph{interventional essential graphs} is inspired by and similar to the one developed by \citet{Andersson1997Characterization} for the observational case and allows for handling equivalence classes algorithmically.  Furthermore, we present a linear time algorithm for constructing a representative of the equivalence class corresponding to an interventional essential graph (Proposition \ref{prop:construction-representative} and discussion thereafter), as well as a polynomial time algorithm for constructing the interventional essential graph of a given DAG (Algorithm \ref{alg:iterative-construction-essential-graph}).  Throughout this section, \mcI{} always stands for a conservative family of targets.

\subsection{Definitions and Motivation}
\label{sec:essential-graphs-motivation}

All DAGs in an \mcI-Markov equivalence class share the same skeleton; however, arrow orientations may vary between different representatives (Theorem \ref{thm:interventional-markov-equivalence}).  Varying and common arrow orientations are represented by undirected and directed edges, respectively, in \mcI-essential graphs.
\begin{definition}[\mcI-essential graph]
    \label{def:I-essential-graph}
    Let $D$ be a DAG.  The \textbf{\mcI-essential graph} of $D$ is defined as $\mathcal{E_I}(D) := \bigcup_{D' \in [D]_\mcI} D'$.  (The union is meant in the graph theoretic sense, see Appendix \ref{sec:graphs-notation}).
\end{definition}
When the family of targets \mcI{} in question is clear from the context, we will also use the term \textbf{interventional essential graph}, while ``observational essential graph'' shall refer to the concept of essential graphs as introduced by \citet{Andersson1997Characterization} in the observational case.  Simply speaking of ``essential graphs'', we mean interventional or observational essential graphs in the following.
\begin{definition}[\mcI-essential arrow]
    \label{def:essential-arrow}
    Let $D$ be a DAG.  An edge $a \grarright b \in D$ is \textbf{\mcI-essential} in $D$ if $a \grarright b \in D' \spforall D' \in [D]_\mcI$.
\end{definition}
An \mcI-essential graph typically contains directed as well as undirected edges.  Directed ones correspond to arrows that are \mcI-essential in every representative of the equivalence class; in other words, \mcI-essential arrows are those whose direction is identifiable.  A first sufficient criterion for an edge to be \mcI-essential follows immediately from Lemma \ref{lem:conserved-arrow} (Appendix \ref{sec:proofs-model}).

\begin{corollary}
    \label{cor:intervention-essential}
    Let $D$ be a DAG with $a \grarright b \in D$.  If there is an intervention target $I \in \mcI$ such that $|\{a, b\} \cap I| = 1$, then $a \grarright b$ is \mcI-essential.
\end{corollary}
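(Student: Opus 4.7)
The plan is to use the characterization of interventional Markov equivalence from Theorem \ref{thm:interventional-markov-equivalence}\ref{itm:skeleton-v-structures}, which requires equivalent DAGs to share not only the skeleton of $D$ but also the skeleton of every intervention graph $D^{(I)}$, $I \in \mcI$. The crucial observation is that, by Definition \ref{def:intervention-graph}, an arrow $a \grarright b$ survives in $D^{(I)}$ precisely when $b \notin I$; therefore, when exactly one of the two endpoints lies in $I$, the presence or absence of the (undirected) edge between $a$ and $b$ in the skeleton of $D^{(I)}$ is determined by the orientation of $a \grarright b$ in $D$. This asymmetry is what forces the orientation to be shared by every member of the equivalence class.

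Concretely, I would argue by contradiction. Let $D' \in [D]_\mcI$ and assume, toward contradiction, that $b \grarright a \in D'$. Since $D \sim_\mcI D'$, Theorem \ref{thm:interventional-markov-equivalence}\ref{itm:skeleton-v-structures} says that $D^{(I)}$ and $D'^{(I)}$ have the same skeleton. Then I would split into the two cases allowed by $|\{a,b\} \cap I| = 1$:
\begin{itemize}
    \item If $a \in I$ and $b \notin I$, then by Definition \ref{def:intervention-graph} the edge $a \grarright b$ is retained in $D^{(I)}$ (so $a$ and $b$ are adjacent in its skeleton), while the edge $b \grarright a$ is deleted in $D'^{(I)}$ (so $a$ and $b$ are not adjacent in its skeleton).
    \item If $b \in I$ and $a \notin I$, the roles are reversed: the edge is absent in the skeleton of $D^{(I)}$ but present in that of $D'^{(I)}$.
\end{itemize}
In either case the skeletons of $D^{(I)}$ and $D'^{(I)}$ differ on the pair $\{a,b\}$, contradicting Theorem \ref{thm:interventional-markov-equivalence}\ref{itm:skeleton-v-structures}. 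Hence $a \grarright b \in D'$ for every $D' \in [D]_\mcI$, which is exactly the defining condition of \mcI-essentiality in Definition \ref{def:essential-arrow}.

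I do not anticipate a real obstacle here: the work has already been done by Theorem \ref{thm:interventional-markov-equivalence}, and the corollary is essentially a one-line case analysis based on which endpoint is intervened upon. The only subtlety, if the author's referenced Lemma \ref{lem:conserved-arrow} phrases things differently, is that one should verify that the edge between $a$ and $b$ is indeed present in every $D'$ before reasoning about its direction; this is guaranteed because $D$ and $D'$ have a common skeleton by Theorem \ref{thm:interventional-markov-equivalence}\ref{itm:skeleton-v-structures}, so the edge cannot simply vanish in $D'$.
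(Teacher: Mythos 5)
Your proof is correct and is essentially the paper's argument: the paper derives the corollary from Lemma \ref{lem:conserved-arrow}, whose proof is precisely your two-case analysis (if $a \in I$, $b \notin I$ the edge survives in $D^{(I)}$ but not in $D'^{(I)}$, and vice versa), combined with Theorem \ref{thm:interventional-markov-equivalence}\ref{itm:skeleton-v-structures}. You have simply inlined that lemma rather than citing it.
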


The investigation of essential graphs has a long tradition in the observational case \citep{Andersson1997Characterization, Chickering2002Learning}.  Due to increased identifiability of causal structures, Markov equivalence classes shrink in the interventional case; Equation (\ref{eqn:equivalence-implication}) implies $\mathcal{E_I}(D) \subset \mathcal{E}_{\{\emptyset\}}(D)$ for any conservative family of targets \mcI{} (see also Figure \ref{fig:non-orientable-edges} in Section \ref{sec:evaluation}).  Essential graphs, interventional as well as observational ones, are mainly interesting because of two reasons:
\begin{itemize}
    \item It is important to know which arrow directions of a causal model are identifiable and which are not since arrow directions are relevant for the causal interpretation.
    
    \item Markov equivalent DAGs encode the same statistical model.  Hence the space of DAGs is no suitable ``parameter'' or search space for statistical inference and computation.  The natural search space is given by the set of the equivalence classes, the objects that can be distinguished from data.  Essential graphs uniquely represent these equivalence classes and are efficiently manageable in algorithms.
\end{itemize}

The characterization of \mcI-essential graphs (Theorem \ref{thm:essential-graph-characterization}) relies on the notion of strongly \mcI-protected arrows (Definition \ref{def:strongly-protected-arrow}) which reproduces the corresponding definition of \cite{Andersson1997Characterization} for $\mcI = \{\emptyset\}$; an illustration is given in Figure \ref{fig:ex-strong-protection}.

\begin{definition}[Strong protection]
    \label{def:strongly-protected-arrow} Let $G$ be a graph.  An arrow $a \grarright b \in G$ is \textbf{strongly \mcI-protected} in $G$ if there is some $I \in \mcI$ such that $|I \cap \{a, b\}| = 1$, or the arrow $a \grarright b$ occurs in at least one of the following four configurations as an induced subgraph of $G$:
        
    (a): \threegraph{$a$}{->}{$b$}{}{$c$}{->} \
    (b): \threegraph{$a$}{->}{$b$}{<-}{$c$}{} \
    (c): \threegraph{$a$}{->}{$b$}{<-}{$c$}{<-} \
    (d):
    \begin{tikzpicture}[baseline=(a.base)]
        \node[anchor=base east] (a) at (0, 0) {$a$};
        \node[anchor=base west] (b) at (1.6, 0) {$b$};
        \node[anchor=base] (c1) at (0.8,  0.7) {$c_1$};
        \node[anchor=base] (c2) at (0.8, -0.7) {$c_2$};
        \draw[->] (a)  -- (b);
        \draw[->] (c1) -- (b);
        \draw[->] (c2) -- (b);
        \draw[-]  (a)  -- (c1);
        \draw[-]  (a)  -- (c2);
    \end{tikzpicture}
\end{definition}

\begin{figure}[t]
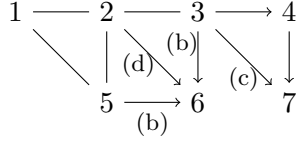

    \centering
    \begin{exgraphpicture}
        \draw[-]  (v2) -- (v1);
        \draw[-]  (v2) -- (v3);
        \draw[->] (v3) -- (v4);
        \draw[-]  (v1) -- (v5);
        \draw[-]  (v2) -- (v5);
        \draw[->] (v2) -- node[near start,below]{\footnotesize (d)} (v6);
        \draw[->] (v3) -- node[near start,left=-4pt]{\footnotesize (b)} (v6);
        \draw[->] (v5) -- node[below]{\footnotesize (b)} (v6);
        \draw[->] (v3) -- node[below]{\footnotesize (c)} (v7);
        \draw[->] (v4) -- (v7);
    \end{exgraphpicture}
    \caption{A graph with six arrows.  Four of them are strongly \mcI-protected for any conservative family of targets \mcI{} (in parentheses: arrow configurations according to Definition \ref{def:strongly-protected-arrow}).  Arrows $3 \grarright 4$ and $4 \grarright 7$ are strongly \mcI-protected for  $\mcI = \{\emptyset, \{4\}\}$, but not for $\mcI = \{\emptyset\}$.}
    \label{fig:ex-strong-protection}
\end{figure}

We will see in Theorem \ref{thm:essential-graph-characterization} that every arrow of an \mcI-essential graph (that is, every edge corresponding to an \mcI-essential arrow in the representative DAGs) is strongly \mcI-protected.  The configurations in Definition \ref{def:strongly-protected-arrow} guarantee the identifiability of the edge orientation between $a$ and $b$: if there is a target $I \in \mcI$ such that $|I \cap \{a, b\}| = 1$, turning the arrow would change the skeleton of the intervention graph $D^{(I)}$ (see also Corollary \ref{cor:intervention-essential}); in configuration (a), reversal would create a new v-structure; in (b), reversal would destroy a v-structure; in (c), reversal would create a cycle; an in (d) finally, at least one of the arrows between $a$ and $c_1$ or $c_2$ must point away from $a$ in each representative, hence turning the arrow $a \grarright b$ would create a cycle.  We refer to \citet{Andersson1997Characterization} for a more detailed discussion of the configurations (a) to (d).

\subsection{Characterization of Interventional Essential Graphs}
\label{sec:essential-graphs-characterization}

As in the observational setting, we can show that interventional essential graphs are chain graphs with chordal chain components (see Appendix \ref{sec:graphs-notation}).  For the observational case $\mcI = \{\emptyset\}$, Propositions \ref{prop:I-essential-chain-graph} and \ref{prop:construction-representative} below correspond to Propositions 4.1 and 4.2 of \citet{Andersson1997Characterization}.

\begin{proposition}
    \label{prop:I-essential-chain-graph}
    Let $D$ be a DAG on $[p]$. Then:
    \begin{subprop}
        \item \label{itm:essential-graph-chain-graph} $\mathcal{E_I}(D)$ is a chain graph.
        
        \item For each chain component $T \in \mathbf{T}(\mathcal{E_I}(D))$, the induced subgraph $\mathcal{E_I}(D)[T]$ is chordal.
    \end{subprop}
\end{proposition}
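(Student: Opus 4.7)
The plan is to derive both parts from Theorem \ref{thm:interventional-markov-equivalence}(iv), which asserts that all representatives in $[D]_\mcI$ share a common skeleton and a common set of v-structures, together with the fact that each representative is itself a DAG. The engine for part (i) will be a Meek-type key lemma that I would prove first: if $a \grarright b \in \mathcal{E_I}(D)$ is a directed edge and $b \grline c \in \mathcal{E_I}(D)$ is an undirected edge, then $a$ and $c$ must be adjacent in $\mathcal{E_I}(D)$. Since $b \grline c$ is undirected, representatives $D_1, D_2 \in [D]_\mcI$ exist with $b \to c \in D_1$ and $c \to b \in D_2$; both inherit the essential arrow $a \to b$. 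If $a$ and $c$ were non-adjacent, then $a \to b \leftarrow c$ would be a v-structure in $D_2$ but not in $D_1$, contradicting Theorem \ref{thm:interventional-markov-equivalence}(iv).

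For (i), I would assume for contradiction that $\mathcal{E_I}(D)$ contains a partially directed cycle and pick one of minimum length $k \geq 3$, say $v_0, v_1, \ldots, v_k = v_0$. If every edge is directed, every representative contains a directed $k$-cycle, violating acyclicity. Otherwise there is a ``last-before-undirected'' index with $v_{j-1} \grarright v_j$ and $v_j \grline v_{j+1}$, and the key lemma provides an edge between $v_{j-1}$ and $v_{j+1}$. For $k \geq 4$, each of the three possible orientations of this edge shortens the cycle: $v_{j-1} \grarright v_{j+1}$ lets me bypass $v_j$ in the original cycle, while $v_{j+1} \grarright v_{j-1}$ or $v_{j-1} \grline v_{j+1}$ makes the triangle on $\{v_{j-1}, v_j, v_{j+1}\}$ a strictly shorter partially directed $3$-cycle, each contradicting minimality. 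For $k = 3$, the orientation $v_{j-1} \grarright v_{j+1}$ is ruled out because it would violate the forward-orientation requirement on the closing edge of the cycle; the orientation $v_{j+1} \grarright v_{j-1}$ yields a directed $3$-cycle in any representative orienting $v_j \grline v_{j+1}$ as $v_j \to v_{j+1}$; the remaining configuration, a triangle with only one directed edge, is the irreducible case addressed below.

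For (ii), let $T$ be a chain component of $\mathcal{E_I}(D)$ and suppose $\mathcal{E_I}(D)[T]$ contains a chordless cycle $v_1, v_2, \ldots, v_m = v_1$ with $m \geq 4$; all of its edges are undirected in $\mathcal{E_I}(D)$. Pick any representative $D' \in [D]_\mcI$: its orientations turn this cycle into $m$ directed edges, and since $D'$ is acyclic, some vertex $v_i$ on the cycle must have both cycle-neighbors as parents, giving $v_{i-1} \grarright v_i \grarleft v_{i+1}$ in $D'$. Chordlessness forces $v_{i-1} \not\sim v_{i+1}$, so this is a v-structure, which by Theorem \ref{thm:interventional-markov-equivalence}(iv) persists in every representative. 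Consequently both arrows are $\mcI$-essential and hence directed in $\mathcal{E_I}(D)$, contradicting the fact that $v_{i-1}, v_i, v_{i+1}$ lie in the common chain component $T$.

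The main obstacle is the irreducible $k = 3$ sub-case of (i): a triangle $v_0 \grarright v_1 \grline v_2 \grline v_0$ in which only $v_0 \grarright v_1$ is essential. Here the minimum-counterexample shortening stalls and no single representative directly produces a directed cycle. I would show instead that this configuration forces two specific representatives of $[D]_\mcI$ to coexist, namely the unique acyclic completions $v_2 \to v_0, v_2 \to v_1, v_0 \to v_1$ and $v_0 \to v_1, v_1 \to v_2, v_0 \to v_2$, and then exhibit an $I \in \mcI$ whose intervention graphs of these two representatives have different skeletons on $\{v_0, v_1, v_2\}$, contradicting Theorem \ref{thm:interventional-markov-equivalence}(iii); conservativity of $\mcI$ is precisely what guarantees such a witnessing target exists. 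Once this triangle is ruled out, the rest of (i) is a clean minimum-counterexample reduction and (ii) is an essentially routine application of v-structure invariance.
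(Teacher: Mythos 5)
Your key lemma (no induced $a \grarright b \grline c$ in $\mathcal{E_I}(D)$), the minimum-counterexample reduction of part (i) for cycles of length $k\geq 4$, and your entire argument for part (ii) are correct; they correspond to parts (i) and (iv) of Lemma \ref{lem:I-essential-graph-properties} in the paper. The genuine gap is exactly where you locate it, and your proposed repair of the irreducible triangle $v_0 \grarright v_1 \grline v_2 \grline v_0$ does not work. The two representatives you exhibit agree on the essential arrow $v_0 \grarright v_1$ and differ only in the orientations of $v_1 \grdots v_2$ and $v_0 \grdots v_2$, so for their intervention graphs to have different skeletons you would need some $I \in \mcI$ with $|I \cap \{v_1, v_2\}| = 1$ or $|I \cap \{v_0, v_2\}| = 1$. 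Conservativity does not provide such a target: it only guarantees that every single vertex is omitted from some target. Worse, if such a target existed, Corollary \ref{cor:intervention-essential} would already make the corresponding edge an \mcI-essential arrow, contradicting your hypothesis that it is a line of $\mathcal{E_I}(D)$ --- the witness you need is precisely what the configuration excludes. And for $\mcI = \{\emptyset\}$ the proposition is Andersson et al.'s Proposition 4.1, where no nonempty target exists at all, so no argument resting on a separating target can be the right mechanism. The obstruction in this case is genuinely non-local: on the triangle alone, all acyclic no-v-structure orientations are mutually equivalent, so the reason $v_0 \grarright v_1$ is essential must come from structure outside $\{v_0,v_1,v_2\}$, and a proof confined to these three vertices cannot reach it.

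The paper resolves this globally rather than locally. It introduces the chain graph closure $\mathcal{E_I}(D)^*$, obtained by undirecting every arrow of $\mathcal{E_I}(D)$ that lies on a directed cycle, and establishes (Lemma \ref{lem:I-essential-graph-properties}) that $\mathcal{E_I}(D)^*$ has the same v-structures as $D$, chordal chain components, no induced $a \grarright b \grline c$, and --- the only genuinely interventional ingredient --- that any adjacent pair split by some target is joined by an arrow in $\mathcal{E_I}(D)^*$. Given any line $a \grline b$ of $\mathcal{E_I}(D)^*$, one then orients all chain components of $\mathcal{E_I}(D)^*$ by perfect elimination orderings to produce two DAGs with opposite orientations of that edge, checks that they are \mcI-equivalent via Theorem \ref{thm:interventional-markov-equivalence}, and concludes that $a \grline b$ is already a line of $\mathcal{E_I}(D)$; hence $\mathcal{E_I}(D) = \mathcal{E_I}(D)^*$ is a chain graph. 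In that framework your arrow $v_0 \grarright v_1$ lies on a directed cycle, becomes a line in $\mathcal{E_I}(D)^*$, and the global construction forces it to be a line of $\mathcal{E_I}(D)$ --- the contradiction your local analysis cannot produce. You would need to import this (or an equivalent global argument such as the strong-protection analysis of Lemma \ref{lem:essential-protected}) to close the $k=3$ case.
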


\begin{proposition}
    \label{prop:construction-representative}
    Let $D$ be a DAG.  A digraph $D'$ is acyclic and \mcI-equivalent to $D$ if and only if  $D'$ can be constructed by orienting the edges of every chain component of $\mathcal{E_I}(D)$ according to a perfect elimination ordering.
\end{proposition}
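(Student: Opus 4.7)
The plan is to use Theorem \ref{thm:interventional-markov-equivalence} to reduce the claim to a question about orientations of the chain components of $\mathcal{E_I}(D)$, and then to invoke the classical fact that acyclic, v-structure-free orientations of a chordal graph are exactly those arising from a perfect elimination ordering. As a preliminary setup, every representative $D' \in [D]_\mcI$ shares the skeleton of $\mathcal{E_I}(D)$ and agrees with it on all $\mcI$-essential (that is, directed) arrows; by Proposition \ref{prop:I-essential-chain-graph}\ref{itm:essential-graph-chain-graph} the remaining undirected edges are precisely those inside chain components. The problem thus reduces to characterizing the admissible orientations of these chordal chain components.

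The key preliminary observation is a ``no mixed v-structure'' lemma: if $T$ is a chain component of $\mathcal{E_I}(D)$, $b \in T$, $a \in T$ is an undirected neighbor of $b$, and $c \notin T$ is a parent of $b$ in $\mathcal{E_I}(D)$, then $a$ and $c$ must be adjacent in $\mathcal{E_I}(D)$. Otherwise any orientation with $a \grarright b$ would create a v-structure $a \grarright b \grarleft c$ that is absent from $D$, contradicting Theorem \ref{thm:interventional-markov-equivalence}\ref{itm:skeleton-v-structures}. A parallel argument using Corollary \ref{cor:intervention-essential} rules out new conflicts involving intervention targets on the edge $a - b$. Consequently, any new v-structure that could be created by reorienting edges inside $T$ must have both parents belonging to $T$ itself.

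For the ``only if'' direction, given an acyclic $D' \sim_\mcI D$, the restriction of $D'$ to any chain component $T$ is acyclic (as a subdigraph of $D'$) and v-structure-free (by the preliminary observation and the fact that $D$ and $D'$ share v-structures), whence by the standard chordal-graph characterization it corresponds to a perfect elimination ordering of $\mathcal{E_I}(D)[T]$. For the ``if'' direction, let $D'$ be obtained by orienting each chain component according to a perfect elimination ordering. Then $D'$ is acyclic (no cycles within any $T$ by the PEO, none between chain components because $\mathcal{E_I}(D)$ is a chain graph), has the same skeleton as $D$ by construction, has the same v-structures as $D$ (a PEO creates no v-structures inside $T$, and the preliminary observation excludes mixed ones), and yields identical intervention skeletons for every $I \in \mcI$ because Corollary \ref{cor:intervention-essential} forces any edge touching $I$ to be $\mcI$-essential and therefore oriented identically in $D$ and $D'$. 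Theorem \ref{thm:interventional-markov-equivalence} then gives $D' \sim_\mcI D$. The main obstacle is establishing the ``no mixed v-structure'' lemma; this requires a careful interplay of strong $\mcI$-protection (Definition \ref{def:strongly-protected-arrow}), the fact that undirected edges of $\mathcal{E_I}(D)$ are precisely those \emph{not} strongly $\mcI$-protected in either direction, and the role of intervention targets via Corollary \ref{cor:intervention-essential}.
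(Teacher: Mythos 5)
Your proposal is correct and follows essentially the same route as the paper: reduce $\mcI$-equivalence to skeleton, v-structures and intervention skeletons via Theorem \ref{thm:interventional-markov-equivalence}, handle the intervention skeletons with Corollary \ref{cor:intervention-essential} (as in Lemma \ref{lem:I-essential-graph-properties}\ref{itm:intervention-directed}), and observe that the admissible orientations of the chordal chain components are exactly the acyclic, v-structure-free ones, which the paper packages as Proposition \ref{prop:chain-graph-orientation}. Your ``no mixed v-structure'' lemma is precisely the paper's already-established fact that $\mathcal{E_I}(D)$ has no induced subgraph $a \grarright b \grline c$ (Lemma \ref{lem:I-essential-graph-properties}\ref{itm:a->b--c-chain-graph}), and your direct argument via Theorem \ref{thm:interventional-markov-equivalence} suffices for it---no appeal to strong $\mcI$-protection or to Theorem \ref{thm:essential-graph-characterization} is needed (the latter would in fact be circular, as its proof depends on this proposition).
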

This proposition is not only of theoretic, but also of algorithmic interest.  According to the explanation in Appendix \ref{sec:perfect-elimination-orderings}, perfect elimination orderings on the (chordal) chain components of $\mathcal{E_I}(D)$ can be generated with \LexBFS{} (Algorithm \ref{alg:lex-bfs}); doing this for all chain components yields computational complexity $O(|E| + p)$, where $E$ denotes the edge set of $\mathcal{E_I}(D)$ (see Appendix \ref{sec:perfect-elimination-orderings}).

As an immediate consequence of Proposition \ref{prop:construction-representative}, interventional essential graphs are in one-to-one correspondence with interventional Markov equivalence classes.  We will therefore also speak about ``representatives of \mcI-essential graphs'', where we mean representatives (that is, DAGs) of the corresponding equivalence class.  Propositions \ref{prop:I-essential-chain-graph} and \ref{prop:construction-representative} give the justification for the following definition; note that in order to generate a representative of some \mcI-essential graph, the family of targets \mcI{} need not be known.

\begin{definition}
    \label{def:representatives}
    Let $G$ be the \mcI-essential graph of some DAG.  The set of representatives of $G$ is denoted by $\mathbf{D}(G)$:
    \begin{align*}
        \mathbf{D}(G) := & \{D \text{ a DAG} \spst D \subset G, D^u = G^u, D[T] \text{ oriented according to some}\\
        & \text{perfect elimination ordering for each chain component } T \in \mathbf{T}(G)\}.
    \end{align*}
\end{definition}
Here, $D^u$ denotes the skeleton of $D$ (Appendix \ref{sec:graphs-notation}). We can now state the main result of this section, a graph theoretic characterization of \mcI-essential graphs.  For the observational case $\mcI = \{\emptyset\}$, this theorem corresponds to Theorem 4.1 of \citet{Andersson1997Characterization}.

\begin{theorem}
    \label{thm:essential-graph-characterization}
    A graph $G$ is the \mcI-essential graph of a DAG $D$ if and only if
    \begin{subprop}
        \item \label{itm:chain-graph} $G$ is a chain graph;
        
        \item \label{itm:chordal-chain-components} for each chain component $T \in \mathbf{T}(G)$, $G[T]$ is chordal;
        
        \item \label{itm:forbidden-subgraph} $G$ has no induced subgraph of the form $a \grarright b \grline c$;
        
        \item \label{itm:forbidden-edge} $G$ has no line $a \grline b$ for which there exists some $I \in \mcI$ such that $|I \cap \{a, b\}| = 1$;
        
        \item \label{itm:strongly-protected-arrows} every arrow $a \grarright b \in G$ is strongly \mcI-protected.
    \end{subprop}
\end{theorem}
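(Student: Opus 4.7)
The proof naturally splits into the two directions of the equivalence. For the ``only if'' direction, assume $G = \mathcal{E_I}(D)$. Conditions \ref{itm:chain-graph} and \ref{itm:chordal-chain-components} are given by Proposition \ref{prop:I-essential-chain-graph}. For \ref{itm:forbidden-subgraph}, if an induced subgraph $a \grarright b \grline c$ existed in $G$, then since $b \grline c$ is not $\mathcal{I}$-essential I could produce two representatives $D_1, D_2 \in [D]_\mcI$ differing only in the orientation of $\{b,c\}$; one then contains the v-structure $a \grarright b \grarleft c$ and the other does not (note $a$ and $c$ are non-adjacent), contradicting Theorem \ref{thm:interventional-markov-equivalence}\ref{itm:skeleton-v-structures}. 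For \ref{itm:forbidden-edge}, if a line $a \grline b \in G$ admitted $I \in \mcI$ with $|I \cap \{a,b\}| = 1$, then Corollary \ref{cor:intervention-essential} would force any DAG orientation of $\{a,b\}$ to be $\mathcal{I}$-essential, i.e.\ directed in $G$, a contradiction.

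The technical core is \ref{itm:strongly-protected-arrows}, and this is the step I expect to be the main obstacle. I would adapt the surgical argument of \citet{Andersson1997Characterization}: suppose for contradiction that some $a \grarright b \in G$ is not strongly $\mcI$-protected. Then no $I \in \mcI$ satisfies $|I \cap \{a,b\}| = 1$, and none of the configurations (a)--(d) of Definition \ref{def:strongly-protected-arrow} occur around $a \grarright b$. Starting from a representative $D \in [D]_\mcI$, I would construct an explicit modified DAG $D'$ in which the orientation of $\{a,b\}$ is reversed together with a carefully chosen set of additional local re-orientations, chosen so that acyclicity is preserved and no v-structures are created or destroyed (the absence of configurations (a)--(d) is exactly what makes this possible, as in the observational case). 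The extra input in our setting is the hypothesis that no $I \in \mcI$ splits $\{a,b\}$, which ensures that the skeletons of $D^{(I)}$ and $D'^{(I)}$ coincide for every $I \in \mcI$. By Theorem \ref{thm:interventional-markov-equivalence}\ref{itm:skeleton-v-structures}, $D' \sim_\mcI D$, contradicting $\mcI$-essentiality of $a \grarright b$. Replicating the combinatorial surgery and verifying preservation of all intervention-graph skeletons is the delicate part; the arrow reversal must be extended to any path of $\mcI$-essential arrows that would otherwise become cyclic, a step that requires care.

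For the ``if'' direction, assume $G$ satisfies \ref{itm:chain-graph}--\ref{itm:strongly-protected-arrows}. I would construct $D \in \mathbf{D}(G)$ by orienting each chordal chain component along a perfect elimination ordering, so that $D$ is a DAG with $D^u = G^u$; condition \ref{itm:forbidden-subgraph} then implies that $D$ and $G$ have identical v-structures. For any $D' \in \mathbf{D}(G)$, the same argument gives matching skeletons and v-structures, while condition \ref{itm:forbidden-edge} implies that every undirected edge of $G$ either lies entirely inside or entirely outside each $I \in \mcI$, so the skeletons of $D^{(I)}$ and $D'^{(I)}$ coincide for every $I$. Theorem \ref{thm:interventional-markov-equivalence} then yields $\mathbf{D}(G) \subseteq [D]_\mcI$, hence $G \subseteq \mathcal{E_I}(D)$. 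For the reverse inclusion I would show, by induction on a topological order, that every arrow $a \grarright b \in G$ is oriented from $a$ to $b$ in every $D'' \in [D]_\mcI$: using \ref{itm:strongly-protected-arrows}, the relevant strong-protection configuration around $a \grarright b$ involves neighbouring edges whose orientations (by induction, or by belonging to a preserved v-structure/intervention target) are already fixed, so reversing $a \grarright b$ would create or destroy a v-structure, violate acyclicity, or change some $D''^{(I)}$-skeleton, contradicting Theorem \ref{thm:interventional-markov-equivalence}. Combining both inclusions gives $\mathcal{E_I}(D) = G$.
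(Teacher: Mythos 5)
Your overall route coincides with the paper's: necessity of \ref{itm:chain-graph}, \ref{itm:chordal-chain-components} and \ref{itm:forbidden-edge} via Proposition \ref{prop:I-essential-chain-graph} and Corollary \ref{cor:intervention-essential}; the forward inclusion $G \subset \mathcal{E_I}(D)$ in the sufficiency direction by orienting chain components along perfect elimination orderings and observing that condition \ref{itm:forbidden-edge} forces all intervention-graph skeletons of members of $\mathbf{D}(G)$ to agree; and a minimality/induction argument for the reverse inclusion (the paper implements your ``induction on a topological order'' by taking $T_G(a)$ and $T_G(b)$ minimal w.r.t.\ $\preceq_G$ among offending chain components and ruling out each configuration (a)--(d) separately).

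The genuine gap is in your treatment of \ref{itm:strongly-protected-arrows}, which you rightly single out as the core. Your plan is to reverse $a \grarright b$ in a given representative $D$ ``together with a carefully chosen set of additional local re-orientations'', in particular ``extended to any path of \mcI-essential arrows that would otherwise become cyclic''. That step cannot work: reversing an \mcI-essential arrow produces a DAG outside $[D]_\mcI$ by definition, and such an arrow may be split by some $I \in \mcI$, in which case the reversal changes the skeleton of the intervention graph and destroys equivalence regardless of v-structures. The paper's Lemma \ref{lem:essential-protected} reverses \emph{only} the single edge between $a$ and $b$ and puts all the work into choosing the representative beforehand, in two stages: first orienting $T_G(a)$ and $T_G(b)$ so that all undirected edges point away from $a$ and from $b$ (which shows the arrow is \mcI-protected in the weaker sense of Definition \ref{def:protected-arrow}), and then, to handle the remaining configuration (d'), orienting the clique $P_a = \{d \in T_G(a) \spst d \grarright b \in G\}$ towards $a$ and all other edges of $T_G(a)$ away from $a$ (Corollary \ref{cor:lex-bfs-clique}). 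With that choice the single reversal preserves acyclicity and v-structures, and since no $I$ splits $\{a,b\}$ the intervention-graph skeletons also agree; both DAGs then lie in $[D]_\mcI$, forcing $a \grline b \in \mathcal{E_I}(D)$ and the desired contradiction. Without this construction --- in particular without the intermediate notion of \mcI-protection and the clique $P_a$ --- your argument for \ref{itm:strongly-protected-arrows} does not close.
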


The graph $G$ of Figure \ref{fig:ex-strong-protection} satisfies points \ref{itm:chain-graph} to \ref{itm:forbidden-subgraph} of Theorem \ref{thm:essential-graph-characterization}.  For $\mcI = \{\emptyset, \{4\}\}$, it also fulfills points \ref{itm:forbidden-edge} and \ref{itm:strongly-protected-arrows}; in this case, it is the \mcI-essential graph $\mathcal{E_I}(D)$ of the DAG $D$ of Figure \ref{fig:ex-observational-dag} by Proposition \ref{prop:construction-representative}.

\subsection{Construction of Interventional Essential Graphs}
\label{sec:essential-graphs-construction}

In this section, we show that there is a simple way to construct the \mcI-essential graph $\mathcal{E_I}(D)$ of a DAG $D$: we need to successively convert arrows that are not strongly \mcI-protected into lines (Algorithm \ref{alg:iterative-construction-essential-graph}).  By doing this, we get a sequence of partial \mcI-essential graphs.
\begin{definition}[Partial \mcI-essential graph]
    \label{def:partial-essential-graph}
    Let $D$ be a DAG.  A graph $G$ with $D \subset G \subset \mathcal{E_I}(D)$ is called a \textbf{partial \mcI-essential graph} of $D$ if $a \grarright b \grline c$ does not occur as an induced subgraph of $G$.
\end{definition}
The following lemma can be understood as a motivation for looking at such graphs.  Note that due to the condition $G \subset \mathcal{E_I}(D)$, and because $G$ and $\mathcal{E_I}(D)$ have the same skeleton, every arrow of $\mathcal{E_I}(D)$ is also present in $G$, hence statement \ref{itm:partial-essential-protected} below makes sense.
\begin{lemma}
    \label{lem:partial-essential-graph-properties}
    Let $D$ be a DAG. Then:
    \begin{subprop}
        \item $D$ and $\mathcal{E_I}(D)$ are partial \mcI-essential graphs of $D$.
        
        \item \label{itm:partial-essential-protected} Let $G$ be a partial \mcI-essential graph of $D$.  Every arrow $a \grarright b \in \mathcal{E_I}(D)$ is strongly \mcI-protected in $G$.
        
        \item \label{itm:partial-essential-equivalent} Let $G$ be a partial \mcI-essential graph of two DAGs $D_1$ and $D_2$.  Then, $D_1 \sim_\mcI D_2$.
    \end{subprop}
\end{lemma}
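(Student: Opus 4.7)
Statement (i) is immediate. $D$ is a DAG, so it has no undirected edges and cannot contain the forbidden pattern $a \grarright b \grline c$; and $\mathcal{E_I}(D)$ avoids this pattern by Theorem \ref{thm:essential-graph-characterization}\ref{itm:forbidden-subgraph}. The inclusions $D \subset D \subset \mathcal{E_I}(D)$ and $D \subset \mathcal{E_I}(D) \subset \mathcal{E_I}(D)$ hold trivially.

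The proof of (ii) rests on the following structural observation, which I would record first: because $G \subset \mathcal{E_I}(D)$ and the two graphs share the common skeleton of $D$, every arrow $u \grarright v \in \mathcal{E_I}(D)$ appears as the same arrow $u \grarright v \in G$, and every non-adjacent pair of $\mathcal{E_I}(D)$ is non-adjacent in $G$. By Theorem \ref{thm:essential-graph-characterization}\ref{itm:strongly-protected-arrows}, the arrow $a \grarright b$ is strongly $\mcI$-protected in $\mathcal{E_I}(D)$; split along the five ways this can happen. The intervention criterion $|I \cap \{a, b\}| = 1$ is a property of $\mcI$ alone, and the configurations (a), (b), (c) of Definition \ref{def:strongly-protected-arrow} mention only arrows and non-adjacent pairs, so the witnessing substructures transfer verbatim to $G$.

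The delicate case is configuration (d): $\mathcal{E_I}(D)$ contains arrows $a \grarright b$, $c_1 \grarright b$, $c_2 \grarright b$, lines $a \grline c_1$ and $a \grline c_2$, and non-adjacent $c_1, c_2$. The three arrows and the non-adjacency persist in $G$, but each line $\{a, c_i\}$ may have been oriented; I would case-split on the two edges in $G$. If both remain undirected, configuration (d) holds in $G$. If $a \grarright c_i \in G$ for some $i$, then $a \grarright c_i, c_i \grarright b, a \grarright b$ is configuration (c) for $a \grarright b$. If instead $c_i \grarright a \in G$, I would rule out $a \grline c_j \in G$ (otherwise $c_i \grarright a \grline c_j$ with $c_i, c_j$ non-adjacent is the forbidden induced subgraph of Definition \ref{def:partial-essential-graph}) and rule out $c_j \grarright a \in G$ (otherwise $D$ contains the v-structure $c_i \grarright a \grarleft c_j$, which by Theorem \ref{thm:markov-equivalence} is preserved in every representative of $[D]_\mcI$, forcing $c_i \grarright a$ and $c_j \grarright a$ to be arrows, not lines, in $\mathcal{E_I}(D)$); the only remaining orientation is $a \grarright c_j$, whence configuration (c) applies via $c_j$. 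This bookkeeping, combining the forbidden-pattern clause with the preservation of v-structures across Markov equivalence, is the main obstacle.

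For (iii), the plan is to verify the conditions of Theorem \ref{thm:interventional-markov-equivalence}\ref{itm:skeleton-v-structures}. The common skeleton is immediate. For v-structures, any v-structure $a \grarright b \grarleft c$ in $D_1$ is observationally (hence $\mcI$-)essential by Theorem \ref{thm:markov-equivalence}, so $a \grarright b$ and $c \grarright b$ are arrows of $\mathcal{E_I}(D_1)$; the structural observation from (ii) (applied with $D_1$ in the role of $D$) promotes these to arrows in $G$ and then to arrows of $D_2$ via $D_2 \subset G$, so the v-structure is in $D_2$ as well. For the intervention-skeleton condition, fix $I \in \mcI$: each edge directed in $G$ has the same orientation in $D_1$ and $D_2$ and so contributes identically to $D_1^{(I)}$ and $D_2^{(I)}$; each line $a \grline b$ in $G$ is also a line in $\mathcal{E_I}(D_1)$, so Corollary \ref{cor:intervention-essential} rules out $|I \cap \{a, b\}| = 1$, meaning $a$ and $b$ are both in $I$ or both out of $I$, in which case the edge disappears or survives in $D_1^{(I)}$ and $D_2^{(I)}$ irrespective of its orientation. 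Theorem \ref{thm:interventional-markov-equivalence} then gives $D_1 \sim_\mcI D_2$.
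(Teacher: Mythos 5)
Your proof is correct and follows essentially the same route as the paper's: part (i) from Theorem~\ref{thm:essential-graph-characterization}\ref{itm:forbidden-subgraph}, part (ii) by transferring the strong-protection witness from $\mathcal{E_I}(D)$ to $G$ with a case analysis on configuration (d), and part (iii) by verifying the three conditions of Theorem~\ref{thm:interventional-markov-equivalence}\ref{itm:skeleton-v-structures}. The only difference is organizational: in configuration (d) the paper enumerates the possible orientations of $\{a,c_1,c_2\}$ starting from $D$ (its Table~\ref{tab:D-G-configurations}), whereas you case-split directly on $G$ and exclude the double-arrow-into-$a$ case via v-structure preservation --- the two arguments are equivalent.
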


Algorithm \ref{alg:iterative-construction-essential-graph} constructs the \mcI-essential graph $G$ from a partial \mcI-essential graph of any DAG $D \in \mathbf{D}(G)$.  The algorithm is indeed valid and calculates $\mathcal{E_I}(D)$, since the graph produced in each iteration is a partial \mcI-essential graph of $D$ (Lemma \ref{lem:iterative-arrow-replacement}), and the only partial \mcI-essential graph that has only strongly \mcI-protected arrows is $\mathcal{E_I}(D)$ (Lemma \ref{lem:maximal-partial-essential-graph}).
\begin{lemma}
    \label{lem:iterative-arrow-replacement}
    Let $D$ be a DAG and $G$ a partial \mcI-essential graph of $D$.  Assume that $a \grarright b \in G$ is not strongly \mcI-protected in $G$, and let $G' := G + (b, a)$ (that is, the graph we get by replacing the arrow $a \grarright b$ by a line $a \grline b$; see Appendix \ref{sec:graphs-notation}).  Then $G'$ is also a partial \mcI-essential graph of $D$.
\end{lemma}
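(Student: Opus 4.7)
The plan is to verify the two defining conditions of a partial \mcI-essential graph of $D$ for the graph $G'$: the inclusions $D \subset G' \subset \mathcal{E_I}(D)$, and the absence of an induced subgraph of the form $x \grarright y \grline z$.

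For the inclusions I would appeal to the contrapositive of Lemma~\ref{lem:partial-essential-graph-properties}\ref{itm:partial-essential-protected}: since $a \grarright b \in G$ is not strongly \mcI-protected in $G$, that edge cannot appear as an arrow in $\mathcal{E_I}(D)$; because $G$ and $\mathcal{E_I}(D)$ share the same skeleton, it must instead appear in $\mathcal{E_I}(D)$ as the line $a \grline b$. The only modification $G \to G'$ makes is to replace the arrow $a \grarright b$ by this very line, so both $G' \subset \mathcal{E_I}(D)$ and $D \subset G'$ (the latter because the arrow $a \grarright b \in D$ is still covered by $a \grline b \in G'$) follow from the corresponding inclusions for $G$.

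Next I would argue that $G'$ has no induced subgraph $x \grarright y \grline z$. The key observation is that $G$ and $G'$ agree on every edge except the one between $a$ and $b$, and that every arrow of $G'$ is still an arrow of $G$. Consequently, any forbidden subgraph in $G'$ with $\{y, z\} \neq \{a, b\}$ would transfer back to one in $G$: the arrow $x \grarright y$ and the line $y \grline z$ both already live in $G$, and the non-adjacency of $x$ and $z$ in $G'$ also holds in $G$ (the only alternative, $\{x, z\} = \{a, b\}$, would put the line $a \grline b$ of $G'$ between $x$ and $z$, violating the non-adjacency hypothesis). This would contradict $G$ itself being a partial \mcI-essential graph.

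The main obstacle is the remaining case $\{y, z\} = \{a, b\}$, which splits into two subcases handled by the strong-protection configurations. If $y = a$ and $z = b$, there is a vertex $x$ with $x \grarright a$ and no edge between $x$ and $b$ in $G$; together with $a \grarright b$ this is configuration~(a) of Definition~\ref{def:strongly-protected-arrow}, so $a \grarright b$ would be strongly \mcI-protected in $G$, a contradiction. If $y = b$ and $z = a$, we obtain $x \grarright b$ with no edge between $x$ and $a$ in $G$, which together with $a \grarright b$ matches configuration~(b) and again contradicts our hypothesis. Hence no forbidden induced subgraph can arise in $G'$, and $G'$ is a partial \mcI-essential graph of $D$.
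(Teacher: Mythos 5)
Your proof is correct and follows essentially the same route as the paper's: the inclusion $G' \subset \mathcal{E_I}(D)$ is obtained from the contrapositive of Lemma~\ref{lem:partial-essential-graph-properties}\ref{itm:partial-essential-protected} (which is exactly how the paper rules out $a \grarright b \in \mathcal{E_I}(D)$), and any forbidden induced subgraph of $G'$ is reduced to configuration (a) or (b) of Definition~\ref{def:strongly-protected-arrow}, contradicting the assumption that $a \grarright b$ is not strongly \mcI-protected in $G$. The only difference is that you spell out more explicitly why a forbidden subgraph of $G'$ must involve the new line $a \grline b$, a step the paper leaves implicit.
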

\begin{lemma}
    \label{lem:maximal-partial-essential-graph}
    Let $D$ be a DAG.  There is exactly one partial \mcI-essential graph of $D$ in which every arrow is strongly \mcI-protected, namely $\mathcal{E_I}(D)$.
\end{lemma}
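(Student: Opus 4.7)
The plan is to separate existence from uniqueness. Existence is immediate: $\mathcal{E_I}(D)$ is a partial $\mcI$-essential graph of $D$ by Lemma \ref{lem:partial-essential-graph-properties}\ref{itm:partial-essential-protected}'s companion statement (i), and by Theorem \ref{thm:essential-graph-characterization}\ref{itm:strongly-protected-arrows} every one of its arrows is strongly $\mcI$-protected. For uniqueness, let $G$ be any partial $\mcI$-essential graph of $D$ in which every arrow is strongly $\mcI$-protected. Since $G$ and $\mathcal{E_I}(D)$ share the skeleton of $D$, and since Lemma \ref{lem:partial-essential-graph-properties}\ref{itm:partial-essential-protected} already tells us that every arrow of $\mathcal{E_I}(D)$ is present in $G$ as an arrow in the same direction, all that remains is to rule out \emph{extra arrows}---arrows $a \grarright b \in G$ for which $a \grline b \in \mathcal{E_I}(D)$.

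My approach is by contradiction. Assume an extra arrow exists; because $D \subset G$ and $D$ is acyclic, the extra arrows form a sub-DAG of $D$, so I can pick an extra arrow $a \grarright b$ whose tail $a$ carries no incoming extra arrow. I then case-split on the type of strong $\mcI$-protection of $a \grarright b$ in $G$. The case $|I \cap \{a,b\}| = 1$ contradicts Theorem \ref{thm:essential-graph-characterization}\ref{itm:forbidden-edge} applied to $\mathcal{E_I}(D)$ directly. In configuration (b), the v-structure $a \grarright b \grarleft c$ already lives in $D \subset G$, and by Theorem \ref{thm:markov-equivalence} and the implication (\ref{eqn:equivalence-implication}) it persists across $[D]_\mcI$, so $a \grarright b \in \mathcal{E_I}(D)$, contradicting the line. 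In configuration (a) ($c \grarright a \grarright b$ with $c,b$ nonadjacent), Lemma \ref{lem:partial-essential-graph-properties}\ref{itm:partial-essential-protected} rules out the reverse $a \grarright c \in \mathcal{E_I}(D)$; the option $c \grarright a \in \mathcal{E_I}(D)$ yields the forbidden pattern $c \grarright a \grline b$ of Theorem \ref{thm:essential-graph-characterization}\ref{itm:forbidden-subgraph}; the remaining option $c \grline a \in \mathcal{E_I}(D)$ makes $c \grarright a$ itself an extra arrow pointing into $a$, contradicting the minimal choice of $a$.

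The main obstacle will be configurations (c) and (d), in which the arrows supporting the protection---$c \grarright b$ and $a \grarright c$ in (c), the two siblings of $b$ in (d)---may themselves be extra, so the descent does not bottom out at $a$. My intended route is to split further: either a supporting arrow is essential in $\mathcal{E_I}(D)$, in which case the triangle/diamond structure combined with acyclicity of any representative forces $a \grarright b \in \mathcal{E_I}(D)$ (through a directed path like $a \grarright c \grarright b$ in (c)) or produces the forbidden pattern of Theorem \ref{thm:essential-graph-characterization}\ref{itm:forbidden-subgraph}; or all supporting arrows are extra, in which case the fully-oriented triangle or diamond lies in a single chordal chain component of $\mathcal{E_I}(D)$, and its orientation in $G$ is incompatible with any PEO-based orientation of that chain component guaranteed by Proposition \ref{prop:construction-representative}, together with the minimality of $a$. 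As a robust fallback, should this descent not close cleanly, I would verify instead that $G$ satisfies all five clauses of Theorem \ref{thm:essential-graph-characterization}---clause \ref{itm:forbidden-subgraph} by the definition of partial $\mcI$-essential graph, clause \ref{itm:forbidden-edge} because lines of $G$ are lines of $\mathcal{E_I}(D)$, clause \ref{itm:strongly-protected-arrows} by hypothesis, and clauses \ref{itm:chain-graph} and \ref{itm:chordal-chain-components} from $D \subset G$ together with chordality of $\mathcal{E_I}(D)$'s chain components---so that $G$ is itself the $\mcI$-essential graph of some DAG $\mcI$-equivalent to $D$; by the uniqueness of the $\mcI$-essential graph attached to an $\mcI$-Markov equivalence class we then conclude $G = \mathcal{E_I}(D)$.
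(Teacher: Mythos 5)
Your reduction to ruling out \emph{extra} arrows ($a \grarright b \in G$ with $a \grline b \in \mathcal{E_I}(D)$) is the right skeleton, and your handling of the target case and of configurations (a) and (b) is correct. The genuine gap is configuration (c), and your own label ``main obstacle'' is accurate: neither of your two routes closes it. Concretely, suppose $a \grarright c \grarright b$ and $a \grarright b$ lie in $G$ while $a \grline c$, $c \grline b$ and $a \grline b$ are all lines of $\mathcal{E_I}(D)$. Then the triangle sits inside a single chordal chain component of $\mathcal{E_I}(D)$, and its orientation in $D$ is acyclic and v-structure-free, hence perfectly compatible with a perfect elimination ordering --- so ``incompatible with any PEO-based orientation'' is simply false here. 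Your minimality invariant (no incoming extra arrow at $a$) gives no leverage either, since all three extra arrows point out of or past $a$. The paper (in the second half of the proof of Theorem \ref{thm:essential-graph-characterization}, which its proof of this lemma just invokes) resolves exactly this case with a \emph{two-level} minimality on chain components of $G$: take $T_G(a)$ minimal w.r.t.\ $\preceq_G$ among components containing tails of extra arrows, and then $T_G(b)$ minimal among components containing heads of extra arrows with tail in $T_G(a)$. Since $c \grarright b \in G$ forces $T_G(c) \preceq_G T_G(b)$ with $T_G(c) \neq T_G(b)$, minimality of $T_G(b)$ forces the edge between $a$ and $c$ to be oriented in $\mathcal{E_I}(D)$; then $a \grarright c$, the edge between $c$ and $b$ (in either of its two admissible states), and the line $b \grline a$ form a directed cycle in the chain graph $\mathcal{E_I}(D)$ --- the contradiction you were missing.

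Configuration (d), by contrast, needs no descent at all: $c_1 \grarright b \grarleft c_2$ is a v-structure of $D$, hence its two arrows are present in $\mathcal{E_I}(D)$, and together with the lines $a \grline c_1$ and $a \grline b$ they give the directed cycle $(c_1, b, a, c_1)$ in $\mathcal{E_I}(D)$, contradicting Proposition \ref{prop:I-essential-chain-graph}. As for your fallback of verifying all five clauses of Theorem \ref{thm:essential-graph-characterization} for $G$: the conclusion would follow (with $D \in \mathbf{D}(G)$), but clauses \ref{itm:chain-graph} and \ref{itm:chordal-chain-components} are not free. That a partial \mcI-essential graph is a chain graph does not follow merely from $D \subset G$ and the chordality of the chain components of $\mathcal{E_I}(D)$; it needs a shortest-directed-cycle argument of the kind behind Lemma \ref{lem:I-essential-graph-properties}, so as stated the fallback only relocates the gap.
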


\begin{algorithm}[b]
    \caption{$\mathsc{ReplaceUnprotected}(\mcI, G)$.  Iterative construction of an \mcI-essential graph}
    \label{alg:iterative-construction-essential-graph}
    \small
    \SetKwInOut{Input}{Input}
    \SetKwInOut{Output}{Output}
    \Input{$G$: partial \mcI-essential graph of some DAG $D$ (not known)}
    \Output{$\mathcal{E_I}(D)$}
    \While{$\exists \ a \grarright b \in G$ s.t. $a \grarright b$ not strongly \mcI-protected in $G$}{%
        $G \leftarrow G + (b, a)$\;
    }
    \Return $G$\;
\end{algorithm}

To construct $\mathcal{E_I}(D)$ from some DAG $D = ([p], E)$, we must, in the worst case, execute the iteration of Algorithm \ref{alg:iterative-construction-essential-graph} for every arrow in the DAG; at each step, we must check every 4-tuple of vertices to see whether some arrow occurs in configuration (d) of Definition \ref{def:strongly-protected-arrow}.  Therefore Algorithm \ref{alg:iterative-construction-essential-graph} has at most complexity $O(|E| \cdot p^4)$; by exploiting the partial order $\preceq_G$ on $\mathbf{T}(G)$ (see Appendix \ref{sec:graphs-notation}), more efficient implementations are possible.  Note that some checks only need to be done once.  If, for example, an edge $a \grarright b$ is part of a v-structure (configuration (b) of Definition \ref{def:strongly-protected-arrow}), or if there is some $I \in \mcI$ such that $|I \cap \{a, b\}| = 1$ in the first iteration of Algorithm \ref{alg:iterative-construction-essential-graph}, this will also be the case in every later iteration.

\subsection{Example: Identifiability under Interventions}
\label{sec:essential-graphs-examples}

A simple example illustrates how much identifiability can be gained with a single intervention.  We consider a linear chain as observational essential graph:
$$
    G = \mathcal{E}_{\{\emptyset\}}(D): 1 \grline 2 \grline 3 \grline \cdots \grline p \ .
$$
We can easily count the number of representatives of $G$ using the following lemma.

\begin{lemma}[Source lemma]
    \label{lem:one-source}
    Let $G$ be a connected, chordal, undirected graph, and let $D \subset G$ be a DAG without v-structures and with $D^u = G$.  Then $D$ has exactly one source.
\end{lemma}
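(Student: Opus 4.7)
The plan is to proceed by contradiction using the connectivity of $G$ together with the no-v-structure hypothesis. Existence of at least one source is immediate from the fact that every finite DAG has a topological ordering, so the entire content is uniqueness.

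Suppose for contradiction that $a$ and $b$ are two distinct sources of $D$. By connectivity of $G$, choose a shortest path $a = v_0, v_1, \ldots, v_k = b$ in $G$. If $k = 1$, then the edge $\{a, b\}$ is present in $D$ and must be oriented one way, making either $a$ or $b$ have a parent — contradicting that both are sources. So $k \geq 2$.

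The key step is to show by induction that $v_i \grarright v_{i+1}$ in $D$ for every $0 \leq i \leq k-1$. The base case holds because $a = v_0$ is a source, so the edge $\{v_0, v_1\} \in D$ must be oriented $v_0 \grarright v_1$. For the inductive step, suppose $v_{i-1} \grarright v_i$ in $D$ for some $1 \leq i \leq k-1$, and consider the edge between $v_i$ and $v_{i+1}$. Shortness of the path forces $v_{i-1}$ and $v_{i+1}$ to be non-adjacent in $G$ (otherwise we could shortcut $v_{i-1}, v_i, v_{i+1}$ to $v_{i-1}, v_{i+1}$, yielding a shorter $a$–$b$ path). If the edge were oriented $v_{i+1} \grarright v_i$ in $D$, then $v_{i-1} \grarright v_i \grarleft v_{i+1}$ would be a v-structure in $D$ with non-adjacent $v_{i-1}, v_{i+1}$, contradicting the hypothesis. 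Hence $v_i \grarright v_{i+1}$ in $D$, completing the induction.

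Applying the induction up to $i = k-1$ yields $v_{k-1} \grarright v_k = b$ in $D$, so $b$ has the parent $v_{k-1}$, contradicting that $b$ is a source. The main (and really only) subtle point is combining the no-v-structure assumption with the shortest-path fact $v_{i-1} \not\sim_G v_{i+1}$ to propagate the orientation along the path; chordality of $G$ does not enter the argument directly but is what guarantees in the first place that such a DAG $D \subset G$ without v-structures can exist.
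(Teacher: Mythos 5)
Your proof is correct and follows essentially the same route as the paper's: pick a shortest path between two putative sources and use the no-v-structure hypothesis together with the non-adjacency of $v_{i-1}$ and $v_{i+1}$ (from shortness) to force every edge to point away from the first source, contradicting that the other endpoint is a source. The paper phrases this as locating the first arrow along the path that points the wrong way rather than as an explicit induction, but the two formulations are interchangeable.
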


\begin{proof}
    Let $\sigma$ be a topological ordering of $D$; then, $\sigma(1)$ is a source, see Appendix \ref{sec:graphs-notation}.  It remains to show that there is at most one such source.  Assume, for the sake of contradiction, that there are two different sources $u$ and $v$.  Since $G$ is connected, there is a shortest $u$-$v$-path $\gamma = (a_0 \equiv u, a_1, \ldots, a_k \equiv v)$.  Let $a_i \grarleft a_{i+1} \in D$ be the first arrow that points away from $v$ in the chain $\gamma$ in $D$ (note $i \geq 1$ since $u \grarright a_1 \in D$ by assumption).  The v-structure $a_{i-1} \grarright a_i \grarleft a_{i+1}$ is not allowed as an induced subgraph of $D$, hence $a_{i-1}$ and $a_{i+1}$ must be adjacent in $D$ and in $G$; however, $\gamma$ is then no \emph{shortest} $u$-$v$-path, a contradiction.
\end{proof}

For our linear chain $G$ and any $s \in [p]$, there is exactly one DAG $D \in \mathbf{D}(G)$ that has the (unique) source $s$, namely the DAG we get by orienting \emph{all} edges of $G$ away from $s$; other edge orientations would produce a v-structure.  We conclude $G$ has $p$ representatives.

Assume that the true causal model producing the data is $(D, f)$, and denote the source of $D$ by $s \in [p]$.  Consider the conservative family of targets $\mcI = \{\emptyset, \{v\}\}$ with $v \in [p]$.  If $v < s$, the interventional essential graph $\mathcal{E_I}(D)$ is
$$
    1 \grarleft 2 \grarleft \ldots \grarleft v+1 \grline \ldots \grline p \ ,
$$
and $|\mathbf{D}(\mathcal{E_I}(D))| = p - v$ by the same arguments as above; analogously, if $v > s$, we find $|\mathbf{D}(\mathcal{E_I}(D))| = v - 1$.  On the other hand, if $v = s$, all edges of $D$ are strongly \mcI-protected: those incident to $s$ because of the intervention target, all others because they are in configuration (a) of Definition \ref{def:strongly-protected-arrow}; therefore, we have $\mathcal{E_I}(D) = D$.

In the best case, all edge orientations in the chain can be identified by a single intervention, while the observational essential graph $\mathcal{E}_{\{\emptyset\}}(D)$ that is identifiable from observational data alone contains $p$ representatives.  However, this needs an intervention at the a priori unknown source $s$.  Choosing the central vertex $\lceil \frac{p}{2} \rceil$ as intervention target ensures that at least half of the edges become directed in $\mathcal{E_I}(D)$, independent of the position $s$ of the source.

\section{Greedy Interventional Equivalence Search}
\label{sec:greedy-search}

Different algorithms have been proposed to estimate essential graphs from observational data.  One of them, the Greedy Equivalence Search (GES) \citep{Meek1997Graphical, Chickering2002Optimal}, is particularly interesting because of two properties:
\begin{itemize}
    \item It is score-based; it greedily maximizes some score function for given data over essential graphs.  It uses no tuning-parameter; the score function alone measures the quality of the estimate.  \citet{Chickering2002Optimal} chose the BIC score because of consistency; technically, any score equivalent and decomposable function (see Definition \ref{def:score-function-decomposable}) is adequate.
    
    \item It traverses the space of essential graphs which is the natural search space for model inference (see Section \ref{sec:essential-graphs}).  We will see in Section \ref{sec:evaluation} that a greedy search over \emph{equivalence classes} yields much better estimation results than a naïve greedy search over \emph{DAGs}.
\end{itemize}
GES greedily optimizes the score function in two phases \citep{Chickering2002Optimal}:
\begin{itemize}
    \item In the \textbf{forward phase}, the algorithm starts with the empty essential graph, $G_0 := ([p], \emptyset)$.  It then sequentially steps from one essential graph $G_i$ to a \emph{larger} one, $G_{i+1}$, for which there are representatives $D_i \in \mathbf{D}(G_i)$ and $D_{i+1} \in \mathbf{D}(G_{i+1})$ such that $D_{i+1}$ has exactly one arrow more than $D_i$.
    
    \item In the \textbf{backward phase}, the sequence $(G_i)_i$ is continued by gradually stepping from one essential graph $G_i$ to a \emph{smaller} one, $G_{i+1}$, for which there are representatives $D_i \in \mathbf{D}(G_i)$ and $D_{i+1} \in \mathbf{D}(G_{i+1})$ such that $D_{i+1}$ has exactly one arrow less than $D_i$.
\end{itemize}
In both phases, the respective candidate with maximal score is chosen, or the phase is aborted if no candidate scores higher than the current essential graph $G_i$.

We introduce in addition a new turning phase which proved to enhance estimation (see Section \ref{sec:evaluation}).  Here, the sequence $(G_i)_i$ is elongated by gradually stepping from one essential graph $G_i$ to a new one with the same number of edges, denoted by $G_{i+1}$, for which there are representatives  $D_i \in \mathbf{D}(G_i)$ and $D_{i+1} \in \mathbf{D}(G_{i+1})$ such that $D_{i+1}$ can be constructed from $D_i$ by turning exactly one arrow.  As before, we choose the highest scoring candidate.  Such a turning phase had already been proposed, but not characterized or implemented, by \citet{Chickering2002Optimal}.

Because GES is an optimization algorithm working on the space of observational essential graphs, and because the characterization of \emph{interventional} essential graphs is similar to that of observational ones (Theorem \ref{thm:essential-graph-characterization}), GES can indeed be generalized to handle interventional data as well by operating on interventional instead of observational essential graphs.  We call this generalized algorithm \emph{Greedy Interventional Equivalence Search} or GIES.  An overview is shown in Algorithm \ref{alg:greedy-search}: the forward, backward and turning phase are repeatedly executed in this order until none of them can augment the score function any more.

A naïve search strategy would perhaps traverse the space of DAGs instead of essential graphs, greedily adding, removing or turning single arrows from DAGs.  It is well-known in the observational case that such an approach performs markedly worse than one accounting for Markov equivalence \citep{Chickering2002Optimal, Castelo2003Inclusion}, and we will see in our simulations (Section \ref{sec:simulations}) that the same is true in the interventional case as long as few interventions are made.  Ignoring Markov equivalence cuts down the search space of successors at haphazard; since all DAGs in a Markov equivalence class represent the same statistical model, there is no justification for considering neighbors (that is, DAGs that can be reached by adding, removing or turning an arrow) of one of the representatives but not of the other ones.

GIES can be used with general score functions.  It goes without saying that the chosen score function should be a ``reasonable'' one which has favorable statistical properties such as consistency.  We denote the score of a DAG $D$ given interventional data $(\mathcal{T}, \mathbf{X})$ by $S(D; \mathcal{T}, \mathbf{X})$, and we assume that $S$ is \textbf{score equivalent}, that is, it assigns the same score to \mcI-equivalent DAGs; \mcI{} always stands for a conservative family of targets in this section.  Furthermore, we require $S$ to be decomposable.
\begin{definition}
    \label{def:score-function-decomposable}
    A score function $S$ is called \textbf{decomposable} if for each DAG $D$, $S$ can be written as a sum
    $$
        S(D; \mathcal{T}, \mathbf{X}) = \sum_{i=1}^p s(i, \pa_D(i); \mathcal{T}, \mathbf{X}),
    $$
    where the \textbf{local score} $s$ depends on $\mathbf{X}$ only via $\mathbf{X}_{\, \smallbullet \, i}$ and $\mathbf{X}_{\, \smallbullet \, \pa_D(i)}$, with $\mathbf{X}_{\, \smallbullet \, i}$ denoting the $i\supscr{th}$ column of $\mathbf{X}$ and $\mathbf{X}_{\, \smallbullet \, \pa_D(i)}$ the submatrix of $\mathbf{X}$ corresponding to the columns with index in $\pa_D(i)$.
\end{definition}

\begin{algorithm}[b!]
    \caption{$\mathsc{GIES}(\mathcal{T}, \mathbf{X})$. Greedy Interventional Equivalence Search.  The steps of the different phases of the algorithms are described in Algorithms \ref{alg:forward}--\ref{alg:turning}.}
    \label{alg:greedy-search}
    \small
    \SetKwInOut{Input}{Input}
    \SetKwInOut{Output}{Output}
    \Input{$(\mathcal{T}, \mathbf{X})$: interventional data for family of targets \mcI}
    \Output{\mcI-essential graph}
    $G \leftarrow ([p], \emptyset)$\;
    \Repeat{$\neg\mathrm{DoContinue}$}{%
        DoContinue $\leftarrow$ FALSE\;
        \Repeat{$G\subscr{old} = G$}{%
            $G\subscr{old} \leftarrow G$\;
            $G \leftarrow \mathsc{ForwardStep}(G; \mathcal{T}, \mathbf{X})$ \tcp*{See Algorithm \ref{alg:forward}}
        }
        \Repeat{$G\subscr{old} = G$}{%
            $G\subscr{old} \leftarrow G$\;
            $G \leftarrow \mathsc{BackwardStep}(G; \mathcal{T}, \mathbf{X})$ \tcp*{See Algorithm \ref{alg:backward}}
            \lIf{$G\subscr{old} \neq G$}{DoContinue $\leftarrow$ TRUE\;}
        }
        \Repeat{$G\subscr{old} = G$}{%
            $G\subscr{old} \leftarrow G$\;
            $G \leftarrow \mathsc{TurningStep}(G; \mathcal{T}, \mathbf{X})$ \tcp*{See Algorithm \ref{alg:turning}}
            \lIf{$G\subscr{old} \neq G$}{DoContinue $\leftarrow$ TRUE\;}
        }
    }
\end{algorithm}

Throughout the rest of this section, $S$ always denotes a score equivalent and decomposable score function.  Such a score function needs only be evaluated at one single representative of some interventional Markov equivalence class. Indeed, a key ingredient for the efficiency of the observational GES as well as our interventional GIES is an implementation that computes the greedy steps to the next equivalence class in a local fashion without enumerating all corresponding DAG members.  \citet{Chickering2002Optimal} found a clever way to do that in the forward and backward phase of the observational GES.  In Sections \ref{sec:gies-forward} and \ref{sec:gies-backward}, we generalize his methods to the interventional case, and in Section \ref{sec:gies-turning}, we propose an efficient implementation of the new turning phase.

\subsection{Forward Phase}
\label{sec:gies-forward}

A step in the forward phase of GIES can be formalized as follows: for an \mcI-essential graph $G_i$, find the next one $G_{i+1} := \mathcal{E_I}(D_{i+1})$, where
\begin{align*}
    D_{i+1} & := \argmax_{D' \in \mathbf{D}^+(G_i)} S(D'; \mathcal{T}, \mathbf{X}), \text{ and} \\
    \mathbf{D}^+(G_i) & := \{D' \text{ a DAG} \spst \exists \text{ an arrow } u \grarright v \in D': D' - (u, v) \in \mathbf{D}(G_i)\} \ .
\end{align*}
If no candidate DAG $D' \in \mathbf{D}^+(G_i)$ scores higher than $G_i$, abort the forward phase.

We denote the set of candidate \emph{\mcI-essential graphs} by $\boldsymbol{\mathcal{E}}_\mcI^+(G_i) := \{\mathcal{E_I}(D') \spst D' \in \mathbf{D}^+(G_i)\}$.  In the next proposition, we show that each graph $G' \in \boldsymbol{\mathcal{E}}_\mcI^+(G_i)$ can be characterized by a triple $(u, v, C)$, where $u \grarright v$ is the arrow that has to be added to a representative $D$ of $G_i$ in order to get a representative $D'$ of $G'$, and $C$ specifies the edge orientations of $D$ within the chain component of $v$ in $G$.

\begin{algorithm}[b!]
    \caption{$\mathsc{ForwardStep}(G; \mathcal{T}, \mathbf{X})$. One step of the forward phase of GIES.}
    \label{alg:forward}
    \SetKwInOut{Input}{Input}
    \SetKwInOut{Output}{Output}
    \Input{$G = ([p], E)$: \mcI-essential graph; $(\mathcal{T}, \mathbf{X})$: interventional data for \mcI}
    \Output{$G' \in \boldsymbol{\mathcal{E}}^+_\mcI(G)$, or $G$}
    $\Delta S\subscr{max} \leftarrow 0$\;
    \ShowLnLabel{ln:foreach-start}\ForEach{$v \in [p]$}{%
        \ForEach{$u \in [p] \setminus \ad_G(v)$}{%
            $N \leftarrow \nb_G(v) \cap \ad_G(u)$\;
            \ForEach(\tcp*[f]{Proposition \ref{prop:forward-characterization}\ref{itm:forward-C-clique} and \ref{itm:forward-N-subset-C}}){clique $C \subset \nb_G(v)$ with $N \subset C$} {
                \If(\tcp*[f]{Proposition \ref{prop:forward-characterization}\ref{itm:forward-paths}}){$\not\exists$ path from $v$ to $u$ in $G[[p] \setminus C]$}{%
                    $\Delta S \leftarrow s(v, \pa_G(v) \cup C \cup \{u\}; \mathcal{T}, \mathbf{X}) - s(v, \pa_G(v) \cup C; \mathcal{T}, \mathbf{X})$\;
                    \If{$\Delta S > \Delta S\subscr{max}$}{%
                        $\Delta S\subscr{max} \leftarrow \Delta S$\;
                        \ShowLnLabel{ln:foreach-end}$(u\subscr{max}, v\subscr{max}, C\subscr{max}) \leftarrow (u, v, C)$\;
                    }
                }
            }
        }
    }
    \If{$\Delta S\subscr{max} > 0$}{%
        $\sigma \leftarrow \LexBFS((C\subscr{max}, v\subscr{max}, \ldots), E[T_G(v\subscr{max})])$\;
        Orient edges of $G[T_G(v\subscr{max})]$ according to $\sigma$\;
        Insert edge $(u\subscr{max}, v\subscr{max})$ into $G$\;
        \Return{$\mathsc{ReplaceUnprotected}(\mcI, G)$} \tcp*{See Algorithm \ref{alg:iterative-construction-essential-graph}}
    }
    \lElse{\Return{$G$}\;}
\end{algorithm}

\begin{proposition}
    \label{prop:forward-characterization}
    Let $G$ be an \mcI-essential graph, let $u$ and $v$ be two non-adjacent vertices of $G$, and let $C \subset \nb_G(v)$.  Then there is a DAG $D \in \mathbf{D}(G)$ with $\{a \in \nb_G(v) \spst a \grarright v \in D\} = C$ such that $D' := D + (u, v) \in \mathbf{D}^+(G)$ if and only if
    \begin{subprop}
        \item \label{itm:forward-C-clique} $C$ is a clique in $G[T_G(v)]$;
        
        \item \label{itm:forward-N-subset-C} $N := \nb_G(v) \cap \ad_G(u) \subset C$;
        
        \item \label{itm:forward-paths} and every path from $v$ to $u$ in $G$ has a vertex in $C$.
    \end{subprop}
    For given $G$, $u$, $v$ and $C$ determine $D'$ uniquely up to \mcI-equivalence.
\end{proposition}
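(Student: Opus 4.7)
The plan is to prove both directions of the equivalence and then uniqueness, interpreting ``path from $v$ to $u$ in $G$'' as a semi-directed path (each directed edge traversed along its arrow). For the ``only if'' direction, suppose $D \in \mathbf{D}(G)$ with $\{a \in \nb_G(v) : a \grarright v \in D\} = C$ exists such that $D' = D + (u,v)$ is acyclic. Condition~\ref{itm:forward-C-clique} is immediate from Proposition~\ref{prop:construction-representative}: representatives of $G$ contain no v-structures beyond those already in $G$, so any two parents of $v$ drawn from $\nb_G(v) \subset T_G(v)$ must be connected by a line. Condition~\ref{itm:forward-N-subset-C} follows from a case analysis: if some $a \in N \setminus C$ existed, then $v \grarright a \in D$, and the edge between $a$ and $u$ in $G$ would lead to a contradiction in every subcase. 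The configuration $u \grarright a$ in $G$ produces the induced subgraph $u \grarright a \grline v$ forbidden by Theorem~\ref{thm:essential-graph-characterization}\ref{itm:forbidden-subgraph}; the configuration $a \grarright u$ yields the cycle $u \grarright v \grarright a \grarright u$ in $D'$; if $a \grline u$ is a line, then $u \in T_G(v)$ and either orientation of $a \grline u$ in $D$ fails, producing the same cycle (case $a \grarright u$ in $D$) or the forbidden new v-structure $u \grarright a \grarleft v$ (case $u \grarright a$ in $D$, contradicting the no-v-structure property of chain components). For~\ref{itm:forward-paths}, I argue by contradiction: suppose a semi-directed path $\pi = (v, a_1, \ldots, a_{k-1}, u)$ in $G$ avoids $C$. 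By induction on the length of $\pi$, using Theorem~\ref{thm:essential-graph-characterization}\ref{itm:forbidden-subgraph} to eliminate intermediate ``$\grarright \grline$'' configurations and the absence of v-structures in the chain component orientation of $D$ to force the correct orientation of the line-segments of $\pi$ lying inside $T_G(v)$, I would show that $\pi$ lifts to a directed path $v \grarright \cdots \grarright u$ in $D$, contradicting the acyclicity of $D'$.

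For the ``if'' direction, suppose~\ref{itm:forward-C-clique}--\ref{itm:forward-paths} hold. Then $C \cup \{v\}$ is a clique in the chordal graph $G[T_G(v)]$, and a perfect elimination ordering of $G[T_G(v)]$ exists whose associated orientation makes $C$ exactly the parents of $v$ within the chain component and sends $v \grarright w$ for every $w \in \nb_G(v) \setminus C$; operationally this is obtained by \LexBFS{} with initial prefix $(C, v)$, as used in Algorithm~\ref{alg:forward}. Orienting the remaining chain components of $G$ by arbitrary perfect elimination orderings produces, via Proposition~\ref{prop:construction-representative}, a DAG $D \in \mathbf{D}(G)$ with the prescribed parent set for $v$. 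Acyclicity of $D + (u,v)$ is the easy direction: a hypothetical directed path $v \grarright a_1 \grarright \cdots \grarright u$ in $D$ would be a semi-directed path in $G$, and none of the $a_i$ can lie in $C$ (if $a_i \in C$, then $a_i \grarright v \in D$ together with $v \grarright a_1 \grarright \cdots \grarright a_i$ would close a cycle in the DAG $D$), contradicting~\ref{itm:forward-paths}.

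For uniqueness, any two $D'_1 = D_1 + (u,v)$ and $D'_2 = D_2 + (u,v)$ arising from the construction share the same skeleton, and the same v-structures---namely those of $D_j$ (equivalently, those of $G$, since $D_j \in \mathbf{D}(G)$) together with the new v-structures $u \grarright v \grarleft w$ for $w \in (\pa_G(v) \cup C) \setminus \ad_G(u)$, all determined by $(G, u, v, C)$. For each $I \in \mcI$, the skeletons of the intervention graphs $(D'_j)^{(I)}$ likewise agree, being determined by the skeleton of $D_j^{(I)}$ (common across $j$ by Theorem~\ref{thm:interventional-markov-equivalence} applied to $D_1 \sim_\mcI D_2$) and by whether $v \in I$. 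Theorem~\ref{thm:interventional-markov-equivalence}\ref{itm:skeleton-v-structures} then yields $D'_1 \sim_\mcI D'_2$. The main obstacle will be the induction for~\ref{itm:forward-paths} in the only-if direction, which requires a careful unravelling of how Theorem~\ref{thm:essential-graph-characterization}'s structural constraints govern the interaction between directed edges and chain components along a semi-directed path; the construction of the requisite perfect elimination ordering in the other direction is by contrast a routine invocation of standard chordal-graph techniques.
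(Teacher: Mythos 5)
Most of your proposal tracks the paper's own proof closely: your arguments for \ref{itm:forward-C-clique} and \ref{itm:forward-N-subset-C} in the only-if direction, the whole if direction, and the uniqueness argument are essentially the ones in Appendix \ref{sec:proofs-greedy-search} (your case analysis for \ref{itm:forward-N-subset-C} is an expanded but equivalent version of the paper's, which cases directly on the orientation of the $a$--$u$ edge in $D$ rather than on its type in $G$).

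The genuine gap is in the only-if direction of \ref{itm:forward-paths}. Your stated goal is to show that a given $C$-avoiding path $\pi$ from $v$ to $u$ in $G$ ``lifts to a directed path $v \grarright \cdots \grarright u$ in $D$''. For an arbitrary such $\pi$ this is false: if, say, $v \grline x \grline y$ with chord $v \grline y$ inside $T_G(v)$, $y \grarright u \in G$ and $C = \emptyset$, then the representative $D$ obtained from the elimination ordering $(v, y, x, \ldots)$ has $v \grarright x \grarleft y$, so the $C$-avoiding path $(v, x, y, u)$ does not lift to a directed path in $D$ (only the shorter path $(v, y, u)$ does). The missing device is minimality: the paper takes a \emph{shortest} $C$-avoiding path $\gamma$ and derives a contradiction from the first edge $a_i \grarleft a_{i+1}$ of $\gamma$ pointing backwards in $D$ --- the absence of new v-structures forces a chord $a_{i-1} \grdots a_{i+1}$, minimality of $\gamma$ forces that chord to be oriented $a_{i+1} \grarright a_{i-1}$ in $G$ (anything else would shortcut $\gamma$), and the resulting configuration on $\{a_{i-1}, a_i, a_{i+1}\}$ is a directed $3$-cycle in the chain graph $G$. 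Your ``induction on the length of $\pi$'' gestures at the right mechanism, but the induction target you announce (lifting the \emph{given} $\pi$) is the wrong invariant; you must either restart from the shorter path the chord produces or work with a shortest counterexample from the outset. A second, minor point: in the uniqueness argument you describe the v-structures of $D'_j$ as those of $D_j$ plus the new ones at $v$, overlooking that adding $(u,v)$ can \emph{destroy} v-structures of the form $u \grarright w \grarleft v$; since $D_1$ and $D_2$ share their v-structures this does not affect the conclusion, but the bookkeeping should be stated correctly.
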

Note that points \ref{itm:forward-C-clique} and \ref{itm:forward-N-subset-C} imply in particular that $N$ is a clique in $G[T_G(v)]$.  Proposition \ref{prop:forward-characterization} has already been proven for the case of observational data \citep[Theorem 15]{Chickering2002Optimal}; it is not obvious, however, to see that this characterization of a forward step is also valid for interventional essential graphs, so we give a new proof in Appendix \ref{sec:proofs-greedy-search} using the results developed in Sections \ref{sec:model} and \ref{sec:essential-graphs}.

The DAGs $D$ and $D'$ in Proposition \ref{prop:forward-characterization} only differ in the edge $(u, v)$; $v$ is the only vertex whose parents are different in $D$ and $D'$.  Since the score function $S$ is assumed to be decomposable, the score difference between $D$ and $D'$ can be expressed by the local score change at vertex $v$, as stated in the following corollary.

\begin{corollary}
    \label{cor:forward-score-change}
    Let $G$, $u$, $v$, $C$, $D$ and $D'$ be as in Proposition \ref{prop:forward-characterization}.  The score difference $\Delta S := S(D'; \mathcal{T}, \mathbf{X}) - S(D; \mathcal{T}, \mathbf{X})$ can be calculated as follows:
    $$
        \Delta S = s(v, \pa_G(v) \cup C \cup \{u\}; \mathcal{T}, \mathbf{X}) - s(v, \pa_G(v) \cup C; \mathcal{T}, \mathbf{X}).
    $$
\end{corollary}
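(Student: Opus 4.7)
The plan is to combine decomposability of $S$ with the fact that $D'$ differs from $D$ only by adding the single arrow $u \grarright v$, so that all local scores cancel except the one at $v$. Concretely, I first write
\[
S(D'; \mathcal{T}, \mathbf{X}) - S(D; \mathcal{T}, \mathbf{X}) = \sum_{i=1}^p \bigl[ s(i, \pa_{D'}(i); \mathcal{T}, \mathbf{X}) - s(i, \pa_D(i); \mathcal{T}, \mathbf{X}) \bigr]
\]
using Definition \ref{def:score-function-decomposable}. Since $D' = D + (u,v)$, the parent sets satisfy $\pa_{D'}(i) = \pa_D(i)$ for all $i \neq v$, so every summand with $i \neq v$ vanishes identically, and
\[
\Delta S = s(v, \pa_{D'}(v); \mathcal{T}, \mathbf{X}) - s(v, \pa_D(v); \mathcal{T}, \mathbf{X}).
\]

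The second step is to identify $\pa_D(v)$ and $\pa_{D'}(v)$ in terms of quantities available directly from the essential graph $G$. Because $D \in \mathbf{D}(G)$, we have $D \subset G$ and $D^u = G^u$, so every edge of $G$ incident to $v$ appears in $D$ with the same orientation if it is directed in $G$, and with some orientation if it is a line in $G$. Directed incoming edges of $v$ in $G$ thus contribute exactly $\pa_G(v)$ to $\pa_D(v)$; directed outgoing edges contribute nothing; and the lines incident to $v$ (which lie inside the chain component $T_G(v)$) contribute precisely those $a \in \nb_G(v)$ for which $a \grarright v \in D$, namely the set $C$ by the defining property of $D$ in Proposition \ref{prop:forward-characterization}. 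Hence $\pa_D(v) = \pa_G(v) \cup C$, and since $u$ is non-adjacent to $v$ in $G$ we also get $\pa_{D'}(v) = \pa_D(v) \cup \{u\} = \pa_G(v) \cup C \cup \{u\}$.

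Substituting these two expressions back into the displayed formula for $\Delta S$ yields exactly the claimed identity. There is essentially no hard step here: the only thing to be careful about is the bookkeeping of $\pa_D(v)$ versus $\pa_G(v)$, i.e.\ distinguishing directed parents inherited from $G$ from oriented lines inside the chain component $T_G(v)$; once this is made explicit, the corollary is an immediate consequence of decomposability and the single-edge change between $D$ and $D'$.
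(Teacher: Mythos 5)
Your proof is correct and follows exactly the route the paper has in mind: the paper explicitly omits the proof, stating only that the corollary is an immediate consequence of Proposition \ref{prop:forward-characterization} and the decomposability of the score, and your argument is precisely the expansion of that remark (cancellation of all local scores except at $v$, plus the identification $\pa_D(v) = \pa_G(v) \cup C$). The bookkeeping of $\pa_D(v)$ versus $\pa_G(v)$ is handled correctly, so nothing is missing.
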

In the observational case, this corollary corresponds to Corollary 16 of \citet{Chickering2002Optimal}.

The most straightforward way to construct an \mcI-essential graph $G' \in \boldsymbol{\mathcal{E}}_\mcI^+(G)$ characterized by the triple $(u, v, C)$ as defined in Proposition \ref{prop:forward-characterization} would be to create a representative $D \in \mathbf{D}(G)$ by orienting the edges of $T_G(v)$ as indicated by the set $C$, add the arrow $u \grarright v$ to get $D'$, and finally construct $\mathcal{E_I}(D')$ with Algorithm \ref{alg:iterative-construction-essential-graph}.  The next lemma suggests a novel shortcut to this procedure: it is sufficient to orient the edges of the chain component $T_G(v)$ \emph{only} to get a partial \mcI-essential graph of $D'$ after adding the arrow $u \grarright v$.

\begin{lemma}
    \label{lem:forward-partial-essential-graph}
    Let $G$, $u$, $v$, $C$, $D$ and $D'$ be as in Proposition \ref{prop:forward-characterization}.  Let $H$ be the graph that we get by orienting all edges of $T_G(v)$ as in $D$ (leaving other chain components unchanged) and inserting the arrow $(u, v)$.  Then $H$ is a partial \mcI-essential graph of $D'$.
\end{lemma}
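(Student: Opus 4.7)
The plan is to verify the two conditions of Definition \ref{def:partial-essential-graph} for $H$: the inclusions $D' \subset H \subset \mathcal{E_I}(D')$, and the absence of any induced subgraph of the form $a \grarright b \grline c$ in $H$. Write $G' := \mathcal{E_I}(D')$. I expect $D' \subset H$ and the forbidden-subgraph condition to be essentially bookkeeping, with the real effort going into $H \subset G'$.

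For $D' \subset H$, the arrows of $H$ come in three types: arrows of $G$ (which are arrows of $D$ since $D \in \mathbf{D}(G)$), edges of $T_G(v)$ oriented as in $D$, and the new arrow $u \grarright v$; each is already an arrow of $D' = D + (u,v)$. The lines of $H$ lie exclusively in chain components $T \neq T_G(v)$ of $G$, where $D$ orients each edge according to a perfect elimination ordering, so each such line is an arrow (in some direction) of $D'$. To rule out an induced $a \grarright b \grline c$ in $H$, note that the line $b \grline c$ must sit in some chain component $T \neq T_G(v)$, so $b \neq v$ and $b \notin T_G(v)$; this rules out $a \grarright b$ being either the new arrow $u \grarright v$ or an oriented edge of $T_G(v)$, forcing $a \grarright b$ to be an arrow of $G$ and hence exhibiting an induced $a \grarright b \grline c$ already in $G$, contradicting Theorem \ref{thm:essential-graph-characterization}\ref{itm:forbidden-subgraph}.

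For $H \subset G'$, since $H$ and $G'$ share the skeleton $G^u \cup \{u,v\}$, it suffices to prove that every line of $H$ is also a line of $G'$. Fix such a line $\{a,b\}$; it lies in a chain component $T \neq T_G(v)$ of $G = \mathcal{E_I}(D)$, so by Proposition \ref{prop:construction-representative} we can choose $D_1, D_2 \in [D]_\mcI$ differing from $D$ only inside $T$, with $a \grarright b \in D_1$ and $a \grarleft b \in D_2$. Because the chain component $T_G(v)$ is untouched, both satisfy $\pa_{D_1}(v) = \pa_{D_2}(v) = \pa_G(v) \cup C = \pa_D(v)$. The claim is that $D_1 + (u,v)$ and $D_2 + (u,v)$ both lie in $[D']_\mcI$, which will exhibit opposite orientations of $\{a,b\}$ inside $[D']_\mcI$ and force $\{a,b\}$ to be a line of $G'$.

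Acyclicity of $D_1 + (u,v)$ follows from Proposition \ref{prop:forward-characterization}\ref{itm:forward-paths}: any directed $v$-to-$u$ path in $D_1$ would have to pass through some $c \in C$, but $c \grarright v \in D_1$ would then close a directed cycle. For \mcI-equivalence to $D'$ via Theorem \ref{thm:interventional-markov-equivalence}\ref{itm:skeleton-v-structures}, skeletons are obviously identical; the $I$-intervention skeletons agree because each of $D_1+(u,v)$ and $D'$ acquires the edge $\{u,v\}$ exactly when $v \notin I$, while the remaining intervention skeletons coincide thanks to $D_1 \sim_\mcI D$; and the v-structures agree because those not involving $u \grarright v$ are inherited from $D_1$ (hence from $D$, hence from $D'$), while those involving $u \grarright v$ are $(u,p)$ for $p \in \pa_{D_1}(v)$ not adjacent to $u$, which matches the analogous set for $D'$ since $\pa_{D_1}(v) = \pa_D(v)$. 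The main obstacle is exactly this coordination between re-orienting $T$ and preserving the parents of $v$: arbitrary representatives of $[D]_\mcI$ might also re-orient $T_G(v)$ and thereby change the v-structures created by appending $u \grarright v$, so the independence-of-chain-components statement of Proposition \ref{prop:construction-representative} is what makes everything click.
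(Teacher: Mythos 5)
Your proof is correct and follows essentially the same route as the paper's: both establish $D' \subset H$ and the absence of induced $a \grarright b \grline c$ by direct inspection, and both prove $H \subset \mathcal{E_I}(D')$ by re-orienting the chain component containing a given line while leaving $T_G(v)$ fixed, so that the two resulting DAGs plus the arrow $(u,v)$ are \mcI-equivalent representatives of $[D']_\mcI$ with opposite orientations of that line. The only difference is that the paper invokes the uniqueness clause of Proposition \ref{prop:forward-characterization} as a black box where you re-derive the \mcI-equivalence explicitly via Theorem \ref{thm:interventional-markov-equivalence}.
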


\begin{figure}[t]
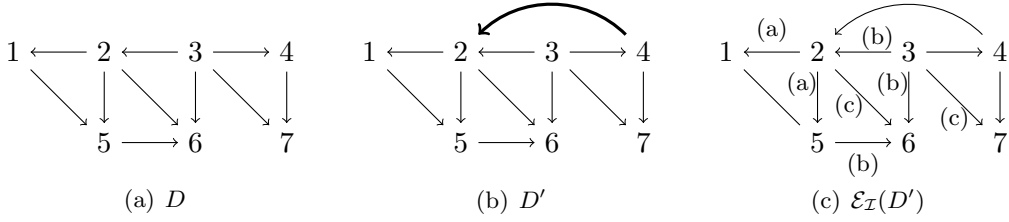

    \centering
    \subfigure[$D$]{%
        \begin{exgraphpicture}
            \draw[->] (v2) -- (v1);
            \draw[->] (v3) -- (v2);
            \draw[->] (v3) -- (v4);
            \draw[->] (v1) -- (v5);
            \draw[->] (v2) -- (v5);
            \draw[->] (v2) -- (v6);
            \draw[->] (v3) -- (v6);
            \draw[->] (v5) -- node[below]{\footnotesize \phantom{(b)}} (v6);
            \draw[->] (v3) -- (v7);
            \draw[->] (v4) -- (v7);
        \end{exgraphpicture}
    } \quad
    \subfigure[$D'$]{%
        \begin{exgraphpicture}
            \draw[->] (v2) -- (v1);
            \draw[->] (v3) -- (v2);
            \draw[->] (v3) -- (v4);
            \draw[->] (v1) -- (v5);
            \draw[->] (v2) -- (v5);
            \draw[->] (v2) -- (v6);
            \draw[->] (v3) -- (v6);
            \draw[->] (v5) -- node[below]{\footnotesize \phantom{(b)}} (v6);
            \draw[->] (v3) -- (v7);
            \draw[->] (v4) -- (v7);
            \draw[->, bend right=45, very thick] (v4) to (v2);
        \end{exgraphpicture}
    } \quad
    \subfigure[$\mathcal{E_I}(D')$]{%
        \begin{exgraphpicture}
            \draw[->] (v2) -- node[above]{\footnotesize (a)} (v1);
            \draw[->] (v3) -- node[above=-3pt, near start]{\footnotesize (b)} (v2);
            \draw[->] (v3) -- (v4);
            \draw[-]  (v1) -- (v5);
            \draw[->] (v2) -- node[left=-4pt, near start]{\footnotesize (a)} (v5);
            \draw[->] (v2) -- node[below, near start]{\footnotesize (c)} (v6);
            \draw[->] (v3) -- node[left=-4pt, near start]{\footnotesize (b)} (v6);
            \draw[->] (v5) -- node[below]{\footnotesize (b)} (v6);
            \draw[->] (v3) -- node[below]{\footnotesize (c)} (v7);
            \draw[->] (v4) -- (v7);
            \draw[->, bend right=45] (v4) to (v2);
        \end{exgraphpicture}
    }
    \caption{DAGs $D$, $D'$ and $\mathcal{E_I}(D')$ illustrating a possible forward step of GIES for the family of targets $\mcI = \{\emptyset, \{4\}\}$, applied to the \mcI-essential graph $G$ of Figure \ref{fig:ex-strong-protection} for the parameters $(u, v, C) = (4, 2, \{3\})$ (notation according to Proposition \ref{prop:forward-characterization}).  In parentheses in Figure (c): arrow configurations according to Definition \ref{def:strongly-protected-arrow}; arrows incident to $4$ are strongly \mcI-protected by the intervention target $\{4\}$.}
    \label{fig:ex-forward}
\end{figure}

Algorithm \ref{alg:forward} shows our implementation of the forward phase of GIES, summarizing the results of Proposition \ref{prop:forward-characterization}, Corollary \ref{cor:forward-score-change} and Lemma \ref{lem:forward-partial-essential-graph}.  Figure \ref{fig:ex-forward} illustrates one forward step, applied to the \mcI-essential graph $G$ (for $\mcI = \{\emptyset, \{4\}\}$) of Figure \ref{fig:ex-strong-protection} and characterized by the triple $(u, v, C) = (4, 2, \{3\})$.  Note that this triple is indeed valid in the sense of Proposition \ref{prop:forward-characterization}: $\{3\}$ is clearly a clique (point \ref{itm:forward-C-clique}), $\nb_G(2) \cap \ad_G(4) = \{3\}$ (point \ref{itm:forward-N-subset-C}), and there is no path from $2$ to $4$ in $G[[p] \setminus C]$ (point \ref{itm:forward-paths}).

\subsection{Backward Phase}
\label{sec:gies-backward}

In analogy to the forward phase, one step of the backward phase can be formalized as follows: for an \mcI-essential graph $G_i$, find its successor $G_{i+1} := \mathcal{E_I}(D_{i+1})$, where
\begin{align*}
    D_{i+1} & := \argmax_{D' \in \mathbf{D}^-(G_i)} S(D'; \mathbf{X}), \text{ and} \\
    \mathbf{D}^-(G_i) & := \{D' \text{ a DAG} \spst \exists \ D \in \mathbf{D}(G_i), u \grarright v \in D: D' = D - (u, v)\} \ .
\end{align*}
If no candidate DAG $D' \in \mathbf{D}^+(G_i)$ scores higher than $G_i$, the backward phase is aborted.

Whenever we have some representative $D \in \mathbf{D}(G)$ of an \mcI-essential graph $G$, we get a DAG in $\mathbf{D}^-(G)$ by removing any arrow of $D$.  This is in contrast to the forward phase where we do not necessarily get a DAG in $\mathbf{D}^+(G)$ by adding an arbitrary arrow to $D$.  By adding arrows, new directed cycles could be created, something which is not possible by removing arrows.  This is the reason why the backward phase is generally simpler to implement than the forward phase.

\begin{algorithm}[b!]
    \caption{$\mathsc{BackwardStep}(G; \mathcal{T}, \mathbf{X})$.  One step of the backward phase of GIES.}
    \label{alg:backward}
    \SetKwInOut{Input}{Input}
    \SetKwInOut{Output}{Output}
    \Input{$G = ([p], E)$: \mcI-essential graph; $(\mathcal{T}, \mathbf{X})$: interventional data for \mcI}
    \Output{$G' \in \boldsymbol{\mathcal{E}}^-_\mcI(G)$, or $G$}
    $\Delta S\subscr{max} \leftarrow 0$\;
    \ForEach{$v \in [p]$}{%
        \ForEach{$u \in \nb_G(v) \cup \pa_G(v)$}{%
            $N \leftarrow \nb_G(v) \cap \ad_G(u)$\;
            \ForEach{clique $C \subset N$} {
                $\Delta S \leftarrow s(v, (\pa_G(v) \cup C) \setminus \{u\}; \mathcal{T}, \mathbf{X}) - s(v, \pa_G(v) \cup C \cup \{u\}; \mathcal{T}, \mathbf{X})$\;
                \If{$\Delta S > \Delta S\subscr{max}$}{%
                    $\Delta S\subscr{max} \leftarrow \Delta S$\;
                    $(u\subscr{max}, v\subscr{max}, C\subscr{max}) \leftarrow (u, v, C)$\;
                }
            }
        }
    }
    \If{$\Delta S\subscr{max} > 0$}{%
        \lIf{$u\subscr{max} \in \nb_G(v\subscr{max})$}{$\sigma \leftarrow \LexBFS((C\subscr{max}, u\subscr{max}, v\subscr{max}, \ldots), E[T_G(v\subscr{max})])$\;}
        \lElse{$\sigma \leftarrow \LexBFS((C\subscr{max}, v\subscr{max}, \ldots), E[T_G(v\subscr{max})])$\;}
        Orient edges of $G[T_G(v\subscr{max})]$ according to $\sigma$\;
        Remove edge $(u\subscr{max}, v\subscr{max})$ from $G$\;
        \Return{$\mathsc{ReplaceUnprotected}(\mcI, G)$} \tcp*{See Algorithm \ref{alg:iterative-construction-essential-graph}}
    }
    \lElse{\Return{$G$\;}}
\end{algorithm}

In Proposition \ref{prop:backward-characterization} (corresponding to Theorem 17 of \citet{Chickering2002Optimal} for the observational case), we show that we can, similarly to the forward phase, characterize an \mcI-essential graph of $\boldsymbol{\mathcal{E}}^-_\mcI(G) := \{\mathcal{E_I}(D') \spst D' \in \mathbf{D}^-(G)\}$ by a triple $(u, v, C)$, where $C$ is a clique in $\nb_G(v)$.  As in the forward phase, we see that the score difference of $D$ and $D'$ is determined by the local score change at the vertex $v$ (Corollary \ref{cor:backward-score-change}), and that lines in chain components other than $T_G(v)$ remain lines in $G' = \mathcal{E_I}(D')$ (Lemma \ref{lem:backward-partial-essential-graph}).  Algorithm \ref{alg:backward} summarizes the results of the propositions in this section.

\begin{proposition}
    \label{prop:backward-characterization}
    Let $G = ([p], E)$ be an \mcI-essential graph with $(u, v) \in E$ (that is, $u \grline v \in G$ or $u \grarright v \in G$), and let $C \subset \nb_G(v)$.  There is a DAG $D \in \mathbf{D}(G)$ with $u \grarright v \in D$ and $\{a \in \nb_G(v) \setminus \{u\} \spst a \grarright v \in D\} = C$ such that $D' := D - (u, v) \in \mathbf{D}^-(G)$ if and only if
    \begin{subprop}
        \item \label{itm:backward-C-clique} $C$ is a clique in $G[T_G(v)]$;
        
        \item \label{itm:backward-C-subset-N} $C \subset N := \nb_G(v) \cap \ad_G(u)$.
    \end{subprop}
    Moreover, $u$, $v$ and $C$ determine $D'$ uniquely up to \mcI-equivalence for a given $G$.
\end{proposition}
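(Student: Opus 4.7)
The plan is to prove the proposition in three stages: (1) necessity of (i) and (ii); (2) existence of a DAG $D \in \mathbf{D}(G)$ realizing the hypothesis; (3) uniqueness of $D'$ up to \mcI-equivalence. The key ingredients are Proposition \ref{prop:construction-representative} (parametrizing $\mathbf{D}(G)$ by perfect elimination orderings of chain components), Theorem \ref{thm:interventional-markov-equivalence} (the skeleton/v-structure/intervention-skeleton characterization of \mcI-equivalence), and the partial essential graph technology of Lemma \ref{lem:partial-essential-graph-properties}.

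\emph{Necessity.} Given such $D$, I would argue by the constraint that $D \in \mathbf{D}(G)$ shares the v-structures of $G$. For distinct $a, b \in C$, both are parents of $v$ in $D$; non-adjacency would make $a \grarright v \grarleft b$ a v-structure of $D$ that cannot be a v-structure of $G$ (since $a \grline v$ and $b \grline v$ in $G$), a contradiction. Hence $a \sim b$, and as both belong to $T_G(v)$ the connecting edge is a line, giving (i). Applying the same argument to the pair $(u, a)$ for $a \in C$---noting that whether $u \grline v$ or $u \grarright v$ holds in $G$, the line $a \grline v$ prevents $u \grarright v \grarleft a$ from being a v-structure of $G$---yields $u \sim a$ in $G$, and hence $C \subset \nb_G(v) \cap \ad_G(u) = N$, giving (ii).

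\emph{Sufficiency.} Assume (i) and (ii) and set $C^* := C \cup (\{u\} \cap \nb_G(v))$. Using (i), (ii) and that $G[T_G(v)]$ is chordal (Proposition \ref{prop:I-essential-chain-graph}), $C^*$ is a clique in $G[T_G(v)]$. Then \LexBFS{} produces a PEO beginning with $C^*$ followed immediately by $v$; orienting $T_G(v)$ along this PEO and the remaining chain components along arbitrary PEOs yields, by Proposition \ref{prop:construction-representative}, a DAG $D \in \mathbf{D}(G)$ satisfying $u \grarright v \in D$ and $\{a \in \nb_G(v) \setminus \{u\} \spst a \grarright v \in D\} = C$. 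Since $D$ is acyclic, so is $D' := D - (u, v) \in \mathbf{D}^-(G)$.

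\emph{Uniqueness.} For any two $D_1, D_2 \in \mathbf{D}(G)$ realizing the same triple $(u, v, C)$, set $D'_k := D_k - (u, v)$ and verify the three criteria of Theorem \ref{thm:interventional-markov-equivalence}\ref{itm:skeleton-v-structures}. The skeletons of $D'_1$ and $D'_2$ agree, both being the skeleton of $G$ minus $\{u, v\}$; the intervention skeletons agree as well, since $D_1 \sim_\mcI D_2$ yields equal $D_k^{(I)}$-skeletons and removing the single arrow $u \grarright v$ modifies each in a manner depending only on whether $v \in I$. The main obstacle is showing that v-structures agree: removal of $(u, v)$ preserves all v-structures not involving this edge, but may newly uncover colliders at $v$ itself (determined by $\pa_{D'_k}(v) = (\pa_G(v) \cup C) \setminus \{u\}$, which is common to both $D_k$) and also at vertices $b$ with $u, v \in \pa_{D_k}(b)$; I must show the latter set of $b$ depends only on $(u, v, C)$, not on the chain-component orientation choices inside $T_G(v)$. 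A clean route is to exhibit a partial \mcI-essential graph $H$ common to $D'_1$ and $D'_2$---for instance, the graph obtained from $G$ by deleting $\{u, v\}$ and directing $a \grarright v$ for every $a \in C$---and then invoke Lemma \ref{lem:partial-essential-graph-properties}\ref{itm:partial-essential-equivalent}; verifying that this $H$ is indeed a partial \mcI-essential graph of both $D'_k$ (no induced $a \grarright b \grline c$ configuration, and proper containment $D'_k \subset H \subset \mathcal{E_I}(D'_k)$) is the technical crux.
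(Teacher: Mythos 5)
Your necessity and sufficiency arguments are correct and essentially the paper's: the clique property and $C \subset N$ follow from matching v-structures between $D$ and $G$ (the paper cites Corollary \ref{cor:lex-bfs-clique} and the forbidden induced subgraph $u \grarright v \grline a$, but your direct v-structure argument covers both the $u \grline v$ and $u \grarright v$ cases), and the construction via a \LexBFS{} ordering starting with $C^* = C \cup (\{u\} \cap \nb_G(v))$ is exactly how the paper produces $D$, with acyclicity of $D'$ coming for free.

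The gap is in the uniqueness part. You correctly isolate the one nontrivial point---that the new v-structures of $D'_k$ are exactly the configurations $u \grarright b \grarleft v$ at common children $b$ of $u$ and $v$ in $D_k$, and that this set must be shown independent of the orientation choices inside $T_G(v)$---but you do not prove it, and the ``clean route'' you offer instead does not work. The graph $H$ obtained from $G$ by deleting the edge $\{u,v\}$ and directing only $a \grarright v$ for $a \in C$ is in general \emph{not} a partial \mcI-essential graph of $D'_k$: for any $b \in N \setminus C$ one has $v \grarright b \in D_k$ and, by acyclicity of $D_k \ni u \grarright v$, necessarily $u \grarright b \in D_k$, so $u \grarright b \grarleft v$ is a v-structure of $D'_k$ and the arrow $v \grarright b$ is \mcI-essential in $D'_k$; your $H$ keeps $v \grline b$ as a line, violating $H \subset \mathcal{E_I}(D'_k)$ (and $H$ may also contain induced subgraphs $a \grarright v \grline b$ with $a \in C$). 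The paper's example with $(u,v,C) = (2,5,\emptyset)$ and $N = \{1\}$ (Figure \ref{fig:ex-backward}) exhibits exactly this failure. The missing step is closed by a short direct argument rather than by Lemma \ref{lem:partial-essential-graph-properties}\ref{itm:partial-essential-equivalent}: all edges incident to $v$ have the same orientation in $D_1$ and $D_2$ (they are determined by $C$ and the directed part of $G$), so if $u \grarright b \grarleft v$ is a v-structure of $D'_1$ then $v \grarright b \in D_2$ as well, and the configuration $u \grarleft b \grarleft v$ in $D_2$ is excluded because it would close the directed cycle $u \grarright v \grarright b \grarright u$; hence $u \grarright b \grarleft v$ is also a v-structure of $D'_2$.
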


\begin{corollary}
    \label{cor:backward-score-change}
    Let $G$, $u$, $v$, $C$, $D$ and $D'$ be as in Proposition \ref{prop:backward-characterization}.  The score difference $\Delta S := S(D'; \mathcal{T}, \mathbf{X}) - S(D; \mathcal{T}, \mathbf{X})$ is:
    $$
        \Delta S = s(v, (\pa_G(v) \cup C) \setminus \{u\}; \mathcal{T}, \mathbf{X}) - s(v, \pa_G(v) \cup C \cup \{u\}; \mathcal{T}, \mathbf{X}).
    $$
\end{corollary}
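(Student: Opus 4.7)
The statement is a direct consequence of decomposability together with the parent-set bookkeeping supplied by Proposition~\ref{prop:backward-characterization}, so the plan is mostly a careful tracking of $\pa_D(v)$ and $\pa_{D'}(v)$ under the two possible configurations of the edge being removed.

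First I would observe that $D' = D - (u,v)$ differs from $D$ only by the deletion of the single arrow $u \grarright v$; hence $\pa_{D'}(i) = \pa_D(i)$ for every $i \neq v$, while $\pa_{D'}(v) = \pa_D(v) \setminus \{u\}$. Since $S$ is decomposable, all terms in the sums $S(D;\mathcal{T},\mathbf{X}) = \sum_i s(i,\pa_D(i);\mathcal{T},\mathbf{X})$ and $S(D';\mathcal{T},\mathbf{X}) = \sum_i s(i,\pa_{D'}(i);\mathcal{T},\mathbf{X})$ cancel except at $i = v$, yielding
$$
\Delta S \;=\; s(v,\pa_{D'}(v);\mathcal{T},\mathbf{X}) \;-\; s(v,\pa_D(v);\mathcal{T},\mathbf{X}).
$$
Score-equivalence of $S$ guarantees that the value of $\Delta S$ is well defined on equivalence classes, which matters because $D$ and $D'$ are only determined up to $\mcI$-equivalence by the triple $(u,v,C)$ (last sentence of Proposition~\ref{prop:backward-characterization}).

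Next I would identify $\pa_D(v)$ and $\pa_{D'}(v)$ in terms of $\pa_G(v)$, $C$ and $u$. By the description of $D$ in Proposition~\ref{prop:backward-characterization}, the parents of $v$ in $D$ consist of $\pa_G(v)$ (edges directed into $v$ in every representative), together with the neighbours of $v$ in $G$ that are oriented toward $v$ in $D$, namely $C \cup \{u\}$ if $u \in \nb_G(v)$, or just $C$ if $u \in \pa_G(v)$. I would split into the two cases:

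\emph{Case 1: $u \grline v$ in $G$.} Then $u \notin \pa_G(v)$ and $u \notin C$ (since $C \subset \nb_G(v)\setminus\{u\}$), so $\pa_D(v) = \pa_G(v)\cup C \cup \{u\}$ and $\pa_{D'}(v) = \pa_G(v)\cup C = (\pa_G(v)\cup C)\setminus\{u\}$.
\emph{Case 2: $u \grarright v$ in $G$.} Then $u \in \pa_G(v)$, so $\pa_D(v) = \pa_G(v)\cup C = \pa_G(v)\cup C \cup \{u\}$ and $\pa_{D'}(v) = (\pa_G(v)\cup C)\setminus\{u\}$.
In both cases the pair $(\pa_D(v),\pa_{D'}(v))$ equals $(\pa_G(v)\cup C \cup\{u\},\ (\pa_G(v)\cup C)\setminus\{u\})$, and substituting into the expression for $\Delta S$ gives exactly the formula asserted in the corollary.

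There is no real obstacle: the entire content is unwinding definitions, and the only place that requires attention is the case split above to see that the single closed-form expression in the statement correctly covers both possibilities ($u$ being a neighbour or a parent of $v$ in $G$). The reliance on Proposition~\ref{prop:backward-characterization} is twofold — for the parent-set description of $D$ and for the well-definedness of $\Delta S$ modulo $\mcI$-equivalence — both of which have already been established.
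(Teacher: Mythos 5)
Your proof is correct and takes the only natural route: the paper itself omits this proof, stating that it "follows quickly from Proposition \ref{prop:backward-characterization}" together with decomposability, and your write-up is precisely the unwinding of that remark (decomposability localizes $\Delta S$ to vertex $v$, and the case split on $u \grline v$ versus $u \grarright v$ shows both parent sets are captured by the single closed-form expression). The additional observation about score equivalence making $\Delta S$ well defined modulo $\mcI$-equivalence is a sensible supplement, though not strictly required for the statement as given.
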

In the observational case, this corresponds to Corollary 18 in \citet{Chickering2002Optimal}.  The analogue to Lemma \ref{lem:forward-partial-essential-graph} for a computational shortcut in the forward phase reads as follows:

\begin{lemma}
    \label{lem:backward-partial-essential-graph}
    Let $G$, $u$, $v$, $C$, $D$ and $D'$ be as in Proposition \ref{prop:backward-characterization}.  Let $H$ be the graph that we get by orienting all edges of $T_G(v)$ as in $D$ and removing the arrow $(u, v)$.  Then $H$ is a partial \mcI-essential graph of $D'$.
\end{lemma}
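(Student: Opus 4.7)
The plan is to verify directly the three defining properties of a partial \mcI-essential graph from Definition \ref{def:partial-essential-graph}: namely, (i) $D' \subset H$, (ii) $H \subset \mathcal{E_I}(D')$, and (iii) $H$ contains no induced subgraph of the form $a \grarright b \grline c$. Condition (i) is immediate from the construction: since $D \in \mathbf{D}(G)$, every arrow of $D$ outside the chain component $T_G(v)$ is already an arrow of $G$ (and hence of $H$), while arrows of $D$ inside $T_G(v)$ are put into $H$ by construction; deleting $(u,v)$ from both $D$ and $H$ then yields $D' \subset H$.

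For condition (iii), I would argue that any line of $H$ must lie in a chain component of $G$ other than $T_G(v)$, because all edges of $T_G(v)$ were oriented when constructing $H$. Hence any putative $a \grarright b \grline c$ in $H$, with $b$ and $c$ lying in some chain component $T' \ne T_G(v)$, corresponds to the same induced subgraph already in $G$ \emph{unless} the deleted edge $\{u,v\}$ happens to equal $\{a, c\}$. The former is ruled out by Theorem \ref{thm:essential-graph-characterization}\ref{itm:forbidden-subgraph}; the latter is handled by a short case distinction on whether $u \grline v \in G$ (so $u \in T_G(v)$, whence $b$ adjacent to $v$ by a line in $H$ forces $b \in T_G(v)$, a contradiction) or $u \grarright v \in G$ (so $u \notin T_G(v)$ and the configuration would lift to a forbidden one in $G$).

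The main work lies in (ii): proving every arrow of $H$ is \mcI-essential in $D'$. I plan to establish this by showing each arrow of $H$ is strongly \mcI-protected in $H$ and then invoking Lemma \ref{lem:partial-essential-graph-properties}\ref{itm:partial-essential-protected} together with Lemma \ref{lem:maximal-partial-essential-graph} (via Algorithm \ref{alg:iterative-construction-essential-graph}): an arrow present in a partial essential graph that is already strongly \mcI-protected cannot be un-oriented. The arrows of $H$ split into two classes: (a) arrows inherited from $G$ other than possibly $(u,v)$, which are strongly \mcI-protected in $G$ by Theorem \ref{thm:essential-graph-characterization}\ref{itm:strongly-protected-arrows} --- I will check that neither deleting $(u,v)$ nor orienting $T_G(v)$ as in $D$ can destroy the witnessing configuration (the only delicate case is configuration (d) with the line $u \grline v$ as one of the $a$-$c_i$ edges, and there the clique property of $C$ from Proposition \ref{prop:backward-characterization}\ref{itm:backward-C-clique} prevents trouble); and (b) arrows inside $T_G(v)$ newly oriented according to $D$, for which I will use that $T_G(v)$ is chordal (Proposition \ref{prop:I-essential-chain-graph}) and the orientation used is a perfect elimination ordering starting with $(C, u, v, \ldots)$ (respectively $(C, v, \ldots)$ if $u \notin \nb_G(v)$), as produced by Algorithm \ref{alg:backward}, so that each such arrow ends up in configuration (a) or (d) of Definition \ref{def:strongly-protected-arrow} with witnesses that survive the removal of $(u,v)$.

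The hard part will be the verification in class (b) when $u \grline v \in G$, since there the edge $(u,v)$ being removed lies inside $T_G(v)$: I will need to argue that the choice of $C$ as a clique in $\nb_G(v) \cap \ad_G(u)$ and the ordering put on $T_G(v)$ produce, for every new arrow of $H$ inside $T_G(v)$, a protecting configuration whose witnesses do not rely on the edge $(u,v)$. This is analogous to (but slightly more subtle than) the corresponding step in the proof of Lemma \ref{lem:forward-partial-essential-graph}, since here protection must be preserved \emph{after} an edge deletion rather than after an edge insertion.
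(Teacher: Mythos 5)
Your parts (i) and (iii) are fine and essentially match what the paper does for the forward analogue (Lemma \ref{lem:forward-partial-essential-graph}): the only new danger for an induced $a \grarright b \grline c$ comes from the deleted adjacency $\{u,v\}$ or from the newly oriented edges of $T_G(v)$, and both are handled as you describe. The problem is your reading of condition (ii). Since $H$ and $\mathcal{E_I}(D')$ have the same skeleton and every \emph{arrow} of $H$ is already an arrow of $D' \subset \mathcal{E_I}(D')$ (so automatically an edge of $\mathcal{E_I}(D')$), the inclusion $H \subset \mathcal{E_I}(D')$ says nothing about arrows of $H$ being essential; its entire content is that every \emph{line} of $H$ (equivalently, every line of $G$ outside $T_G(v)$) is still a line of $\mathcal{E_I}(D')$, that is, is \emph{non}-essential in $D'$. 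The statement you set out to prove instead --- that every arrow of $H$ is \mcI-essential in $D'$ --- is false in general, and the paper's own Figure \ref{fig:ex-backward} is a counterexample: there $(u,v,C)=(2,5,\emptyset)$, $H$ contains the arrow $2 \grarright 3$ (inherited from orienting $T_G(5)$ as in $D$), but $\mathcal{E_I}(D')$ contains the line $2 \grline 3$. Accordingly, your plan of showing each arrow of $H$ to be strongly \mcI-protected in $H$ must fail (in that example $2 \grarright 3$ is not strongly protected in $H$), and if it succeeded it would, via Lemma \ref{lem:maximal-partial-essential-graph}, force $H = \mathcal{E_I}(D')$, which is not what the lemma asserts. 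There is also a circularity in invoking Lemma \ref{lem:partial-essential-graph-properties}\ref{itm:partial-essential-protected} for $H$, since that lemma presupposes $H$ is already known to be a partial \mcI-essential graph of $D'$.

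The argument the paper intends (by analogy with its proof of Lemma \ref{lem:forward-partial-essential-graph}) is the following: take any line $a \grline b$ of $H$, so $a, b \notin T_G(v)$, and w.l.o.g.\ $a \grarright b \in D$. By Corollary \ref{cor:chordal-graph-different-orientations} there is $D_2 \in \mathbf{D}(G)$ agreeing with $D$ on $T_G(v)$ but with $a \grarleft b \in D_2$. Since $D_2$ has $u \grarright v$ and the same set $C$ of parents of $v$ in $\nb_G(v)\setminus\{u\}$, the uniqueness clause of Proposition \ref{prop:backward-characterization} gives $D_2 - (u,v) \sim_\mcI D'$, so both orientations of the edge occur in $[D']_\mcI$ and hence $a \grline b \in \mathcal{E_I}(D')$. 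Replacing your part (ii) by this two-representative argument closes the gap.
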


A backward step of GIES is summarized in Algorithm \ref{alg:backward} and illustrated in Figure \ref{fig:ex-backward}.  The triple $(u, v, C) = (2, 5, \emptyset)$ used there to characterize the backward step obviously fulfills the requirements of Proposition \ref{prop:backward-characterization}.

\begin{figure}
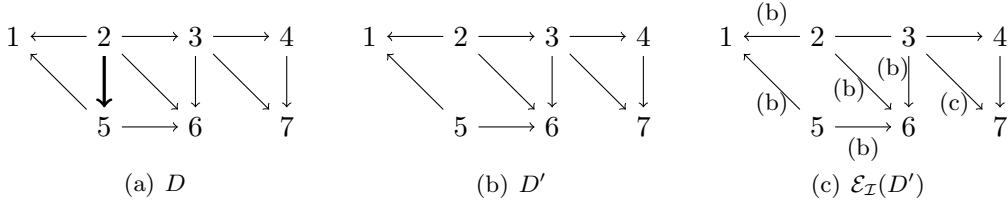

    \centering
    \subfigure[$D$]{%
        \begin{exgraphpicture}
            \draw[->] (v2) -- (v1);
            \draw[->] (v5) -- (v1);
            \draw[->] (v2) -- (v3);
            \draw[->] (v3) -- (v4);
            \draw[->, very thick] (v2) -- (v5);
            \draw[->] (v2) -- (v6);
            \draw[->] (v3) -- (v6);
            \draw[->] (v5) -- node[below]{\footnotesize \phantom{(b)}} (v6);
            \draw[->] (v3) -- (v7);
            \draw[->] (v4) -- (v7);
        \end{exgraphpicture}
    } \quad
    \subfigure[$D'$]{%
        \begin{exgraphpicture}
            \draw[->] (v2) -- (v1);
            \draw[->] (v5) -- (v1);
            \draw[->] (v2) -- (v3);
            \draw[->] (v3) -- (v4);
            \draw[->] (v2) -- (v6);
            \draw[->] (v3) -- (v6);
            \draw[->] (v5) -- node[below]{\footnotesize \phantom{(b)}} (v6);
            \draw[->] (v3) -- (v7);
            \draw[->] (v4) -- (v7);
        \end{exgraphpicture}
    } \quad
    \subfigure[$\mathcal{E_I}(D')$]{%
        \begin{exgraphpicture}
            \draw[->] (v2) -- node[above]{\footnotesize (b)} (v1);
            \draw[->] (v5) -- node[below]{\footnotesize (b)} (v1);
            \draw[-]  (v2) -- (v3);
            \draw[->] (v3) -- (v4);
            \draw[->] (v2) -- node[below, near start]{\footnotesize (b)} (v6);
            \draw[->] (v3) -- node[left=-4pt, near start]{\footnotesize (b)} (v6);
            \draw[->] (v5) -- node[below]{\footnotesize (b)} (v6);
            \draw[->] (v3) -- node[below]{\footnotesize (c)} (v7);
            \draw[->] (v4) -- (v7);
        \end{exgraphpicture}
    }
    \caption{DAGs $D$, $D'$ and $\mathcal{E_I}(D')$ illustrating a possible backward step of GIES for the family of targets $\mcI = \{\emptyset, \{4\}\}$, applied to the \mcI-essential graph $G$ of Figure \ref{fig:ex-strong-protection} for the parameters $(u, v, C) = (2, 5, \emptyset)$ (notation according to Proposition \ref{prop:backward-characterization}).  Figure (c), in parentheses: arrow configurations according to Definition \ref{def:strongly-protected-arrow}.}
    \label{fig:ex-backward}
\end{figure}

\subsection{Turning Phase}
\label{sec:gies-turning}

Finally, we characterize a step of the turning phase of GIES, in which we want to find the successor $G_{i+1} := \mathcal{E_I}(D_{i+1})$ for an \mcI-essential graph $G_i$ by the rule
\begin{align*}
    D_{i+1} := & \argmax_{D' \in \mathbf{D}^\circlearrowleft(G_i)} S(D'; \mathcal{T}, \mathbf{X}), \text{ where} \\
    \mathbf{D}^\circlearrowleft(G_i) := & \{D' \text{ a DAG } | D' \notin \mathbf{D}(G_i), \text{ and } \exists \text{ an arrow } u \grarright v \in D': \\
    & D' - (u, v) + (v, u) \in \mathbf{D}(G_i)\} \ .
\end{align*}
When the score cannot be augmented anymore, the turning phase is aborted.  The additional condition ``$D' \notin \mathbf{D}(G_i)$'' is not necessary in the definitions of $\mathbf{D}^+(G_i)$ and $\mathbf{D}^-(G_i)$; when adding or removing an arrow from a DAG, the skeleton changes, hence the new DAG is certainly not \mcI-equivalent to the previous one.  However, when \emph{turning} an arrow, the skeleton remains the same, and the danger of staying in the same equivalence class exists.

Again, we are looking for an efficient method to find a representative $D'$ for each $G' \in \boldsymbol{\mathcal{E}}_\mcI^\circlearrowleft(G_i) := \{\mathcal{E_I}(D') \spst D' \in \mathbf{D}^\circlearrowleft(G_i)\}$.  It makes sense to distinguish whether the arrow that should be turned in a representative $D \in \mathbf{D}(G_i)$ is \mcI-essential or not.  We start with the case where we want to turn an arrow which is \emph{not} \mcI-essential.

\begin{proposition}
    \label{prop:turning-undirected-characterization}
    Let $G$ be an \mcI-essential graph with $u \grline v \in G$, and let $C \subset \nb_G(v) \setminus \{u\}$.  Define $N := \nb_G(v) \cap \ad_G(u)$.  Then there is a DAG $D \in \mathbf{D}(G)$ with $u \grarleft v \in D$ and $\{a \in \nb_G(v) \spst a \grarright v \in D\} = C$ such that $D' := D - (v, u) + (u, v) \in \mathbf{D}^\circlearrowleft(G)$ if and only if
    \begin{subprop}
        \item \label{itm:turning-undirected-C-clique} $C$ is a clique in $G[T_G(v)]$;
        
        \item \label{itm:turning-undirected-diff-C-N} $C \setminus N \neq \emptyset$;
        
        \item \label{itm:turning-undirected-separation} $C \cap N$ separates $C \setminus N$ and $N \setminus C$ in $G[\nb_G(v)]$.
    \end{subprop}
    For a given $G$, $u$, $v$ and $C$ determine $D'$ up to \mcI-equivalence.
\end{proposition}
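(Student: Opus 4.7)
The plan is to adapt the strategy of Proposition \ref{prop:forward-characterization}. The existence of a representative $D \in \mathbf{D}(G)$ with the stated parent set of $v$ (and $v \grarright u$) is encoded by a perfect elimination ordering (PEO) of the chordal chain component $G[T_G(v)]$ whose initial segment is $(C, v, u, \ldots)$. Acyclicity of $D'$ and the requirement $D' \notin \mathbf{D}(G)$ can then be read off from this ordering: if $u$ sits at position $|C|+2$, immediately after $v$, there is no directed $v$-to-$u$ path in $D$ other than the direct arrow, so reversing it yields an acyclic $D'$.

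For the necessity direction, assume the desired $D$ and $D'$ exist. Condition \ref{itm:turning-undirected-C-clique} follows because two non-adjacent vertices in $C$ would generate a v-structure at $v$ in $D$ absent from $G$, contradicting $\mathcal{E_I}(D) = G$ via Theorems \ref{thm:markov-equivalence} and \ref{thm:interventional-markov-equivalence}. For \ref{itm:turning-undirected-diff-C-N}, suppose for contradiction $C \subset N$; then $C \cup \{u\}$ is a clique, and swapping the positions of $v$ and $u$ in a PEO realizing $D$ yields a new valid PEO whose induced orientation coincides with $D'$, placing $D' \in \mathbf{D}(G)$ and contradicting $D' \in \mathbf{D}^\circlearrowleft(G)$. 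For \ref{itm:turning-undirected-separation}, I argue contrapositively: if a path in $G[\nb_G(v) \setminus (C \cap N)]$ connects some $c \in C \setminus N$ with some $n \in N \setminus C$, then the earlier-neighbors condition of a PEO must fail at some vertex along this path when $u$ is placed at position $|C|+2$ (for length-one paths, $n$ would have both $c$ and $u$ as earlier non-adjacent neighbors); any PEO placing $u$ later then introduces a directed $v$-to-$u$ path in $D$ yielding a cycle in $D'$.

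For the sufficiency direction, assume \ref{itm:turning-undirected-C-clique}--\ref{itm:turning-undirected-separation} and construct a PEO $\sigma$ of $G[T_G(v)]$ starting with $(C, v, u)$. The earlier-neighbor condition holds through $u$: for each $c \in C$ the earlier neighbors lie in $C$ (a clique by \ref{itm:turning-undirected-C-clique}); for $v$ they equal $C$; for $u$ they equal $\{v\} \cup (C \cap N) \subset C \cup \{v\}$, also a clique. Condition \ref{itm:turning-undirected-separation} enables the extension of $\sigma$: in the remainder of $G[T_G(v)]$, the connected component of $N \setminus C$ in $G[\nb_G(v) \setminus (C \cap N)]$ together with any chordal extension into $T_G(v) \setminus \nb_G(v)$ can be placed immediately after $u$, and the remaining vertices afterward, so that no later vertex has both $u$ and a non-adjacent $c \in C \setminus N$ as earlier neighbors. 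Orienting $G[T_G(v)]$ via $\sigma$, and the other chain components via their own PEOs (Proposition \ref{prop:construction-representative}), yields $D \in \mathbf{D}(G)$. Since $u$ sits at position $|C|+2$, $D' := D - (v, u) + (u, v)$ is acyclic; and \ref{itm:turning-undirected-diff-C-N} supplies a $c \in C \setminus N$ creating a new v-structure $u \grarright v \grarleft c$ in $D'$ absent from $G$, so $\mathcal{E_I}(D') \neq G$ and $D' \in \mathbf{D}^\circlearrowleft(G)$. Uniqueness of $D'$ up to $\mcI$-equivalence follows from Theorem \ref{thm:interventional-markov-equivalence}, since $G$, $u$, $v$, and $C$ determine the skeleton and v-structures of $D'$.

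The main obstacle is the combinatorics of condition \ref{itm:turning-undirected-separation}: showing both that its failure blocks any valid PEO (via the forced cycle in $D'$) and that its success yields a consistent PEO extension. Translating the separation statement on $G[\nb_G(v)]$ into a PEO-extension argument on the potentially larger chordal graph $G[T_G(v)]$ will require careful use of chordality to handle vertices outside $\nb_G(v)$.
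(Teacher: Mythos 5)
Your overall strategy (a perfect elimination ordering of $G[T_G(v)]$ with prefix $(C, v, u)$, counting new v-structures for \ref{itm:turning-undirected-diff-C-N}, and a separation argument for \ref{itm:turning-undirected-separation}) is the right shape and close to the paper's, but the two hardest steps are asserted rather than proved. For the necessity of \ref{itm:turning-undirected-separation} you only verify the length-one case; for longer violating paths you claim that ``the earlier-neighbors condition of a PEO must fail at some vertex along this path'' and that any PEO ``introduces a directed $v$-to-$u$ path in $D$ yielding a cycle in $D'$,'' but this is precisely what has to be shown. The paper does it by taking a \emph{shortest} path $\gamma=(a_0,\ldots,a_k)$ from $N\setminus C$ to $C\setminus N$ in $G[\nb_G(v)]$ avoiding $C\cap N$, deducing $u\grarright a_0\in D$ and $a_k\grarright v\in D$, and then forcing $\gamma$ to be a directed path in $D$ (via the no-v-structure property of $D[T_G(v)]$ and minimality of $\gamma$), which yields a directed cycle through $v$. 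None of that reasoning appears in your sketch.

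The larger gap is in sufficiency. Your plan hinges on extending the prefix $(C,v,u)$ to a full PEO of $G[T_G(v)]$, but a prefix whose members each have clique-forming earlier neighbors need \emph{not} extend to a PEO (in the path $a\grline b\grline c$ the prefix $(a,c)$ is locally fine yet inextensible). Ruling out such obstructions here requires showing that no vertex of $T_G(v)$ --- including vertices \emph{outside} $\nb_G(v)$ --- is forced before $u$, which needs the interplay of chordality of $G[T_G(v)]$ with the separation hypothesis \ref{itm:turning-undirected-separation} stated only on $G[\nb_G(v)]$. This is exactly the content of the paper's Proposition \ref{prop:lex-bfs-clique-neighborhood} (whose proof chases a sequence of LexBFS displacements back to a chordless cycle contradicting either chordality or the separation assumption); you explicitly flag this as ``the main obstacle'' but do not supply the argument. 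Note also that the paper does not place $u$ at position $|C|+2$: it only establishes the weaker property that $u$ precedes every vertex of $\nb_G(u)\setminus(C\cup\{v\})$, which already suffices for acyclicity of $D'$. Two smaller points: your proof of \ref{itm:turning-undirected-diff-C-N} by ``swapping $v$ and $u$ in the PEO'' is imprecise when other vertices sit between them (the clean route is that $D'$ is acyclic with no new v-structure, hence lies in $\mathbf{D}(G)$ by Proposition \ref{prop:chain-graph-orientation}); and the uniqueness claim also requires checking that the skeletons of the intervention graphs $D'^{(I)}$ agree, not just skeleton and v-structures of $D'$.
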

There are now \emph{two} vertices that have different parents in the DAGs $D$ and $D'$, namely $u$ and $v$; thus the calculation of the score difference between $D$ and $D'$ involves two local scores instead of one.
\begin{corollary}
    \label{cor:turning-undirected-score-change}
    Let $G$, $u$, $v$, $C$, $D$ and $D'$ be as in Proposition \ref{prop:turning-undirected-characterization}.  Then the score difference $\Delta S := S(D'; \mathcal{T}, \mathbf{X}) - S(D; \mathcal{T}, \mathbf{X})$ can be calculated as follows:
    \begin{align*}
        \Delta S = \ & s(v, \pa_G(v) \cup C \cup \{u\}; \mathcal{T}, \mathbf{X}) + s(u, \pa_G(u) \cup (C \cap N); \mathcal{T}, \mathbf{X}) \\
        & - s(v, \pa_G(v) \cup C; \mathcal{T}, \mathbf{X}) - s(u, \pa_G(u) \cup (C \cap N) \cup \{v\}; \mathcal{T}, \mathbf{X}).
    \end{align*}
\end{corollary}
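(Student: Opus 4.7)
The plan is to combine decomposability of $S$ with the observation that $D'$ differs from $D$ only by reversing the single arrow $v\to u$ into $u\to v$. Decomposability yields
$$\Delta S = \sum_{i=1}^p\bigl[s(i,\pa_{D'}(i);\mathcal{T},\mathbf{X}) - s(i,\pa_D(i);\mathcal{T},\mathbf{X})\bigr],$$
and every term with $i\notin\{u,v\}$ cancels because such a vertex has identical parent sets in $D$ and $D'$. It therefore suffices to determine $\pa_D(v)$, $\pa_{D'}(v)$, $\pa_D(u)$, and $\pa_{D'}(u)$.

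The case of $v$ is immediate from the hypotheses of Proposition~\ref{prop:turning-undirected-characterization}: the neighbors of $v$ in $T_G(v)$ oriented toward $v$ in $D$ are exactly $C$, so $\pa_D(v)=\pa_G(v)\cup C$, and inserting the new arrow $u\to v$ gives $\pa_{D'}(v)=\pa_G(v)\cup C\cup\{u\}$.

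For $u$ I will establish by a double inclusion that $\pa_D(u) = \pa_G(u)\cup(C\cap N)\cup\{v\}$. For the inclusion $\supseteq$, given $w\in C\cap N$ the set $\{u,v,w\}$ forms a triangle in $T_G(v)$, with $w\to v$ and $v\to u$ present in $D$; orienting the remaining edge as $u\to w$ would close the cycle $u\to w\to v\to u$, so acyclicity of $D$ forces $w\to u\in D$. For the inclusion $\subseteq$, suppose $w\in\nb_G(u)\setminus\{v\}$ and $w\to u\in D$. Since $D\in\mathbf{D}(G)$ shares its v-structures with $G$, and both $v\grline u$ and $w\grline u$ are lines in $G$, the parents of $u$ lying in $T_G(u)$ must form a clique (otherwise a new v-structure would appear at $u$). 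Hence $w\in\ad_G(v)\cap\nb_G(u)=N$. If moreover $w\in N\setminus C$, then $v\to w$ in $D$, making $u\to v\to w\to u$ a directed cycle in $D'$ and contradicting that $D'$ is a DAG; so $w\in C$ and thus $w\in C\cap N$. This yields $\pa_D(u)=\pa_G(u)\cup(C\cap N)\cup\{v\}$ and $\pa_{D'}(u)=\pa_G(u)\cup(C\cap N)$.

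Substituting the four parent sets into the decomposable expression produces exactly the formula in the statement. The main obstacle is pinning down $\pa_D(u)$, which requires simultaneously exploiting the triangle/acyclicity argument in $D$, the ``no new cycles in $D'$'' constraint, and the clique property of chain-component parents implied by $D\in\mathbf{D}(G)$; once these three observations are in place the remainder is routine bookkeeping.
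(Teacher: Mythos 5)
Your proposal is correct and follows essentially the same route as the paper's proof: decomposability reduces everything to identifying $\pa_D(v)$, $\pa_{D'}(v)$, $\pa_D(u)$, $\pa_{D'}(u)$, and the determination of $\pa_D(u)$ rests on the same three observations (a $3$-cycle in $D$ forces $a\grarright u$ for $a\in C\cap N$, a $3$-cycle in $D'$ forces $u\grarright a$ for $a\in N\setminus C$, and the no-new-v-structure/clique property at $u$ forces $u\grarright a$ for neighbors of $u$ outside $N\cup\{v\}$). Your double-inclusion packaging is just a reorganization of the paper's case analysis.
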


\begin{lemma}
    \label{lem:turning-undirected-partial-essential-graph}
    Let $G$, $u$, $v$, $C$, $D$ and $D'$ be as in Proposition \ref{prop:turning-undirected-characterization}.  Let $H$ be the graph that we get by orienting all edges of $T_G(v)$ as in $D$ and turning the arrow $(v, u)$.  Then $H$ is a partial \mcI-essential graph of $D'$.
\end{lemma}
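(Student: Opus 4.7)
The plan is to verify directly the three defining conditions of Definition \ref{def:partial-essential-graph}: $D' \subset H$, absence of induced subgraphs of the form $a \grarright b \grline c$ in $H$, and $H \subset \mathcal{E_I}(D')$. The first two will be short; the third will carry the real content.

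First, $D' \subset H$ will follow because $D$ and $D'$ coincide off the turned edge $\{u,v\}$: $H$ inherits $G$'s edges outside $T_G(v)$ (each of which, whether arrow or line, contains the relevant ordered pair of $D'$) and is oriented exactly as $D'$ on $T_G(v)$ by construction. For the forbidden induced subgraph, I will argue that any line $b \grline c$ in $H$ must lie in a chain component $T \neq T_G(v)$, since $T_G(v)$ is fully oriented in $H$; then any arrow $a \grarright b$ with head $b$ must have been inherited from $G$, so $G$ itself would contain this forbidden configuration, contradicting Theorem \ref{thm:essential-graph-characterization}\ref{itm:forbidden-subgraph}.

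For $H \subset \mathcal{E_I}(D')$, the arrows of $H$ split naturally into those inside $T_G(v)$ (which are arrows of $D' \subset \mathcal{E_I}(D')$) and those outside $T_G(v)$ (which are arrows of $G$ and remain arrows of $D'$ since $D$ and $D'$ agree off the turn). The real work will be showing that every line $\{a,b\}$ of $H$—necessarily an edge of some chain component $T \neq T_G(v)$, where it is a line of $G = \mathcal{E_I}(D)$—is also a line of $\mathcal{E_I}(D')$. The strategy will be, for each direction, to use Proposition \ref{prop:construction-representative} to produce $D_a \in [D]_\mcI$ that orients $T$ with the desired direction of $\{a,b\}$ while still containing $v \grarright u$ on $T_G(v)$, and then to apply the same turn to obtain $D_a' := D_a - (v,u) + (u,v)$ and argue $D_a' \in [D']_\mcI$. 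The \mcI-equivalence $D_a' \sim_\mcI D'$ will reduce via Theorem \ref{thm:interventional-markov-equivalence}\ref{itm:skeleton-v-structures} to matching skeletons, v-structures, and per-target intervention skeletons, each of which will transfer from $D_a \sim_\mcI D$ since the turn affects only the edge $\{u,v\}$ and does so identically in both pairs.

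The hard part will be verifying acyclicity of $D_a'$. A hypothetical cycle would combine the reversed arrow $(u,v)$ with a $v$-to-$u$ path in $D_a - (v,u)$; because $D' = D - (v,u) + (u,v)$ is already acyclic, no such path exists in $D - (v,u)$, so a cycle can only arise if re-orienting $T$ opens a new $v$-to-$u$ route through $T$. I expect to dispose of this using chordality of $T$ together with the chain-graph structure of $G$—in particular the partial order $\preceq_G$ on $\mathbf{T}(G)$, which pins the direction of all inter-component arrows incident to $T$—by selecting the PEO of $T$ in $D_a$ as an extension of a topological order of $G$'s chain components, thereby ruling out any new $v$-to-$u$ path through $T$. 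This structural ingredient should mirror the acyclicity arguments used in the companion Lemmas \ref{lem:forward-partial-essential-graph} and \ref{lem:backward-partial-essential-graph}.
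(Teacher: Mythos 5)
Your plan is essentially the paper's own argument: the paper proves this lemma by the same template as Lemma \ref{lem:forward-partial-essential-graph}, namely re-orienting the foreign chain component $T \ne T_G(v)$ both ways via Corollary \ref{cor:chordal-graph-different-orientations}, applying the identical turn, and concluding that both resulting DAGs are \mcI-equivalent to $D'$ (there via the uniqueness clause of Proposition \ref{prop:turning-undirected-characterization} rather than by re-checking Theorem \ref{thm:interventional-markov-equivalence}\ref{itm:skeleton-v-structures} from scratch, but that is the same mathematics). Two points should be tightened. First, requiring only that $D_a$ ``contain $v \grarright u$'' is too weak: for $D_a' \sim_\mcI D'$ you need $D_a$ to agree with $D$ on all of $T_G(v)$ (in particular to induce the same clique $C$ at $v$ and the same parent sets at $u$), since otherwise the turn creates or destroys different v-structures at $u$ and $v$ in the two pairs; this is harmless because $T \ne T_G(v)$, so Corollary \ref{cor:chordal-graph-different-orientations} lets you re-orient $T$ while leaving $T_G(v)$ untouched. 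Second, the acyclicity step you flag as the hard part is actually immediate here: since $u \grline v \in G$, the vertices $u$ and $v$ lie in the \emph{same} chain component, so every $v$-$u$-path of $G$ (hence every candidate return path of a directed cycle through $u \grarright v$) lies entirely inside $T_G(v)$, whose orientation in $D_a$ coincides with that in $D$; re-orienting $T$ therefore cannot open a new $v$-to-$u$ route, and no PEO-extension or $\preceq_G$ argument is needed.
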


\begin{figure}
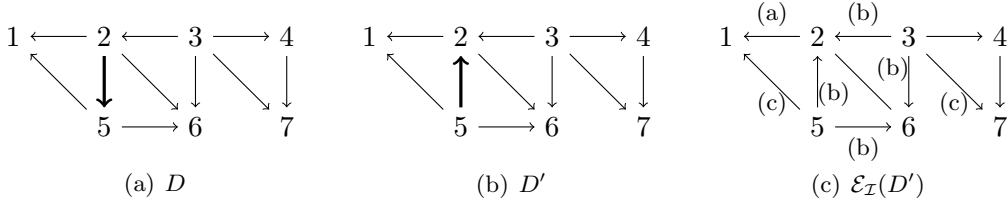

    \centering
    \subfigure[$D$]{%
        \begin{exgraphpicture}
            \draw[->] (v2) -- (v1);
            \draw[->] (v5) -- (v1);
            \draw[->] (v3) -- (v2);
            \draw[->] (v3) -- (v4);
            \draw[->, very thick] (v2) -- (v5);
            \draw[->] (v2) -- (v6);
            \draw[->] (v3) -- (v6);
            \draw[->] (v5) -- node[below]{\footnotesize \phantom{(b)}} (v6);
            \draw[->] (v3) -- (v7);
            \draw[->] (v4) -- (v7);
        \end{exgraphpicture}
    } \quad
    \subfigure[$D'$]{%
        \begin{exgraphpicture}
            \draw[->] (v2) -- (v1);
            \draw[->] (v5) -- (v1);
            \draw[->] (v3) -- (v2);
            \draw[->, very thick] (v5) -- (v2);
            \draw[->] (v3) -- (v4);
            \draw[->] (v2) -- (v6);
            \draw[->] (v3) -- (v6);
            \draw[->] (v5) -- node[below]{\footnotesize \phantom{(b)}} (v6);
            \draw[->] (v3) -- (v7);
            \draw[->] (v4) -- (v7);
        \end{exgraphpicture}
    } \quad
    \subfigure[$\mathcal{E_I}(D')$]{%
        \begin{exgraphpicture}
            \draw[->] (v2) -- node[above]{\footnotesize (a)} (v1);
            \draw[->] (v5) -- node[below]{\footnotesize (c)} (v1);
            \draw[->] (v3) -- node[above]{\footnotesize (b)} (v2);
            \draw[->] (v5) -- node[right=-4pt, near start]{\footnotesize (b)} (v2);
            \draw[->] (v3) -- (v4);
            \draw[-]  (v2) -- (v6);
            \draw[->] (v3) -- node[left=-4pt, near start]{\footnotesize (b)} (v6);
            \draw[->] (v5) -- node[below]{\footnotesize (b)} (v6);
            \draw[->] (v3) -- node[below]{\footnotesize (c)} (v7);
            \draw[->] (v4) -- (v7);
        \end{exgraphpicture}
    }
    \caption{DAGs $D$, $D'$ and $\mathcal{E_I}(D')$ illustrating a possible turning step of GIES applied to the \mcI-essential graph $G$ ($\mcI = \{\emptyset, \{4\}\}$) of Figure \ref{fig:ex-strong-protection} for the parameters $(u, v, C) = (5, 2, \{3\})$ (notation of Proposition \ref{prop:turning-undirected-characterization}).  The arrow $2 \grarright 5$ is not \mcI-essential in $D$.  Figure (c): arrow configurations in parentheses, see Definition \ref{def:strongly-protected-arrow}.}
    \label{fig:ex-turning-undirected}
\end{figure}

A possible turning step is illustrated in Figure \ref{fig:ex-turning-undirected}, where a non-\mcI-essential arrow (for $\mcI = \{\emptyset, \{4\}\}$) of a representative of the graph $G$ of Figure \ref{fig:ex-strong-protection} is turned.  The step is characterized by the triple $(u, v, C) = (5, 2, \{3\})$ which satisfies the conditions of Proposition \ref{prop:turning-undirected-characterization}: $\{3\}$ is obviously a clique (point \ref{itm:turning-undirected-C-clique}), $C \setminus N = C$ since $N = \{1\}$ (point \ref{itm:turning-undirected-diff-C-N}), and $C \setminus N = \{3\}$ and $N \setminus C = \{1\}$ are separated in $G[\nb_G(2)]$ (point \ref{itm:turning-undirected-separation}).  In contrast, the triple $(u, v, C) = (5, 2, \{1\})$ fulfills points \ref{itm:turning-undirected-C-clique} and \ref{itm:turning-undirected-separation} of Proposition \ref{prop:turning-undirected-characterization}, but not point \ref{itm:turning-undirected-diff-C-N}.  There is a DAG $D \in \mathbf{D}(G)$ with $\{a \in \nb_G(2) \spst a \grarright 2 \in D\} = \{1\}$, and turning the arrow $2 \grarright 5$ in $D$ yields another DAG $D'$ (that is, does not create a new cycle).  This new DAG $D'$, however, is \mcI-equivalent to $D$, and hence not a member of $\mathbf{D}^\circlearrowleft(G)$ (see the discussion above).

We now proceed to the case where an \mcI-essential arrow of a representative of $G$ is turned; here there is no danger to remain in the same Markov equivalence class.  The characterization of this case is similar to the forward phase.

\begin{proposition}
    \label{prop:turning-directed-characterization}
    Let $G$ be an \mcI-essential graph with $u \grarleft v \in G$, and let $C \subset \nb_G(v)$.  Then there is a DAG $D \in \mathbf{D}(G)$ with $\{a \in \nb_G(v) \spst a \grarright v \in D\} = C$ such that $D' := D - (v, u) + (u, v) \in \mathbf{D}^\circlearrowleft(G)$ if and only if
    \begin{subprop}
        \item \label{itm:turning-directed-C-clique} $C$ is a clique;
        
        \item \label{itm:turning-directed-N-subset-C} $N := \nb_G(v) \cap \ad_G(u) \subset C$;
        
        \item \label{itm:turning-directed-paths} every path from $v$ to $u$ in $G$ except $(v, u)$ has a vertex in $C \cup \nb_G(u)$.
    \end{subprop}
    Moreover, $u$, $v$ and $C$ determine $D'$ up to \mcI-equivalence.
\end{proposition}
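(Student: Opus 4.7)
I would follow the structure of the proof of Proposition~\ref{prop:forward-characterization}, adapted to two new features of the turning case: the vertices $u$ and $v$ are already joined in $G$ by the \mcI-essential arrow $v \grarright u$, and a single-arrow turn (unlike an addition or a removal) can leave us inside the current equivalence class. The proof decomposes into (a)~necessity of (i)--(iii), (b)~explicit construction of $D$ in the sufficiency direction together with a verification that $D' \notin \mathbf{D}(G)$, and (c)~uniqueness up to \mcI-equivalence.

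For necessity, fix $D \in \mathbf{D}(G)$ whose turn yields $D' = D - (v,u) + (u,v) \in \mathbf{D}^\circlearrowleft(G)$. Condition (i) is the standard v-structure argument: non-adjacent $a, b \in C$ would induce $a \grarright v \grarleft b$ in $D$, which by Theorem~\ref{thm:interventional-markov-equivalence}\ref{itm:skeleton-v-structures} would have to appear in every representative of $G$ and hence in $G$, contradicting $a \grline v \grline b$. Condition (iii) is shown as in the forward case, with the additional role of $\nb_G(u)$: a path $(v = x_0, x_1, \ldots, x_k = u)$ in $G$ of length $k \geq 2$ avoiding $C \cup \nb_G(u)$ can, by tracking edge orientations in $D$ (using $x_1 \notin C$ to force $v \grarright x_1$ when that edge is undirected in $G$, and $x_{k-1} \notin \nb_G(u)$ together with the strong \mcI-protection of $v \grarright u$ to force $x_{k-1} \grarright u$ in $G$), be lifted to a directed $v$-to-$u$ path in $D$, so $D'$ would contain a cycle. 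Condition (ii) is the most delicate: for $a \in N \setminus C$ we have $v \grarright a \in D$ and acyclicity of $D'$ forces $u \grarright a \in D$; one then verifies that $D^{(I)}$ and $D'^{(I)}$ share skeleton and v-structures for every $I \in \mcI$, so that Theorem~\ref{thm:interventional-markov-equivalence}\ref{itm:interventional-dags-equivalent} gives $D' \sim_\mcI D$, contrary to $D' \in \mathbf{D}^\circlearrowleft(G)$.

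For sufficiency, I build $D$ from $C$ by running $\LexBFS((C, v, \ldots), G[T_G(v)])$; chordality of $G[T_G(v)]$ (Theorem~\ref{thm:essential-graph-characterization}\ref{itm:chordal-chain-components}) and (i) ensure this yields a perfect elimination ordering, and Proposition~\ref{prop:construction-representative} then places $D$ in $\mathbf{D}(G)$ with $\{a \in \nb_G(v) \spst a \grarright v \in D\} = C$. A cycle in $D' := D - (v,u) + (u,v)$ would give a shortest directed $v$-to-$u$ path $(v, z_1, \ldots, z_m = u)$ in $D$ with $m \geq 2$; by (iii) some $z_j \in C \cup \nb_G(u)$. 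A witness $z_j \in C$ closes a cycle in $D$ via $z_j \grarright v$, while a witness $z_j \in \nb_G(u)$ with $j < m$ either yields a shorter path $(v, z_1, \ldots, z_j, u)$ in $D$ (if $z_j \grarright u$) or closes a cycle $z_j \grarright \cdots \grarright u \grarright z_j$ in $D$ (if $u \grarright z_j$)---contradicting shortest or acyclicity respectively. Since $G$ contains the \mcI-essential arrow $v \grarright u$ while $D'$ has $u \grarright v$, necessarily $D' \notin \mathbf{D}(G)$, so $D' \in \mathbf{D}^\circlearrowleft(G)$. Uniqueness up to \mcI-equivalence follows from Proposition~\ref{prop:construction-representative} applied to the chain component $T_G(v)$, together with the fact that the turned arrow $v \grarright u$ is uniquely specified by $G$. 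The main obstacle is the v-structure bookkeeping in the necessity of (ii), which demands a careful $I$-by-$I$ verification exploiting that every element of $N \setminus C$ is a common child of $u$ and $v$ in both $D$ and $D'$.
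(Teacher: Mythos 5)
There are two genuine gaps, both stemming from the same overlooked fact: since $G$ is a chain graph containing the arrow $v \grarright u$, every $a \in N = \nb_G(v) \cap \ad_G(u)$ satisfies $a \grarright u \in G$ (any other edge type between $a$ and $u$ would make $(v, u, a, v)$ a directed cycle in $G$), so $N = \nb_G(v) \cap \pa_G(u)$. Your treatment of the necessity of \ref{itm:turning-directed-N-subset-C} misses this: you assert that acyclicity of $D'$ ``forces $u \grarright a \in D$'' for $a \in N \setminus C$ and then plan an $I$-by-$I$ verification that $D' \sim_\mcI D$, but the orientation $u \grarright a$ is impossible for any $D \in \mathbf{D}(G)$, and your closing remark that elements of $N \setminus C$ are common \emph{children} of $u$ and $v$ has the orientation backwards. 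The correct argument is a one-liner: $a \in N \setminus C$ gives $v \grarright a \grarright u$ in $D$, hence the directed cycle $(u, v, a, u)$ in $D'$. (The $D' \sim_\mcI D$ route is also a dead end in principle here: $v \grarright u$ is \mcI-essential, so $D' \not\sim_\mcI D$ automatically; the only contradiction available is acyclicity.)

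The second gap is in the sufficiency construction. You orient only $T_G(v)$ via $\LexBFS((C, v, \ldots), \cdot)$ and leave $T_G(u)$ (a \emph{different} chain component, since $v \grarright u$ is directed) unconstrained; the paper additionally requires all edges of $D[T_G(u)]$ to point away from $u$, and this is not optional. Your acyclicity argument fails exactly where this matters: if the shortest directed $v$-$u$-path in $D$ is $(v, z_1, \ldots, z_m = u)$ and the witness guaranteed by \ref{itm:turning-directed-paths} is $z_{m-1} \in \nb_G(u)$ with $z_{m-1} \grarright u \in D$, your ``shorter path'' $(v, z_1, \ldots, z_{m-1}, u)$ is the original path, and no contradiction results. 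This is not a repairable bookkeeping slip: take $G$ with $v \grarright u$, $v \grarright w$, $u \grline w$ and $\mcI = \{\emptyset, \{v\}\}$, and $C = \emptyset$; conditions (i)--(iii) hold, but the representative $D$ with $w \grarright u$ yields a $D'$ containing the cycle $u \grarright v \grarright w \grarright u$. Only the representative with $u \grarright w$ works, which is precisely what the constraint on $D[T_G(u)]$ guarantees; with it, every witness $z_j \in \nb_G(u)$ satisfies $u \grarright z_j \in D$ and closes a cycle in $D$, as in the paper. The remainder of your outline (necessity of (i) and (iii), $D' \notin \mathbf{D}(G)$ from essentiality of $v \grarright u$, uniqueness) follows the paper's route.
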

\citet{Chickering2002Learning} has already proposed a turning step for essential arrows in the observational case; however, he did not provide necessary and sufficient conditions specifying all possible turning steps as Proposition \ref{prop:turning-directed-characterization} does.
\begin{lemma}
    \label{lem:turning-directed-partial-essential-graph}
    Let $G$, $u$, $v$, $C$, $D$ and $D'$ be as in Proposition \ref{prop:turning-directed-characterization}, and let $H$ be the graph that we get by orienting all edges of $T_G(v)$ and $T_G(u)$ as in $D$ and by turning the edge $(v, u)$.  Then $H$ is a partial \mcI-essential graph of $D'$.
\end{lemma}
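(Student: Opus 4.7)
The plan is to verify the three defining conditions of a partial \mcI-essential graph of $D'$ (Definition \ref{def:partial-essential-graph}): namely, (a) $D' \subset H$, (b) $H \subset \mathcal{E_I}(D')$, and (c) $H$ contains no induced subgraph $a \grarright b \grline c$.

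For (a), observe that $H$ and $D'$ share the skeleton of $G$. Inside the chain components $T_G(u)$ and $T_G(v)$, $H$ is oriented as in $D$, which agrees with $D'$ since $D'$ differs from $D$ only on the edge between $u$ and $v$—an edge that lies between, not within, these two components. The edge $(u,v)$ is oriented as $u \grarright v$ in both graphs. Elsewhere, arrows of $D'$ that coincide with arrows of $G$ remain arrows of $H$, while arrows of $D'$ inside chain components of $G$ distinct from $T_G(u), T_G(v)$ appear as lines in $H$. Hence every arrow of $D'$ occurs in $H$ either identically or as a line. For (c), suppose an induced subgraph $a \grarright b \grline c$ occurs in $H$. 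A line of $H$ can only live in a chain component $T$ of $G$ with $T \neq T_G(u), T_G(v)$, so $b \notin T_G(u) \cup T_G(v)$. Consequently, the arrow $a \grarright b$ is neither among the arrows newly created by orienting $T_G(u), T_G(v)$ (whose heads lie in those components) nor the turned edge $u \grarright v$ (which would require $b = v$), so $a \grarright b$ must already be an arrow of $G$. But then $a \grarright b \grline c$ is an induced subgraph of $G$, contradicting Theorem \ref{thm:essential-graph-characterization}\ref{itm:forbidden-subgraph}.

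Condition (b) is the substantive part. Arrows of $H$ match arrows of $D'$ by (a), and $D' \subset \mathcal{E_I}(D')$, so arrows of $H$ occur in $\mathcal{E_I}(D')$ either in the same direction or as lines. What remains is to show that every line $a \grline b$ of $H$—located in some chain component $T$ of $G$ with $T \neq T_G(u), T_G(v)$—remains a line in $\mathcal{E_I}(D')$, i.e., is not \mcI-essential in $D'$. Since $a \grline b \in G = \mathcal{E_I}(D)$, Proposition \ref{prop:construction-representative} and the chordality of chain components provide representatives $D_1, D_2 \in \mathbf{D}(G)$ with $a \grarright b \in D_1$, $a \grarleft b \in D_2$, and both agreeing with $D$ on $T_G(u) \cup T_G(v)$ (the chain components admit independent perfect elimination orderings, so we can fix the orientations inside $T_G(u), T_G(v)$ to those of $D$). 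Define $D_i' := D_i - (v,u) + (u,v)$; the goal is to show (i) $D_i'$ is acyclic and (ii) $D_i' \sim_\mcI D'$, so that $D_1', D_2' \in \mathbf{D}(\mathcal{E_I}(D'))$ witness $a \grline b$ as a line of $\mathcal{E_I}(D')$.

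The main obstacle is (i): no $v$-to-$u$ directed path may exist in $D_i$ aside from the edge $(v,u)$ itself. To establish this, I would translate Proposition \ref{prop:turning-directed-characterization}\ref{itm:turning-directed-paths}—every $v$-to-$u$ path in $G$ other than $(v,u)$ passes through $C \cup \nb_G(u)$—into a statement about $D_i$. Because $D_i$ agrees with $D$ on the chain components containing the two endpoints and on all \mcI-essential arrows of $G$, a hypothetical directed $v$-to-$u$ path in $D_i \setminus \{(v,u)\}$ would either force a directed cycle in $D_i$ (since the blocking vertices in $C$ point towards $v$ in $D$ and likewise in $D_i$, and the $\nb_G(u)$-vertices yield a $G$-arrow incompatible with completing the path) or match a $v$-to-$u$ directed path in $D \setminus \{(v,u)\}$, contradicting the acyclicity of $D'$ guaranteed by Proposition \ref{prop:turning-directed-characterization}. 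For (ii), I invoke Theorem \ref{thm:interventional-markov-equivalence}\ref{itm:skeleton-v-structures}: the skeletons agree (turning preserves skeletons, and $D_i \sim_\mcI D$); the agreement of $D_i$ with $D$ on $T_G(u), T_G(v)$ forces $\pa_{D_i}(u) = \pa_D(u)$ and $\pa_{D_i}(v) = \pa_D(v)$, so the local changes to v-structures at $u$ and $v$ induced by turning $(v,u)$ are identical in $D_i$ and $D$, while v-structures elsewhere and all intervention-graph skeletons are unaffected by the local turn and already coincide via $D_i \sim_\mcI D$. Combining (i) and (ii) yields $D_i' \in \mathbf{D}(\mathcal{E_I}(D'))$, concluding the proof.
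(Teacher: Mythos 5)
Your proof is correct and follows exactly the route the paper intends: the paper omits this proof, stating it is analogous to that of Lemma \ref{lem:forward-partial-essential-graph}, and your argument is precisely that adaptation — verifying $D' \subset H$ and the absence of induced subgraphs $a \grarright b \grline c$ directly, and establishing $H \subset \mathcal{E_I}(D')$ by constructing, for each remaining line $a \grline b$, two representatives $D_1, D_2 \in \mathbf{D}(G)$ agreeing with $D$ on $T_G(u) \cup T_G(v)$ whose turned versions are \mcI-equivalent to $D'$. You also correctly supply the one ingredient that is genuinely new relative to the forward case, namely the acyclicity of $D_i' = D_i - (v,u) + (u,v)$ via condition \ref{itm:turning-directed-paths} of Proposition \ref{prop:turning-directed-characterization} and the fact that the edges of $D[T_G(u)]$ incident to $u$ point away from $u$.
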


\begin{figure}[t]
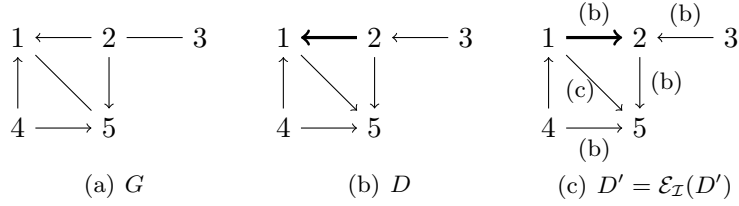

    \centering
    \subfigure[$G$]{%
        \begin{exgraphpicturesmall}
            \draw[->] (v2) -- (v1);
            \draw[->] (v4) -- (v1);
            \draw[-]  (v5) -- (v1);
            \draw[-]  (v3) -- (v2);
            \draw[->] (v2) -- (v5);
            \draw[->] (v4) -- node[below]{\footnotesize \phantom{(b)}} (v5);
        \end{exgraphpicturesmall}
    } \quad
    \subfigure[$D$]{%
        \begin{exgraphpicturesmall}
            \draw[->, very thick] (v2) -- (v1);
            \draw[->] (v4) -- (v1);
            \draw[->] (v3) -- (v2);
            \draw[->] (v1) -- (v5);
            \draw[->] (v2) -- (v5);
            \draw[->] (v4) -- node[below]{\footnotesize \phantom{(b)}} (v5);
        \end{exgraphpicturesmall}
    } \quad
    \subfigure[$D' = \mathcal{E_I}(D')$]{%
        \begin{exgraphpicturesmall}
            \draw[->] (v4) -- (v1);
            \draw[->, very thick] (v1) -- node[above]{\footnotesize (b)} (v2);
            \draw[->] (v3) -- node[above]{\footnotesize (b)} (v2);
            \draw[->] (v1) -- node[below, near start]{\footnotesize (c)} (v5);
            \draw[->] (v2) -- node[right]{\footnotesize (b)} (v5);
            \draw[->] (v4) -- node[below]{\footnotesize (b)} (v5);
        \end{exgraphpicturesmall}
    }
    \caption{Graphs $G$, $D$, $D'$ and $\mathcal{E_I}(D')$ illustrating a possible turning step of GIES for the family of targets $\mcI = \{\emptyset, \{4\}\}$ and the parameters $(u, v, C) = (1, 2, \{3\})$ (notation of Proposition \ref{prop:turning-directed-characterization}).  The arrow $2 \grarright 1$ is \mcI-essential in $D$.  Figure (c): arrow configurations in parentheses, see Definition \ref{def:strongly-protected-arrow}.}
    \label{fig:ex-turning-directed}
\end{figure}

To construct a $G' \in \boldsymbol{\mathcal{E}}_\mcI^\circlearrowleft(G)$ out of $G$, we must possibly orient \emph{two} chain components of $G$ instead of one (Lemma \ref{lem:turning-directed-partial-essential-graph}).  In the example of Figure \ref{fig:ex-turning-directed}, we see that it is indeed not sufficient to orient the edges of $T_G(v)$ alone in order to get a partial \mcI-essential graph of $G'$.  The arrow $1 \grarright 5$ is not \mcI-essential in $D$, hence $5 \in T_G(1)$.  However, the same arrow is \mcI-essential in $D'$ and hence also present in $\mathcal{E_I}(D')$.

\begin{algorithm}[b!]
    \caption{$\mathsc{TurningStep}(G; \mathcal{T}, \mathbf{X})$.  One step of the turning phase of GIES.}
    \label{alg:turning}
    \small
    \SetKwInOut{Input}{Input}
    \SetKwInOut{Output}{Output}
    \Input{$G = ([p], E)$: \mcI-essential graph; $(\mathcal{T}, \mathbf{X})$: interventional data for \mcI}
    \Output{$G' \in \boldsymbol{\mathcal{E}}^\circlearrowleft_\mcI$, or $G$}
    $\Delta S\subscr{max} \leftarrow 0$\;
    \ForEach{$v \in [p]$}{%
        \ForEach(\tcp*[f]{Consider arrows that are not \mcI-essential for turning}){$u \in \nb_G(v)$}{%
            $N \leftarrow \nb_G(u) \cap \ad_G(v)$\;
            \ForEach(\tcp*[f]{Proposition \ref{prop:turning-undirected-characterization}\ref{itm:turning-undirected-C-clique}}){clique $C \subset \nb_G(v) \setminus \{u\}$}{%
                \If{$C \setminus N \neq \emptyset$ and $\{u, v\}$ separates $C$ and $N \setminus C$ in $G[T_G(v)]$}{%
                    \tcp*[f]{Proposition \ref{prop:turning-undirected-characterization}\ref{itm:turning-undirected-diff-C-N} and \ref{itm:turning-undirected-separation}}
                    $\Delta S \leftarrow s(v, \pa_G(v) \cup C \cup \{u\}; \mathcal{T}, \mathbf{X}) + s(u, \pa_G(u) \cup (C \cap N); \mathcal{T}, \mathbf{X})$\;
                    $\Delta S \leftarrow \Delta S - s(v, \pa_G(v) \cup C; \mathcal{T}, \mathbf{X}) - s(u, \pa_G(u) \cup (C \cap N) \cup \{v\}; \mathcal{T}, \mathbf{X})$\;
                    \If{$\Delta S > \Delta S\subscr{max}$}{%
                        $\Delta S\subscr{max} \leftarrow \Delta S$\;
                        $(u\subscr{max}, v\subscr{max}, C\subscr{max}) \leftarrow (u, v, C)$\;
                    }
                }
            }
        }
        
        \ForEach(\tcp*[f]{Consider \mcI-essential arrows for turning}){$u \in \ch_G(v)$}{%
            $N \leftarrow \nb_G(v) \cap \ad_G(u)$\;
            \ForEach(\tcp*[f]{Proposition \ref{prop:turning-directed-characterization}\ref{itm:turning-undirected-C-clique} and \ref{itm:turning-directed-N-subset-C}}){clique $C \subset \nb_G(v)$ with $N \subset C$}{%
                \If(\tcp*[f]{Proposition \ref{prop:turning-directed-characterization}\ref{itm:turning-directed-paths}}){$\not\exists$ path from $v$ to $u$ in $G[[p] \setminus (C \cup \nb_G(u))] - (v, u)$}{%
                    $\Delta S \leftarrow s(v, \pa_G(v) \cup C \cup \{u\}; \mathcal{T}, \mathbf{X}) + s(u, \pa_G(u) \setminus \{v\}; \mathcal{T}, \mathbf{X})$\;
                    $\Delta S \leftarrow \Delta S - s(v, \pa_G(v) \cup C; \mathcal{T}, \mathbf{X}) - s(u, \pa_G(u); \mathcal{T}, \mathbf{X})$\;
                    \If{$\Delta S > \Delta S\subscr{max}$}{%
                        $\Delta S\subscr{max} \leftarrow \Delta S$\;
                        $(u\subscr{max}, v\subscr{max}, C\subscr{max}) \leftarrow (u, v, C)$\;
                    }
                }
            }
        }
    }
    \If{$\Delta S\subscr{max} > 0$}{%
        \If{$v\subscr{max} \grarright u\subscr{max} \in G$}{%
            $\sigma_u := \LexBFS((u\subscr{max}, \ldots), E[T_G(u\subscr{max})])$\;
            Orient edges of $G[T_G(u\subscr{max})]$ according to $\sigma$\;
            $\sigma_v := \LexBFS((C\subscr{max}, v\subscr{max}, \ldots), E[T_G(v\subscr{max})])$\;
        }
        \lElse{$\sigma_v := \LexBFS((C\subscr{max}, v\subscr{max}, u\subscr{max}, \ldots), E[T_G(v\subscr{max})])$\;}
        Orient edges of $G[T_G(v\subscr{max})]$ according to $\sigma_v$\;
        Turn edge $(v\subscr{max}, u\subscr{max})$ in $G$\;
        \Return{$\mathsc{ReplaceUnprotected}(\mcI, G)$} \tcp*{See Algorithm \ref{alg:iterative-construction-essential-graph}}
    }
    \lElse{\Return{$G$\;}}
\end{algorithm}

Despite the fact that we need to orient the edges of $T_G(v)$ and $T_G(u)$ to get a partial \mcI-essential graph of $D'$, $\mathcal{E_I}(D')$ is nevertheless determined by the orientation of edges adjacent to $v$ (determined by the clique $C$) alone.  This comes from the fact that in $D$, defined as in Proposition \ref{prop:turning-directed-characterization}, all arrows of $D[T_G(u)]$ must point away from $u$.

\begin{corollary}
    \label{cor:turning-directed-score-change}
    Let $G$, $u$, $v$, $C$, $D$ and $D'$ be as in Proposition \ref{prop:turning-directed-characterization}.  Then the score difference $\Delta S := S(D'; \mathcal{T}, \mathbf{X}) - S(D; \mathcal{T}, \mathbf{X})$ can be calculated as follows:
    \begin{eqnarray*}
        \Delta S & = & s(v, \pa_G(v) \cup C \cup \{u\}; \mathcal{T}, \mathbf{X}) + s(u, \pa_G(u) \setminus \{v\}; \mathcal{T}, \mathbf{X}) \\
        & & - s(v, \pa_G(v) \cup C; \mathcal{T}, \mathbf{X}) - s(u, \pa_G(u); \mathcal{T}, \mathbf{X}).
    \end{eqnarray*}
\end{corollary}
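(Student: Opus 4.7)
The plan is to derive the formula directly from decomposability, by identifying exactly which vertices have their parent sets altered when passing from $D$ to $D'$, and then reading off the four local scores that survive.

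First, I would invoke Definition~\ref{def:score-function-decomposable} to write
$$
\Delta S = \sum_{i=1}^p \bigl[ s(i, \pa_{D'}(i); \mathcal{T}, \mathbf{X}) - s(i, \pa_D(i); \mathcal{T}, \mathbf{X}) \bigr],
$$
so that only those indices $i$ for which $\pa_D(i) \neq \pa_{D'}(i)$ contribute. Since $D' = D - (v, u) + (u, v)$ modifies a single edge, the only candidate vertices are $u$ and $v$: all other parent sets are literally unchanged between $D$ and $D'$.

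Next, I would compute the four parent sets using the structure provided by Proposition~\ref{prop:turning-directed-characterization} and the remark following Lemma~\ref{lem:turning-directed-partial-essential-graph}. For the vertex $v$, the proposition stipulates that $\{a \in \nb_G(v) \spst a \grarright v \in D\} = C$; combined with the fact that the essential parents of $v$ in $G$ are preserved in every representative, this gives $\pa_D(v) = \pa_G(v) \cup C$, and consequently $\pa_{D'}(v) = \pa_G(v) \cup C \cup \{u\}$ after inserting the turned arrow $u \grarright v$. For the vertex $u$, the crucial observation (highlighted right after Lemma~\ref{lem:turning-directed-partial-essential-graph}) is that in $D$ all edges in $D[T_G(u)]$ point away from $u$, so no neighbor of $u$ inside its chain component is a parent of $u$ in $D$; hence $\pa_D(u) = \pa_G(u)$ (which already contains $v$ since $v \grarright u$ is \mcI-essential in $G$). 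Removing the arrow $(v, u)$ yields $\pa_{D'}(u) = \pa_G(u) \setminus \{v\}$.

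Substituting these four parent sets into the decomposable expansion and canceling the unchanged local scores for $i \neq u, v$ gives exactly
$$
\Delta S = s(v, \pa_G(v) \cup C \cup \{u\}; \mathcal{T}, \mathbf{X}) + s(u, \pa_G(u) \setminus \{v\}; \mathcal{T}, \mathbf{X}) - s(v, \pa_G(v) \cup C; \mathcal{T}, \mathbf{X}) - s(u, \pa_G(u); \mathcal{T}, \mathbf{X}),
$$
as claimed. Score equivalence is implicitly used only to guarantee that the value $S(D; \mathcal{T}, \mathbf{X})$ (and likewise $S(D'; \mathcal{T}, \mathbf{X})$) does not depend on which representative of the respective \mcI-equivalence class is chosen, so that the formula makes unambiguous sense. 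The one nontrivial step is the identification $\pa_D(u) = \pa_G(u)$; everything else is bookkeeping built on decomposability.
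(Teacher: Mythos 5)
Your bookkeeping is correct and matches the paper's: decomposability reduces $\Delta S$ to the local scores at $u$ and $v$, the identities $\pa_D(v) = \pa_G(v) \cup C$ and $\pa_{D'}(v) = \pa_G(v) \cup C \cup \{u\}$ are immediate, and $\pa_{D'}(u) = \pa_D(u) \setminus \{v\}$. The problem is the step you yourself single out as the only nontrivial one, $\pa_D(u) = \pa_G(u)$: you justify it by citing the sentence after Lemma \ref{lem:turning-directed-partial-essential-graph} asserting that all arrows of $D[T_G(u)]$ point away from $u$. That sentence is an unproven remark in the running text, and the place where the paper actually substantiates it is precisely the proof of this corollary -- so your appeal to it is circular. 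Note also that Proposition \ref{prop:turning-directed-characterization} itself only constrains the orientation of edges around $v$ (via $C$); it says nothing directly about how $D$ orients the edges inside $T_G(u)$, so the claim genuinely needs an argument.

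The missing argument is short. Let $a \in \nb_G(u)$. Then $a$ must be a child of $v$ in $G$: otherwise $v \grarright u \grline a$ would be an induced subgraph of $G$ (if $a \notin \ad_G(v)$ this is immediate and contradicts Theorem \ref{thm:essential-graph-characterization}\ref{itm:forbidden-subgraph}; the alternatives $v \grline a$ and $v \grarleft a$ each create a directed cycle in the chain graph $G$ through $v \grarright u \grline a$). Now if $a \grarright u \in D$, then $D'$ contains the directed cycle $a \grarright u \grarright v \grarright a$, contradicting $D' \in \mathbf{D}^\circlearrowleft(G)$. Hence $u \grarright a \in D$ for every $a \in \nb_G(u)$, so no neighbor of $u$ is a parent of $u$ in $D$ and $\pa_D(u) = \pa_G(u)$. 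With this inserted, your proof is complete and coincides with the paper's.
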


The entire turning step, for essential and non-essential arrows, is shown in Algorithm \ref{alg:turning}.

\subsection{Discussion}
\label{sec:discussion}

Every step in the forward, backward and turning phase of GIES is characterized by a triple $(u, v, C)$, where $u$ and $v$ are different vertices and $C$ is a clique in the neighborhood of $v$.  To identify the highest scoring movement from one \mcI-essential graph $G$ to a potential successor in $\boldsymbol{\mathcal{E}}_\mcI^+(G)$, $\boldsymbol{\mathcal{E}}_\mcI^-(G)$ or $\boldsymbol{\mathcal{E}}_\mcI^\circlearrowleft(G)$, respectively, one potentially has to examine all cliques in the neighborhood $\nb_G(v)$ of all vertices $v \in [p]$.  The time complexity of any (forward, backward or turning) step applied to an \mcI-essential graph $G$ hence highly depends on the size of the largest clique in the chain components of $G$.  By restricting GIES to \mcI-essential graphs with a bounded vertex degree, the time complexity of a step of GIES is polynomial in $p$; otherwise, it is in the worst case exponential.  We believe, however, that GIES is in practice much more efficient than this worst-case complexity suggests.  Some evidence for this claim is provided by the runtime analysis of our simulation study, see Section \ref{sec:simulations}.

A heuristic approach to guarantee polynomial runtime of a greedy search has been proposed by \citet{Castelo2003Inclusion} for the observational case.  Their Hill Climber Monte Carlo (HCMC) algorithm operates in DAG space, but to account for Markov equivalence, the neighborhood of a number of randomly chosen DAGs equivalent to the current one is scanned in each greedy step.  The equivalence class of the current DAG is explored by randomly turning ``covered arrows'', that is, arrows whose reversal does not change the Markov property.  In our (interventional) notation, an arrow is covered if and only if it is \emph{not} strongly \mcI-protected (Definition \ref{def:strongly-protected-arrow}).  By limiting the number of covered arrow reversals, a polynomial runtime is guaranteed at the cost of potentially lowering the probability of investigating a particular successor in $\boldsymbol{\mathcal{E}}_\mcI^+(G)$, $\boldsymbol{\mathcal{E}}_\mcI^-(G)$ or $\boldsymbol{\mathcal{E}}_\mcI^\circlearrowleft(G)$, respectively.  HCMC hence enables a fine tuning of the trade-off between exploration of the search space and runtime, or between greediness and randomness.

The order of executing the backward and the turning phase seems somewhat arbitrary.  In the analysis of the steps performed by GIES in our simulation study (Section \ref{sec:simulations}), we saw that the turning phase can generally only augment the score when very few backward steps were executed before.  For this reason, we believe that changing the order of the backward and the turning phase would have little effect on the overall performance of GIES.

As already discussed by \citet{Chickering2002Optimal} for the observational case, caching techniques can markedly speed up GES; the same holds for GIES.  The basic idea is the following: in a forward step, the algorithm evaluates a lot of triples $(u, v, C)$ to choose the best one, $(u\subscr{max}, v\subscr{max}, C\subscr{max})$ (lines \ref{ln:foreach-start} to \ref{ln:foreach-end} in Algorithm \ref{alg:forward}).  After performing the forward move corresponding to $(u\subscr{max}, v\subscr{max}, C\subscr{max})$, many of the triples evaluated in the step before are still valid candidates for next step in the sense of Proposition \ref{prop:forward-characterization} and lead to the same score difference as before (see Corollary \ref{cor:forward-score-change}).  Caching those values avoids unnecessary reevaluation of possible forward steps.  The same holds for the backward and the turning phase; since the forward step is most frequently executed, a caching strategy in this phase yields the highest speed-up though.

We emphasize that the characterization of ``neighboring'' \mcI-essential graphs in $\boldsymbol{\mathcal{E}}_\mcI^+(G)$, $\boldsymbol{\mathcal{E}}_\mcI^-(G)$ or $\boldsymbol{\mathcal{E}}_\mcI^\circlearrowleft(G)$, respectively, by triples $(u, v, C)$ is of more general interest for structure learning algorithms, for example for the design of sampling steps of an MCMC algorithm.  Also the beforementioned HCMC algorithm could be extended to interventional data by generalizing the notion of ``covered arcs'' using Definition \ref{def:strongly-protected-arrow}.

The prime example of a score equivalent and decomposable score function is the Bayesian information criterion (BIC) \citep{Schwarz1978Estimating} which we used in our simulations (Section \ref{sec:evaluation}).  It penalizes the complexity of causal models by their number of free parameters ($\ell_0$ penalization); this number is the sum of free parameters of the conditional densities in the Markov factorization (Definition \ref{def:directed-markov-property}), which explains the decomposability of the score.  Using different penalties, for example, $\ell_2$ penalization, can lead to a non-decomposable score function.  GIES can also be adapted to such score functions; the calculation of score differences becomes computationally more expensive in this case since it cannot be done in a local fashion as in Corollaries \ref{cor:forward-score-change}, \ref{cor:backward-score-change}, \ref{cor:turning-undirected-score-change} and \ref{cor:turning-directed-score-change}.

GIES only relies on the notion of interventional Markov equivalence, and on a score function that can be evaluated for a given class of causal models.  As we mentioned in Section \ref{sec:causal-calculus}, we believe that interventional Markov equivalence classes remain unchanged for models that do not have a strictly positive density.  For this reason it should be safe to also apply GIES to such a model class.

\section{Experimental Evaluation}
\label{sec:evaluation}

We evaluated the GIES algorithm on simulated interventional data (Section \ref{sec:simulations}) and on \textit{in silico} gene expression data sets taken from the DREAM4 challenge \citep{Marbach2010Revealing} (Section \ref{sec:dream4}).  In both cases, we restricted our considerations to \emph{Gaussian} causal models as summarized in Section \ref{sec:gaussian-models}.

\subsection{Gaussian Causal Models}
\label{sec:gaussian-models}

Consider a causal model $(D, f)$ with a Gaussian density of the form $\mathcal{N}(0, \Sigma)$.  The observational Markov property of such a model translates to a set of \emph{linear} structural equations
\begin{equation}
    \label{eqn:structural-equations}
    X_i = \sum_{j=1}^p \beta_{ij} X_j + \varepsilon_i, \ \varepsilon_i \stackrel{\text{indep.}}{\sim} \mathcal{N}(0, \sigma_i^2), \quad 1 \leq i \leq p \ ,
\end{equation}
where $\beta_{ij} = 0$ if $j \notin \pa_D(i)$.  When the DAG structure $D$ is known, the covariance matrix $\Sigma$ can be parameterized by the \textbf{weight matrix}
$$
    B := (\beta_{ij})_{i, j = 1}^p \in \mathbf{B}(D) := \{A = (\alpha_{ij}) \in \sR^{p \times p} \spst \alpha_{ij} = 0 \text{ if } j \notin \pa_D(i)\}
$$
that assigns a weight $\beta_{ij}$ to each arrow $j \grarright i \in D$, and the vector of error covariances $\sigma^2 := (\sigma_1^2, \ldots, \sigma_p^2)$:
$$
    \Sigma = \Cov(X) = (\mathbbm{1} - B)^{-1} \diag(\sigma^2) (\mathbbm{1} - B)^{-\transp} \ .
$$
This is a consequence of Equation (\ref{eqn:structural-equations}).

We always assume Gaussian intervention variables $U_I$ (see Section \ref{sec:causal-calculus}).  In this case, not only the observational density $f$ is Gaussian, but also the interventional densities $f(x \spst \doop_D(X_I = U_I))$.  An interventional data set $(\mathcal{T}, \mathbf{X})$ as defined in Equation (\ref{eqn:dataset}) then consists of $n$ independent, but not identically distributed Gaussian samples.

We use the \textbf{Bayesian information criterion} (BIC) as score function for GIES:
$$
    S(D; \mathcal{T}, \mathbf{X}) := \sup \{\ell_D(B, \sigma^2; \mathcal{T}, \mathbf{X}) \spst B \in \mathbf{B}(D), \sigma^2 \in \sR^p_{>0}\} - \frac{k_D}{2} \log(n) \ ,
$$
where $\ell_D$ denotes the log-likelihood of the density in Equation (\ref{eqn:sample-density}):
\begin{eqnarray}
    \ell_D(B, \sigma^2; \mathcal{T}, \mathbf{X}) & := & \sum_{i=1}^n \log f\big(X^{(i)} \spst \doop_D(X_{T^{(i)}}^{(i)} = U_{T^{(i)}})\big) \label{eqn:gaussian-likelihood} \\
    & = & \sum_{i = 1}^n \Big[ \sum_{j \notin T^{(i)}} \log f(X_j^{(i)} \spst X_{\pa_D(j)}^{(i)} ) + \sum_{j \in T^{(i)}} \log \tilde{f} (X_j^{(i)}) \Big] \nonumber \\
    & = & - \frac{1}{2} \sum_{i = 1}^n \sum_{j \notin T^{(i)}} \Big[ \log \sigma_j^2 + \frac{1}{\sigma_j^2} \left( X_j^{(i)} - B_{j \smallbullet} X^{(i)} \right)^2 \Big] + C \nonumber \\
    & = & - \frac{1}{2} \sum_{j = 1}^p \Big[ |\{i \spst j \notin T^{(i)}\}| \log \sigma_j^2 + \frac{1}{\sigma_j^2} \sum_{i: j \notin T^{(i)}} \left( X_j^{(i)} - B_{j \smallbullet} X^{(i)} \right)^2 \Big] + C \ , \nonumber
\end{eqnarray}
where the constant $C$ is independent of the parameters $(B, \sigma^2)$ of the model.  Since Gaussian causal models with structure $D$ are parameterized by $B \in \mathbf{B}(D)$ and $\sigma^2 \in \sR_{>0}^p$, we have $k_D = p + |E|$ free parameters, where $E$ denotes the edge set of $D$.  It can be seen in Equation (\ref{eqn:gaussian-likelihood}) that the \textbf{maximum likelihood estimator} (MLE) $(\hat{B}, \hat{\sigma}^2)$, the maximizer of $\ell_D$, minimizes the residual sum of squares for the different structural equations; for more details we refer to \citet{Hauser2012Consistent}.

The DAG $\hat{D}$ maximizing the BIC yields a \emph{consistent} estimator for the true causal structure $D$ in the sense that $P[\hat{D} \sim_\mcI D] \to 1$ in the limit $n \to \infty$ as long as the true density $f$ is \textbf{faithful} with respect to $D$, that is, \emph{every} conditional independence relation of $f$ is encoded in the Markov property of $D$ \citep{Hauser2012Consistent}.  Note that the BIC score is even defined in the high-dimensional setting $p > n$; however, we only consider low-dimensional settings here.

\subsection{Simulations}
\label{sec:simulations}

We simulated interventional data from 4000 randomly generated Gaussian causal models as described in Section \ref{sec:model-generation}.  In Sections \ref{sec:different-algorithms} and \ref{sec:quality-measures}, we present our methods for evaluating GIES; the results are discussed in Section \ref{sec:results}.  As a rough summary, GIES markedly beat the conceptually simpler greedy search over the space of DAGs as well as the original GES of \citet{Chickering2002Optimal} ignoring the interventional nature of the simulated data sets.  Its learning performance could keep up with a provably consistent exponential time dynamic programming algorithm at much lower computational cost.

\subsubsection{Generation of Gaussian Causal Models}
\label{sec:model-generation}

For some number $p$ of vertices, we randomly generated Gaussian causal models parameterized by a structure $D$, a weight matrix $B \in \mathbf{B}(D)$ and a vector of error covariances $\sigma^2 \in \sR_{>0}^p$ by a procedure slightly adapted from \citet{Kalisch2007Estimating}:
\begin{enumerate}
    \item \label{itm:draw-dag} For a given \textbf{sparseness parameter} $s \in (0, 1)$, draw a DAG $D$ with topological ordering $(1, \ldots, p)$ and binomially distributed vertex degrees with mean $s(p - 1)$.

    \item Shuffle the vertex indices of $D$ to get a random topological ordering.

    \item \label{itm:draw-weights} For each arrow $j \grarright i \in D$, draw $\beta'_{ij} \sim \mathcal{U}([-1, -0.1] \cup [0.1, 1])$ using independent realizations; for other pairs of $(i, j)$, set $\beta'_{ij} = 0$ (see Equation (\ref{eqn:structural-equations})).  This yields a weight matrix $B' = (\beta'_{ij})_{i, j = 1}^p \in \mathbf{B}(D)$ with positive as well as negative entries which are bounded away from $0$.
    
    \item Draw error variances $\sigma'^2_i \stackrel{\mathrm{i.i.d.}}{\sim} \mathcal{U}([0.5, 1])$.
    
    \item Calculate the corresponding covariance matrix $\Sigma' = (\mathbbm{1} - B')^{-1} \diag(\sigma'^2) (\mathbbm{1} - B')^{-\transp}$.
    
    \item Set $H := \diag((\Sigma'_{11})^{-1/2}, \ldots, (\Sigma'_{pp})^{-1/2})$, and normalize the weights and error variances as follows:
    $$
        B := H B' H^{-1}, \quad (\sigma^2_1, \ldots, \sigma^2_p)^\transp := H^2 (\sigma'^2_1, \ldots, \sigma'^2_p)^\transp \ .
    $$
    It can easily be seen that the corresponding covariance matrix fulfills
    $$
        \Sigma = (\mathbbm{1} - B)^{-1} \diag(\sigma^2) (\mathbbm{1} - B)^{-\transp} = H \Sigma' H \ ,
    $$
    ensuring the desired normalization $\Sigma_{ii} = 1$ for all $i$.
\end{enumerate}
Steps \ref{itm:draw-dag} and \ref{itm:draw-weights} are provided by the function \texttt{randomDAG()} of the R-package \texttt{pcalg} \citep{Kalisch2012Causal}.

We considered families of targets of the form $\mcI = \{\emptyset, I_1, \ldots, I_k\}$, where $I_1, \ldots, I_k$ are $k$ different, randomly chosen intervention targets of size $m$; the target size $m$ had values between $1$ and $4$.  For a fixed sample size $n$, we produced approximately the same number of data samples for each target in the family \mcI{} by using a level density $\mathcal{N}((2, \ldots, 2), (0.2)^2 \mathbbm{1}_m)$ in each case (see the model in Equation (\ref{eqn:interventional-density})).  With this choice and the aforementioned normalization of $\Sigma$, the mean values of the intervention levels lay $2$ standard deviations above the mean values of the observational marginal distributions.  In total, we considered $4000$ causal models and simulated $128$ observational or interventional data sets from each of them by combining the following simulation parameters:
\begin{itemize}
    \item $(p, s) \in \{(10, 0.2), (20, 0.1), (30, 0.1), (40, 0.1)\}$ with $1000$ DAGs each.
    
    \item $k = 0, 0.2p, 0.4p, \ldots, p$ for each value of $p$; the first setting is purely observational.
    
    \item $m \in \{1, 2, 4\}$.
    
    \item $n \in \{50, 100, 200, 500, 1000, 2000, 5000, 10000\}$.
\end{itemize}
In addition, we generated causal models with $p \in \{50, 100, 200\}$ ($100$ DAGs each) and $p = 500$ ($20$ DAGs) with an expected vertex degree of $4$ (which corresponds to a sparseness parameter of $s = 4/(p - 1)$) and simulated $6$ data sets for the parameters $k = 0.4$ and $n \in \{1000, 2000, 5000, 10000, 20000, 50000\}$ from each of these models.  We only used these additional data sets for the investigation of the runtime of GIES.

\subsubsection{Alternative Structure Learning Algorithms}
\label{sec:different-algorithms}

We compare GIES with three alternative greedy search algorithms.  The first one is the original GES of \citet{Chickering2002Optimal} which regards the complete interventional data set as observational (that is, ignores the list $\mathcal{T}$ of an interventional data set $(\mathcal{T}, \mathbf{X})$ as defined in Equation (\ref{eqn:dataset})).  The second one, which we call \textsc{GIES-nt} (for ``no turning''), is a variant of GIES that stops after the first forward and backward phase and lacks the turning phase.  The third algorithm, called GDS for ``greedy DAG search'', is a simple greedy algorithm optimizing the same score function as GIES, but working on the space of DAGs instead of the space of \mcI-essential graphs; GDS simply adds, removes or turns arrows of DAGs in the forward, backward and turning phase, respectively.  Furthermore, for $p \leq 20$, we compare with a dynamic programming (DP) approach proposed by \citet{Silander2006Simple}, an algorithm that finds a global optimum of any decomposable score function on the space of 
DAGs.  Because of the exponential growth in time and memory requirements, we could not calculate DP estimates for models with $p \geq 30$ variables.  For GDS and DP, we examine the \mcI-essential graph of the returned DAGs.

\subsubsection{Quality Measures for Estimated Essential Graphs}
\label{sec:quality-measures}

The \textbf{structural Hamming distance} or SHD (\citealp{Tsamardinos2006Maxmin}; we use the slightly adapted version of \citealp{Kalisch2007Estimating}) is used to measure the distance between an estimated \mcI-essential graph $\hat{G}$ and a true \mcI-essential graph or DAG $G$.  If $A$ and $\hat{A}$ denote the adjacency matrices of $G$ and $\hat{G}$, respectively, the SHD between $G$ and $\hat{G}$ reads
$$
    \SHD(\hat{G}, G) := \sum_{1 \leq i < j \leq p} \big(1 - \mathbbm{1}_{\{(A_{ij} = \hat{A}_{ij}) \wedge (A_{ji} = \hat{A}_{ji})\}}\big) \ .
$$
The SHD between $\hat{G}$ and $G$ is the sum of the numbers of false positives of the skeleton, false negatives of the skeleton, and wrongly oriented edges.  Those quantities are defined as follows.  Two vertices which are adjacent in $\hat{G}$ but not in $G$ count as one false positive, two vertices which are adjacent in $G$ but not in $\hat{G}$ as one false negative.  Two vertices which are adjacent in both $G$ and $\hat{G}$, but connected with different edge types (that is, by a directed edge in one graph, by an undirected one in the other; or by directed edges with different orientations in both graphs) constitute a \textbf{wrongly oriented} edge.

\subsubsection{Results and Discussion}
\label{sec:results}

As we mentioned in Section \ref{sec:essential-graphs-motivation}, the undirected edges in the \mcI-essential graph $\mathcal{E_I}(D)$ of some causal structure $D$ are the edges with unidentifiable orientation.  The number of undirected edges in $\mathcal{E_I}(D)$ analyzed in the next paragraph is therefore a good measure for the identifiability of $D$.  Later on, we study the performance of GIES and compare it to the other algorithms mentioned in Section \ref{sec:different-algorithms}.

\noindent{\emph Identifiability under Interventions}

\begin{figure}[b!]
    \centering \makebox{\includegraphics{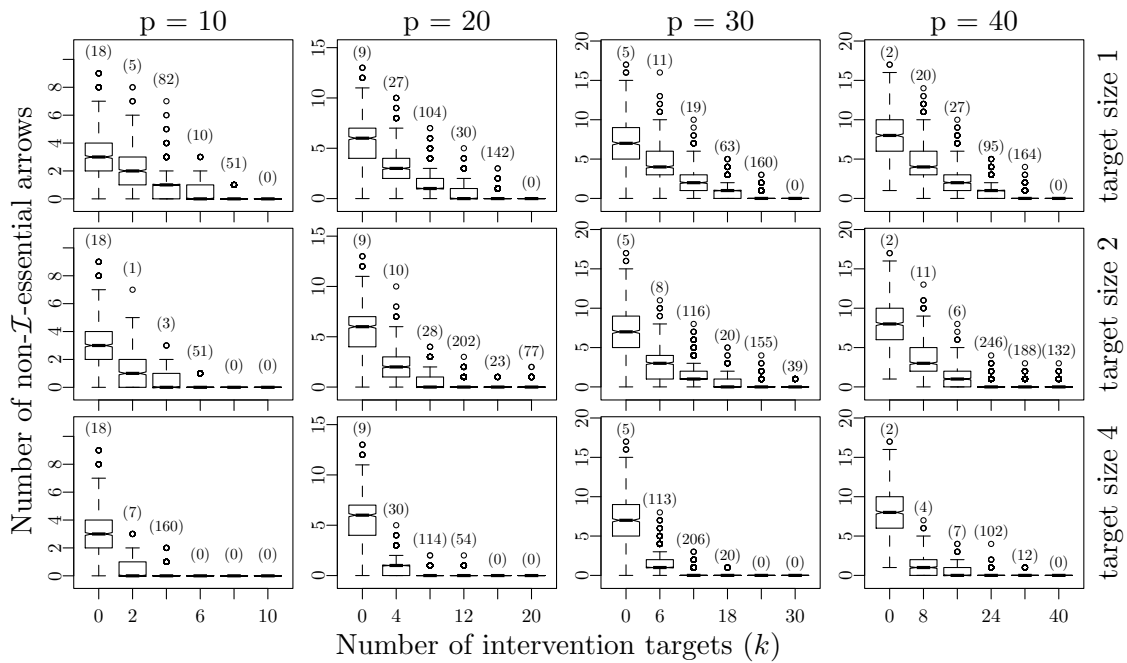}}
    \caption{Number of non-\mcI-essential arrows as a function of the number $k$ of intervention vertices.  In parentheses: number of outliers in the corresponding boxplot.}
    \label{fig:non-orientable-edges}
\end{figure}

In Figure \ref{fig:non-orientable-edges}, the number of non-\mcI-essential arrows is plotted as a function of the number $k$ of non-empty intervention targets ($k = |\mcI| - 1$, see Section \ref{sec:model-generation}).  With single-vertex interventions at $80\%$ of the vertices, the majority of the DAGs used in the simulation are completely identifiable; with target size $m = 2$ or $m = 4$, this is already the case for $k = 0.6p$ or $k = 0.4p$, respectively.  For the small target sizes used, the identifiability under $k$ targets of size $m$ is similar to the identifiability under $k \cdot m$ single-vertex targets.

A certain prudence is advisable when interpreting Figure \ref{fig:non-orientable-edges} since the number of orientable edges also reflects the characteristics of the generated DAGs.  Nevertheless, the plots show that the identifiability of causal models increases quickly even with few intervention targets.  In regard of applications this is an encouraging finding since it illustrates that even a small number of intervention experiments can strongly increase the identifiability of causal structures.

\noindent{\emph Performance of GIES}

\begin{figure}[t]
    \centering \makebox{\includegraphics{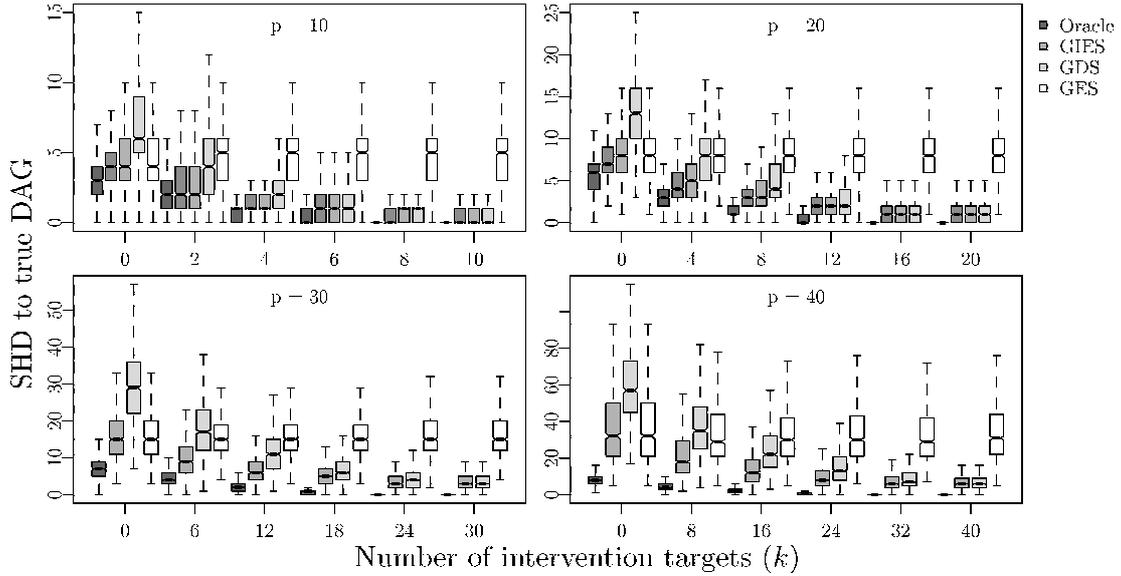}}
    \caption{SHD between \mcI-essential graph $\hat{G}$ estimated from $n = 1000$ data points and true DAG $D$ as a function of the number $k$ of single-vertex intervention targets.  ``Oracle estimates'' denote the respective true \mcI-essential graph $\mathcal{E_I}(D)$, the best possible estimate under some family of targets \mcI{} (see also Figure \ref{fig:non-orientable-edges}).  DP estimates are missing in the two lower plots.}
    \label{fig:shd-fix-n}
\end{figure}

Figure \ref{fig:shd-fix-n} shows the structural Hamming distance between true DAG $D$ and estimated \mcI-essential graph $\hat{G}$ for different algorithms as a function of the number $k$ of intervention targets.  Single-vertex interventions are considered; for larger targets, the overall picture is comparable (data not shown).  In $10$ out of $12$ cases for $p \leq 20$, the median SHD values of GIES and DP estimates are equal; in the remaining cases, too, GIES yields estimates of comparable quality---at much lower computational costs.

In parallel with the identifiability, the estimates produced by the different algorithms improve for growing $k$.  This illustrates that interventional data arising from different intervention targets carry \emph{more} information about the underlying causal model than observational data of the same sample size.

For complete interventions, that is, $k = p$, every DAG is completely identifiable and hence its own \mcI-essential graph.  Therefore, GDS and GIES are exactly the same algorithm in this case.  With shrinking $k$, the performance of GDS compared to that of GIES gets worse.  On the other hand, GES coincides with GIES in the observational case ($k = 0$).  For growing $k$, the estimation performance of GES stays approximately constant; it can, as opposed to GIES, not make use of the additional information coming from interventions.  To sum up, both the price of ignoring interventional Markov equivalence (GDS) and ignoring the interventional nature of the provided data sets (GES) are apparent in Figure \ref{fig:shd-fix-n}.

\begin{figure}[t]
    \centering \makebox{\includegraphics{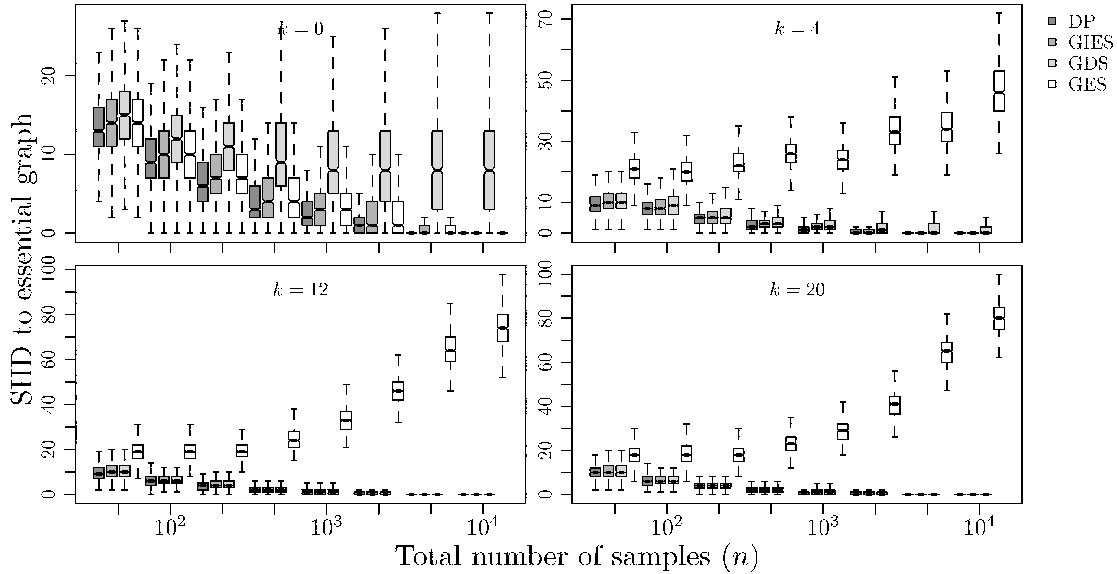}}
    \caption{SHD between estimated and true \mcI-essential graph for different numbers $k$ of intervention targets of size $m = 4$ for the DAGs with $p = 20$ vertices.  The abscissa denotes the \emph{total} sample size $n$.  For example, a data set with $n = 1000$ and $k = 4$ consists of $200$ observational samples and $200$ interventional samples each arising from interventions at four different targets, see Section \ref{sec:model-generation}.}
    \label{fig:shd-essential-varying-n}
\end{figure}

The performance of GIES as a function of the sample size $n$ is plotted in Figure \ref{fig:shd-essential-varying-n} for the DAGs with $p = 20$ vertices and intervention targets of size $m = 4$.  The quality of the GIES estimates is comparable to that of the DP estimates.  The behavior of the SHD values for growing $n$ is a strong hint for the consistency of GIES in the limit $n \to \infty$ \citep[note that the DP algorithm is consistent; ][]{Hauser2012Consistent}.  In contrast, the plots for $k = 0$ and $k = 4$ again reveal the weak performance of GDS for small numbers of intervention vertices; the plots suggest that GDS, in contrast to GIES, does not yield a consistent estimator of the \mcI-essential graph due to being stuck in a bad local optimum.

The most striking result in Figure \ref{fig:shd-essential-varying-n} is certainly the fact that the estimation performance of GES heavily decreases with growing $n$ as long as the data is not observational ($k > 0$).  This is not an artifact of GES, but a problem of model-misspecification: running DP for an \emph{observational} model (that is, considering all data as observational as GES does) yields SHD values maximally $14\%$ below that of GES (data not shown).  For single-vertex interventions, the SHD values of the GES estimates stay approximately constant with growing $n$; for target size $m = 2$, its SHD values also increase, but not to the same extent as for $m = 4$.

\begin{figure}[t]
    \centering \makebox{\includegraphics{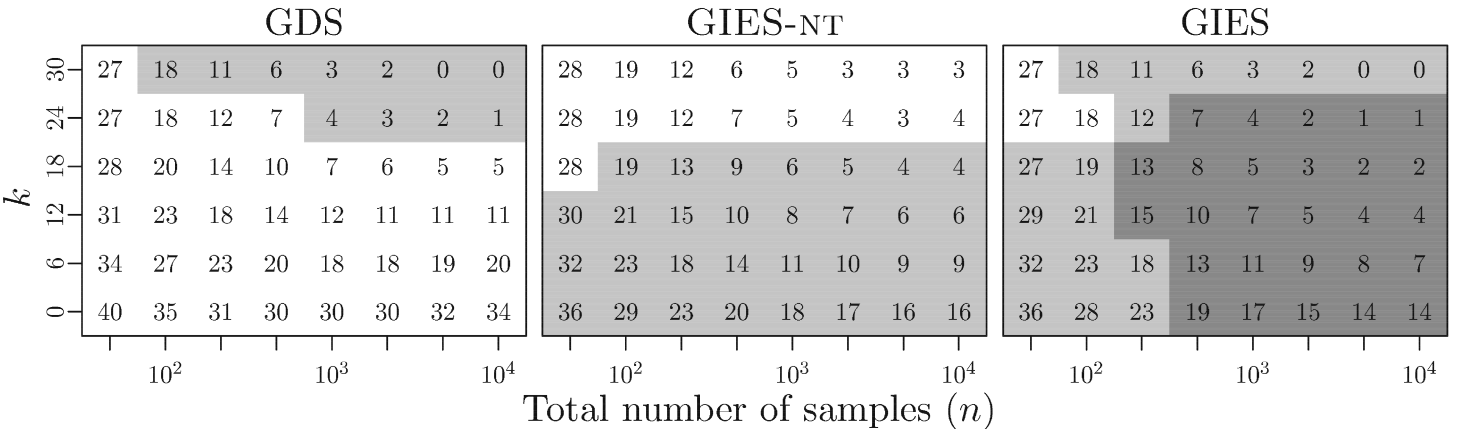}}
    \caption{Mean SHD between estimated and true \mcI-essential graph for different greedy algorithms as a function of $n$ and $k$; data for DAGs with $p = 30$ and single-vertex interventions.  Shading: algorithm yielded significantly better estimates than one (\textcolor{black!25}{$\blacksquare$}) or two (\textcolor{black!50}{$\blacksquare$}) of its competitors, respectively (paired $t$-test on a significance level of $\alpha = 5 \%$).}
    \label{fig:shd-comparison}
\end{figure}

\begin{figure}[t]
    \centering \makebox{\includegraphics{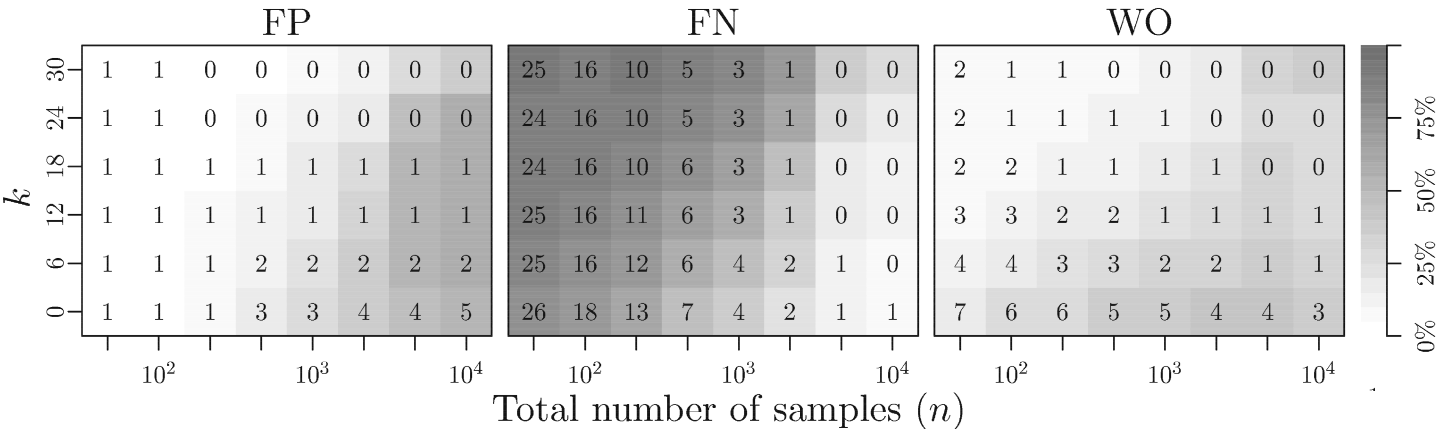}}
    \caption{False positives (FP) and false negatives (FN) of the skeleton and wrongly oriented edges (WO; Section \ref{sec:quality-measures}) of the GIES estimates compared to the true \mcI-essential graphs with $p = 30$ vertices; mean values as a function of $k$ and $n$ for single-vertex interventions.  Shading: ratio of each quantity and the SHD between estimated and true \mcI-essential graph (dark means a large contribution to the SHD).}
    \label{fig:shd-components}
\end{figure}

In Figure \ref{fig:shd-comparison}, we compare the SHD between true and estimated \mcI-essential graphs with $p = 30$ vertices for estimates produced by different greedy algorithms; other vertex numbers give a similar picture.  In most settings, GIES beats both GDS and \textsc{GIES-nt}.  It combines both the advantage of \textsc{GIES-nt}, using the space of interventional Markov equivalence classes as search space, and GDS, the turning phase apparently reducing the risk of getting stuck in local maxima of the score function.

As noted in Section \ref{sec:quality-measures}, the SHD between true and estimated interventional essential graphs can be written as the sum of false positives of the skeleton, false negatives of the skeleton and wrongly oriented edges.  Those numbers are shown in Figure \ref{fig:shd-components}, again for GIES estimates under single-vertex interventions for DAGs with $p = 30$ vertices.  False positives of the skeleton are the main contribution to the SHD values.  In $60 \%$ of the cases, especially for large $n$ and small $k$, wrongly oriented edges represent the second-largest contribution.

\noindent{\emph Runtime Analysis}

\begin{figure}
    \centering \includegraphics{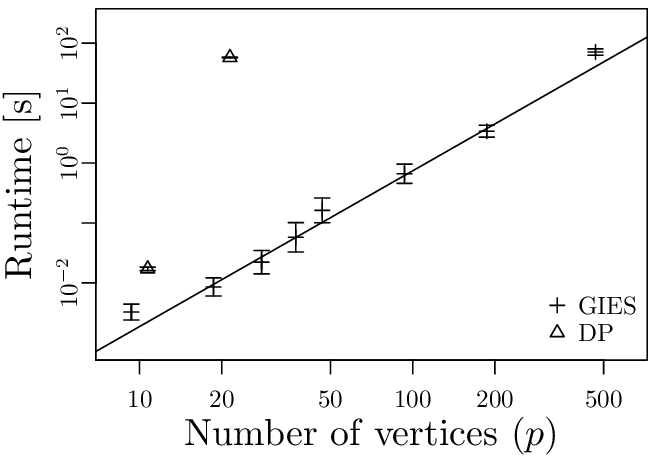}
    \caption{Runtime of GIES and DP as a function of the vertex number.}
    \label{fig:runtime}
\end{figure}

All algorithms evaluated in this section were implemented in C++ and compiled into a library using the GNU compiler g++ 4.6.1.  The simulations---that is, the generation of data and the library calls---were performed using R 2.13.1.  All simulations were run on an AMD Opteron 8380 CPU with \unit[2.5]{GHz} and \unit[2]{GB} RAM.

Figure \ref{fig:runtime} shows the running times of GIES and DP as a function of the number $p$ of vertices. GDS had running times of the same order of magnitude as GIES; they were actually up to $50\%$ higher since we used a basic implementation of GDS compared to an optimized version of GIES (running times of GDS are not plotted for this reason).  The linearity of the GIES values in the log-log plot (see the solid line in Figure \ref{fig:runtime}) indicate a polynomial time complexity of the approximate order $O(p^{2.8})$, in contrast to the exponential complexity of DP; note that GIES also has an exponential \emph{worst case} complexity (see Section \ref{sec:discussion}).  The multiple linear regression $\log(t) = \beta_0 + \beta_1 \log(p) + \beta_2 \log(|E|) + \varepsilon$, where $t$ denotes the runtime and $E$ the edge set of the true DAG, yields coefficients $\hat{\beta}_1 = 1.01$ and $\hat{\beta}_2 = 0.94$.

\subsection{DREAM4 Challenge}
\label{sec:dream4}

We also measured the performance of GIES on synthetic gene expression data sets from the DREAM4 \textit{in silico} challenge \citep{Marbach2010Revealing, Prill2010Towards}.  Our goal here was to evaluate predictions of expression levels of gene knockout or knockdown experiments by cross-validation based on the provided interventional data.

\subsubsection{Data}
\label{sec:dream4-data}

The DREAM4 challenge provides five data sets with an ensemble of interventional and observational data simulated from five biologically plausible, possibly \emph{cyclic} gene regulatory networks with $10$ genes \citep{Marbach2009Generating}.  The data set of each network consists of
\begin{itemize}
    \item $11$ observational measurements, simulated from random fluctuations of the system parameters (resembling observational data measured in different individuals);
    
    \item $10$ measurements from single-gene knockdowns, one knockdown per gene;
    
    \item $10$ measurements from single-gene knockouts, one knockout per gene;
    
    \item five time series with $21$ time points each, simulated from an unknown change of parameters in the first half (corresponding to measurements under a perturbed chemical environment having unknown effects on the gene regulatory network) and from the unperturbed system in the second half.
\end{itemize}
Since our framework can not cope with uncertain interventions (that is, interventions with unknown target), we only used the $50$ observational measurements of the second half of the time series.  Altogether, we have, from each network, a total of $81$ data points, $61$ observational and $20$ interventional ones.  We normalized the data such that the observational samples of each gene have mean $0$ and standard deviation $1$.  In this normalization, $95\%$ of the intervention levels (that is, the expression levels of knocked out or knocked down genes) lie between $-8.37$ and $-0.62$ with a mean of $-3.30$ (Figure \ref{fig:dream4-int-levels}).

\begin{figure}[t]
    \centering
    \includegraphics{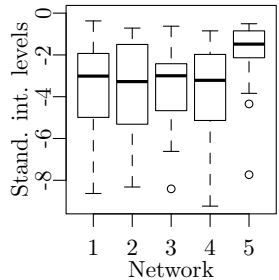}
    \caption{Standardized intervention levels in the different DREAM4 data sets.  Data is scaled such that the observational samples have empirical mean $0$ and standard deviation $1$.}
    \label{fig:dream4-int-levels}
\end{figure}

\subsubsection{Methods}
\label{sec:dream4-methods}

We used each interventional measurement ($20$ per network) as one test data point and predicted its value from a network estimated with training data consisting either of the $80$ remaining data points, or the $61$ observational measurements alone.  We used GIES, GES and PC \citep{Spirtes2000Causation} to estimate the causal models and evaluated the prediction accuracy by the mean squared error (MSE).  We will use abbreviations like ``GES($80$)'' or ``PC($61$)'' to denote GES estimates based on a training set of size $80$ or PC estimates based on an observational training set of size $61$, respectively.

For a given DAG, we predicted interventional gene expression levels based on the estimated structural equation model after replacing the structural equation of the intervened variable by a constant one; see Section \ref{sec:gaussian-models} for connection between Gaussian causal models and structural equations, especially Equation (\ref{eqn:structural-equations}).  GES and PC regard all data as observational and yield an observational essential graph.  For those algorithms, we enumerated all representative DAGs of the estimated equivalence class using the function \texttt{allDags()} of the R package \texttt{pcalg} \citep{Kalisch2012Causal}, calculated an expression level with each of them, and took the mean of those predictions.  GIES($80$) yields a single DAG in each case since the $19$ interventional measurements in the training data ensure complete identifiability.

Furthermore, we used the evaluation script provided by the DREAM4 challenge to assess the quality of our network predictions to those sent in to the challenge by participating teams.  This evaluation is based on the area under the ROC curve (AUROC) of the true and false positive rate of the edge predictions.

\subsubsection{Results}
\label{sec:dream4-results}

\begin{figure}[t]
    \includegraphics{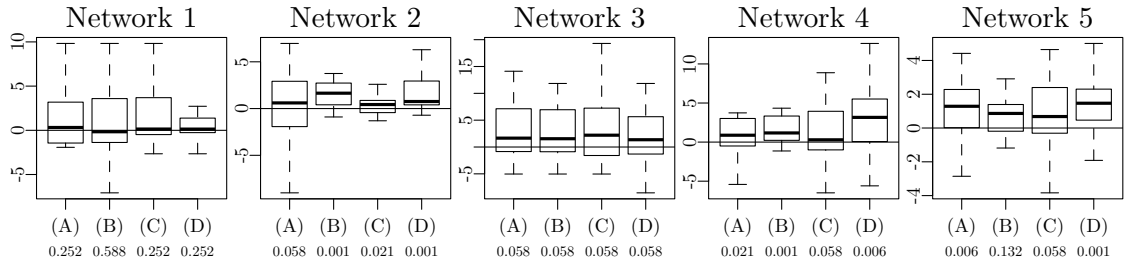}
    \caption{Upper row: MSE values of GIES and competitors; lower row: differences of MSE values as defined in Equation (\ref{eqn:mse-diff}); large values indicate a good performance of GIES.  (A) GIES($80$), (B) PC($80$), (C) PC($61$), (D) GES($80$), (E) GES($61$).  Numbers below the boxplots: p-values of a one-sided sign test.}
    \label{fig:dream4-mse-diff}
\end{figure}

Figure \ref{fig:dream4-mse-diff} shows boxplots of MSE differences between GIES($80$) and its competitors; that is, we consider quantities of the form
\begin{equation}
    \Delta\mathrm{MSE}\subscr{comp} := \mathrm{MSE}\subscr{comp} - \mathrm{MSE}\subscr{GIES(80)}, \label{eqn:mse-diff}
\end{equation}
where comp stands for one of the competitors.  Since the MSE differences showed a skewed distribution in general, we used a sign test for calculating their p-values.

Except for one case (PC($61$) in network 1), GIES($80$) always yielded the best predictions of all competitors.  Although all data sets are dominated by observational data ($61$ observational measurements versus $20$ interventional ones), GIES can make use of the additional information carried by interventional data points to rule out its observational competitors.  On the other hand, the dominance of observational data is probably one of the reasons for the fact that GIES does not outperform the observational methods more clearly but has an overall performance which is comparable with that of its competitors.  Another reason could be the fact that the underlying networks used for data generation are not acyclic as assumed by GIES.  Interestingly, the winning margin of GIES in network 5 was not smaller than in other networks although the corresponding data set has the smallest intervention levels (in absolute values; see Figure \ref{fig:dream4-int-levels}).

29 teams participated in the DREAM4 challenge.  Their AUROC values are available from the DREAM4 website;\footnote{DREAM4 can be found at \url{http://wiki.c2b2.columbia.edu/dream/index.php/D4c2}.} adding our values gives a data set of 30 evaluations.  Among those, our results had overall rank 10, and ranks 8, 4, 21, 10 and 3, respectively, for networks 1 to 5.  Except for network 3, we could keep up with the best third of the participating teams despite the beforementioned model misspecification given by the assumption of acyclicity, and despite the fact that we ignored the time series structure and half of the time series data.

\section{Conclusion}
\label{sec:conclusion}

We gave a definition and a graph theoretic criterion for the Markov equivalence of DAGs under multiple interventions.  We characterized corresponding equivalence classes by their \emph{essential graph}, defined as the union of all DAGs in an equivalence class in analogy to the observational case.  Using those essential graphs as a basis for the algorithmic representation of interventional Markov equivalence classes, we presented a new greedy algorithm (including a new turning phase), GIES, for learning causal structures from data arising from multiple interventions.

In a simulation study, we showed that the number of non-orientable edges in causal structures drops quickly even with a small number of interventions; our description of interventional essential graphs makes it possible to \emph{quantify} the gain in identifiability.  For a fixed sample size $n$, GIES estimates got closer to the true causal structure as the number of intervention vertices grew.  For DAGs with $p \leq 20$ vertices, the GIES algorithm could keep up with a consistent, exponential-time DP approach maximizing the BIC score.  It clearly beat GDS, a simple greedy search on the space of DAGs, as well as GES which cannot cope with interventional data.  Our novel turning phase proved to be an improvement of GES even on observational data, as it was already conjectured by \citet{Chickering2002Optimal}.  Applying GIES to synthetic data sets from the DREAM4 challenge \citep{Marbach2010Revealing}, we got better predictions of gene expression levels of knockout or knockdown experiments than with observational estimation methods.

The accurate structure learning performance of GIES in the limit of large data sets raises the question whether GIES is consistent.  \citet{Chickering2002Optimal} proved the consistency of GES on observational data.  However, the generalization of his proof for GIES operating on interventional data is not obvious since such data are in general not identically distributed.


\section*{Acknowledgments}
We are grateful to Markus Kalisch for very carefully reading the proofs and for his helpful comments on the text.  Furthermore, we would like to thank the anonymous reviewers for constructive comments.

\appendix
\section{Graphs}
\label{sec:graphs}

In this appendix, we shortly summarize our notation \citep[mostly following][]{Andersson1997Characterization} and basic facts concerning graphs.  All statements about perfect elimination orderings that are used in Sections \ref{sec:essential-graphs} and \ref{sec:greedy-search} are listed or proven in Section \ref{sec:perfect-elimination-orderings}.

\subsection{Definitions and Notation}
\label{sec:graphs-notation}

A \textbf{graph} is a pair $G = (V, E)$, where $V$ is a finite set of vertices and $E \subset E^*(V) := (V \times V) \setminus \{(a, a) | a \in V\}$ is a set of edges.  We use graphs to denote causal relationships between random variables $X_1, \ldots, X_p$.  To keep notation simple, we always assume $V = \{1, 2, \ldots, p\} =: [p]$, in order to represent each random variable by its index in the graph.

An edge $(a, b) \in E$ with $(b, a) \in E$ is called \textbf{undirected} (or a \textbf{line}), whereas an edge $(a, b) \in E$ with $(b, a) \notin E$ is called \textbf{directed} (or an \textbf{arrow}).  Consequently, a graph $G$ is called directed (or undirected, resp.) if all its edges are directed (or undirected, resp.); a directed graph is also called \textbf{digraph} for short.  We use the short-hand notation
\begin{eqnarray*}
    a \grarright b \in G & :\Leftrightarrow & (a, b) \in E \wedge (b, a) \notin E, \\ 
    a \grline b \in G & :\Leftrightarrow & (a, b) \in E \wedge (b, a) \in E, \\
    a \grdots b \in G & :\Leftrightarrow & (a, b) \in E \vee (b, a) \in E.
\end{eqnarray*}

A \textbf{subgraph} of some graph $G$ is a graph $G' = (V', E')$ with the property $V' \subset V$, $E' \subset E$, denoted by $G' \subset G$.  For a subset $A \subset V$ of the vertices of $G$, the \textbf{induced subgraph} on $A$ is $G[A] := (A, E[A])$, where $E[A] := E \cap (A \times A)$.  A \textbf{v-structure} \citep[also called \emph{immorality} by, for example,][]{Lauritzen1996Graphical} is an induced subgraph of $G$ of the form $a \grarright b \grarleft c$.  The \textbf{skeleton} of a graph $G$ is the undirected graph $G^u := (V, E^u)$, $E^u := \{(a, b) \in V \times V \spst a \grdots b \in G\}$.  For two graphs $G_1 = (V, E_1)$ and $G_2 = (V, E_2)$ on the same vertex set, we define the union and the intersection as $G_1 \cup G_2 := (V, E_1 \cup E_2)$ and $G_1 \cap G_2 := (V, E_1 \cap E_2)$, respectively.  For a graph $G = (V, E)$ and $(a, b) \in E^*(V)$, we use the shorthand notation $G - (a, b) := (V, E \setminus \{(a, b)\})$ and $G + (a, b) := (V, E \cup \{(a, b)\})$.

The following sets describe the local environment of a vertex $a$ in a graph $G$:
\begin{eqnarray*}
    \pa_G(a) & := & \{b \in V \spst b \grarright a \in G\}, \text{ the \textbf{parents} of } a,\\
    \ch_G(a) & := & \{b \in V \spst a \grarright b \in G\}, \text{ the \textbf{children} of } a,\\
    \nb_G(a) & := & \{b \in V \spst a \grline b \in G\}, \text{ the \textbf{neighbors} of } a,\\
    \ad_G(a) & := & \{b \in V \spst a \grdots b \in G\}, \text{ the vertices \textbf{adjacent} to } a. \\
\end{eqnarray*}
The subscripts ``$G$'' in the above definitions are omitted when it is clear which graph is meant.  For a set $A \subset V$ of vertices, we generalize those definitions as follows:
$$
    \pa_G(A) := \bigcup_{a \in A} \pa_G(a) \setminus A, \quad \nb_G(A) := \bigcup_{a \in A} \nb_G(a) \setminus A, \ \text{ etc.}
$$
The \textbf{degree} of a vertex $a \in V$ is defined as $\deg_G(a) := |\ad_G(a)|$.

For two distinct vertices $a$ and $b \in V$, a \textbf{chain} of length $k$ from $a$ to $b$ is a sequence of distinct vertices $\gamma = (a \equiv a_0, a_1, \ldots, a_k \equiv b)$ such that for each $i = 1, \ldots, k$, either $a_{i-1} \grarright a_i \in G$ or $a_{i-1} \grarleft a_i \in G$; if for all $i$, $(a_{i-1}, a_i) \in E$ (that is, $a_{i-1} \grarright a_i \in G$ or $a_{i-1} \grline a_i \in G$), the sequence $\gamma$ is called a \textbf{path}.  If at least one edge $a_{i-1} \grarright a_i$ is directed in a path, the path is called \emph{directed}, otherwise \emph{undirected}.  A \textbf{(directed) cycle} is defined as a (directed) path with the difference that $a_0 = a_n$.  Paths define a preorder on the vertices of a graph: $a \preceq_G b :\Leftrightarrow \spexists$ a path $\gamma$ from $a$ to $b$ in $G$.  Furthermore, $a \approx_G b :\Leftrightarrow (a \preceq_G b) \wedge (b \preceq_G a)$ is an equivalence relation on the set of vertices.

An undirected graph $G = (V, E)$ is \textbf{complete} if all pairs of vertices are adjacent.  A \textbf{clique} is a subset of vertices $C \subset V$ such that $G[C]$ is complete; a vertex $a \in V$ is called \textbf{simplicial} if $\nb(a)$ is a clique.  An undirected graph $G$ is called \textbf{chordal} if every cycle of length $k \geq 4$ contains a \textbf{chord}, that means two nonconsecutive adjacent vertices.  For pairwise disjoint subsets $A, B, S \subset V$ with $A \neq \emptyset$ and $B \neq \emptyset$, $A$ and $B$ are \textbf{separated} by $S$ in $G$ if every path from a vertex in $A$ to a vertex in $B$ contains a vertex in $S$.

A \textbf{directed acyclic graph}, or \textbf{DAG} for short, is a digraph that contains no cycle.  In the paper, we mostly use the symbol $D$ for DAGs, whereas arbitrary graphs are, as in this appendix, mostly named $G$.  Chain graphs can be viewed as something between undirected graphs and DAGs: a graph $G = (V, E)$ is a \textbf{chain graph} if it contains no directed cycle; undirected graphs and DAGs are special cases of chain graphs.  The equivalence classes in $V$ w.r.t.\ the equivalence relation $\approx_G$ are the connected components of $G$ after removing all directed edges.  We denote the quotient set of $V$ by $\mathbf{T}(G) := V / \approx_G$, and its members $T \in \mathbf{T}(G)$ are called \textbf{chain components} of $G$.  For a vertex $a \in V$, $T_G(a)$ stands for $[a]_{\approx_G}$.  The preorder $\preceq_G$ on $V$ induces in a canonical way a \emph{partial order} on $\mathbf{T}(G)$ which we also denote by $\preceq_G$: $T_G(a) \preceq_G T_G(b) :\Leftrightarrow a \preceq_G b$.  An illustration 
is shown in Figure \ref{fig:ex-chain-graph}.

\begin{figure}[t]
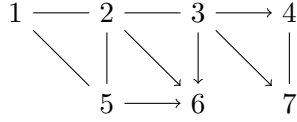

    \centering
    \begin{exgraphpicture}
        \draw[-] (v2) -- (v5) -- (v1) -- (v2) -- (v3);
        \draw[-] (v4) -- (v7);
        \draw[->] (v5) -- (v6);
        \draw[->] (v2) -- (v6);
        \draw[->] (v3) -- (v6);
        \draw[->] (v3) -- (v4);
        \draw[->] (v3) -- (v7);
    \end{exgraphpicture}
    \vspace{10mm}
    \caption{A chain graph $G$ with three chain components $A = T_G(1) = [1]_{\approx_G} = \{1, 2, 3, 5\}$, $B = T_G(6) = \{6\}$ and $C = T_G(4) = \{4, 7\}$.  The arrows induce the partial order $A \preceq_G B$, $A \preceq_G C$.  The graph is no chain graph anymore when we replace the arrow $3 \grarright 4$ by a line since this would create a directed cycle: $(3, 7, 4, 3)$.}
    \label{fig:ex-chain-graph}
\end{figure}

An \textbf{ordering} of a graph is a bijection $[p] \to V$, hence, since we assume $V = [p]$ here, a permutation $\sigma \in S_p$.  An ordering $\sigma$ canonically induces a total order on $V$ by the definition $a \leq_\sigma b :\Leftrightarrow \sigma^{-1}(a) \leq \sigma^{-1}(b)$.  An ordering $\sigma = (v_1, \ldots, v_p)$ is called a \textbf{perfect elimination ordering} if for all $i$, $v_i$ is simplicial in $G^u[\{v_1, \ldots, v_i\}]$.  A graph $G = (V, E)$ is a DAG if and only if the previously defined preorder $\preceq_G$ is a partial order; such a partial order can be extended to a total order \citep{Szpilrajn1930Extension}.  Thus every DAG has at least one \textbf{topological ordering}, that is an ordering $\sigma$ whose total order $\leq_\sigma$ extends $\preceq_G$: $a \preceq_G b \Rightarrow a \leq_\sigma b$.  For $\sigma \in S_p$, a DAG $D = ([p], E)$ is said to be \textbf{oriented according to $\sigma$} if $\sigma$ is a topological ordering of $D$.  In a DAG $D$ with topological ordering $\sigma$,
 the arrows point from vertices with low to vertices with high ordered indices.  The vertex $\sigma(1)$ is a \textbf{source}, that means all arrows point away from it.

\subsection{Perfect Elimination Orderings}
\label{sec:perfect-elimination-orderings}

Perfect elimination orderings play an important role in the characterization of interventional Markov equivalence classes of DAGs as well as in the implementation of the Greedy Interventional Equivalence Search (GIES).  In this section, we provide all results for this topic that are used as auxiliary tools in the proofs of Sections \ref{sec:essential-graphs} and \ref{sec:greedy-search}.

\begin{lemma}
    \label{lem:dag-perfect-ordering}
    Let $D = (V, E)$ be a DAG.  $D$ has no v-structures if and only if any topological ordering of $D$ is a perfect elimination ordering.
\end{lemma}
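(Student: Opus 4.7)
The plan is to exploit the following key identification: if $\sigma = (v_1, \ldots, v_p)$ is a topological ordering of the DAG $D$, then for each index $i$, the neighbors of $v_i$ in the induced subgraph $D^u[\{v_1, \ldots, v_i\}]$ are precisely $\pa_D(v_i)$. Indeed, any vertex $v_j$ with $j < i$ adjacent to $v_i$ must, by the topological order, satisfy $v_j \grarright v_i \in D$; conversely, all parents of $v_i$ precede $v_i$ in $\sigma$, so all of them lie in the subgraph. Consequently, $\sigma$ is a perfect elimination ordering if and only if $\pa_D(v_i)$ is a clique in $D^u$ for every $i \in [p]$.

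With this reformulation in hand, both implications are essentially immediate. For the forward direction, I would assume $D$ has no v-structures and fix an arbitrary topological ordering $\sigma$. For any $i$ and any pair $a, b \in \pa_D(v_i)$, the configuration $a \grarright v_i \grarleft b$ would form an induced v-structure unless $a$ and $b$ are adjacent in $D$; by assumption this forces adjacency, so $\pa_D(v_i)$ is a clique, and $\sigma$ is a perfect elimination ordering by the observation above.

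For the converse, I would argue contrapositively. Suppose $D$ contains a v-structure $a \grarright c \grarleft b$ with $a, b$ non-adjacent. In any topological ordering $\sigma$ we necessarily have $a, b <_\sigma c$, so when $v_i = c$ the induced subgraph $D^u[\{v_1, \ldots, v_i\}]$ contains both edges $c \grline a$ and $c \grline b$ but no edge between $a$ and $b$. Hence $c$ is not simplicial, so no topological ordering of $D$ is a perfect elimination ordering, contradicting the hypothesis that every topological ordering is one.

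There is no real obstacle here; the content is essentially the definition of a v-structure paired with the fact that, in a topological ordering, the vertices preceding $v_i$ that are adjacent to it are exactly its parents. The only thing to state explicitly up front is this identification of neighborhoods in the truncated skeleton, after which both directions reduce to a one-line check.
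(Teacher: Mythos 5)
Your proof is correct, and it fills in exactly the argument the paper leaves implicit (the paper states only that the lemma ``follows easily from the definitions of a v-structure and a perfect elimination ordering''). The key observation you isolate---that in a topological ordering the neighbors of $v_i$ in $D^u[\{v_1,\ldots,v_i\}]$ are precisely $\pa_D(v_i)$---is the intended reduction, and your contrapositive for the converse even establishes the slightly stronger fact, noted after the lemma in the paper, that one topological ordering being a perfect elimination ordering forces all of them to be.
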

The proof of this lemma follows easily from the definitions of a v-structure and a perfect elimination ordering.  Moreover, if \emph{any} topological ordering of a DAG is a perfect elimination ordering, this is automatically the case for \emph{every} topological ordering.

\begin{proposition}[\citealp{Rose1970Triangulated}]
    \label{prop:chordal-perfect-ordering}
    Let $G = (V, E)$ be an undirected graph.  Then $G$ is chordal if and only if it has a perfect elimination ordering.
\end{proposition}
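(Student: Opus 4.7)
The plan is to prove both directions separately, relying on the standard observation that simpliciality interacts nicely with induced subgraphs.

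For the direction ``perfect elimination ordering $\Rightarrow$ chordal'', the proof is short and combinatorial. Let $\sigma = (v_1, \ldots, v_p)$ be a perfect elimination ordering of $G$, and let $\gamma = (a_0, a_1, \ldots, a_{k-1}, a_0)$ be any cycle of length $k \geq 4$. I would pick the vertex of $\gamma$ with the \emph{largest} $\sigma$-index, say $a_j = v_i$. Both cycle-neighbors $a_{j-1}$ and $a_{j+1}$ then lie in $\{v_1, \ldots, v_{i-1}\}$ and are adjacent to $v_i$ in $G[\{v_1, \ldots, v_i\}]$. Since $v_i$ is simplicial in that induced subgraph, $a_{j-1}$ and $a_{j+1}$ must themselves be adjacent; as $k \geq 4$ they are non-consecutive on $\gamma$, so this edge is a chord. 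Hence $G$ is chordal.

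For the direction ``chordal $\Rightarrow$ perfect elimination ordering'', I would proceed by induction on $p = |V|$. The base case $p=1$ is trivial. For the inductive step, the key ingredient is the classical lemma that every chordal graph with at least one vertex contains a simplicial vertex $v$: once this is granted, set $v_p := v$, observe that $G[V \setminus \{v\}]$ is again chordal (chordality is closed under taking induced subgraphs, as any cycle in the subgraph is already a cycle in $G$), apply the induction hypothesis to obtain a perfect elimination ordering $(v_1, \ldots, v_{p-1})$ of $G[V \setminus \{v\}]$, and concatenate to get $(v_1, \ldots, v_{p-1}, v_p)$. This is a perfect elimination ordering of $G$: for $i < p$ the simpliciality of $v_i$ in $G[\{v_1, \ldots, v_i\}]$ is inherited directly, and for $i = p$ it is precisely the choice of $v_p$.

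The main obstacle is the simplicial-vertex lemma. I would prove it by strong induction on $p$. If $G$ is complete, every vertex is simplicial. Otherwise, pick two non-adjacent vertices $a, b$ and let $S$ be a \emph{minimal} $a$--$b$-separator in $G$. A short argument using chordality shows $S$ is a clique: given any two $s, t \in S$, minimality lets one find chains from $s$ to $t$ via both the $a$-component and the $b$-component of $G \setminus S$, glue them into a cycle of length $\geq 4$, and repeatedly apply chordality to produce the chord $s \grline t$. Let $A$ be the connected component of $a$ in $G \setminus S$; then $G[A \cup S]$ is chordal and strictly smaller than $G$, so by induction it has a simplicial vertex, and one checks that this vertex can be chosen in $A$ (since the alternative would force the simplicial vertex to lie in the clique $S$, contradicting minimality of $S$ as a separator). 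Any vertex simplicial in $G[A \cup S]$ that lies in $A$ is also simplicial in $G$, because its neighbourhood in $G$ is contained in $A \cup S$. Symmetrically one obtains a simplicial vertex in the $b$-component, giving the two non-adjacent simplicial vertices used in the induction.
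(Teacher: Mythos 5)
The paper does not prove this proposition at all: it is imported verbatim from the literature (Rose, 1970) and used as a black box, so there is no in-paper argument to compare against. Your proof is the standard textbook argument (essentially Dirac's), and it is correct in both directions. The ``perfect elimination ordering $\Rightarrow$ chordal'' direction is clean: taking the $\sigma$-maximal vertex of a cycle of length $k \geq 4$ and invoking simpliciality of $v_i$ in $G[\{v_1,\ldots,v_i\}]$ (which is exactly the paper's definition of a perfect elimination ordering) produces the chord $a_{j-1} \grdots a_{j+1}$, and these two vertices are indeed non-consecutive on the cycle precisely because $k \geq 4$. The converse via the simplicial-vertex lemma and peeling off $v_p$ is also the standard route, and your observation that chordality is hereditary under induced subgraphs is the only other ingredient needed.

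One small imprecision in the simplicial-vertex lemma: the parenthetical ``the alternative would force the simplicial vertex to lie in the clique $S$, contradicting minimality of $S$ as a separator'' is not the right justification --- a vertex of a minimal separator can perfectly well be simplicial in $G[A \cup S]$. The correct argument, which you in fact gesture at in your final sentence, is to strengthen the induction hypothesis to ``every non-complete chordal graph has \emph{two non-adjacent} simplicial vertices'': two non-adjacent vertices cannot both lie in the clique $S$, so at least one of them lies in $A$, and a vertex of $A$ that is simplicial in $G[A \cup S]$ is simplicial in $G$ because its entire $G$-neighbourhood is contained in $A \cup S$. With the induction hypothesis stated in that strengthened form throughout, the proof is complete.
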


\begin{algorithm}[t!]
    \caption{$\LexBFS(V, E)$.  Lexicographic breadth-first search in the so-called ``partitioning paradigm'' \citep{Rose1976Algorithmic, Corneil2005Lexicographic}}
    \label{alg:lex-bfs}
    \fontsize{10}{12}\selectfont
    \SetKwInOut{Input}{Input}
    \SetKwInOut{Output}{Output}
    \Input{An undirected graph $G = (V, E)$}
    \Output{An ordering $\sigma$ of the vertices $V$, called a \textbf{\LexBFS-ordering}}
    $\Sigma \leftarrow (V)$; \tcp{Initialize sequence $\Sigma$ of vertex sets to contain the single set $V$ in the beginning}
    $\sigma \leftarrow ()$; \tcp{Initialize output sequence of vertices}
    \ShowLnLabel{ln:while-start}\While{$\Sigma \neq \emptyset$}{%
        \ShowLnLabel{ln:remove} Remove a vertex $a$ from the first set in the sequence $\Sigma$\; 
        \lIf{first set of $\Sigma$ is empty}{remove first set from $\Sigma$}\;
        Append $a$ to $\sigma$\;
        Mark all sets of $\Sigma$ as not visited\;
        \ForEach{$b \in \nb_G(a)$ s.t. $b \in S$ for some $S \in \Sigma$}{%
            \If{$S$ not visited}{%
                Insert empty set $T$ into $\Sigma$ in front of $S$\;
                Mark $S$ as visited\;
            }
            \lElse{let $T$ be the set preceding $S$ in $\Sigma$}\;
            \ShowLnLabel{ln:move} Move $b$ from $S$ to $T$\;
            \ShowLnLabel{ln:while-end}\lIf{$S = \emptyset$}{remove $S$ from $\Sigma$}\;
        }
    }
\end{algorithm}

Perfect elimination orderings of chordal graphs can be produced by a variant of the breadth-first search algorithm, the so-called lexicographic breadth-first search (\LexBFS; see Algorithm \ref{alg:lex-bfs}).  The term ``lexicographic'' reflects the fact that the algorithm visits edges in lexicographic order w.r.t.\ the produced ordering $\sigma$.

\begin{proposition}[\citealp{Rose1976Algorithmic}]
    \label{prop:lex-bfs-perfect-ordering}
    Let $G = (V, E)$ be an undirected chordal graph with a \LexBFS-ordering $\sigma$.  Then $\sigma$ is also a perfect elimination ordering on $G$.
\end{proposition}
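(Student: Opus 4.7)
Proof proposal:

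The plan is to prove the proposition by contradiction, combining the characterizing $4$-point property of LexBFS orderings with the chordality hypothesis on $G$.

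First, I would establish the $4$-point property of LexBFS orderings: for any three vertices with $a <_\sigma b <_\sigma c$ satisfying $ac \in E$ and $ab \notin E$, there is a vertex $d$ with $d <_\sigma a$, $db \in E$, and $dc \notin E$. This follows directly from the partition-refinement structure of Algorithm~\ref{alg:lex-bfs}: at the moment $a$ is dequeued, its cell of $\Sigma$ must precede the cell containing $b$; the only mechanism that can have placed $a$'s cell strictly ahead is some earlier processed vertex separating $a$ and $b$ (by being adjacent to one but not the other), and tracking the most recent such separator, together with the fact that $c$ is unprocessed and adjacent to $a$ but not $b$, forces the adjacency pattern claimed.

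Second, I would reduce the proposition to showing that the last vertex $v_p$ in $\sigma$ is simplicial in $G$. Once this is established, a straightforward induction on $|V|$ applies: the prefix $(v_1, \ldots, v_{p-1})$ is itself a LexBFS ordering of the chordal graph $G - v_p$, since the partition-refinement decisions up to step $p-1$ do not depend on the still unprocessed vertex $v_p$. Combining simpliciality of $v_p$ in $G$ with the induction hypothesis applied to $G - v_p$ yields the perfect elimination ordering condition (as in the paper's Definition) for the entire $\sigma$ on $G$.

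Third, to show $v_p$ is simplicial, I would argue by contradiction: suppose there exist $v_i, v_j \in \nb(v_p)$ with $i < j < p$ and $v_iv_j \notin E$. Applying the $4$-point property to $(v_i, v_j, v_p)$ produces $v_{l_1}$ with $l_1 < i$, $v_{l_1}v_j \in E$, and $v_{l_1}v_p \notin E$. If $v_{l_1}v_i \in E$, then $v_{l_1}, v_i, v_p, v_j, v_{l_1}$ is an induced $4$-cycle, since its two potential chords $v_{l_1}v_p$ and $v_iv_j$ are both non-edges, contradicting chordality. Otherwise I would re-apply the $4$-point property to $(v_{l_1}, v_i, v_j)$ and iterate, producing a strictly decreasing chain of indices $l_1 > l_2 > \cdots$ with alternating adjacency patterns relative to $v_i$ and $v_j$. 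At each step, an adjacency between the newly found vertex and an earlier one closes an induced $4$- or $5$-cycle; non-adjacency forces further descent. Since the indices are bounded below by $1$, the process must terminate, yielding a chordless cycle of length at least $4$ and the desired contradiction.

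The main obstacle is the bookkeeping in the iterative step: at each application of the $4$-point property, one must track adjacency and non-adjacency with every previously constructed vertex in order to verify that the candidate cycle is truly chordless. I expect that an appropriate inductive invariant---describing the structure of the growing ``zig-zag'' chain of vertices together with their adjacencies to $\{v_i, v_j, v_p\}$---will make the termination argument clean.
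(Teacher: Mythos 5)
The paper does not prove Proposition~\ref{prop:lex-bfs-perfect-ordering}; it is imported as a black box from \citet{Rose1976Algorithmic}, so your attempt can only be measured against the classical argument --- which is exactly the route you take. The architecture is sound: the four-point property you state is correct (if $a <_\sigma b <_\sigma c$ with $ac \in E$ and $ab \notin E$, then some $d <_\sigma a$ satisfies $db \in E$ and $dc \notin E$), and the reduction to ``the last vertex of $\sigma$ is simplicial in $G$'' followed by induction on $|V|$ is legitimate, since deleting the last-visited vertex does not perturb the partition refinement of the remaining ones. One small imprecision: your justification of the four-point property compares the cells of $a$ and $b$, whereas the relevant comparison is between the labels of $b$ and of $c$ at the moment $b$ is dequeued ($c$'s label contains $a$, $b$'s does not, so $b$'s label must win at a position earlier than $\sigma^{-1}(a)$, which is precisely the vertex $d$).

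The genuine gap is in your third step, which is the hard core of the Rose--Tarjan--Lueker proof and which you explicitly defer (``I expect that an appropriate inductive invariant \ldots will make the termination argument clean''). Two concrete problems. First, after producing $v_{l_2}$ with $v_{l_2}v_i \in E$ and $v_{l_2}v_j \notin E$, the case $v_{l_2}v_p \in E$ is not covered: it closes no induced cycle of the form you describe, but instead produces a \emph{new} violating pair $(v_{l_2}, v_j)$ of non-adjacent earlier neighbours of $v_p$, and ruling this out requires an extremal choice of the initial counterexample (for example, minimizing the index of the earliest of the two non-adjacent neighbours) that your plan never makes. Second, the claim that an adjacency between the newly found vertex and an earlier one ``closes an induced $4$- or $5$-cycle'' fails once the zig-zag chain has length three or more: closing back onto a non-consecutive chain vertex yields a longer cycle, and certifying that it is chordless requires controlling the adjacency of \emph{every} pair of chain vertices as well as their adjacency to $v_p$ --- which is precisely the invariant you have not formulated or propagated. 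Until that invariant (roughly: each chain vertex is adjacent to exactly one of $v_i, v_j$, to its chain neighbours, and to neither $v_p$ nor any non-consecutive chain vertex) is stated and verified at every step of the descent, the argument is a plan rather than a proof, even though the plan is the right one and is known to be completable.
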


\begin{corollary}
    \label{cor:chordal-dag}
    Let $G$ be an undirected chordal graph with a \LexBFS-ordering $\sigma$.  A DAG $D \subset G$ with $D^u = G$ that is oriented according to $\sigma$ has no v-structures.
\end{corollary}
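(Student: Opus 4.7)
The plan is to chain together the two immediately preceding results. By Proposition \ref{prop:lex-bfs-perfect-ordering}, the \LexBFS-ordering $\sigma$ of the chordal graph $G$ is a perfect elimination ordering on $G$. Since $D^u = G$, this means $\sigma$ is also a perfect elimination ordering on the skeleton of $D$. On the other hand, the assumption that $D$ is oriented according to $\sigma$ means precisely that $\sigma$ is a topological ordering of $D$.

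So I would apply Lemma \ref{lem:dag-perfect-ordering} in the ``perfect elimination ordering $\Rightarrow$ no v-structures'' direction: the DAG $D$ possesses a topological ordering (namely $\sigma$) which is simultaneously a perfect elimination ordering of its skeleton, and hence $D$ contains no v-structure. The remark following Lemma \ref{lem:dag-perfect-ordering} (that the property propagates from one topological ordering to all) is not needed here, since we only require the implication at the specific ordering $\sigma$.

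The only potential subtlety is to make sure the notion of ``perfect elimination ordering'' is applied to the correct graph in each invocation (it is a property of an \emph{undirected} graph, so we apply it to $G = D^u$ in both results). There is no real obstacle beyond unpacking definitions, so the corollary follows in essentially one line once the two preceding results are cited in the right order.
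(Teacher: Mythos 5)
Your proof is correct and matches the paper's argument: the paper simply notes that Corollary \ref{cor:chordal-dag} follows from Lemma \ref{lem:dag-perfect-ordering} and Proposition \ref{prop:lex-bfs-perfect-ordering}, which is exactly the chain you carry out. Your care about applying ``perfect elimination ordering'' to the undirected skeleton $G = D^u$ is a sensible clarification but does not change the substance.
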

Corollary \ref{cor:chordal-dag} is a consequence of  Lemma \ref{lem:dag-perfect-ordering} and Proposition \ref{prop:lex-bfs-perfect-ordering}.  According to this corollary, \LexBFS-orderings can be used for constructing representatives of essential graphs (see Proposition \ref{prop:construction-representative}).  Corollary \ref{cor:chordal-dag} as well as Algorithm \ref{alg:lex-bfs} are therefore of great importance for the proofs and algorithms of Sections \ref{sec:essential-graphs} and \ref{sec:greedy-search}.

Figure \ref{fig:ex-lex-bfs} shows an undirected chordal graph $G$ and a DAG $D$ that has the skeleton $G$ and is oriented according to a \LexBFS-ordering $\sigma$ of $G$.  The functioning of Algorithm \ref{alg:lex-bfs} when producing a \LexBFS-ordering on $G$ is illustrated in Table \ref{tab:ex-lex-bfs}.  Note that the ``sets'' in $\Sigma$ are written as tuples.  We use this notation to ensure that we can always remove the first (leftmost) vertex from the first ``set'' of $\Sigma$ (line \ref{ln:remove} in Algorithm \ref{alg:lex-bfs}), and that we keep the relative order of vertices when moving them from one set $S$ to the preceding one, $T$, in $\Sigma$ (line \ref{ln:move} in Algorithm \ref{alg:lex-bfs}).  Throughout the text, we always assume an implementation of Algorithm \ref{alg:lex-bfs} in which the data structure used to represent the ``sets'' in the sequence $\Sigma$ guarantees this ``first in, first out'' (FIFO) behavior.  In particular, the start sequence $(v_1, v_2, \ldots, v_p)$ of the vertices in 
$V$ provided to the algorithm determines the vertex the \LexBFS-ordering $\sigma := \LexBFS((v_1, \ldots, v_p), E)$ starts with: $\sigma(1) = v_1$.  It is often sufficient to specify the start order of \LexBFS{} up to arbitrary orderings of some subsets of vertices.  For a set $A = \{a_1, \ldots, a_k\} \subset V$ and an additional vertex $v \in V \setminus A$, for example, we use the notation
$$
    \LexBFS((A, v, V \setminus (A \cup \{v\})), E), \quad \text{ or even } \quad \LexBFS((A, v, \ldots), E)
$$
to denote a \LexBFS-ordering produced from a start order of the form $(a_1, \ldots, a_k, v, \ldots)$, without specifying the orderings of $A$ and $V \setminus (A \cup \{v\})$.

By using appropriate data structures (for example, doubly linked lists for the representation of $\Sigma$ and its sets, and a pointer at each vertex pointing to the set in $\Sigma$ in which it is contained), Algorithm \ref{alg:lex-bfs} has complexity $O(|E| + |V|)$ \citep{Corneil2005Lexicographic}.

For the rest of this section, we state further consequences of Lemma \ref{lem:dag-perfect-ordering} and Proposition \ref{prop:lex-bfs-perfect-ordering} which are relevant for the proofs of Sections \ref{sec:essential-graphs} and \ref{sec:greedy-search}.

\begin{figure}[t]
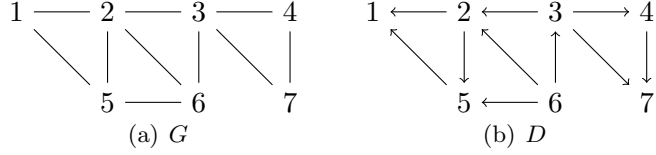

    \centering
    \subfigure[$G$]{%
        \begin{exgraphpicture}
            \draw (v5) -- (v1) -- (v2) -- (v5) -- (v6) -- (v2) -- (v3) -- (v4) -- (v7) -- (v3) -- (v6);
        \end{exgraphpicture}
    } \quad
    \subfigure[$D$]{%
        \begin{exgraphpicture}
            \draw[->] (v2) -- (v1);
            \draw[->] (v5) -- (v1);
            \draw[->] (v3) -- (v2);
            \draw[->] (v6) -- (v2);
            \draw[->] (v6) -- (v3);
            \draw[->] (v3) -- (v4);
            \draw[->] (v2) -- (v5);
            \draw[->] (v6) -- (v5);
            \draw[->] (v3) -- (v7);
            \draw[->] (v4) -- (v7);
        \end{exgraphpicture}
    }
    \caption{An undirected, chordal graph $G = ([7], E)$ and the DAG $D$ we get by orienting all edges of $G$ according to the ordering $\sigma := \LexBFS((6, 3, 1, 2, 4, 5, 7), E)$.}
    \label{fig:ex-lex-bfs}
\end{figure}

\begin{table}[t]
    \centering
    \begin{tabular}{lll}
        $i$ & $\Sigma$ & $\sigma$ \\
        \hline
        $0$ & $((6, 3, 1, 2, 4, 5, 7))$     & $()$ \\
        $1$ & $((3, 2, 5), (1, 4, 7))$      & $(6)$ \\
        $2$ & $((2), (5), (4, 7), (1))$     & $(6, 3)$ \\
        $3$ & $((5), (4, 7), (1))$          & $(6, 3, 2)$ \\
        $4$ & $((4, 7), (1))$               & $(6, 3, 2, 5)$ \\
        $5$ & $((7), (1))$                  & $(6, 3, 2, 5, 4)$ \\
        $6$ & $((1))$                       & $(6, 3, 2, 5, 4, 7)$ \\
        $7$ & $()$                          & $(6, 3, 2, 5, 4, 7, 1)$
    \end{tabular}
    \caption{State of the sequences $\Sigma$ and $\sigma$ after the $i\supscr{th}$ run ($i = 0, \ldots, 7$) of the while loop (lines \ref{ln:while-start} to \ref{ln:while-end}) of Algorithm \ref{alg:lex-bfs} applied to the graph $G$ of Figure \ref{fig:ex-lex-bfs} with start order $(6, 3, 1, 2, 4, 5, 7)$.}
    \label{tab:ex-lex-bfs}
\end{table}

\begin{corollary}
    \label{cor:chordal-graph-different-orientations}
    Let $G = (V, E)$ be an undirected chordal graph, and let $a \grline b \in G$.  There exist DAGs $D_1$ and $D_2$ with $D_1, D_2 \subset G$ and $D_1^u = D_2^u = G$ without v-structures such that $a \grarright b \in D_1$ and $a \grarleft b \in D_2$.
\end{corollary}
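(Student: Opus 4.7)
The plan is to exhibit $D_1$ and $D_2$ as DAGs obtained by orienting $G$ according to two carefully chosen \LexBFS-orderings of $G$. Specifically, I would appeal to Corollary \ref{cor:chordal-dag}, which already tells us that any orientation of a chordal graph according to a \LexBFS-ordering is a DAG without v-structures whose skeleton is $G$. So the only work is to exhibit two \LexBFS-orderings that place $a$ before $b$, and $b$ before $a$, respectively.

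For $D_1$, run $\sigma_1 := \LexBFS((a, b, \ldots), E)$, using a start order that begins with $a$. By construction of Algorithm \ref{alg:lex-bfs}, the first vertex removed from $\Sigma$ is $a$, so $\sigma_1(1) = a$. Consequently $a \leq_{\sigma_1} v$ for every other vertex $v$; in particular, $a$ is a source in any DAG $D_1 \subset G$ with $D_1^u = G$ that is oriented according to $\sigma_1$, so the edge $a \grline b \in G$ becomes $a \grarright b \in D_1$. Symmetrically, let $\sigma_2 := \LexBFS((b, a, \ldots), E)$ and let $D_2$ be the DAG with $D_2 \subset G$, $D_2^u = G$, oriented according to $\sigma_2$; then $b$ is a source in $D_2$ and hence $a \grarleft b \in D_2$.

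To finish, apply Corollary \ref{cor:chordal-dag} to $\sigma_1$ and $\sigma_2$: since $G$ is chordal and $\sigma_1$, $\sigma_2$ are \LexBFS-orderings, both $D_1$ and $D_2$ are v-structure-free DAGs with the required properties. No further verification is needed because the edge between $a$ and $b$ is oriented purely by the relative order of the endpoints in the chosen ordering, and \LexBFS applied with a specified first vertex places that vertex first in its output. The only subtlety worth noting, rather than a genuine obstacle, is that we are relying on the ``FIFO'' convention for \LexBFS described in Section \ref{sec:perfect-elimination-orderings}, under which a prescribed starting vertex indeed becomes $\sigma(1)$; once this is granted the argument is immediate.
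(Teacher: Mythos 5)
Your proof is correct and follows essentially the same route as the paper: both arguments run \LexBFS{} once with $a$ as the prescribed first vertex and once with $b$, then invoke Corollary \ref{cor:chordal-dag} to conclude that the resulting orientations are v-structure-free DAGs with skeleton $G$, the orientation of the edge between $a$ and $b$ being forced because the start vertex is a source. The only cosmetic difference is that the paper writes the start orders as $(a, V \setminus \{a\})$ and $(b, V \setminus \{b\})$ rather than $(a, b, \ldots)$ and $(b, a, \ldots)$, which changes nothing.
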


\begin{proof}
    Set $\sigma_1 := \LexBFS((a, V \setminus \{a\}), E)$ and $\sigma_2 := \LexBFS((b, V \setminus \{b\}), E)$, and let $D_1$ and $D_2$ be two DAGs with skeleton $G$ and oriented according to $\sigma_1$ and $\sigma_2$, resp.  Then, by Corollary \ref{cor:chordal-dag}, $D_1$ and $D_2$ have the requested properties; in particular, all edges point away from $a$ in $D_1$, whereas all edges point away from $b$ in $D_2$.
\end{proof}
\vspace{-3mm}
\begin{corollary}[\citealp{Andersson1997Characterization}]
    \label{cor:lex-bfs-clique}
    Let $G = (V, E)$ be an undirected chordal graph, $a \in V$ and $C \subset \nb(a)$.  Then there is a DAG $D \subset G$ with $D^u = G$ and $\{b \in \nb(a) \spst b \grarright a \in D\} = C$ that has no v-structures if and only if $C$ is a clique.
\end{corollary}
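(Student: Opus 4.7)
The proof naturally splits into the two directions of the biconditional.

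For the forward direction, I would assume such a DAG $D$ exists and observe that for any two distinct vertices $b, c \in C$, both $b \grarright a$ and $c \grarright a$ lie in $D$. If $b$ and $c$ were non-adjacent in $D$, then $b \grarright a \grarleft c$ would be an induced subgraph of $D$, i.e., a v-structure, contradicting the assumption. Hence every pair in $C$ is adjacent in $D$, so (since $D^u = G$) $C$ is a clique in $G$. This is a short, direct argument.

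For the backward direction, I would use $\LexBFS{}$ together with Corollary \ref{cor:chordal-dag} to construct $D$. Enumerate $C = \{c_1, \ldots, c_k\}$ in an arbitrary order, and let
\[
\sigma := \LexBFS\bigl((c_1, \ldots, c_k, a, V \setminus (C \cup \{a\})), E\bigr),
\]
letting $D$ be the DAG with $D^u = G$ oriented according to $\sigma$. Since $G$ is chordal, $\sigma$ is a perfect elimination ordering by Proposition \ref{prop:lex-bfs-perfect-ordering}, so by Corollary \ref{cor:chordal-dag} $D$ has no v-structures. The crux is to verify that $\{b \in \nb(a) \spst b \grarright a \in D\} = C$, equivalently, that in $\sigma$ every vertex of $C$ precedes $a$ and every vertex of $\nb(a) \setminus C$ follows $a$.

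The main obstacle is this verification, which hinges on analyzing how $\LexBFS{}$ evolves the partition sequence $\Sigma$ under the FIFO implementation described in Appendix \ref{sec:perfect-elimination-orderings}. I would argue inductively: after $\LexBFS{}$ processes $c_1, \ldots, c_i$ for $i < k$, the vertex $a$ and the remaining $c_{i+1}, \ldots, c_k$ still lie in the first set of $\Sigma$ in their original relative order, because $C$ is a clique (so each $c_j$ with $j > i$ is a neighbor of $c_i$ and thus gets moved forward rather than left behind) and each $c_i$ is adjacent to $a$ (so $a$ is likewise carried forward). Hence $\sigma$ begins with $c_1, \ldots, c_k, a$ and thus every $c \in C$ precedes $a$. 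Conversely, any $b \in \nb(a) \setminus C$ lies strictly after $a$ in $\sigma$: such a $b$ cannot be extracted from $\Sigma$ before $a$, because by induction on the extraction steps, any vertex pulled to the very front of $\Sigma$ before $a$ is either some $c_i \in C$ or was moved forward only because of a predecessor extraction that also moved $a$ forward to the same or an earlier set. Consequently $\pa_D(a) \cap \nb_G(a) = C$ exactly, completing the construction. With these observations in hand, both directions fit into a few lines and the corollary follows.
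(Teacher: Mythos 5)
Your proposal is correct and follows essentially the same route as the paper: the forward direction via the induced v-structure $b \grarright a \grarleft c$, and the backward direction by running \LexBFS{} with start order $(c_1, \ldots, c_k, a, \ldots)$ and using the FIFO convention to show inductively that the ordering begins with $c_1, \ldots, c_k, a$, after which Corollary \ref{cor:chordal-dag} finishes the job. The only difference is that your final verification that every $b \in \nb(a) \setminus C$ follows $a$ is more elaborate than needed — once $\sigma$ is known to begin with $c_1, \ldots, c_k, a$, every vertex outside $C \cup \{a\}$ automatically occurs after $a$, so all those edges point away from $a$.
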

\vspace{-2mm}
\begin{proof}
    \emph{``$\Rightarrow$'':} Assume that there are non-adjacent vertices $b, c \in C$.  Then, $b \grline a \grline c$ is an induced subgraph of $G$, and by construction, the same vertices occur in configuration $b \grarright a \grarleft c$ in $D$, which means that $D$ has a v-structure, a contradiction.
    
    \proofitem{``$\Leftarrow$''} Let $(c_1, \ldots, c_k)$ be an arbitrary ordering of $C$.  Run \LexBFS{} on a start order of the form $(c_1, \ldots, c_k, a, \ldots)$.  After the first run of the while loop (lines \ref{ln:while-start} to \ref{ln:while-end} of Algorithm \ref{alg:lex-bfs}), $\sigma = (c_1)$, and the first set in the sequence $\Sigma$ contains $(C \cup \{a\}) \setminus \{c_1\}$ as a subset (all vertices in this set are adjacent to $c_1$), in an unchanged order $c_2, \ldots, c_k, a$ due to our FIFO convention.  After the second run of the while loop, $\sigma = (c_1, c_2)$, and the first set in $\Sigma$ contains $(C \cup \{a\}) \setminus \{c_1, c_2\}$, and so on.  In the end, we get a \LexBFS-ordering of the form $\sigma = (c_1, \ldots, c_k, a, \ldots)$.  Orienting the edges of $G$ according to $\sigma$ yields a DAG with the requested properties by Corollary \ref{cor:chordal-dag}.
\end{proof}

\begin{figure}[h!]
    \centering
    \includegraphics{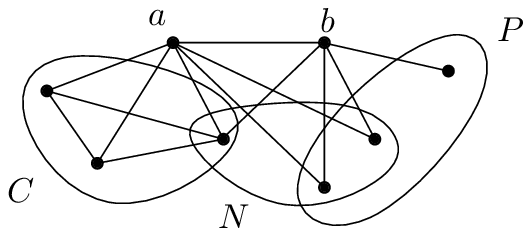}
    \caption{Configuration of vertices in Proposition \ref{prop:lex-bfs-clique-neighborhood}.}
    \label{fig:configuration-neighborhood}
\end{figure}

\begin{proposition}
    \label{prop:lex-bfs-clique-neighborhood}
    Let $G = (V, E)$ be an undirected, chordal graph, $a \grline b \in G$, and $C \subset \nb_G(a) \setminus \{b\}$ a clique.  Let $N := \nb_G(a) \cap \nb_G(b)$, and assume that $C \cap N$ separates $C \setminus N$ and $N \setminus C$ in $G[\nb_G(a)]$ (see Figure \ref{fig:configuration-neighborhood}).  Then there exists a DAG $D \subset G$ with $D^u = G$ such that
    \begin{subprop}
        \item \label{itm:no-v-structures} $D$ has no v-structures;
        \item \label{itm:orientation-inside-C} all edges in $D[C \cup \{a\}]$ point towards $a$;
        \item \label{itm:orientation-outside-C} all other edges of $D$ point away from vertices in $C \cup \{a\}$ (in particular, $a \grarright b \in D$);
        \item \label{itm:orientation-neighbors-b} $b \grarright d \in D$ for all $d \in P := \nb_G(b) \setminus (C \cup \{a\})$.
    \end{subprop}
\end{proposition}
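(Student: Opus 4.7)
The plan is to take $D$ as the orientation of $G$ induced by a carefully chosen \LexBFS-ordering of $G$. Concretely, I run Algorithm \ref{alg:lex-bfs} on $G$ with any start order of the form $\sigma_0 = (c_1, \ldots, c_k, a, b, \ldots)$, where $c_1, \ldots, c_k$ enumerates $C$ and the remaining vertices of $V \setminus (C \cup \{a,b\})$ come in any order after $b$. Let $\sigma$ denote the resulting \LexBFS-ordering and define $D$ by orienting every edge of $G$ from earlier to later $\sigma$-position. Since $G$ is chordal, Proposition \ref{prop:lex-bfs-perfect-ordering} and Corollary \ref{cor:chordal-dag} guarantee that $D$ has no v-structures, which is (i).

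Next I would verify that the first $k+1$ vertices of $\sigma$ are $c_1, \ldots, c_k, a$ in that order. Since $C$ is a clique and $C \subseteq \nb_G(a)$, after $c_1$ is removed at step 1 the vertices $c_2, \ldots, c_k, a$ are promoted together into the new first set in their original relative order (this is where the FIFO convention of Appendix \ref{sec:perfect-elimination-orderings} enters). A straightforward induction on $i \leq k$ shows that at the beginning of step $i+1$ the first set of $\Sigma$ starts with $c_{i+1}, \ldots, c_k, a$, and so the $(i+1)$-st removed vertex is $c_{i+1}$ for $i < k$ and $a$ for $i = k$. Properties (ii) and (iii) are now immediate: every edge incident to $a$ inside $C \cup \{a\}$ runs from a $c_i$ of position $\leq k$ to $a$ at position $k+1$, and every edge leaving $C \cup \{a\}$ runs from position $\leq k+1$ to position $\geq k+2$.

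The main obstacle is (iv): every $d \in P$ must be $\sigma$-later than $b$. For each unprocessed vertex $v$ after step $k+1$, let $L_v \subseteq \{1,\ldots,k+1\}$ record the indices of already-processed vertices adjacent to $v$; the set-structure of $\Sigma$ partitions unprocessed vertices by their $L$-values, with lex-greater labels (comparing sorted-decreasing sequences) sitting in strictly earlier sets. Directly, $L_b = \{k+1\} \cup \{i \leq k : c_i \in C \cap N\}$. For $d \in Q := P \setminus \nb_G(a)$ one has $k+1 \notin L_d$, so $L_b$'s sorted-decreasing sequence begins with $k+1$ while $L_d$'s begins with some index $\leq k$, making $L_b$ lex-greater. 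For $d \in P \cap \nb_G(a) = N \setminus C$ the separation hypothesis is used in an essential way: a direct edge $d \grdots c$ with $c \in C \setminus N$ would be a one-hop path in $G[\nb_G(a)]$ from $N \setminus C$ to $C \setminus N$ avoiding $C \cap N$, contradicting the assumption; hence $\nb_G(d) \cap C \subseteq C \cap N$ and $L_d \subseteq L_b$. If $L_d \subsetneq L_b$, then at each refinement step corresponding to an index in $L_b \setminus L_d$ the vertex $b$ was promoted while $d$ was not, so $b$'s set ends up strictly before $d$'s set in $\Sigma$. If $L_d = L_b$, then $b$ and $d$ lie in the same set, and since $b$ precedes $d$ in $\sigma_0$ the FIFO convention places $b$ first. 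In all cases $b$ is $\sigma$-earlier than $d$, whence $b \grarright d \in D$.

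The only delicate point is the lexicographic analysis for $d \in N \setminus C$; the separation hypothesis is exactly what is needed to forbid the existence of a neighbor of $d$ in $C \setminus N$, which would otherwise contribute an index to $L_d$ absent from $L_b$ and force $d$ ahead of $b$ in $\sigma$, breaking (iv).
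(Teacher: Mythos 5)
Your construction and your treatment of (i)--(iii) coincide with the paper's: both run \LexBFS{} with a start order of the form $(C, a, b, \ldots)$ and read off (i)--(iii) exactly as in Corollary \ref{cor:lex-bfs-clique}. The gap is in (iv), in the subcase $d \in N \setminus C$ with $L_d = L_b$. Your label analysis is a snapshot taken after the first $k+1$ vertices $c_1, \ldots, c_k, a$ have been processed; at that moment $b$ and $d$ sit in the same set with $b$ ahead by FIFO. But \LexBFS{} keeps refining afterwards: if some vertex $w \notin C \cup \{a\}$ with $w <_\sigma b$ is adjacent to $d$ but not to $b$, then at the step processing $w$ the common set is split and $d$ is moved into a new set \emph{in front of} the one still containing $b$, after which $d$ is removed before $b$ and (iv) fails. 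The FIFO convention only preserves the relative order of vertices that move (or stay) together; it says nothing once one of them is promoted and the other is not. Your argument never excludes such a $w$ --- you only exclude a direct edge from $d$ to a vertex of $C \setminus N$, which is the special case $w \in C$.

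Ruling out such a $w$ is where essentially all the work in the paper's proof lies. One must ask why $w$ itself satisfies $w <_\sigma b$: either $w \in C \cup \{a\}$, or $w$ was in turn promoted past $b$ by some $w'$ adjacent to $w$ but not to $b$ with $w' <_\sigma b$, and so on. Iterating this produces a chain $d \equiv v_1, v_2, \ldots, v_m$ ending at some $v_m \in C \setminus N$, and the paper then uses \emph{chordality} to show that every $v_i$ is adjacent to $a$, so that the chain is a path from $N \setminus C$ to $C \setminus N$ inside $G[\nb_G(a)]$ avoiding $C \cap N$ --- only at that point does the separation hypothesis yield a contradiction. The hypothesis forbids \emph{paths}, not merely edges, precisely because of these higher-order promotions; the fact that your proof of (iv) never invokes chordality is the symptom of the missing step. (Your handling of $d \in P \setminus \nb_G(a)$ and of the strict-inclusion case $L_d \subsetneq L_b$ is fine, since once $b$'s set strictly precedes $d$'s set the later refinements cannot reverse the order of the two lineages.)
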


\begin{proof}
    Set $\sigma := \LexBFS((C, a, b, \ldots), E)$, and let $D$ be the DAG that we get by orienting the edges of $G$ according to $\sigma$.  As in Corollary \ref{cor:lex-bfs-clique}, properties \ref{itm:no-v-structures} to \ref{itm:orientation-outside-C} are met.
    
    It remains to show that $b$ occurs before any $d \in P$ in $\sigma$ (that means $b <_\sigma d \spforall d \in P$) in order that $D$ obeys property \ref{itm:orientation-neighbors-b}.   W.l.o.g., we can assume $C = \{1, 2, \ldots, k\}$, $a = k + 1$ and $b = k + 2$.  The start order of the vertices for \LexBFS{} is then $(1, 2, \ldots, p)$.  Due to the FIFO convention for the sets of the sequence $\Sigma$ in Algorithm \ref{alg:lex-bfs}, $b$ always precedes any $d \in P$ whenever they appear in the same set; hence we only must show that the set containing $b$ is never preceded by a set containing some $d \in P$ in $\Sigma$.
    
    Suppose, for the sake of contradiction, that this is the case for some $d \in P$; name $v_1 := d$.  At the beginning, $b$ is in the same set as $v_1$ in the sequence $\Sigma$; there is some vertex $v_2$ that forces \LexBFS{} to move $v_1$ into the set preceding the one containing $b$.  A careful inspection of Algorithm \ref{alg:lex-bfs} shows that $v_2$ is the vertex which is minimal w.r.t.\ $\leq_\sigma$ in
    $$
        S(v_1) := \{v \in V \spst v \in \nb_G(v_1) \setminus \nb_G(b), v <_\sigma b\}.
    $$
    If $v_2 > b$ (that is, if $v_2 \notin C \cup \{a\}$ due to our convention), $v_2$, as $v_1$, always follows $b$ whenever they are in the same set in $\Sigma$.  Therefore, $v_2 <_\sigma b$ implies that there is some vertex $v_3$ that moves $v_2$ in the set preceding the one of $b$ in $\Sigma$ during the execution of \LexBFS; as before, we see that this is the vertex which is minimal w.r.t.\ $\leq_\sigma$ in $S(v_2)$.
    
    We can now continue to construct this sequence $v_{i+1} := \min S(v_i)$ (always taking the minimum w.r.t.\ $\leq_\sigma$) until we find some vertex $v_m < b$; this is a vertex in $C \cup \{a\}$.  Even more, $v_m \in C \setminus N$, since, by definition of $S(v_{m-1})$, we only consider vertices that are not adjacent to $b$.  We now have constructed a path $\gamma = (v_1, \ldots, v_m)$ of length $m \geq 2$ in $G$ such that $v_1 \in P$, $v_i \notin \nb_G(b) \spforall i > 1$, $v_i > b \spforall i < m$ and $v_m \in C \setminus N$; furthermore, we have $v_m <_\sigma \ldots <_\sigma v_1 <_\sigma b$.  The path $\gamma$ can be elongated to a cycle $(a, v_0 := b, v_1, v_2, \ldots, v_m, a)$:
    \begin{center}
        \begin{tikzpicture}
            \tikzstyle{every node}=[anchor = base]
            \node (a) at (120:\exgredge) {$a$};
            \node (b) at (60:\exgredge) {$b$};
            \node (v1) at (0:\exgredge) {$v_1$};
            \node (vm-1) at (240:\exgredge) {$v_{m-1}$};
            \node (vm) at (180:\exgredge) {$v_m$};
            \draw (v1) -- (b);
            \draw (a.mid east) -- (b.mid west);
            \draw (a) -- (vm) -- (vm-1);
            \draw[densely dashed] (vm-1) -- (300:\exgredge) -- (v1);
            \draw[densely dotted] (1.7\exgredge, 0.4\exgredge)  -- (1.4\exgredge, 0.5\exgredge);
            \draw (1.4\exgredge, 0.5\exgredge) .. controls (0.5\exgredge, 0.7\exgredge) and (0.5\exgredge, -0.7\exgredge) .. (1.4\exgredge, -0.5\exgredge);
            \draw[densely dotted] (1.4\exgredge, -0.5\exgredge) -- (1.7\exgredge, -0.4\exgredge);
            \node (P-N) at (1.7\exgredge, 0.6\exgredge) {$P$};
            \draw[densely dotted] (-1.7\exgredge, 0.4\exgredge)  -- (-1.4\exgredge, 0.5\exgredge);
            \draw (-1.4\exgredge, 0.5\exgredge) .. controls (-0.5\exgredge, 0.7\exgredge) and (-0.5\exgredge, -0.7\exgredge) .. (-1.4\exgredge, -0.5\exgredge);
            \draw[densely dotted] (-1.4\exgredge, -0.5\exgredge) -- (-1.7\exgredge, -0.4\exgredge);
            \node (P-N) at (-1.7\exgredge, 0.5\exgredge) {$C$};
        \end{tikzpicture}
    \end{center}
    We now claim that $v_i \grline a \in G$ for all $0 \leq i \leq m$.  This is clearly the case for $i = 0$ and $i = m$ by construction.  Assume, for the sake of contradiction, that there is some $i$, $0 < i < m$, that is not adjacent to $a$.  Let $r$ be the \emph{largest} index smaller than $i$ such that $v_r \grline a \in G$ and $s$ be the \emph{smallest} index larger than $i$ such that $v_s \grline a \in G$.  Then the following is an \emph{induced} subgraph of $G$:
    \begin{center}
        \begin{tikzpicture}
            \tikzstyle{every node}=[anchor = base]
            \node (a) at (90:\exgredge) {$a$};
            \node (vr) at (0:\exgredge) {$v_r$};
            \node (vr+1) at (300:\exgredge) {$v_{r+1}$};
            \node (vs) at (180:\exgredge) {$v_s$};
            \draw (vs) -- (a) -- (vr) -- (vr+1);
            \draw[densely dashed] (vr+1) -- (240:\exgredge) -- (vs);
        \end{tikzpicture}
    \end{center}
    Note that a chord between different $v_l$'s, say, a chord of the form $v_l \grline v_{l + h}$ with $h \geq 2$, would violate the minimality of $v_{l+1}$ in the set $S(v_l)$.  This means that $G$ contains an induced cycle of length $4$ or more, contradicting the chordality.
    
    This proves the claim that $v_i \grline a \in G$ for all $0 \leq i \leq m$, or, in other words, $v_i \in \nb_G(a)$ for all $0 \leq i \leq m$.  Hence $v_1 \in N \setminus C$, and $\gamma$ is a path from $N \setminus C$ to $C \setminus N$ in $G[\nb_G(a)]$ that has no vertex in $C \cap N$, in contradiction with the assumption.
\end{proof}

\begin{proposition}
    \label{prop:chain-graph-orientation}
    Let $G = (V, E)$ be a chain graph with chordal chain components that does not contain $a \grarright b \grline c$ as an induced subgraph, and let $D \subset G$ be a digraph with $D^u = G^u$.  $D$ is acyclic and has the same v-structures as $G$ if and only if $D[T]$ is oriented according to a perfect elimination ordering for each chain component $T \in \mathbf{T}(G)$.
\end{proposition}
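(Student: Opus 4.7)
The plan is to prove both implications by exploiting the structure of chain graphs: edges within a chain component appear as lines in $G$ (so $D$ may pick either orientation), while edges between chain components are arrows in $G$ (so $D$ inherits them unchanged, since $D \subset G$ and $D^u = G^u$ force every arrow of $G$ to lie in $D$ with the same orientation, and every line of $G$ to be replaced by exactly one of its two possible orientations in $D$).

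For the forward direction ($\Rightarrow$), assume $D$ is acyclic and shares the v-structures of $G$. Each $D[T]$ is a DAG. Suppose for contradiction that $D[T]$ contained a v-structure $a \grarright b \grarleft c$. Since $a, b, c \in T$, the corresponding edges of $G$ are lines, so $G[\{a, b, c\}]$ is $a \grline b \grline c$ (with $a, c$ non-adjacent, inherited from $D[T]$). Hence $G$ has no v-structure at $b$ using $a, c$, while $D$ does: contradiction. Thus $D[T]$ has no v-structures, and by Lemma \ref{lem:dag-perfect-ordering} any topological ordering of $D[T]$ is a perfect elimination ordering.

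For the backward direction ($\Leftarrow$), assume each $D[T]$ is oriented according to a perfect elimination ordering; by Lemma \ref{lem:dag-perfect-ordering}, each $D[T]$ is then a DAG with no v-structures. Acyclicity of $D$ follows because the chain components carry the partial order $\preceq_G$ (induced by $G$-arrows, which $D$ inherits). Any directed cycle of $D$ would visit chain components $T_G(v_1), T_G(v_2), \ldots, T_G(v_k), T_G(v_1)$ with $T_G(v_i) \preceq_G T_G(v_{i+1})$ along each edge (with equality for edges inside a chain component and strict inequality otherwise), forcing all these components to coincide by antisymmetry; the cycle would then live in a single $D[T]$, contradicting that $D[T]$ is a DAG.

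It remains to check that $D$ and $G$ have the same v-structures. Any v-structure $a \grarright b \grarleft c$ of $G$ consists of arrows, which $D$ inherits, so it is a v-structure of $D$. Conversely, let $a \grarright b \grarleft c$ be a v-structure of $D$, so $a, c$ are non-adjacent in $D^u = G^u$; I distinguish cases according to the types of the edges $ab$ and $bc$ in $G$. If both are arrows of $G$, then $G$ has the same v-structure. If exactly one is a line of $G$ (say $a \grarright b \in G$, $b \grline c \in G$), then $a \grarright b \grline c$ is an induced subgraph of $G$, contradicting the hypothesis. If both are lines of $G$, then $a, b, c$ lie in a common chain component $T$ and the v-structure lives inside $D[T]$, contradicting that $D[T]$ is v-structure free. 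The main subtlety lies in this last case analysis, since it is the only place where the forbidden-induced-subgraph hypothesis on $G$ is actually invoked, and it is what rules out the middle case that would otherwise create a spurious v-structure in $D$.
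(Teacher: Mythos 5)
Your proof is correct and follows essentially the same route as the paper's: the forward direction via the absence of v-structures in $D[T]$ plus Lemma \ref{lem:dag-perfect-ordering}, and the backward direction by reducing any directed cycle of $D$ to a violation of the chain-graph property of $G$ and handling the three v-structure cases exactly as the paper does (with the forbidden induced subgraph $a \grarright b \grline c$ ruling out the mixed case). Your acyclicity argument via the partial order $\preceq_G$ on $\mathbf{T}(G)$ is only a cosmetic rephrasing of the paper's observation that such a cycle would be a directed cycle in $G$.
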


\begin{proof}
    \emph{``$\Rightarrow$''}: let $T \in \mathbf{T}(G)$.  $G[T]$ obviously does not have any v-structures, hence $D[T]$ has no v-structures, either.  It follows from Lemma \ref{lem:dag-perfect-ordering} that $D[T]$ must be oriented according to a perfect elimination ordering.
    
    \proofitem{``$\Leftarrow$''} for each $T \in \mathbf{T}(G)$, $D[T]$ is acyclic by construction.  Assume that $D$ has some directed cycle $\gamma$; this cycle must reach different chain components of $G$, so it contains at least one edge $a \grarright b$ that is also present in $G$.  Because of $D \subset G$ and $D^u = G^u$, $\gamma$ is also a cycle in $G$; and since $a \grarright b \in G$, it is even a \emph{directed} cycle in $G$, a contradiction.  So $D$ is acyclic.
    
    By construction, every v-structure in $G$ is also present in $D$.  Suppose that $D$ has some v-structure $a \grarright b \grarleft c$ that $G$ has not.  $a$, $b$ and $c$ cannot belong to the same chain component of $G$ according to Lemma \ref{lem:dag-perfect-ordering}.  So, w.l.o.g., $a \grarright b \grline c$ must be an induced subgraph of $G$, contradicting the assumption.  Hence $D$ and $G$ have the same v-structures.
\end{proof}

\section{Proofs}
\label{sec:proofs}

In this appendix, the technically interested reader finds all proofs that were left out in Sections \ref{sec:model} to \ref{sec:greedy-search} for better readability.

\subsection{Proofs for Section \ref{sec:model}}
\label{sec:proofs-model}

We start with the proof of Lemma \ref{lem:intervention-densities-motivation} which motivates Definition \ref{def:intervention-densities} by showing that, for some DAG $D$ and some (conservative) family of targets \mcI, the elements of $\mathcal{M_I}(D)$ are exactly the density tuples that can be realized as interventional densities of a causal model with structure $D$.  Note that we use the conservativeness of \mcI{} only in the proof of point \ref{itm:interventions-realize-definition}; it can even be proven without assuming conservativeness, although the proof becomes harder.

\begin{proof}[Lemma \ref{lem:intervention-densities-motivation}]
    \begin{subprop}
        \item $f(x | \doop(X_I = U_I))$ obeys the Markov property of $D^{(I)}$ (Section \ref{sec:causal-calculus}).  Furthermore, for $I, J \in \mcI$ and $a \notin I \cup J$, we have
        $$
            f(x_a \spst x_{\pa_D(a)}; \doop(X_I = U_I)) = f(x_a \spst x_{\pa_D(a)}) = f(x_a \spst x_{\pa_D(a)}; \doop(X_J = U_J))
        $$
        by the truncated factorization of Equation (\ref{eqn:interventional-density}).
        
        \item Let $a \in [p]$.  Since \mcI{} is conservative, there is some $I \in \mcI$ such that $a \notin I$.  Define $h_a(x_a, x_{\pa_D(a)}) := f^{(I)}(x_a | x_{\pa_D(a)})$.  Note that, due to Definition \ref{def:intervention-densities}, the function $h_a$ does \emph{not} depend on the choice of $I$.
        
        Let $f(x) := \prod_{a = 1}^p h_a(x_a, x_{\pa_D(a)})$; this is a positive density on $\mathcal{X}$ with $f(x_a | x_{\pa_D(a)}) = h_a(x_a, x_{\pa_D(a)})$, hence $f \in \mathcal{M}(D)$ and $(D, f)$ is a causal model.
        
        By defining level densities $\tilde{f}_I(x_I) := \prod_{i \in I} f^{(I)}(x_i)$, we can construct an intervention setting $\mathcal{S} := \{(I, \tilde{f}_I)\}_{I \in \mcI}$ with the requested properties.
    \end{subprop}
    \proofnegspace
\end{proof}

The proof of the main result of Section \ref{sec:model}, the graph theoretic criterion for two DAGs being interventionally Markov equivalent (Theorem \ref{thm:interventional-markov-equivalence}), requires additional lemmas.

\begin{lemma}
    \label{lem:intervention-densities-projection}
    Let $D$ be a DAG, \mcI{} a family of targets and $I \in \mcI$ a target in this family.  Define
    $$
        \mathcal{M}^{(I)}(D) := \{f^{(I)} \spst (f^{(J)})_{J \in \mcI} \in \mathcal{M_I}(D)\}\ ,
    $$
    the projection of $\mathcal{M_I}(D)$ to the density component associated with the intervention target $I$.  Then, $\mathcal{M}^{(I)}(D) = \mathcal{M}(D^{(I)})$.
\end{lemma}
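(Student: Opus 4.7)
The plan is to prove both inclusions separately, with the forward direction $\mathcal{M}^{(I)}(D) \subset \mathcal{M}(D^{(I)})$ being essentially a definitional unpacking: any $f^{(I)}$ in the projection arises from some tuple $(f^{(J)})_{J \in \mcI} \in \mathcal{M_I}(D)$, and by the first clause of Definition \ref{def:intervention-densities} this component already satisfies $f^{(I)} \in \mathcal{M}(D^{(I)})$.

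The substantive direction is $\mathcal{M}(D^{(I)}) \subset \mathcal{M}^{(I)}(D)$: given a positive density $f^{(I)} \in \mathcal{M}(D^{(I)})$, I would construct an entire tuple realizing it. The key observation is that, since $\pa_{D^{(I)}}(a) = \pa_D(a)$ for $a \notin I$ and $\pa_{D^{(I)}}(a) = \emptyset$ for $a \in I$, the Markov factorization of $f^{(I)}$ reads
$$
    f^{(I)}(x) = \prod_{a \notin I} f^{(I)}(x_a \spst x_{\pa_D(a)}) \prod_{a \in I} f^{(I)}(x_a).
$$
I would then define, for every $a \notin I$, a conditional factor $h_a(x_a \spst x_{\pa_D(a)}) := f^{(I)}(x_a \spst x_{\pa_D(a)})$ directly read off from $f^{(I)}$; and for every $a \in I$, choose an arbitrary positive conditional density $h_a(x_a \spst x_{\pa_D(a)})$ (the actual choice is irrelevant since this factor will never be used where it could conflict with $f^{(I)}$). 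For each $J \in \mcI$, I then assemble
$$
    f^{(J)}(x) := \prod_{a \notin J} h_a(x_a \spst x_{\pa_D(a)}) \prod_{a \in J} \tilde{f}_{J,a}(x_a),
$$
where $\tilde{f}_{I,a}(x_a) := f^{(I)}(x_a)$ for $a \in I$ (the marginal factor appearing in the factorization of $f^{(I)}$) and $\tilde{f}_{J,a}$ is chosen as any positive density on $\mathcal{X}_a$ for $J \neq I$.

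It remains to check three things, each of which I expect to be straightforward given the construction: (i) each $f^{(J)}$ lies in $\mathcal{M}(D^{(J)})$, which follows directly from the shape of the factorization and the fact that vertices in $J$ have no parents in $D^{(J)}$; (ii) the compatibility condition holds, since for any $J, K \in \mcI$ and $a \notin J \cup K$, the factorization of $f^{(J)}$ gives $f^{(J)}(x_a \spst x_{\pa_D(a)}) = h_a(x_a \spst x_{\pa_D(a)})$, which does not depend on $J$; and (iii) the $I$-th component of the constructed tuple recovers $f^{(I)}$, which holds by substituting the definitions of $h_a$ and $\tilde{f}_{I,a}$ into the assembled product and recognising the original factorization of $f^{(I)}$.

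The only point that requires a little care, rather than being a real obstacle, is the extraction in (ii): one must verify that for a density written as a product of factors of the form $g_a(x_a, x_{\pa_D(a)})$ over the vertices of a DAG, the conditional $f^{(J)}(x_a \spst x_{\pa_D(a)})$ really equals $g_a$ for $a \notin J$. This is standard for the Markov factorization on a DAG and can be invoked without elaboration, so the overall argument is short once the construction is in place.
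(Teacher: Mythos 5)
Your proof is correct, and the core construction coincides with the paper's: both directions extend the given $f^{(I)}$ to a full tuple by keeping its conditionals $f^{(I)}(x_a \spst x_{\pa_D(a)})$ at the non-intervened vertices and filling in the remaining ingredients arbitrarily. The only difference is in how the membership in $\mathcal{M_I}(D)$ is then certified. The paper observes that $\mathcal{M}(D^{(I)}) \subset \mathcal{M}(D)$ (intervened vertices merely lose parents), takes $f^{(I)}$ itself as the observational density of a causal model, sets the level density $\tilde{f}_I(x_I) := f^{(I)}(x_I)$ (checking that its components are independent, which holds because the vertices of $I$ are sources of $D^{(I)}$), and then invokes Lemma \ref{lem:intervention-densities-motivation}\ref{itm:interventions-meet-definition} to conclude in one stroke that the resulting tuple of interventional densities lies in $\mathcal{M_I}(D)$ and has $f^{(I)}$ as its $I$-component. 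You instead verify the two clauses of Definition \ref{def:intervention-densities} by hand, which amounts to inlining the content of that lemma; the price is that you must invoke the standard fact that in a product of normalized Markov kernels along a DAG the factors are the actual conditionals, but this is exactly the fact the paper also uses (in the proof of Lemma \ref{lem:intervention-densities-motivation}\ref{itm:interventions-realize-definition}), so nothing is lost. Either packaging yields a complete proof.
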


\begin{proof}
    The inclusion ``$\subset$'' is immediately clear from Definition \ref{def:intervention-densities}.  It remains to show ``$\supset$''.
    
    Let $f \in \mathcal{M}(D^{(I)})$.  Since $D^{(I)} \subset D$, $f$ also obeys the Markov property of $D$; this means $f \in \mathcal{M}(D)$.  Set $\tilde{f}_I(x_I) := f(x_I)$; since $f \in \mathcal{M}(D^{(I)})$, the components of $\tilde{f}_I$ are independent.  For $J \in \mcI$, $J \ne I$, let $\tilde{f}_J$ be an arbitrary level density on $\mathcal{X}_J$.  By Lemma \ref{lem:intervention-densities-motivation}\ref{itm:interventions-meet-definition}, we know that, for intervention variables $U_J \sim \tilde{f}_J$ ($J \in \mcI$),
    $$
        \big( f(\cdot \spst \doop_D(X_J = U_J)) \big)_{J \in \mcI} \in \mathcal{M_I}(D) \ ,
    $$
    hence $f(\cdot \spst \doop_D(X_I = U_I)) \in \mathcal{M}^{(I)}(D)$ by definition of $\mathcal{M}^{(I)}(D)$.  Moreover, by construction of $\tilde{f}_I$, we have $f(x \spst \doop_D(X_I = U_I)) = f(x)$ and hence $f \in \mathcal{M}^{(I)}(D)$.
\end{proof}

\begin{lemma}
    \label{lem:conditional-factorization}
    Let $D$ be a DAG, $f \in \mathcal{M}(D)$, and $A \subset [p]$.  Then,
    $$
        \prod_{a \in A} f(x_a \spst x_{\pa(a)}) = f(x_A \spst x_{\pa(A)}).
    $$
\end{lemma}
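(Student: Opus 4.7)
The plan is to derive the identity from the Markov factorization $f(x)=\prod_{i=1}^p f(x_i\spst x_{\pa_D(i)})$ by controlled marginalization of the joint density.

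First I would establish the standard marginal factorization on an ancestrally closed set: for any $V'\subset[p]$ with $\pa_D(V')\subset V'$,
$$
    f(x_{V'})=\prod_{v\in V'}f(x_v\spst x_{\pa_D(v)}).
$$
This follows by integrating the full product in reverse topological order of $D$: each vertex outside $V'$ that is a sink of the current sub-DAG (which holds for the topologically-last such vertex, by ancestral closedness of $V'$) contributes a factor whose integral over its own coordinate equals $1$ and so disappears; iterating removes all coordinates outside $V'$.

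Next I would apply this to $V':=A\cup\An(A)$ and split the resulting product as an ``$A$-block'' times the rest:
$$
    f(x_{V'})=\Bigl[\prod_{a\in A}f(x_a\spst x_{\pa_D(a)})\Bigr]\cdot\Bigl[\prod_{v\in V'\setminus A}f(x_v\spst x_{\pa_D(v)})\Bigr].
$$
Each factor of the first bracket depends only on $x_{A\cup\pa(A)}$, since $\pa_D(a)\subseteq A\cup\pa(A)$ for every $a\in A$. Marginalizing out the ``deep ancestors'' $V'\setminus(A\cup\pa(A))$ then lets the $A$-block pull outside the integral, yielding a representation of the form $f(x_A,x_{\pa(A)})=\prod_{a\in A}f(x_a\spst x_{\pa_D(a)})\cdot h(x_{\pa(A)})$ for some function $h$.

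Finally I would integrate over $x_A$ to identify $f(x_{\pa(A)})$. Ordering $A=\{a_1,\ldots,a_k\}$ by a topological ordering of $D$, the latest element $a_k$ appears only in the single factor $f(x_{a_k}\spst x_{\pa_D(a_k)})$, which integrates to $1$; iterating gives $\int\prod_{a\in A}f(x_a\spst x_{\pa_D(a)})\,d\mu_A(x_A)=1$, so $f(x_{\pa(A)})=h(x_{\pa(A)})$. Dividing $f(x_A,x_{\pa(A)})$ by $f(x_{\pa(A)})$ then delivers the claim. The main obstacle will be verifying carefully that the residual factor $h$ really depends on $x_{\pa(A)}$ alone and not on $x_A$—essentially the graphical statement that $A$ is screened from its ``deep'' ancestors by $\pa(A)$, which is where the ancestral structure of $V'$ has to be exploited in detail.
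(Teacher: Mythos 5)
The step that fails is precisely the one you flag as the main obstacle, and it cannot be repaired: after marginalizing out $V'\setminus(A\cup\pa(A))$, the residual factor $h$ does \emph{not} in general depend on $x_{\pa(A)}$ alone. The trouble is that a vertex $v\in\An(A)\setminus A$ may itself have a parent in $A$, in which case the factor $f(x_v\spst x_{\pa_D(v)})$ in your second bracket carries a dependence on $x_A$ that survives the marginalization. Concretely, take $D = 1 \grarright 2 \grarright 3$ and $A=\{1,3\}$, so that $\pa(A)=\{2\}$ and $V'=\{1,2,3\}$: there are no deep ancestors to integrate out, and the residual factor is $h(x_1,x_2)=f(x_2\spst x_1)$, which depends on $x_1$ with $1\in A$. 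In this example the asserted identity itself is false: the left-hand side is $f(x_1)f(x_3\spst x_2)$, while the right-hand side is $f(x_1,x_3\spst x_2)=f(x_1\spst x_2)\,f(x_3\spst x_2)$, and the two agree only if $X_1\indep X_2$. So no argument can close the gap without an additional hypothesis on $A$ --- for instance that no vertex of $\pa_D(A)$ is a descendant of a vertex of $A$ (equivalently, that no directed path between two elements of $A$ leaves $A$). Under such a hypothesis your screening claim becomes true, and the remainder of your argument (the marginal factorization on the ancestral closure, and integrating out $x_A$ in reverse topological order within $A$) does go through.

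For comparison, the paper's proof takes a shorter route: it expands $f(x_A\spst x_{\pa(A)})$ by the chain rule along a topological ordering, writing each factor as $f(x_a\spst x_{A_{<a}}, x_{\pa(A)})$ with $A_{<a}$ the elements of $A$ preceding $a$, and then drops the superfluous conditioning variables by appealing to the Markov property. That last reduction requires $A_{<a}\cup\pa(A)$ to consist of non-descendants of $a$, which is exactly the condition you are missing: $\pa(A)$ may contain descendants of $a$, as the example above shows, and the paper's proof makes this reduction without comment. Your proposal and the paper's proof therefore founder on the same point; the difference is only that you identified it explicitly as the place where the ancestral structure must be exploited.
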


\begin{proof}
    Let $\sigma \in S_p$ be a topological ordering of $D$.  Then, for $a \in A$,
    \begin{equation}
        \label{eqn:parents-subset}
        \pa(a) \subset \pa(A) \cup \left[A \cap \sigma^{-1}(\{1, \ldots, a - 1\}) \right]
    \end{equation}
    holds: every $b \in \pa(a)$ either lies in $A^c$ and hence in $\pa(A)$ by the definition given in Appendix \ref{sec:graphs-notation}, or in $A \cap \sigma^{-1}(\{1, \ldots, a - 1\})$ by the definition of a topological ordering.
    
    Hence we conclude
    $$
        f(x_A \spst x_{\pa(A)}) = \prod_{a \in A} f(x_a \spst x_{A \cap \sigma^{-1}(\{1, \ldots, a - 1\})}, x_{\pa(A)}) = \prod_{a \in A} f(x_a \spst x_{\pa(a)});
    $$
    the first equality is the usual factorization of a density, the second equality follows from the Markov properties of $f$ and Equation (\ref{eqn:parents-subset}).
\end{proof}

\begin{lemma}
    \label{lem:conserved-arrow}
    Let \mcI{} be a family of targets.  Assume $D_1$ and $D_2$ are DAGs with the same skeleton and the same v-structures such that $D_1^{(I)}$ and $D_2^{(I)}$ have the same skeleton for all $I \in \mcI$.  Moreover, let $a \grarright b \in D_1$.  If there is some $I \in \mcI$ such that $|I \cap \{a, b\}| = 1$, then the arrow is also present in $D_2$: $a \grarright b \in D_2$.
\end{lemma}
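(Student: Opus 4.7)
The plan is to argue by contradiction on the orientation of the edge between $a$ and $b$ in $D_2$. Since $D_1$ and $D_2$ share the same skeleton, the vertices $a$ and $b$ are adjacent in $D_2$, so the only alternative to $a \grarright b \in D_2$ is $a \grarleft b \in D_2$. I would then split into the two cases allowed by the assumption $|I \cap \{a,b\}| = 1$ and use Definition~\ref{def:intervention-graph} to exhibit a disagreement between the skeletons of $D_1^{(I)}$ and $D_2^{(I)}$, contradicting the hypothesis.

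First case: $a \in I$ and $b \notin I$. By the surgical definition of the intervention graph, the arrow $a \grarright b \in D_1$ survives in $D_1^{(I)}$ (since its head $b$ is not in $I$), whereas the arrow $b \grarright a \in D_2$ is removed from $D_2^{(I)}$ (since its head $a$ lies in $I$). Hence $\{a,b\}$ is an edge of the skeleton of $D_1^{(I)}$ but not of the skeleton of $D_2^{(I)}$, contradicting the assumption that these two skeletons coincide. Second case: $b \in I$ and $a \notin I$. By a symmetric argument, the arrow $a \grarright b \in D_1$ is removed from $D_1^{(I)}$, whereas $b \grarright a \in D_2$ survives in $D_2^{(I)}$, so again the two skeletons disagree.

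I do not anticipate any real obstacle: the proof is essentially immediate from the definition of $D^{(I)}$, and curiously neither the v-structure hypothesis nor the skeleton equality of the non-intervention graphs needs to be invoked here. The only ingredient used is the assumption that $D_1^{(I)}$ and $D_2^{(I)}$ have the same skeleton together with the observation that exactly one of the two possible orientations of the edge $\{a,b\}$ gets deleted by intervening on the single endpoint lying in $I$.
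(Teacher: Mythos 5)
Your proof is correct and is essentially identical to the paper's own argument: assume $a \grarleft b \in D_2$ and derive a skeleton mismatch between $D_1^{(I)}$ and $D_2^{(I)}$ in each of the two cases $a \in I$, $b \notin I$ and $b \in I$, $a \notin I$. Your observation that the v-structure hypothesis is not needed is also accurate; the paper's proof likewise uses only the skeleton hypotheses.
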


\begin{proof}
    Since $D_1$ and $D_2$ have the same skeleton, we have at least $a \grdots b \in D_2$.  Suppose $a \grarleft b \in D_2$.  If $a \in I$, $b \notin I$, $a$ and $b$ are adjacent in $D_1^{(I)}$, but not in $D_2^{(I)}$, hence $D_1^{(I)}$ and $D_2^{(I)}$ have a different skeleton, a contradiction.  On the other hand, if $a \notin I$ but $b \in I$, $a$ and $b$ are not adjacent in $D_1^{(I)}$, but in $D_2^{(I)}$, a contradiction, too.
\end{proof}
\vspace{-1mm}
\begin{proof}[Theorem \ref{thm:interventional-markov-equivalence}]
    \emph{\ref{itm:interventional-markov-equivalence} $\Rightarrow$ \ref{itm:interventional-densities-equal}:} Let $I \in \mcI$, and let $\mathcal{M}^{(I)}(D_1)$ and  $\mathcal{M}^{(I)}(D_2)$ be defined as in Lemma \ref{lem:intervention-densities-projection}.  By Definition \ref{def:interventional-markov-equivalence} of interventional Markov equivalence, it follows that $\mathcal{M}^{(I)}(D_1) = \mathcal{M}^{(I)}(D_2)$; hence $\mathcal{M}(D_1^{(I)}) = \mathcal{M}(D_2^{(I)})$ by Lemma \ref{lem:intervention-densities-projection}.
    
    \proofitem{\ref{itm:interventional-densities-equal} $\Rightarrow$ \ref{itm:interventional-dags-equivalent}} this implication follows from Theorem \ref{thm:markov-equivalence}.
    
    \proofitem{\ref{itm:interventional-dags-equivalent} $\Rightarrow$ \ref{itm:skeleton-v-structures}} Let $a \grarright b \in D_1$ be an arrow.  Since \mcI{} is conservative, there is some $I \in \mcI$ such that $b \notin I$.  For this $I$, $a \grarright b \in D_1^{(I)}$, so $a \grdots b \in D_2^{(I)}$ by assumption and hence $a \grdots b \in D_2$ because of $D_2^{(I)} \subset D_2$.  Similarly, we can show the implication $a \grarright b \in D_2 \ \Rightarrow \ a \grdots b \in D_1$, what proves that $D_1$ and $D_2$ have the same skeleton.
    
    It remains to show that $D_1$ and $D_2$ also have the same v-structures.  Let $a \grarright b \grarleft c$ be a v-structure of $D_1$.  There is some $I \in \mathcal{I}$ that does not contain $b$; $a \grarright b \grarleft c$ is then an induced subgraph of $D_1^{(I)}$ and hence by assumption also of $D_2^{(I)}$.  By consequence, $a \grarright b \grarleft c$ is also an induced subgraph of $D_2$ since $D_2$ has the same skeleton as $D_1$.  The argument is of course symmetric w.r.t.\ exchanging $D_1$ and $D_2$.
    
    \proofitem{\ref{itm:skeleton-v-structures} $\Rightarrow$ \ref{itm:interventional-markov-equivalence}} Let $(f^{(I)})_{I \in \mcI} \in \mathcal{M_I}(D_1)$.  By Lemma \ref{lem:intervention-densities-motivation}\ref{itm:interventions-realize-definition}, there is some density $f \in \mathcal{M}(D_1)$ and some intervention setting $\mathcal{S} = \{(I, \tilde{f}_I)\}_{I \in \mcI}$ such that $f^{(I)}(\cdot) = f(\cdot | \doop_{D_1}(X_I = U_I))$ for random variables $U_I \sim \tilde{f}_I$, $I \in \mcI$.
    
    The truncated factorization in Equation (\ref{eqn:interventional-density}) tells us
    \begin{eqnarray}
       f(x \spst \doop_{D_1}(X_I = U_I)) & = & \prod_{a \notin I} f(x_a \spst x_{\pa_{D_1}(a)}) \prod_{a \in I} \tilde{f}_I(x_a) = f(x) \prod_{a \in I} \frac{\tilde{f}_I(x_a)}{f(x_a \spst x_{\pa_{D_1}(a)})} \nonumber \\
       & = & f(x) \frac{\tilde{f}_I(x_I)}{f(x_I \spst x_{\pa_{D_1}(I)})}. \label{eqn:interventional-factorization}
    \end{eqnarray}
    The last step uses Lemma \ref{lem:conditional-factorization}.
    
    We now claim that $\pa_{D_1}(I) = \pa_{D_2}(I)$.  Indeed, if $b \in I$ and $a \in \pa_{D_1}(b) \setminus I$, $a \grarright b$ is an arrow in $D_1$ with $|I \cap \{a, b\}| = 1$, hence $a \grarright b \in D_2$ by Lemma \ref{lem:conserved-arrow} and therefore $a \in \pa_{D_2}(I)$; the argument is symmetric w.r.t.\ exchanging $D_1$ and $D_2$.  It follows that $f(x_I | x_{\pa_{D_1}(I)}) = f(x_I | x_{\pa_{D_2}(I)})$, and by repeating the calculation in (\ref{eqn:interventional-factorization}) for $D_2$ instead of $D_1$, we find $f(x | \doop_{D_1}(X_I = U_I)) = f(x | \doop_{D_2}(X_I = U_I))$.
    
    Since this equality is true for all $I \in \mcI$, we have $f^{(I)}(\cdot) = f(\cdot | \doop_{D_2}(X_I = U_I))$ for all $I \in \mcI$, so $(f^{(I)})_{I \in \mcI} \in \mathcal{M_I}(D_2)$ by Lemma \ref{lem:intervention-densities-motivation}\ref{itm:interventions-meet-definition}, which proves $\mathcal{M_I}(D_1) \subset \mathcal{M_I}(D_2)$.  The other direction is completely analogous.
\end{proof}

Points \ref{itm:interventional-markov-equivalence} to \ref{itm:interventional-dags-equivalent} are even equivalent under non-conservative families of targets.  The proof is more difficult in this case though.

\subsection{Proofs for Section \ref{sec:essential-graphs}}
\label{sec:proofs-essential-graphs}

All statements of Section \ref{sec:essential-graphs-characterization} are similar to analogous statements for the observational case developed by \citet{Andersson1997Characterization}.  Some of the proofs given there are even literally valid also for our interventional setting; in such cases, we will not repeat them here, but just refer to the original ones.  However, in most cases, the generalization from the observational to the interventional case is not obvious and requires adapted techniques presented in this section.  Here, \mcI{} always stands for a conservative family of targets.

First, we show that for some DAG $D$, $\mathcal{E_I}(D)$ is a chain graph (Proposition \ref{prop:I-essential-chain-graph}).  For that purpose, we define $\mathcal{E_I}(D)^*$ as the smallest chain graph containing $\mathcal{E_I}(D)$.  $\mathcal{E_I}(D)^*$ is obtained from $\mathcal{E_I}(D)$ by converting all arrows that are part of a directed cycle in $\mathcal{E_I}(D)$ into lines \citep{Andersson1997Characterization}.  We first state a couple of properties of $\mathcal{E_I}(D)$ and $\mathcal{E_I}(D)^*$ (Lemma \ref{lem:I-essential-graph-properties}), and then show that $\mathcal{E_I}(D)^* = \mathcal{E_I}(D)$ (Proposition \ref{prop:I-essential-chain-graph}).

\begin{lemma}[adapted from \citealp{Andersson1997Characterization}]
    \label{lem:I-essential-graph-properties}
    Let $D$ be a DAG.  Then:
    \begin{subprop}
        \item \label{itm:a->b--c-essential} $\mathcal{E_I}(D)$ has no induced subgraph of the form $a \grarright b \grline c$.
        
        \item \label{itm:three-vertex-subgraph} If $\mathcal{E_I}(D)$ has an induced subgraph of the form
        $$
            \threegraph{$a$}{->}{$b$}{-}{$c$}{-},
        $$
        then there exist $D_1, D_2 \in [D]_\mathcal{I}$ such that
        $$
            \threegraph{$a$}{->}{$b$}{->}{$c$}{<-} \subset D_1, \quad \threegraph{$a$}{->}{$b$}{<-}{$c$}{->} \subset D_2.
        $$
        
        \item \label{itm:essential-v-structures} $\mathcal{E_I}(D)^*$ has the same v-structures as $D$ (and hence as $\mathcal{E_I}(D)$).
        
        \item \label{itm:chordless-cycles} $\mathcal{E_I}(D)$ and $\mathcal{E_I}(D)^*$ do not have any undirected chordless $k$-cycle of length $k \geq 4$.
        
        \item \label{itm:a->b--c-chain-graph} $\mathcal{E_I}(D)^*$ has no induced subgraph of the form $a \grarright b \grline c$.
        
        \item \label{itm:intervention-directed} If two vertices $a$ and $b$ are adjacent in $\mathcal{E_I}(D)^*$ and there is some $I \in \mathcal{I}$ such that $|I \cap \{a, b\}| = 1$, then the edge between $a$ and $b$ is directed in $\mathcal{E_I}(D)$ and $\mathcal{E_I}(D)^*$.
    \end{subprop}
\end{lemma}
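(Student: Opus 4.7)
My plan is to establish the six parts in order, using Theorem~\ref{thm:interventional-markov-equivalence} and Corollary~\ref{cor:intervention-essential} as the main tools. For \ref{itm:a->b--c-essential}, suppose for contradiction that $a \grarright b \grline c$ is an induced subgraph of $\mathcal{E_I}(D)$ with $a,c$ non-adjacent. Since $b \grline c$ is non-essential, there exist $D', D'' \in [D]_\mcI$ with $b \grarright c \in D'$ and $c \grarright b \in D''$; both contain the essential arrow $a \grarright b$, so $D''$ realizes an induced v-structure $a \grarright b \grarleft c$, which by Theorem~\ref{thm:interventional-markov-equivalence} must also appear in $D'$, contradicting $b \grarright c \in D'$. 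For \ref{itm:three-vertex-subgraph}, pick any $D_1 \in [D]_\mcI$ containing $b \grarright c$; the directed chain $a \grarright b \grarright c$ together with the common-skeleton edge between $a$ and $c$ forces $a \grarright c \in D_1$ by acyclicity. Symmetrically, any $D_2 \in [D]_\mcI$ with $c \grarright a$ satisfies $c \grarright b \in D_2$, since otherwise $a \grarright b \grarright c \grarright a$ would be cyclic.

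For \ref{itm:essential-v-structures}, the inclusion of v-structures of $\mathcal{E_I}(D)^*$ in those of $D$ is immediate because $\mathcal{E_I}(D)^*$ arises from $\mathcal{E_I}(D)$ solely by converting arrows into lines (introducing no new arrows) while preserving the skeleton, so every v-structure arrow of $\mathcal{E_I}(D)^*$ is an arrow of $\mathcal{E_I}(D)$, hence essential in every representative. The reverse inclusion requires showing that no arrow $a \grarright b$ of a v-structure of $\mathcal{E_I}(D)$ lies on a directed cycle of $\mathcal{E_I}(D)$; assuming otherwise, I propagate orientations along the cycle using \ref{itm:a->b--c-essential} and exploit the non-adjacency of the v-structure's other endpoint to derive a contradiction. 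Parts \ref{itm:chordless-cycles} and \ref{itm:a->b--c-chain-graph} follow from analogous cycle arguments: a chordless undirected $k$-cycle ($k \geq 4$) in $\mathcal{E_I}(D)$ can be consistently oriented in a representative by starting from any non-essential edge and using chordlessness to propagate orientations, producing either a directed cycle or a forbidden v-structure among non-adjacent cycle vertices; the same reasoning applies to $\mathcal{E_I}(D)^*$ via \ref{itm:essential-v-structures}, and the exclusion of induced $a \grarright b \grline c$ in $\mathcal{E_I}(D)^*$ reduces to \ref{itm:a->b--c-essential} by tracking which arrows were converted.

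For \ref{itm:intervention-directed}, Corollary~\ref{cor:intervention-essential} immediately implies the edge between $a$ and $b$ is \mcI-essential in every representative, hence directed in $\mathcal{E_I}(D)$, say $a \grarright b$. To see the arrow survives into $\mathcal{E_I}(D)^*$, assume toward contradiction that it lies on a directed cycle of $\mathcal{E_I}(D)$; iteratively invoking Corollary~\ref{cor:intervention-essential} and \ref{itm:a->b--c-essential} on cycle edges that cross the intervention target $I$ essentialises further arrows and ultimately produces a fully directed cycle realized in every representative, contradicting acyclicity. The main obstacle throughout is part \ref{itm:essential-v-structures}: before $\mathcal{E_I}(D)$ is known to be a chain graph, directed cycles in it must be analysed carefully, and arrow-essentiality---which in the interventional setting can stem either from v-structure protection or from intervention-target separation---must be tracked around such cycles. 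This is precisely where the adaptation from Andersson et al.\ (1997) to interventional data requires the most care.
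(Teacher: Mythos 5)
Your arguments for parts (i) and (ii) are correct and essentially the standard ones. But the proposal has a real gap, concentrated in exactly the places where the lemma requires work. First, a smaller issue: the reverse inclusion in (iii) and the claims (iv) and (v) for $\mathcal{E_I}(D)^*$ are only gestured at (``propagate orientations along the cycle'', ``tracking which arrows were converted''). These are not routine: a line of $\mathcal{E_I}(D)^*$ may be a converted arrow of $\mathcal{E_I}(D)$ that lies on a directed cycle, and, for instance, in (v) the case where $b \grline c$ in $\mathcal{E_I}(D)^*$ comes from $b \grarright c$ in $\mathcal{E_I}(D)$ gives the configuration $a \grarright b \grarright c$, which part (i) does not forbid at all. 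The paper disposes of (i)--(v) by observing that the proofs of Facts 1--5 in \citet{Andersson1997Characterization} use nothing beyond ``equivalent DAGs share skeleton and v-structures'', which Theorem \ref{thm:interventional-markov-equivalence} supplies in the interventional case, so those proofs carry over verbatim; if you do not import them, you must actually carry out the shortest-cycle inductions they contain.

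The serious gap is in (vi), the only genuinely new part. Your mechanism --- essentialise the cycle edges crossing $I$ via Corollary \ref{cor:intervention-essential} and conclude that this ``ultimately produces a fully directed cycle realized in every representative'' --- does not work. Only those edges of the offending cycle whose endpoints are separated by $I$ become directed this way; the remaining edges stay genuinely undirected, i.e.\ they are oriented differently in different representatives, so no single representative need contain the cycle as a directed cycle, and a partially directed cycle in $\mathcal{E_I}(D)$ is not by itself a contradiction. The paper's proof instead takes a \emph{shortest} directed cycle $\gamma = (a, b \equiv b_0, b_1, \ldots, b_k \equiv a)$ containing a directed edge with exactly one endpoint in $I$, and argues by cases on $k$ and on the position of the first line in $\gamma$: part (i) forces chords, part (ii) produces two representatives $D_1, D_2$ orienting a line both ways, and the contradiction is then either $(D_1^{(I)})^u \neq (D_2^{(I)})^u$ (hence $D_1 \not\sim_\mcI D_2$), or a strictly shorter cycle of the same kind violating minimality, or a directed triangle realized in a representative. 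Both the minimality device and the comparison of intervention-graph skeletons are absent from your sketch, and without them the argument for (vi) does not close.
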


\begin{proof}
    Points \ref{itm:a->b--c-essential} to \ref{itm:a->b--c-chain-graph} correspond to Facts 1 to 5 of \citet{Andersson1997Characterization} where these properties were proven for observational essential graphs.  A thorough inspection of the proofs given there reveals that they only make use of the fact that two Markov equivalent DAGs have the same skeleton and the same v-structures, which is also true in the interventional case by Theorem \ref{thm:interventional-markov-equivalence}.  Thanks to this, the proofs of \citet{Andersson1997Characterization} can be literally used here.  (Note that the inverse implication also holds in the observational case, but not in the interventional one; see the discussion after Theorem \ref{thm:interventional-markov-equivalence}.)
    
    It remains to prove point \ref{itm:intervention-directed}.  The edge between $a$ and $b$ in $\mathcal{E_I}(D)$ is directed since the arrow between $a$ and $b$ is $\mathcal{I}$-essential in $D$ by Corollary \ref{cor:intervention-essential}.  It remains to show that the edge is also directed in $\mathcal{E_I}(D)^*$, that is, to show that it is \emph{not} part of a directed cycle in $\mathcal{E_I}(D)$.
    
    Let's suppose, for the sake of contradiction, that the edge between $a$ and $b$ is part of a directed cycle $\gamma = (a, b \equiv b_0, b_1, \ldots, b_k \equiv a)$ in $\mathcal{E_I}(D)$.  W.l.o.g., we can assume that $a \grarright b \in \mathcal{E_I}(D)$, and that $\gamma$ is the \emph{shortest} such cycle containing a directed edge with one end point in $I$ and the other one outside $I$.
    
    \proofitem{Case 1} $k = 2$.  Then $\gamma$ is of the form
    $$
        \threegraph{$a$}{->}{$b$}{-}{$b_1$}{-}
    $$
    since two or three directed edges would imply the existence of a digraph with a cycle in the equivalence class of $D$.  By point \ref{itm:three-vertex-subgraph}, there are DAGs $D_1$ and $D_2$ in $[D]_\mathcal{I}$ such that
    $$
         \threegraph{$a$}{->}{$b$}{->}{$b_1$}{<-} \subset D_1 \ , \quad
         \threegraph{$a$}{->}{$b$}{<-}{$b_1$}{->} \subset D_2.
    $$
    The condition $|I \cap \{a, b\}| = 1$ leaves four possibilities:
    \begin{enumerate}[label=\emph{\alph*})]
        \item $a \in I; b, b_1 \notin I$: then, \threegraph{$a$}{->}{$b$}{->}{$b_1$}{<-} $\subset D_1^{(I)}$ \ , \quad \threegraph{$a$}{->}{$b$}{<-}{$b_1$}{} $\subset D_2^{(I)}$ \ ;
        
        \item $a, b_1 \in I; b \notin I$: then, \threegraph{$a$}{->}{$b$}{}{$b_1$}{} $\subset D_1^{(I)}$ \ , \quad \threegraph{$a$}{->}{$b$}{<-}{$b_1$}{} $\subset D_2^{(I)}$ \ ;
        
        \item $b \in I; a, b_1 \notin I$: then, \threegraph{$a$}{}{$b$}{->}{$b_1$}{<-} $\subset D_1^{(I)}$ \ , \quad \threegraph{$a$}{}{$b$}{}{$b_1$}{->} $\subset D_2^{(I)}$ \ ;
        
        \item $b, b_1 \in I; a \notin I$: then, \threegraph{$a$}{}{$b$}{}{$b_1$}{} $\subset D_1^{(I)}$ \ , \quad \threegraph{$a$}{}{$b$}{}{$b_1$}{->} $\subset D_2^{(I)}$ \ .
    \end{enumerate}
    In all four cases, $(D_1^{(I)})^u \neq (D_2^{(I)})^u$, hence $D_1 \not\sim_\mathcal{I} D_2$, a contradiction.
    
    \proofitem{Case 2} $k \geq 3$.  Let $i$ be the smallest index such that $b_i \grline b_{i+1} \in \mathcal{E_I}(D)$ (there must be such an index, otherwise $\gamma$ would be a directed cycle in $D$).
    
    \proofitem{Case 2.1} $i = 0$.  Since $a \grarright b \grline b_1$ cannot be an induced subgraph of $\mathcal{E_I}(D)$ by point \ref{itm:a->b--c-essential}, we must have $a \grdots b_1 \in \mathcal{E_I}(D)$.  More precisely, we must have $a \grarright b_1 \in \mathcal{E_I}(D)$, otherwise $(a, b, b_1, a)$ would form a shorter directed cycle than $\gamma$, in contradiction to the assumption.  This means that there exist DAGs $D_1, D_2 \in [D]_\mathcal{I}$ such that
    $$
        \threegraph{$a$}{->}{$b$}{->}{$b_1$}{<-} \subset D_1, \quad \threegraph{$a$}{->}{$b$}{<-}{$b_1$}{<-} \subset D_2.
    $$
    Again, the condition $|I \cap \{a, b\}| = 1$ leaves four possibilities:
    \begin{enumerate}[label=\emph{\alph*})]
        \item $a \in I; b, b_1 \notin I$: then, \threegraph{$a$}{->}{$b$}{->}{$b_1$}{<-} $\subset D_1^{(I)}$ \ , \quad \threegraph{$a$}{->}{$b$}{<-}{$b_1$}{<-} $\subset D_2^{(I)}$ \ ;
        
        \item $a, b_1 \in I; b \notin I$: then, \threegraph{$a$}{->}{$b$}{}{$b_1$}{} $\subset D_1^{(I)}$ \ , \quad \threegraph{$a$}{->}{$b$}{<-}{$b_1$}{} $\subset D_2^{(I)}$ \ ;
        
        \item $b \in I; a, b_1 \notin I$: then, \threegraph{$a$}{}{$b$}{->}{$b_1$}{<-} $\subset D_1^{(I)}$ \ , \quad \threegraph{$a$}{}{$b$}{}{$b_1$}{<-} $\subset D_2^{(I)}$ \ ;
        
        \item $b, b_1 \in I; a \notin I$: then, \threegraph{$a$}{}{$b$}{}{$b_1$}{} $\subset D_1^{(I)}$ \ , \quad \threegraph{$a$}{}{$b$}{}{$b_1$}{} $\subset D_2^{(I)}$ \ .
    \end{enumerate}
    Cases \emph{b)} and \emph{c)} are not compatible with the condition $(D_1^{(I)})^u = (D_2^{(I)})^u$.  In cases \emph{a)} and \emph{d)}, the arrow $a \grarright b_1$ is part of a directed cycle $(a, b_1, b_2, \ldots, b_k \equiv a)$, furthermore $|I \cap \{a, b_1\}| = 1$; this contradicts the assumption of minimality of the larger cycle $\gamma$.
    
    \proofitem{Case 2.2} $i \geq 1$.  Since $b_{i-1} \grarright b_i \grline b_{i+1}$ cannot be an induced subgraph of $\mathcal{E_I}(D)$, we must have $b_{i-1} \grdots b_{i+1} \in \mathcal{E_I}(D)$.  Either $b_{i-1} \grarleft b_{i+1} \in \mathcal{E_I}(D)$, that is
    $$
        \threegraph{$b_{i-1}$}{->}{$b_i$}{-}{$b_{i+1}$}{->} \subset \mathcal{E_I}(D) \ ,
    $$
    which would imply the existence of a digraph with a directed 3-cycle in the equivalence class of $D$, a contradiction.  The other cases are $b_{i-1} \grarright b_{i+1} \in \mathcal{E_I}(D)$ or $b_{i-1} \grline b_{i+1} \in \mathcal{E_I}(D)$ which would mean that $a \grarright b$ would be part of a shorter directed cycle $(a, b \equiv b_0, \ldots, b_{i-1}, b_{i+1}, \ldots, b_k \equiv a)$, contradicting the assumption of minimality of the cycle $\gamma$.
\end{proof}

\begin{proof}[Proposition \ref{prop:I-essential-chain-graph}]
    We only prove the first point; the second one is an immediate consequence of  Lemma \ref{lem:I-essential-graph-properties}\ref{itm:chordless-cycles}.  We have to show that $\mathcal{E_I}(D) = \mathcal{E_I}(D)^*$, that means that
    $$
        a \grline b \in \mathcal{E_I}(D)^* \ \Rightarrow \ a \grline b \in \mathcal{E_I}(D) \ .
    $$
    By Lemma \ref{lem:I-essential-graph-properties}\ref{itm:chordless-cycles}, all chain components of $\mathcal{E_I}(D)^*$ are chordal.  Let $D_1$ and $D_2$ be two DAGs that are obtained by orienting all chain components of $\mathcal{E_I}(D)^*$ according to some perfect elimination ordering, such that $a \grarright b \in D_1$ and $a \grarleft b \in D_2$; such orientations exist by Proposition \ref{prop:chain-graph-orientation} and Corollary \ref{cor:chordal-graph-different-orientations}.
    
    We now claim that $D_1 \sim_\mathcal{I} D_2$ by verifying the criteria of Theorem \ref{thm:interventional-markov-equivalence}\ref{itm:skeleton-v-structures}; it then follows that $a \grline b \in \mathcal{E_I}(D)$ because of $D_1 \cup D_2 \subset \mathcal{E_I}(D)$:
    \begin{itemize}
        \item By Proposition \ref{prop:chain-graph-orientation}, $D_1$ and $D_2$ have the same skeleton and the same v-structures.
        
        \item $D_1^{(I)}$ and $D_2^{(I)}$ have the same skeleton for all $I \in \mathcal{I}$: suppose, for the sake of contradiction, that $(D_1^{(I)})^u$ has some edge $c \grline d$ that $(D_2^{(I)})^u$ has not.  W.l.o.g., we then have $c \grarright d \in D_1$, $c \grarleft d \in D_2$, $c \in I$, $d \notin I$.  But then $c$ and $d$ are adjacent in $\mathcal{E_I}(D)^*$ with $|I \cap \{c, d\}| = 1$, hence the edge between $c$ and $d$ must be oriented in $\mathcal{E_I}(D)^*$ by point \ref{itm:intervention-directed} of Lemma \ref{lem:I-essential-graph-properties}, and hence it is not possible that this edge has two different orientations in $D_1$ and $D_2$ by their construction.
    \end{itemize}
    \proofnegspace
\end{proof}

\begin{proof}[Proposition \ref{prop:construction-representative}]
    \emph{``$\Leftarrow$'':} By the construction of $\mathcal{E_I}(D)$, we know that $D \subset \mathcal{E_I}(D)$ and $D^u = \mathcal{E_I}(D)^u$.  Furthermore, $D$ has the same v-structures as $\mathcal{E_I}(D)$.  Let $D'$ be another digraph that is obtained by orienting all chain components of $\mathcal{E_I}(D)$ according to a perfect elimination ordering; by Proposition \ref{prop:chain-graph-orientation}, $D'$ is acyclic and has the same v-structures as $\mathcal{E_I}(D)$ and hence as $D$.  It remains to show that $D^{(I)}$ and $D'^{(I)}$ have the same skeleton for all $I \in \mathcal{I}$; this can be done similarly to the proof of Proposition \ref{prop:I-essential-chain-graph}.
        
    \proofitem{``$\Rightarrow$''} let $D'$ be a DAG with $D' \sim_\mathcal{I} D$.  In particular, $D'$ and $D$ have the same skeleton and the same v-structures, so $D'$ also has the same skeleton and the same v-structures as $\mathcal{E_I}(D)$.  It follows, with Proposition \ref{prop:chain-graph-orientation}, that $D'$ is oriented according to a perfect elimination ordering on all chain components of $\mathcal{E_I}(D)$.
\end{proof}

\begin{lemma}
    \label{lem:essential-protected}
    Let $D$ be a DAG and $a \grarright b$ an \mcI-essential arrow in $D$.  Then $a \grarright b$ is strongly \mcI-protected in $\mathcal{E_I}(D)$.
\end{lemma}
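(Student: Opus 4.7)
The plan is to proceed by contradiction. Suppose $a \grarright b$ is \mcI-essential in $D$ but not strongly \mcI-protected in $G := \mathcal{E_I}(D)$. Since the arrow is \mcI-essential, it is directed in $G$, so $a$ and $b$ lie in distinct chain components. By Proposition \ref{prop:construction-representative}, $[D]_\mcI = \mathbf{D}(G)$, and every member of $\mathbf{D}(G)$ must contain $a \grarright b$. It therefore suffices to construct a DAG $D'' \sim_\mcI D$ with $a \grarleft b \in D''$: this would contradict $D'' \in \mathbf{D}(G)$. From the negation of strong \mcI-protection I extract: (i) $|I \cap \{a,b\}| \neq 1$ for every $I \in \mcI$; (not-a) every $c \in \pa_G(a)$ is adjacent to $b$ in $G$; (not-b) every $c \in \pa_G(b) \setminus \{a\}$ is adjacent to $a$ in $G$; (not-c) no $c$ satisfies both $a \grarright c \in G$ and $c \grarright b \in G$; (not-d) $|\pa_G(b) \cap \nb_G(a)| \leq 1$.

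Let $c^*$ denote the unique element of $\pa_G(b) \cap \nb_G(a)$ when it exists. I pick a representative $D_0 \in \mathbf{D}(G)$ by orienting each chain component according to a \LexBFS-ordering (Corollaries \ref{cor:chordal-dag} and \ref{cor:lex-bfs-clique}): for $T_G(b)$ I start \LexBFS{} from $b$, making $b$ a source of $D_0[T_G(b)]$ so that $\pa_{D_0}(b) = \pa_G(b)$; for $T_G(a)$ I start \LexBFS{} from the clique $(c^*, a)$ if $c^*$ exists and from $a$ alone otherwise, which yields $\pa_{D_0}(a) \cap T_G(a) \subset \{c^*\}$. I then set $D'' := D_0 - (a,b) + (b,a)$ and verify $D'' \sim_\mcI D$ via Theorem \ref{thm:interventional-markov-equivalence}\ref{itm:skeleton-v-structures}.

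The skeletons of $D$ and $D''$ agree trivially. For the intervention graphs, condition (i) gives $a \in I \Leftrightarrow b \in I$ for every $I \in \mcI$, so the edge $a$-$b$ is present in both $D^{(I)}$ and $D''^{(I)}$ or in neither. V-structures can change only at $a$ or $b$. A destroyed v-structure at $b$ would be $a \grarright b \grarleft c$ with $c \in \pa_G(b) \setminus \{a\}$ and $c \not\sim_G a$, which contradicts (not-b). A newly created v-structure at $a$ in $D''$ would be $c \grarright a \grarleft b$ with $c \in \pa_{D_0}(a)$ and $c \not\sim_G b$; for $c \in \pa_G(a)$ this is ruled out by (not-a), and the only other possibility $c = c^*$ satisfies $c^* \grarright b \in G$, hence $c^* \sim_G b$.

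The main obstacle is acyclicity of $D''$, equivalently the absence of a directed $a$-to-$b$ path of length $\geq 2$ in $D_0$. Suppose $a \grarright c_1 \grarright \cdots \grarright c_k \grarright b$ is such a path, $k \geq 1$. Since $b$ is a source of $D_0[T_G(b)]$, $c_k \in \pa_G(b)$; combined with the fact that $a$ and $c_k$ are adjacent in $D_0$ (from the path itself when $k = 1$, via (not-b) when $k \geq 2$), this gives three possibilities for the $G$-edge between $a$ and $c_k$. If $c_k \grarright a \in G$, then $c_k \grarright a \in D_0$ combines with the path to produce a directed cycle in $D_0$. If $a \grarright c_k \in G$, then $a \grarright c_k$ together with $c_k \grarright b$ is configuration (c), contradicting (not-c). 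If $a \grline c_k \in G$, then $c_k = c^*$ by (not-d), and the \LexBFS-choice forces $c^* \grarright a \in D_0$, again yielding a cycle. Hence no such path exists, $D''$ is acyclic, and Theorem \ref{thm:interventional-markov-equivalence} delivers $D'' \sim_\mcI D$, the desired contradiction. The technically most delicate part is the coordinated \LexBFS{} start in $T_G(a)$, which must simultaneously block new v-structures at $a$ and block cycle creation through $c^*$.
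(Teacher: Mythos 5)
Your overall strategy---build a representative $D_0 \in \mathbf{D}(G)$ with carefully chosen edge orientations inside $T_G(a)$ and $T_G(b)$, reverse $a \grarright b$, and verify \mcI-equivalence of the result via Theorem \ref{thm:interventional-markov-equivalence}\ref{itm:skeleton-v-structures}---is essentially the paper's own (the paper does it in two passes, first establishing \mcI-protection and then strengthening to strong \mcI-protection, but the substance of the construction and the three equivalence checks is the same). The source-at-$b$ choice, the v-structure bookkeeping at $a$ and $b$, the use of condition (i) for the intervention-graph skeletons, and the case analysis on the $G$-edge between $a$ and $c_k$ in the acyclicity argument are all correct.

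The gap is in your item (not-d). The negation of configuration (d) is \emph{not} $|\pa_G(b) \cap \nb_G(a)| \leq 1$: configuration (d) is an \emph{induced} subgraph, so it requires $c_1$ and $c_2$ to be non-adjacent, and its absence therefore only tells you that $P := \pa_G(b) \cap \nb_G(a)$ is a \emph{clique} in $G[T_G(a)]$---it may contain several pairwise-adjacent vertices. As written, your $c^*$ is then undefined, the \LexBFS{} start order in $T_G(a)$ is unspecified, and the step ``$a \grline c_k \in G$ implies $c_k = c^*$'' in the acyclicity argument fails. The repair is exactly the paper's move (there the clique is called $P_a$): since $P$ is a clique in the chordal graph $G[T_G(a)]$, run $\LexBFS((P, a, \ldots))$ so that \emph{all} of $P$ points towards $a$ in $D_0$ (Corollary \ref{cor:lex-bfs-clique}); then every $c \in P$ satisfies $c \grarright b \in G$ and is hence adjacent to $b$ (blocking new v-structures at $a$), and any $c_k$ with $a \grline c_k \in G$ on a putative directed $a$-to-$b$ path lies in $P$ and satisfies $c_k \grarright a \in D_0$ (closing a directed cycle). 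With $\{c^*\}$ replaced by $P$ throughout, the rest of your argument goes through.
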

This lemma is an auxiliary result needed to prove Theorem \ref{thm:essential-graph-characterization}.  In its proof, we first show the weaker statement that every \mcI-essential arrow of $D$ is \mcI-protected in $\mathcal{E_I}(D)$.

\begin{definition}[Protection]
    \label{def:protected-arrow} 
    Let $G$ be a graph.  An arrow $a \grarright b \in G$ is \textbf{\mcI-protected} in $G$ if there is some intervention target $I \in \mcI$ such that $|I \cap \{a, b\}| = 1$, or $\pa_G(a) \neq \pa_G(b) \setminus \{a\}$.
\end{definition}
This definition is again a generalization of the notion of protection of \citet{Andersson1997Characterization}; for $\mcI = \{\emptyset\}$, we gain back their definition.  A \emph{strongly} $\mathcal{I}$-protected arrow (Definition \ref{def:strongly-protected-arrow}) is also \mcI-protected.  In a chain graph $G$, an arrow $a \grarright b$ is \mcI-protected if and only if there is some $I \in \mcI$ such that $|I \cap \{a, b\}| = 1$, or the arrow $a \grarright b$ occurs in at least one subgraph of the form (a), (b), (c) in the notation of Definition \ref{def:strongly-protected-arrow}, or in a subgraph of the form (d') \citep{Andersson1997Characterization}, where
$$
\text{(d'):} \threegraph{$a$}{->}{$b$}{<-}{$c$}{-} \ .
$$

\begin{proof}[Lemma \ref{lem:essential-protected}]
    As foreshadowed, we prove this lemma in two steps, corresponding to Facts 6 and 7 of \citet{Andersson1997Characterization}: in a first step, we show that $a \grarright b$ must be $\mathcal{I}$-protected, in a second step, we strengthen the result by showing that it must even be \emph{strongly} $\mathcal{I}$-protected.  For notational convenience, we abbreviate $G := \mathcal{E_I}(D)$.  We skip some steps of the proof that can be literally copied from proofs in \citet{Andersson1997Characterization}.
    
    Suppose, for the sake of contradiction, that $a \grarright b$ is not \mcI-protected.  Let $D_1$ be a digraph that is gained by orienting all chain components of $G$ according to a perfect elimination ordering, where the edges of $T_G(a)$ and $T_G(b)$ are oriented such that all edges point away from $a$ or $b$, respectively.  Then $D_1$ is acyclic and \mcI-equivalent to $D$ by Proposition \ref{prop:construction-representative}.
    
    Let $D_2$ be another digraph, differing from $D_1$ only by the orientation of the edge between $a$ and $b$.  It can be shown that $D_2$ is acyclic too \citep[proof of Fact 6]{Andersson1997Characterization}.  We now claim that $D_1 \sim_\mcI D_2$:
    \begin{itemize}
        \item $D_1$ and $D_2$ clearly have the same skeleton.
        
        \item $D_1$ and $D_2$ have the same v-structures.  Otherwise, there would be some v-structure $c \grarright a \grarleft b$ in $D_2$, or some v-structure $a \grarright b \grarleft c$ in $D_1$.  In both cases, this would imply $\pa_G(a) \neq \pa_G(b) \setminus \{a\}$, contradicting the assumption: in the first case, $c \notin T_G(a)$ by construction (all edges of $T_G(a)$ point away from $a$ in $D_2$), so $c \in \pa_G(a)$, but $c \notin \pa_G(b)$; in the second case, $c \in \pa_G(b)$, but $c \notin \pa_G(a)$ by analogous arguments.
        
        \item $(D_1^{(I)})^u = (D_2^{(I)})^u$ for all $I \in \mcI$.  Otherwise, there would be some $I \in \mcI$ such that the skeletons of $D_1^{(I)}$ and $D_2^{(I)}$ differ in the edge between $a$ and $b$.  This could only happen if $|I \cap \{a, b\}| = 1$, in contradiction with the assumption.
    \end{itemize}
    Hence, since $D_1, D_2 \in [D]_\mcI$, we have $D_1 \cup D_2 \subset G$ and thus $a \grline b \in G$, a contradiction.  This proves that $a \grarright b$ is \mcI-protected in $G$.
    
    In the second step, we show that $a \grarright b$ is even \emph{strongly} \mcI-protected.  If this was not the case, $a \grarright b$ would occur in configuration (d') in $G$, but \emph{not} in configuration (a), (b), (c) or (d) (see the comment following Definition \ref{def:protected-arrow}).  Define $P_a := \{d \in T_G(a) \spst d \grarright b \in G\}$.  It can be shown that $P_a$ is a clique $G[T_G(a)]$ \citep[proof of Fact 7]{Andersson1997Characterization}.
    
    Let $D_1$ be the DAG that we get by orienting all chain components of $G$ according to a perfect elimination ordering, such that, additionally,
    \begin{itemize}
        \item all edges of $D_1[T_G(b)]$ point away from $b$,
        \item all edges of $D_1[P_a]$ point towards $a$,
        \item and all other edges of $D_1[T_G(a)]$ point away from $a$.
    \end{itemize}
    Such an orientation exists by Corollary \ref{cor:lex-bfs-clique}.  Let $D_2$ be the digraph that we get by changing the orientation of the edge $a \grarright b$ in $D_1$; as in the first part, it can be shown that $D_2$ is acyclic \citep[proof of Fact 7]{Andersson1997Characterization}.  Again, we claim that $D_1 \sim_\mcI D_2$:
    \begin{itemize}
        \item $D_1$ and $D_2$ clearly have the same skeleton.
        
        \item $D_1$ and $D_2$ have the same v-structures.  Otherwise, there would be some v-structure $d \grarright a \grarleft b$ in $D_2$, or a v-structure $a \grarright b \grarleft d$ in $D_1$.  In the first case, $d \notin P_a$ (otherwise, $d \grarright b \in G$ by definition of $P_a$, and hence $d \grarright b \in D_2$ since $D_2 \subset G$), and $d \notin T_G(a) \setminus P_a$ by construction (edges in $T_G(a) \setminus P_a$ point away from $a$ in $D_2$), hence $d \grarright a \in G$ and $a \grarright b$ is in configuration (a) in $G$; in the second case, $d \notin T_G(b)$ by construction (all edges of $T_G(b)$ point away from $b$ in $D_1$), so $a \grarright b$ is in configuration (b) (notation of Definition \ref{def:strongly-protected-arrow}) in $G$.  Both cases contradict the assumption.
        
        \item Exactly as in the first part, $(D_1^{(I)})^u = (D_2^{(I)})^u$ for all $I \in \mcI$.
    \end{itemize}
    We can conclude that, since $D_1, D_2 \in [D]_\mcI$, $D_1 \cup D_2 \subset G$, so $a \grline b \in G$, a contradiction.
\end{proof}

\begin{proof}[Theorem \ref{thm:essential-graph-characterization}]
    \emph{``$\Rightarrow$'':} \ref{itm:chain-graph} and \ref{itm:chordal-chain-components} follow from Proposition \ref{prop:I-essential-chain-graph}, \ref{itm:forbidden-subgraph} from Lemma \ref{lem:I-essential-graph-properties}\ref{itm:a->b--c-chain-graph}, \ref{itm:forbidden-edge} from Corollary \ref{cor:intervention-essential} and \ref{itm:strongly-protected-arrows} from Lemma \ref{lem:essential-protected}.
    
    \proofitem{``$\Leftarrow$''} Consider the set $\mathbf{D}(G)$ of all DAGs that can be obtained by orienting the chain components of $G$ according to a perfect elimination ordering; we have $\bigcup \mathbf{D}(G) \subset G$.  On the other hand, for each undirected edge $a \grline b \in G$, there are DAGs $D_1$ and $D_2$ in $\mathbf{D}(G)$ such that $a \grarright b \in D_1$, $a \grarleft b \in D_2$ (Corollary \ref{cor:chordal-graph-different-orientations}), hence $G \subset \bigcup \mathbf{D}(G)$.  Together, we find $G = \bigcup \mathbf{D}(G)$.
    
    We claim that $D_1 \sim_\mcI D_2$ for any two DAGs $D_1, D_2 \in \mathbf{D}(G)$:
    \begin{itemize}
        \item $D_1$ and $D_2$ have the same skeleton and the same v-structures by Proposition \ref{prop:chain-graph-orientation}.
        
        \item $(D_1^{(I)})^u = (D_2^{(I)})^u$ for all $I \in \mcI$.  Otherwise, there would be arrows $a \grarright b \in D_1$, $a \grarleft b \in D_2$, and some $I \in \mcI$ such that $|I \cap \{a, b\}| = 1$; this would mean that $a \grline b \in G$ although $|I \cap \{a, b\}| = 1$, contradicting property \ref{itm:forbidden-edge}.
    \end{itemize}
    Let $D \in \mathbf{D}(G)$.  We have shown that $\mathbf{D}(G) \subset [D]_\mcI$, hence $G = \bigcup \mathbf{D}(G) \subset \mathcal{E_I}(D)$.  It remains to show that $G \supset \mathcal{E_I}(D)$.
    
    Assume, for the sake of contradiction, that $G$ has some arrow $a \grarright b$ where $\mathcal{E_I}(D)$ has an undirected edge $a \grline b$.  According to property \ref{itm:strongly-protected-arrows}, $a \grarright b$ is strongly \mcI-protected in $G$.  If there was some $I \in \mcI$ such that  $|I \cap \{a, b\}| = 1$, the edge between $a$ and $b$ was also directed in $\mathcal{E_I}(D)$ by Corollary \ref{cor:intervention-essential}, a contradiction.  Hence $a \grarright b$ occurs in $G$ in one of the configurations depicted in Definition \ref{def:strongly-protected-arrow}.  Exactly as in the proof of Theorem 4.1 of \citet{Andersson1997Characterization}, we can construct a contradiction for each of the four configurations.  Although the proof given there can be used literally, we reproduce it here since since we will use the following steps again in the proof of Lemma \ref{lem:maximal-partial-essential-graph}.

    We assume w.l.o.g.\ that $T_G(a)$ is minimal in
    $$
        A := \{T \in \mathbf{T}(G) | \spexists a \in T, b \in V(G): a \grarright b \in G, a \grline b \in \mathcal{E_I}(D)\}
    $$
    w.r.t.\ $\preceq_G$, and that $T_G(b)$ is minimal in
    $$
        B := \{T \in \mathbf{T}(G) | \spexists a \in T(a), b \in T: a \grarright b \in G, a \grline b \in \mathcal{E_I}(D)\}.
    $$
    Each configuration (a) to (d) of Definition \ref{def:strongly-protected-arrow} leads to a contradiction ($c$, $c_1$ and $c_2$ denote the vertices involved in the respective configuration):
    \begin{enumerate}[label=(\alph*)]
        \item Because of the minimality of $T_G(a)$, $c \grarright a$ must be oriented in $\mathcal{E_I}(D)$, hence $c \grarright a \grline b$ is an induced subgraph of $\mathcal{E_I}(D)$, contradicting Lemma \ref{lem:I-essential-graph-properties}\ref{itm:a->b--c-essential}.
        
        \item $a \grarright b \grarleft c$ is then a v-structure in $D$, hence it is also a v-structure in $\mathcal{E_I}(D)$, that means $a \grarright b \in \mathcal{E_I}(D)$, a contradiction.
        
        \item Because of the minimality of $T_G(b)$, the edge between $a$ and $c$ must be oriented in $\mathcal{E_I}(D)$, so the vertices $a$, $b$ and $c$ are in one of the following configurations in $\mathcal{E_I}(D)$:
        $$
            \threegraph{$a$}{-}{$b$}{<-}{$c$}{<-}, \quad \threegraph{$a$}{-}{$b$}{-}{$c$}{<-}.
        $$
        Both possibilities violate Proposition \ref{prop:I-essential-chain-graph}\ref{itm:essential-graph-chain-graph}.
        
        \item The v-structure $c_1 \grarright b \grarleft c_2$ of $D$ is also a v-structure of $\mathcal{E_I}(D)$, hence $\mathcal{E_I}(D)$ has two directed 3-cycles $(c_1, b, a, c_1)$ and $(c_2, b, a, c_2)$, a contradiction.
    \end{enumerate}
    \proofnegspace
\end{proof}

\begin{proof}[Lemma \ref{lem:partial-essential-graph-properties}]
    \begin{subprop}
        \item This immediately follows from Theorem \ref{thm:essential-graph-characterization}\ref{itm:forbidden-subgraph}.
        
        \item Let $a \grarright b$ be an arrow in $\mathcal{E_I}(D)$; by Theorem \ref{thm:essential-graph-characterization}\ref{itm:strongly-protected-arrows}, it is strongly \mcI-protected in $\mathcal{E_I}(D)$.  If there is some $I \in \mcI$ such that $|I \cap \{a, b\}| = 1$, the arrow is by definition also strongly \mcI-protected in $G$.  Otherwise, $a \grarright b$ occurs in one of the configurations (a) to (d) of Definition \ref{def:strongly-protected-arrow} in $\mathcal{E_I}(D)$.  In configurations (a) to (c), the other arrows involved ($a \grarleft c$; $c \grarright b$; or $a \grarright c$ and $c \grarright b$, resp.) are also present in $G$, hence $a \grarright b$ is strongly \mcI-protected in $G$ by the same configuration as in $\mathcal{E_I}(D)$.
        
        \begin{table}[b]
            \centering
            \begin{tabular}{c|c|c}
                \multicolumn{2}{c|}{Induced subgraph of $\{a, c_1, c_2\}$\dots} & Configuration \\
                \dots in $D$ & \dots in $G$ & of $a \grarright b$ in $G$ \\
                \hline
                $c_1 \grarleft a \grarleft c_2$  & $c_1 \grarleft a \grarleft  c_2$ & (c) \\
                                                    & $c_1 \grline   a \grarleft  c_2$ & --- \\
                                                    & $c_1 \grarleft a \grline    c_2$ & (c) \\
                                                    & $c_1 \grline   a \grline    c_2$ & (d) \\
                \hline
                $c_1 \grarleft a \grarright c_2$ & $c_1 \grarleft a \grarright c_2$ & (c) \\
                                                    & $c_1 \grline   a \grarright c_2$ & (c) \\
                                                    & $c_1 \grarleft a \grline    c_2$ & (c) \\
                                                    & $c_1 \grline   a \grline    c_2$ & (d) \\
            \end{tabular}
            \caption{Possible configurations for the vertices $\{a, c_1, c_2\}$ in the proof of Lemma \ref{lem:partial-essential-graph-properties}\ref{itm:partial-essential-protected}.  The labels in the last column refer to the configurations of Definition \ref{def:strongly-protected-arrow}.}
            \label{tab:D-G-configurations}
        \end{table}
        
        It remains to show that if $a \grarright b$ is in configuration (d) in $\mathcal{E_I}(D)$, it is also strongly \mcI-protected in $G$.  In $D$, the vertices $\{a, c_1, c_2\}$ as defined in Definition \ref{def:strongly-protected-arrow} can occur in one of the following configurations:
        $$
            c_1 \grarleft a \grarleft c_2, \quad c_1 \grarleft a \grarright c_2, \quad c_1 \grarright a \grarright c_2.
        $$
        The first and the third case are symmetric w.r.t.\ exchanging $c_1$ and $c_2$, hence we only consider the first two.  Table \ref{tab:D-G-configurations} lists all possible configurations for the vertices $\{a, c_1, c_2\}$ in the graph $G$ according to the condition $D \subset G \subset \mathcal{E_I}(D)$.  There is only one possibility for the arrow $a \grarright b$ not to occur in one of the configurations (a) to (d) of Definition \ref{def:strongly-protected-arrow}, and hence not being strongly \mcI-protected in $G$; however, the corresponding subgraph of $\{a, c_1, c_2\}$, $c_1 \grline a \grarleft c_2$, is forbidden by Definition \ref{def:partial-essential-graph}.
        
        \item According to Theorem \ref{thm:interventional-markov-equivalence}, we have to check the following properties:
        \begin{itemize}
            \item $D_1$ and $D_2$ have the same skeleton, namely $D_1^u = D_2^u = G^u$.
            
            \item $D_1$ and $D_2$ have the same v-structures: let $a \grarright b \grarleft c$ be a v-structure in $D_1$.  This v-structure is then also present in $\mathcal{E_I}(D_1)$.  Because of $D_2 \subset G \subset \mathcal{E_I}(D_1)$, we find it also in $G$ and in $D_2$.  The argument is completely symmetric w.r.t.\ exchanging $D_1$ and $D_2$.
            
            \item For all $I \in \mcI$, $D_1^{(I)}$ and $D_2^{(I)}$ have the same skeleton: assume, for the sake of contradiction, that there is some $I \in \mcI$ and an edge $a \grline b$ that is present in $(D_1^{(I)})^u$, but not in $(D_2^{(I)})^u$.  W.l.o.g., we can assume that $a \grarright b \in D_1$, $a \grarleft b \in D_2$, $a \in I$, $b \notin I$.  Because of Theorem \ref{thm:essential-graph-characterization}\ref{itm:forbidden-edge}, we then have $a \grarright b \in \mathcal{E_I}(D_1)$ and $a \grarleft b \in \mathcal{E_I}(D_2)$; however, this is not compatible with the requirements $G \subset \mathcal{E_I}(D_1)$ and $G \subset \mathcal{E_I}(D_2)$.
        \end{itemize}
    \end{subprop}
    \proofnegspace
\end{proof}

\begin{proof}[Lemma \ref{lem:iterative-arrow-replacement}]
    If $a \grarright b \in \mathcal{E_I}(D)$, it would be strongly \mcI-protected by Theorem \ref{thm:essential-graph-characterization}\ref{itm:strongly-protected-arrows}, and hence also strongly \mcI-protected in $G$ by Lemma \ref{lem:partial-essential-graph-properties}\ref{itm:partial-essential-protected}, contradicting the assumption.  Therefore, $a \grline b \in \mathcal{E_I}(D)$ and hence $D \subset G' \subset \mathcal{E_I}(D)$.
    
    Suppose that $G'$ contains an induced subgraph of the form $c \grarright d \grline e$.  Since $G$ does not contain such an induced subgraph, it must be of the form $c \grarright a \grline b$ or $c \grarright b \grline a$ in $G'$.  In both cases, $a \grarright b$ is then strongly \mcI-protected in $G$, either by configuration (a) or (b), a contradiction.
\end{proof}

\begin{proof}[Lemma \ref{lem:maximal-partial-essential-graph}]
    Let $D \subset G \subset \mathcal{E_I}(D)$ be a partial \mcI-essential graph that only has strongly \mcI-protected arrows.  We can literally use the second part of the proof of Theorem \ref{thm:essential-graph-characterization} to show $G \supset \mathcal{E_I}(D)$; there, we only used the fact that every arrow in $G$ is strongly \mcI-protected.
\end{proof}

\subsection{Proofs for Section \ref{sec:greedy-search}}
\label{sec:proofs-greedy-search}

\begin{proof}[Proposition \ref{prop:forward-characterization}]
    \emph{``$\Rightarrow$'':}
    \begin{subprop}
        \item This claim follows from Corollary \ref{cor:lex-bfs-clique}.
        
        \item Suppose that there is some vertex $a \in N \setminus C$, that is a vertex $a \in N$ with $a \grarleft v \in D$.  $D'$ would have a directed cycle if $u \grarleft a \in D$, so $u \grarright a \in D$.  But then, $u \grarright a \grarleft v$ is a v-structure in $D$, hence also in $G$, and consequently $a \notin \nb_G(v)$, a contradiction.
        
        \item Assume that $\gamma = (v \equiv a_0, a_1, \ldots, a_k \equiv u)$ is a shortest path from $v$ to $u$ in $G$ that does not intersect with $C$.  We claim that $\gamma$ is a directed path in $D$, which means that $D'$ has a directed cycle, a contradiction.
        
        Suppose that the claim is wrong, and let $a_i \grarleft a_{i+1} \in D$ be the first edge of (the chain) $\gamma$ that points away from $u$ in $D$; $i \geq 1$ holds by the assumption that, in particular, $a_1 \notin C$.  $a_{i-1} \grarright a_i \grarleft a_{i+1}$ cannot be an induced subgraph of $D$, otherwise it would also be present in $G$ and hence $\gamma$ would not be a \emph{path} in $G$.  Hence $a_{i-1} \grdots a_{i+1} \in G$; more precisely, $a_{i-1} \grarleft a_{i+1} \in G$ (and hence also in $D$), otherwise there would be a shorter path from $v$ to $u$ in $G$ than $\gamma$ that does not intersect with $C$. Because $\gamma$ is a path in $G$, $a_{i-1}$, $a_i$ and $a_{i+1}$ can occur in $G$ only in one of the following configurations:
        $$
            \threegraph{$a_{i-1}$}{<-}{$a_{i+1}$}{-}{$a_i$}{<-}, \quad \threegraph{$a_{i-1}$}{<-}{$a_{i+1}$}{-}{$a_i$}{-}.
        $$
        However, both graphs cannot be an induced subgraph of the chain graph $G$.
    \end{subprop}

    \proofitem{``$\Leftarrow$''} Since $C$ is a clique in $G[T_G(v)]$, there is a DAG $D \in \mathbf{D}(G)$ with $\{a \in \nb_G(v) \spst a \grarright v \in D\} = C$ by Proposition \ref{prop:chain-graph-orientation} and Corollary \ref{cor:lex-bfs-clique}.  It remains to show that $D'$ is a DAG.
    
    Assume, for the sake of contradiction, that $D'$ has a directed cycle going through $u \grarright v$.  The return path from $v$ to $u$, $\gamma = (v \equiv a_0, a_1, \ldots, a_k \equiv u)$, must come from a path in $G$ and must therefore, by assumption, contain a vertex $a_i \in C$ ($i \geq 2$).  Since $a_i \grarright v \in D$ by construction, this means that $D$ has a directed cycle $(a_0, a_1, \ldots, a_i, a_0)$, a contradiction.
    
    \proofitem{Uniqueness of $\mathcal{E_I}(D')$} Let $D_1, D_2 \in \mathbf{D}(G)$ with $\{a \in \nb_G(v) \spst a \grarright v \in D_1\} = \{a \in \nb_G(v) \spst a \grarright v \in D_2\} = C$, and set $D'_i := D_i + (u, v)$, $i = 1, 2$; we assume that $D'_1, D'_2 \in \mathbf{D}^+(G)$.  To prove $D'_1 \sim_\mcI D'_2$, we have to check the following three points according to Theorem \ref{thm:interventional-markov-equivalence}\ref{itm:skeleton-v-structures}:
    \begin{itemize}
        \item $D'_1$ and $D'_2$ obviously have the same skeleton.
        
        \item $D'_1$ and $D'_2$ have the same v-structures.  We already know that $D_1$ and $D_2$ have the same v-structures.  Let's assume, for the sake of contradiction, that (w.l.o.g.) $D'_1$ has a v-structure $u \grarright v \grarleft a$ that $D'_2$ has not.  In $G$, we must then have a line $a \grline v$, hence $a \in \nb_G(v)$.  However, the arrow between $a$ and $v$ would then have the same orientation in $D_1$ and $D_2$ by construction, a contradiction.
        
        \item For all $I \in \mcI$, $D'^{(I)}_1$ and $D'^{(I)}_2$ have the same skeleton.  If this was not the case, there would be some vertices $a, b \in [p]$ and some $I \in \mcI$ such that $a \grarright b \in D'_1$, $a \grarleft b \in D'_2$ and $|I \cap \{a, b\}| = 1$.  The arrow $u \grarright v$ is part of $D'_1$ and $D'_2$ by construction, so the arrows between $a$ and $b$ must be present in $D_1$ and $D_2$; however, $D_1^{(I)}$ and $D_2^{(I)}$ would then not have the same skeleton, a contradiction.
    \end{itemize}
    \proofnegspace
\end{proof}

Corollary \ref{cor:forward-score-change} is an immediate consequence of Proposition \ref{prop:forward-characterization} and the fact that we assume the score function to be decomposable, so we skip the proof here.
\vspace{3mm}

\begin{proof}[Lemma \ref{lem:forward-partial-essential-graph}]
    Obviously, we have $D' \subset H$.  To show $H \subset \mathcal{E_I}(D')$, we look at some edge $a \grline b \in G$ with $a, b \notin T_G(v)$ and show that $a \grline b \in \mathcal{E_I}(D')$.  W.l.o.g., we can assume that $a \grarright b \in D$.  By Corollary \ref{cor:chordal-graph-different-orientations}, there exists a $D_2 \in \mathbf{D}(G)$ that has the same orientation of edges in $T_G(v)$, but an orientation of edges in $T_G(a)$ such that $a \grarleft b \in D_2$.  By Proposition \ref{prop:forward-characterization}, we know that $D'_2 := D_2 + (u, v)$ is \mcI-equivalent to $D'$, so in particular $a \grline b \in D' \cup D'_2 \subset \mathcal{E_I}(D')$.
    
    It remains to show that $a \grarright b \grline c$ does not occur as an induced subgraph of $H$.  The inserted arrow $u \grarright v$ cannot be part of such a subgraph, since all other edges incident to $v$ are oriented in $H$ by construction.  Since $G$ has no such subgraph either (Theorem \ref{thm:essential-graph-characterization}), it could only appear in $H$ through one of the newly oriented edges of $T_G(v)$.  This means that if $H$ had an induced subgraph of the form $a \grarright b \grline c$, the corresponding vertices would be in configuration $a \grline b \grline c$ in $G$; however, $c \in T_G(v)$ then, and so the edge between $b$ and $c$ would be oriented in $H$, a contradiction.
\end{proof}

\begin{proof}[Proposition \ref{prop:backward-characterization}]
    \emph{``$\Rightarrow$'':}
    \begin{subprop}
        \item By Corollary \ref{cor:lex-bfs-clique}, $\{a \in \nb_G(v) \spst a \grarright v \in D\}$ is a clique, hence every subset---in particular, $C$---is a clique, too.
        
        \item Assume that there is some $a \in C \setminus \ad_G(u)$; then $u \in \nb_G(v)$, otherwise $u \grarright v \grline a$ would be an induced subgraph of $G$.  Nevertheless, $a \in C$ means that $u \grarright v \grarleft a$ is a v-structure in $D$, which should hence also be present in $G$.
    \end{subprop}

    \proofitem{``$\Leftarrow$''}  We only must prove the existence of the claimed $D \in \mathbf{D}(G)$, see the comment in the beginning of Section \ref{sec:gies-backward}.  We distinguish two cases:
    \begin{itemize}
        \item $u \grarright v \in G$.  The existence of the DAG $D \in \mathbf{D}(G)$ with the requested properties follows from Corollary \ref{cor:lex-bfs-clique}.
        
        \item $u \grline v \in G$, hence $u \grline a \in G$ for all $a \in N$ because $G$ is a chain graph.  Therefore, $C \cup \{u\}$ is a clique in $G[\nb_G(v)]$, and the existence of the claimed $D$ again follows from Corollary \ref{cor:lex-bfs-clique}.
    \end{itemize}
    
    \proofitem{Uniqueness of $\mathcal{E_I}(D')$} Let $D_1, D_2 \in \mathbf{D}(G)$ with $u \grarright v \in D_1, D_2$ and $\{a \in \nb_G(v) \setminus \{u\} \spst a \grarright v \in D_1\} = \{a \in \nb_G(v) \setminus \{u\} \spst a \grarright v \in D_2\} = C$, and set $D'_i := D_i - (u, v)$, $i = 1, 2$.  To prove $D'_1 \sim_\mcI D'_2$, we have to check the following three points according to Theorem \ref{thm:interventional-markov-equivalence}\ref{itm:skeleton-v-structures}:
    \begin{itemize}
        \item $D'_1$ and $D'_2$ have the same skeleton, namely $G^u - (u, v) - (v, u)$.
        
        \item $D'_1$ and $D'_2$ have the same v-structures.  Otherwise, w.l.o.g., $D'_1$ would have a v-structure $a \grarright b \grarleft c$ that $D'_2$ has not.  $D_1$ and $D_2$ have the same v-structures, so $a \grarright b \grarleft c$ is no induced subgraph of $D_1$; this implies $a = u$, $c = v$.  Since $D'_2$ does not have the v-structure $u \grarright b \grarleft v$, the vertices $u$, $b$ and $v$ must occur in configuration $u \grarright b \grarright v$ or $u \grarleft b \grarright v$ in $D'_2$ (the configuration $u \grarleft b \grarleft v$ is not consistent with the acyclicity of $D_2$).  However, all edges incident to $v$ must have the same orientation in $D'_1$ and $D'_2$ by construction, a contradiction.
        
        \item Let $I \in \mcI$.  Because of $(D_1^{(I)})^u = (D_2^{(I)})^u$ and $(D'^{(I)}_i)^u = (D_i^{(I)})^u - (u, v) - (v, u)$ for $i = 1, 2$, we have $(D'^{(I)}_1)^u = (D'^{(I)}_2)^u$.
    \end{itemize}
    \proofnegspace
\end{proof}

Corollary \ref{cor:backward-score-change} follows quickly from Proposition \ref{prop:backward-characterization}, and the proof of Lemma \ref{lem:backward-partial-essential-graph} is very similar to that of Lemma \ref{lem:forward-partial-essential-graph}.  Therefore we skip both proofs here and proceed with the proofs of Section \ref{sec:gies-turning}.
\vspace{3mm}

\begin{proof}[Proposition \ref{prop:turning-undirected-characterization}]
    Note that we can write $N = \nb_G(v) \cap \ad_G(u) = \nb_G(v) \cap \nb_G(u)$ because $u \grline v \in G$ and $G$ is a chain graph.
    
    \proofitem{``$\Rightarrow$''}
    \begin{subprop}
        \item This follows from Corollary \ref{cor:lex-bfs-clique}.

        \item $D$ and $D'$ have the same skeleton; the same is true for $D^{(I)}$ and $D'^{(I)}$ for all $I \in \mcI$.  To see the latter, assume that for some $I \in \mcI$, the intervention graphs $D^{(I)}$ and $D'^{(I)}$ have a different skeleton.  Since $D$ and $D'$ only differ in the orientation of the arrow between $u$ and $v$, the skeletons of $D^{(I)}$ and $D'^{(I)}$ can only differ in that $u$ and $v$ are adjacent in one of them and not adjacent in the other one.  However, this would imply that $|I \cap \{u, v\}| = 1$, and hence the edge between $u$ and $v$ would be directed in $G$ by Theorem \ref{thm:essential-graph-characterization}\ref{itm:forbidden-edge}, contradicting the assumption of the proposition.  Finally, $D'$ has at least all v-structures that $D$ has by construction.
        
        As a consequence $D' \not\sim_\mcI D$ if and only if $D'$ has \emph{more} v-structures than $D$ (Theorem \ref{thm:interventional-markov-equivalence}).  An additional v-structure in $D'$ must be of the form $u \grarright v \grarleft a$.  The edge between $v$ and $a$ cannot be directed in $G$, otherwise $u \grline v \grarleft a$ would be an induced subgraph of $G$, which is forbidden by Theorem \ref{thm:essential-graph-characterization}\ref{itm:forbidden-subgraph}.  Hence $a \in \nb_G(v)$, or, more precisely, $a \in C \setminus N$.
        
        \item If $N \setminus C$ is empty, the statement is trivial.  Otherwise, assume that there is some shortest path $\gamma = (a_0, a_1, \ldots, a_k)$ from $N \setminus C$ to $C \setminus N$ in $G[\nb_G(v)]$ that has no vertex in $C \cap N$.
        
        By definition of $C$, $a_k \grarright v \in D$; furthermore, $u \grarright a_0 \in D$ must hold, otherwise $(v, a_0, u, v)$ would be a directed cycle in $D'$.  Therefore, $\gamma$ must not be a path from $a_0$ to $a_k$ in $D$.  Let $a_i \grarleft a_{i+1}$ be the first arrow in $\gamma$ that points away from $a_k$ in $D$.  If $i = 0$, $u \grarright a_0 \grarleft a_1$ would be a v-structure in $D$ since $a_1 \notin N$: by assumption, $a_1 \notin N \cap C$, and $a_1 \notin N \setminus C$ because of the minimality of $\gamma$.  Hence $i > 0$ (and $k > 1$) must hold, and $a_{i-1} \grdots a_{i+1}$ in $D$ and $G$, otherwise there would be a v-structure in $D$.  However, $\gamma$ is not the \emph{shortest} path with the requested properties then, a contradiction.
    \end{subprop}
    
    \proofitem{``$\Leftarrow$''} From Proposition \ref{prop:lex-bfs-clique-neighborhood}, we see that there exists a DAG $D$ that has the requested properties, and in which, in addition, $\{a \in \nb_G(u) \spst a \grarright u \in D\} = (C \cap N) \cup \{v\}$ (point \ref{itm:orientation-neighbors-b} of Proposition \ref{prop:lex-bfs-clique-neighborhood}).  The fact that $D' := D - (v, u) + (u, v) \not\sim_\mcI D$ can be seen by an argument very similar to the proof of point \ref{itm:turning-undirected-diff-C-N} above; it remains to show that $D$ has no $v$-$u$-path except $(v, u)$.  Suppose that $\gamma = (a_0 \equiv v, a_1, \ldots, a_k \equiv u)$, $k \geq 2$, is such a path.  In particular, $\gamma$ is then also a $v$-$u$-path in $G$, hence $\gamma$ lies completely in $T_G(v)$.
    
    If $k = 2$, then $a_1 \in N$, and so the vertices $u$, $v$ and $a_1$ occur in one of the following configurations in $D$ by Proposition \ref{prop:lex-bfs-clique-neighborhood}:
    $$
        \threegraph{$v$}{->}{$u$}{->}{$a_1$}{<-}, \quad \threegraph{$v$}{->}{$u$}{<-}{$a_1$}{->} \ .
    $$
    Both configurations contradict the assumption that $\gamma = (v, a_1, u)$ forms a path in $D$.  Thus we conclude $k \geq 3$, and we notice $a_{k-1} \in \nb_G(u) \setminus \{v\}$.  If $a_{k-1} \in C$, $a_{k-1} \grarright v \in D$, hence $(a_0, a_1, \ldots, a_{k-1}, a_0)$ would be a cycle in $D$.  On the other hand, if $a_{k-1} \notin C$, we would have $a_{k-1} \grarleft u \in D$, so $\gamma$ would not be a path in $D$.
    
    \proofitem{Uniqueness of $\mathcal{E_I}(D')$} Let $D_1, D_2 \in \mathbf{D}(G)$ with $u \grarleft v \in D_1, D_2$ and $\{a \in \nb_G(v) \spst a \grarright v \in D_1\} = \{a \in \nb_G(v) \spst a \grarright v \in D_2\} = C$, and set $D'_i := D_i - (v, u) + (u, v)$, $i = 1, 2$; we assume that $D'_1, D'_2 \in \mathbf{D}^\circlearrowleft(G)$.  As in the proofs of Proposition \ref{prop:forward-characterization}, we can check that $D'_1 \sim_\mcI D'_2$:
    \begin{itemize}
        \item $D'_1$ and $D'_2$ obviously have the same skeleton.
        
        \item $D_1$ and $D_2$ have the same v-structures.  If this does not hold for $D'_1$ and $D'_2$, (w.l.o.g.) $D'_1$ must have a v-structure $u \grarright v \grarleft a$ that $D'_2$ has not.  Since $u \grline v \grarleft a$ cannot be an induced subgraph of $G$, $a \in \nb_G(v)$; however, the edges between $v$ and its neighbors are oriented in the same way in $D'_1$ and $D'_2$ by construction, a contradiction.
        
        \item For all $I \in \mcI$, $D'^{(I)}_1$ and $D'^{(I)}_2$ have the same skeleton: this can be seen by an argument very similar to that in the proof of Proposition \ref{prop:forward-characterization}.
    \end{itemize}
    \proofnegspace
\end{proof}

\begin{proof}[Corollary \ref{cor:turning-undirected-score-change}]
    We have to show $\pa_D(v) = \pa_G(v) \cup C$ and $\pa_D(u) = \pa_G(u) \cup (C \cap N) \cup \{v\}$.  The first identity is immediately clear.  For the second identity, note that for any vertex $a \in C \cap N$, the arrow between $a$ and $u$ must be oriented as $a \grarright u \in D$ because the other orientation would induce a 3-cycle.  On the other hand, we have $a \grarleft u \in D$ for $a \in N \setminus C$ because a different orientation would induce a 3-cycle in $D'$.  Finally, we also have $a \grarleft u \in D$ for any $a \in \nb_G(u) \setminus (\nb_G(v) \cup \{v\})$ since the other orientation would induce a v-structure $v \grarright u \grarleft a$ in $D$.
\end{proof}

Lemma \ref{lem:turning-undirected-partial-essential-graph} can be proven very similarly as Lemma \ref{lem:forward-partial-essential-graph}.  Finally, we finish this proof section with the proof of Proposition \ref{prop:turning-directed-characterization} characterizing a step of the turning phase of GIES for the case that we turn an \mcI-essential arrow in some representative $D \in \mathbf{D}(G)$.  We will omit the proof of Lemma \ref{lem:turning-directed-partial-essential-graph} since it can be proven similarly to Lemma \ref{lem:forward-partial-essential-graph}.
\vspace{3mm}

\begin{proof}[Proposition \ref{prop:turning-directed-characterization}]
    When $v \grarright u \in G$ (that is, $u$ and $v$ lie in different chain components), $N = \nb_G(v) \cap \ad_G(u) = \nb_G(v) \cap \pa_G(u)$ holds because $G$ is a chain graph.
    
    \proofitem{``$\Rightarrow$''}
    \begin{subprop}
        \item This point follows from Corollary \ref{cor:lex-bfs-clique}.
        
        \item If this was not true, $D'$ would have a cycle of the form $(u, v, a, u)$ for some $a \in N$ since $N \subset \pa_G(u)$.
        
        \item Suppose that the path $\gamma = (a_0 \equiv v, a_1, \ldots, a_k \equiv u)$ is a shortest counterexample of a path without vertex in $C \cup \nb_G(u)$.
        
        Assume that $k = 2$.  Since $u$ and $v$ lie in different chain components, the vertices $u$, $v$ and $a_1$ can occur in one of the following configurations in $G$:
        $$
            \threegraph{$v$}{->}{$u$}{<-}{$a_1$}{<-}, \quad \threegraph{$v$}{->}{$u$}{<-}{$a_1$}{-}, \quad \threegraph{$v$}{->}{$u$}{-}{$a_1$}{<-} \ .
        $$
        The first case implies the existence of a directed cycle in $D'$; in the second case, $a_1 \in N \subset C$, in the third case, $a_1 \in \nb_G(u)$.
        
        Therefore $k \geq 3$. In complete analogy to the proof of Proposition \ref{prop:forward-characterization}, we can show that $\gamma$ is also a $v$-$u$-path in $D$, hence $D'$ has a directed cycle, a contradiction.
    \end{subprop}
    
    \proofitem{``$\Leftarrow$''} Let $D \in \mathbf{D}(G)$ be a DAG with $\{a \in \nb_G(v) \spst a \grarright v \in D\} = C$ and in which all edges of $D[T_G(u)]$ point away from $u$; such a DAG exists by Corollary \ref{cor:lex-bfs-clique} and meets the requirements of Proposition \ref{prop:turning-directed-characterization}.  It remains to show that $D'$ is acyclic, that means that $D$ has no $v$-$u$-path except $(v, u)$.
    
    Suppose, for the sake of contradiction, that $D$ has such a path $\gamma = (a_0 \equiv v, a_1, \ldots, a_k \equiv u)$.  $\gamma$ is then also a $v$-$u$-path in $G$, hence there is, by assumption, some $a_i \in C \cup P$.  If $a_i \in C$, $(a_0, a_1, \ldots, a_i, a_0)$ would be a cycle in $D$; on the other hand, if $a_i \in P$, $(a_i, a_{i+1}, \ldots, a_k, a_i)$ would be a cycle in $D$, a contradiction.

    \proofitem{Uniqueness of $\mathcal{E_I}(D')$} The proof given for Proposition \ref{prop:turning-undirected-characterization} is also valid here.
\end{proof}

\begin{proof}[Corollary \ref{cor:turning-directed-score-change}]
    The fact that $\pa_D(v) = \pa_G(v) \cup C$ is clear from Proposition \ref{prop:turning-directed-characterization}; it remains to show that $\pa_D(u) = \pa_G(u)$.  Any neighbor $a$ of $u$ must also be a child of $v$, otherwise $G$ would have a subgraph of the form $v \grarright u \grline a$, which is forbidden by Theorem \ref{thm:essential-graph-characterization}\ref{itm:forbidden-subgraph}.  Hence $a \grarleft u \in D$ for all $a \in \nb_G(u)$ since the other orientation would imply a directed cycle in $D'$.
\end{proof}

\bibliography{dagestimation}

\end{document}